\documentclass[11pt, reqno, oneside]{amsart}
\usepackage[margin=1.00in]{geometry}
\usepackage{comment}    
\usepackage[T1]{fontenc}
\usepackage[utf8]{inputenc}
\usepackage{courier}
\usepackage{bm}          
\usepackage{dsfont}      
\usepackage{braket}     
\usepackage{lipsum}      
\usepackage[section]{placeins}
\usepackage{booktabs}
\usepackage{amsmath,amssymb,amsthm}   
\usepackage{mathtools}                
\usepackage{esvect}                   
\usepackage{makecell}
\usepackage{enumitem}

\setlist[enumerate,1]{
    label=(\arabic*),
    leftmargin=*
}
\setlist[enumerate,2]{
    label=(\roman*),
    leftmargin=*
}
\setlist[itemize]{leftmargin=*}
\setlist[enumerate]{leftmargin=*}
\setlist[itemize]{leftmargin=*}    
\numberwithin{equation}{section}
\usepackage[ruled]{algorithm2e}
\DontPrintSemicolon
\usepackage{xcolor, graphicx}
\usepackage[caption=false]{subfig}
\usepackage{tikz}
\usepackage{tikz-cd}

\definecolor{darkgreen}{rgb}{0.0,0.72,0.06}
\definecolor{cerisepink}{rgb}{0.93,0.23,0.51}
\definecolor{red-brown}{rgb}{0.65,0.16,0.16}
\definecolor{refblue}{RGB}{0,90,230}
\definecolor{magenta}{RGB}{200,40,160}

\usepackage{aliascnt}
\usepackage[colorlinks]{hyperref}
\usepackage[capitalize,noabbrev]{cleveref}
\hypersetup{
    colorlinks=true,
    citecolor=darkgreen,
    linkcolor=refblue,
    filecolor=magenta,
    urlcolor=cerisepink
}

\usepackage[backend=biber,style=numeric-comp,sorting=none,url=false,maxbibnames=99,doi=false]{biblatex}
\theoremstyle{plain}
\newtheorem{thm}{Theorem}[section]

\newaliascnt{lem}{thm}
\newtheorem{lem}[lem]{Lemma}
\aliascntresetthe{lem}

\newaliascnt{prop}{thm}
\newtheorem{prop}[prop]{Proposition}
\aliascntresetthe{prop}

\newaliascnt{cor}{thm}
\newtheorem{cor}[cor]{Corollary}
\aliascntresetthe{cor}

\theoremstyle{definition}

\newaliascnt{defn}{thm}
\newtheorem{defn}[defn]{Definition}
\aliascntresetthe{defn}

\newaliascnt{assumption}{thm}

\aliascntresetthe{assumption}

\newaliascnt{notation}{thm}

\aliascntresetthe{notation}

\newaliascnt{example}{thm}
\newtheorem{example}[example]{Example}
\aliascntresetthe{example}

\newaliascnt{remark}{thm}
\newtheorem{remark}[remark]{Remark}
\aliascntresetthe{remark}

\newaliascnt{result}{thm}
\newtheorem{result}[result]{Result}
\aliascntresetthe{result}

\newaliascnt{fact}{thm}
\newtheorem{fact}[fact]{Fact}
\aliascntresetthe{fact}

\newcommand{\C}{\mathbb{C}}

\newcommand{\Z}{\mathbb{Z}}

\let\oldtocsection=\tocsection\let\oldtocsubsection=\tocsubsection\renewcommand{\tocsection}[2]{\hspace{0em}\oldtocsection{#1}{#2}}\renewcommand{\tocsubsection}[2]{\hspace{1em}\oldtocsubsection{#1}{#2}}

\begin{document}

\title{A Cohomological Characterization of the Clifford Hierarchy}

\author{Junaid Aftab}  
\address{Department of Mathematics, University of Maryland, College Park, MD 20742, USA} \email{\href{mailto:junaida@umd.edu}{junaida@umd.edu}}

\address{Department of Mathematics, University of Michigan, Ann Arbor, MI 48109, USA} \email{\href{mailto:junaida@umich.edu}{junaida@umich.edu}} 
\urladdr{\href{https://junaid-aftab.github.io/}                 {https://junaid-aftab.github.io/}}

\date{\today}

\begin{abstract}
The Clifford hierarchy plays a central role in fault-tolerant quantum computation, but its higher levels remain only partially understood. Recent work on quantum higher order Fourier analysis characterizes membership in the Clifford hierarchy using quantum derivatives, but does not provide an explicit description of the individual levels. In this work, we address this problem by showing that the collection of quantum derivatives of a unitary operator forms a non-abelian \(1\)-cocycle, and conversely that every such \(1\)-cocycle arises from a unitary operator. This yields a recursive cohomological characterization of the Clifford hierarchy. We then specialize this framework to the third level and decompose the cocycle data into symplectic, affine, and phase components, each admitting a cohomological interpretation. As applications, we give a new proof that every two-qudit gate in the third level of the Clifford hierarchy is semi-Clifford, and prove for the first time that every three-qudit gate in the third level is semi-Clifford.
\end{abstract}

\maketitle

\setcounter{tocdepth}{2}
\tableofcontents

\section{Introduction}
Gottesman and Chuang~\cite{gottesman1999demonstrating} introduced the Clifford hierarchy in the context of teleportation-based quantum gate implementation. Since then, it has become a fundamental concept in fault-tolerant quantum computation. Mathematically, the Clifford hierarchy is a sequence of subsets of the unitary group on $n$ qudits. The first two levels are groups, whereas the higher levels are not groups. This lack of structure has motivated the development of several algebraic approaches aimed at understanding the higher levels of the hierarchy.~\cite{cui2017diagonal,rengaswamy2019unifying,pllaha2020weyl,hu2021climbingdiagonalcliffordhierarchy,deSilva2021efficient,anderson2024groups,chen2024characterising,feng2024groups,he2025characterizationpermutationgateslevel,anderson2025controlledgatescliffordhierarchy,anderson2025affineequivalencecliffordhierarchy,de2025clifford,bastioni2026climbingcliffordhierarchy,xu2026controlledjumpcliffordhierarchy}.

In contrast to these algebraic approaches, Bu, Gu, and Jaffe~\cite{bu2025quantum} recently developed an analytic approach to the study of the Clifford hierarchy based on quantum higher order Fourier analysis. Their framework extends classical higher order Fourier analysis~\cite{tao2012higher} from the commutative setting of functions on finite abelian groups to the noncommutative setting of unitary operators. A key construction in their approach is the notion of a quantum derivative, which generalizes finite-difference operators to this noncommutative setting. Using quantum derivatives, they define a family of quantum uniformity measures, $\|\cdot\|_{Q^{k+1}}$ for $k \geq 0$, which generalize the classical Gowers norms~\cite{gowers2001new}. Their main result establishes that $\|U\|_{Q^{k+1}} \leq 1$ for every unitary operator $U$ and every $k \geq 0$. Moreover, for $k \geq 1$, they show that
\begin{equation}\label{membership}
U \in \mathcal C^{(k)} \iff \|U\|_{Q^{k+1}} = 1,
\end{equation}
where $C^{(k)}$ is the $k$-th level of the Clifford hierarchy. Therefore, \cref{membership} provides a membership criterion for determining whether a quantum gate belongs to a particular level of the Clifford hierarchy. However, their framework does not use quantum derivatives to investigate the internal structure of individual levels of the hierarchy. This raises the question of whether quantum derivatives can be used to characterize these levels explicitly.

In this work, we begin to address this question by exploiting the observation that the collection of first order quantum derivatives of a unitary operator forms a non-abelian \(1\)-cocycle. Conversely, we prove that every such non-abelian \(1\)-cocycle arises as the collection of first order quantum derivatives of a unitary operator, which is unique up to a global phase. Consequently, we show that \(\mathcal{C}^{(k)}\), modulo multiplication by \(U(1)\) phases, is classified by non-abelian \(1\)-cocycles whose images lie in \(\mathcal{C}^{(k-1)}\). This gives a recursive cohomological characterization of the Clifford hierarchy. We use this perspective to parameterize \(\mathcal{C}^{(3)}\) in terms of symplectic, affine, and phase data that admit a cohomological interpretation. In particular, we show that the affine data are classified by a first cohomology group, while compatible phase data exist precisely when an obstruction in a second cohomology group vanishes. We further show that this obstruction is given by a quadratic expression involving the symplectic form and the cup product. For fixed symplectic data, this yields a cohomological moduli description of \(\mathcal{C}^{(3)}\). We use this framework to provide a new proof that every two-qudit gate in \(\mathcal{C}^{(3)}\) is semi-Clifford for odd prime dimension. We also prove, to the best of our knowledge for the first time, that every three-qudit gate in \(\mathcal{C}^{(3)}\) is semi-Clifford for arbitrary odd prime dimension.

\subsection{Summary and Comparison of Results}
We summarize the main results of this work in \cref{sum-rel}. Since some results recover known results from the literature within our framework, we provide a comparative analysis in \cref{com-rel} to place our contributions in context.

\subsubsection{Summary of Results}\label{sum-rel}
Let $n \geq 1$, let $p$ be a prime, and let $\mathbb{F}_p$ denote the finite field with $p$ elements. Set $V_n=\mathbb{F}_p^{2n}$ and $\mathcal{H}_n=\mathbb{C}^{p^n}$, and let $U\in\mathcal{U}(\mathcal{H}_n)$ be an $n$-qudit unitary operator. We observe that the collection of first order quantum derivatives $\{\partial_a U\}_{a\in V_n}$ defines a non-abelian $1$-cocycle. Conversely, we prove that the non-abelian $1$-cocycle condition is sufficient for a map to arise as the collection of first order quantum derivatives of a unitary operator.

\begin{result}[Informal Version of \cref{thm:integrability}]
\label{informal-integ}
Let $A : V_n \to \mathcal{U}(\mathcal{H}_n)$ be a non-abelian $1$-cocycle. There exists a $U \in \mathcal{U}(\mathcal{H}_n)$, unique up to a $U(1)$ phase, such that $A = A_U$. 
\end{result}

\cref{informal-integ} is an application of the Stone--von Neumann theorem~\cite{rosenberg2004selective,hall2013quantum}. Our observation is that the condition \(A(a)=A_U(a)\) is equivalent to the existence of a unitary \(U\in\mathcal{U}(\mathcal{H}_n)\) such that
\begin{equation}\label{U-furnish}
A(a)^*w(a)=Uw(a)U^*
\end{equation}
for every \(a\in V_n\), where \(w(a)\) denotes the \(n\)-qudit Weyl operator associated with \(a\in V_n\). The maps \(a\mapsto A(a)^*w(a)\) and \(a\mapsto w(a)\) define projective representations of \(V_n\) with the same multiplier and hence lift to irreducible unitary representations of the Heisenberg group \(H(V_n)\). The Stone--von Neumann theorem, which states that irreducible unitary representations of \(H(V_n)\) with the same central character are unique up to unitary equivalence, therefore guarantees the existence of \(U\in\mathcal{U}(\mathcal{H}_n)\) satisfying \cref{U-furnish}. Combining \cref{informal-integ} with the quantum derivative characterization of Bu, Gu, and Jaffe~\cite{bu2025quantum}, namely that \(U\in\mathcal{C}^{(k)}\) if and only if \(\partial_aU\in\mathcal{C}^{(k-1)}\) for every \(a\in V_n\), yields a recursive classification of the Clifford hierarchy.

\begin{result}[Informal Version of \cref{thm:abstract-classification}]
\label{informal-abstract}
Let \(Z_{\tau}^{1}(V_n,\mathcal{U}(\mathcal{H}_n))\) denote the set of non-abelian \(1\)-cocycles for the action of \(V_n\) on \(\mathcal{U}(\mathcal{H}_n)\) by conjugation, \(\tau_a(T)=w(a)Tw(a)^*\), and let \(Z_{\tau}^{1,k}(V_n)\) denote the subset of such $1$-non-abelian cocycles taking values in \(\mathcal{C}^{(k-1)}\). We have
\begin{equation}
\mathcal{C}^{(k)}/U(1)\cong Z_{\tau}^{1,k}(V_n),
\end{equation}
where \(\mathcal{C}^{(k)}/U(1)\) is the \(k\)-th level of the Clifford hierarchy modulo multiplication by $U(1)$ phases. 
\end{result}

\cref{informal-abstract} gives a cohomological description of the Clifford hierarchy by identifying \(\mathcal{C}^{(k)}/U(1)\) with the space of \(\mathcal{C}^{(k-1)}\)-valued non-abelian \(1\)-cocycles. For a gate in \(\mathcal{C}^{(3)}\), each non-abelian \(1\)-cocycle lies in \(\mathcal{C}^{(2)}\). Using the description of \(\mathcal{C}^{(2)}\), we show that the cocycle data separates into three layers: a homomorphism into the symplectic group, an affine \(1\)-cocycle, and a phase obstruction. This yields the following cohomological parametrization of \(\mathcal{C}^{(3)}\).

\begin{result}[Informal Version of \cref{third-level-classification}, \cref{affine-part-H1} and \cref{phase-cohomology}]
\label{informal-third-cohom}
Let $p$ be an odd prime. Gates in \(\mathcal{C}^{(3)}/U(1)\) are in bijection with triples \((S,q,\alpha)\) defining three maps
\begin{equation}
S:V_n\longrightarrow \operatorname{Sp}(V_n), \qquad
q:V_n\longrightarrow V_n,
\qquad
\alpha:V_n\longrightarrow U(1),
\end{equation}
where \(S\) is a group homomorphism, \(q\) is a \(1\)-cocycle, and \(\alpha\) is a phase function.  
\begin{enumerate}
    \item For fixed \(S\), the possible choices of \(q\), modulo translation, are parametrized by \(H^1(V_n,V_{n,S})\), where $V_{n,S}$ is the group $V_n$ with the group action of $V_n$ defined by \(a\cdot v:=S(a)v\). 
    \item For a fixed pair \((q,S)\), a compatible phase function \(\alpha\) exists if and only if the associated \(U(1)\)-valued \(2\)-cocycle \(R_{q,S}\), defined in \cref{rqs}, is cohomologically trivial in \(H^2(V_n,U(1))\).
\end{enumerate}
\end{result}

We further strengthen \cref{informal-third-cohom} by showing that the phase obstruction depends only on the cohomology class \([q]\) and is determined by the symplectic form and the cup product. This allows us to obtain a cohomological moduli description of \(\mathcal{C}^{(3)}\).

\begin{result}[Informal Version of \cref{cor:third-level-cup-square-obstruction} and \cref{prop:third-level-cohomological-moduli}]
\label{informal-third-moduli}
Let \(p\) be an odd prime and fix a homomorphism \(S:V_n\to\operatorname{Sp}(V_n)\). The obstruction class \([R_{q,S}]\in H^2(V_n,U(1))\) depends only on the cohomology class \([q]\in H^1(V_n,V_{n,S})\), giving a well-defined map
\begin{align}
\operatorname{ob}_S:H^1(V_n,V_{n,S})&\longrightarrow H^2(V_n,U(1)),\\
[q]&\longmapsto[R_{q,S}].
\end{align}
defined in terms of the symplectic form and the cup product. Consequently, modulo left and right multiplication by Weyl operators, third-level gates with fixed \(S\) are classified by the cohomology classes \([q]\in H^1(V_n,V_{n,S})\) satisfying \(\operatorname{ob}_S([q])=0\).
\end{result}
 
We next apply \cref{informal-third-cohom} to derive several results concerning semi-Clifford gates in \(\mathcal{C}^{(3)}\).

\begin{result}[Informal Version of \cref{semi-clifford-kernel-criterion}, \cref{witt-index-consequences}, \cref{two-qudit-c3-semi-clifford} and \cref{three-qudit-c3-semi-clifford}]
\label{informal-semi-clifford}
Let $p$ be an odd prime and \(U\in\mathcal{C}^{(3)}\). \(U\) is semi-Clifford if and only if \(K_U:=\ker S_U\) contains a Lagrangian subspace of \(V_n\). Equivalently,
    \begin{equation}\label{witt-informal}
    \dim K_U+\dim (K_U\cap K_U^\perp )\geq 2n.
    \end{equation}
    or $K_U^\perp \subseteq K_U$. Every two and three-qudit gate in \(\mathcal{C}^{(3)}\) is semi-Clifford.
\end{result}

We prove the result for two and three-qudits by analyzing the possible codimensions of \(K_U\). Our techniques involve using the cocycle identities to constrain \(q_U(K_U)\), realizing \(E_U:=\operatorname{im}S_U\) as an abelian subgroup of a symplectic group, and applying Wong's results~\cite{wong1982abelian} on abelian subgroups of Sylow \(p\)-subgroups of symplectic groups to restrict the possible form of \(E_U\). 

\subsubsection{Comparison of Results}\label{com-rel} 
We now compare \cref{informal-integ}--\cref{informal-semi-clifford} with the existing literature. De Silva~\cite{deSilva2021efficient} described unitaries by conjugate tuples and used the Stone--von Neumann theorem, together with constraints such as \(k\)-closedness, to obtain a recursive enumeration of the Clifford hierarchy. Thus, \cref{informal-integ,informal-abstract} are not the first applications of the Stone--von Neumann theorem to the Clifford hierarchy. Their contribution here is instead to organize the complete first order quantum derivative field as a non-abelian \(1\)-cocycle. To the best of our knowledge, this global cocycle formulation, together with its cohomological interpretation, has not previously appeared in the literature. Specializing this formalism to \(\mathcal{C}^{(3)}\), we decompose the cocycle data into symplectic, affine, and phase components. To the best of our knowledge, this yields the first explicit cohomological parametrization of \(\mathcal{C}^{(3)}\), in which the affine data are classified by \(H^1\), while the obstruction to compatible phase data is captured by \(H^2\). We further show that this obstruction depends only on the cohomology class \([q]\) of the affine data and is given by a quadratic expression involving the symplectic form and the cup product. For fixed symplectic data, this yields a moduli-theoretic description of third-level gates, which, to the best of our knowledge, provides a new perspective on \(\mathcal{C}^{(3)}\). We use our formalism to recover several known results and derive new results on semi-Clifford gates in \(\mathcal{C}^{(3)}\). A detailed comparison with the existing literature is provided below:

\begin{enumerate}
\item De Silva~\cite{deSilva2021efficient} characterized semi-Clifford gates via the existence of a Lagrangian semibasis satisfying certain conjugation conditions. In our formalism, this becomes the kernel criterion in \cref{semi-clifford-kernel-criterion}, which, to the best of our knowledge, gives a basis independent characterization of semi-Clifford gates in \(\mathcal{C}^{(3)}\).

\item  \cref{witt-informal} reduces the verification of the semi-Clifford condition to a dimension-theoretic condition on \(K_U := \ker S_U\) and its radical. To the best of our knowledge, this criterion has not previously appeared in the literature.

\item It is known that every two-qudit gate in \(\mathcal{C}^{(3)}\) is semi-Clifford in odd prime dimension. However, our formalism yields a new, short, and structural proof based on our cocycle formalism, \cref{witt-informal} and the classification of elementary abelian \(p\)-subgroups of order \(p^3\) in \(\operatorname{Sp}(\mathbb F_p^4)\). Our approach avoids both the exhaustive computation used for the $p=3$ case~\cite{deSilva2021efficient} and the algebraic-geometric methods used in the $p>3$ case~\cite{chen2024characterising}.  
\item The result that every three-qudit gate in \(\mathcal{C}^{(3)}\) is semi-Clifford for \(p>2\) is new.
\end{enumerate}

\subsection{Related Works}\label{sec:related-work}
Zeng, Chen, and Chuang~\cite{zeng2008semi} introduced
semi-Clifford gates and proved that every one and two qubit gate in \(\mathcal{C}^{(k)}\) is semi-Clifford and that every three-qubit gate in \(\mathcal{C}^{(3)}\) is semi-Clifford. 
De Silva~\cite{deSilva2021efficient} extended the analysis to prime-dimensional qudits and proved that every two qutrit gate in \(\mathcal{C}^{(3)}\) is semi-Clifford. Chen and de Silva~\cite{chen2024characterising} subsequently proved that every two-qudit gate in \(\mathcal{C}^{(3)}\) is semi-Clifford. De Silva and Lautsch~\cite{de2025clifford} later gave a complete classification of the Clifford hierarchy for a single prime-dimensional qudit and showed, in particular, that every one-qudit gate in the Clifford hierarchy is semi-Clifford.  Anderson and Connelly~\cite{anderson2025affineequivalencecliffordhierarchy} proved that every four-qubit gate in \(\mathcal{C}^{(3)}\) is semi-Clifford. 

A different line of work has sought explicit algebraic descriptions of distinguished subclasses of gates in the Clifford hierarchy. Cui, Gottesman, and Krishna~\cite{cui2017diagonal} gave a complete classification of diagonal gates in the Clifford hierarchy for prime-dimensional qudits, describing them in terms of certain polynomial phase functions.  Rengaswamy, Calderbank, and Pfister~\cite{rengaswamy2019unifying} developed a complementary description of diagonal qubit hierarchy gates using symmetric matrices over finite rings. More recent work has considered qubit permutation gates in the third~\cite{he2025characterizationpermutationgateslevel} and fourth~\cite{anderson2025affineequivalencecliffordhierarchy} levels of the Clifford hierarchy, as well as controlled gates~\cite{bastioni2026climbingcliffordhierarchy,xu2026controlledjumpcliffordhierarchy}.

A more recent line of work, which directly motivates the present paper, is quantum higher order Fourier analysis. Bu, Gu, and Jaffe~\cite{bu2025quantum} developed an analytic theory of quantum uniformity measures, establishing positivity and monotonicity properties, norm inequalities, and connections with Weyl--Fourier coefficients and quantum convolution. They also formulated associated inverse-theorem and property-testing problems. Subsequent work has begun to address these questions. Hinsche, Bao, van Dordrecht, Eisert, Bri\"et, and Helsen~\cite{hinsche2026clifford} proved a polynomial inverse theorem for the third noncommutative uniformity norm and obtained an efficient tester for the Clifford group. Bao, Bri\"et, Castro-Silva, van Dordrecht, and Helsen~\cite{zongbo2026cliffordhierarchytestingnearextremizers} established a ``\(99\%\) inverse theorem'' for the fourth noncommutative uniformity norm and used it to obtain an efficient tester for \(\mathcal{C}^{(3)}\). In a complementary direction, Jamneshan~\cite{jamneshan2026quantumuniformitynormspullbacks} showed that the quantum uniformity norms of Bu, Gu, and Jaffe arise as pullbacks of the matrix-valued uniformity norms of Gowers and Hatami~\cite{gowers2016inversestabilitytheoremsapproximate}.

\subsection{Organization}
The remainder of the paper is organized as follows. In \cref{sec:preliminaries}, we introduce preliminary material used throughout the paper. In \cref{formal-framework}, we develop the cocycle classification of the Clifford hierarchy. In \cref{third-clifford}, we specialize this framework to \(\mathcal{C}^{(3)}\) and derive the corresponding cohomological interpretation. In \cref{semi-third}, we apply this description to semi-Clifford gates in \(\mathcal{C}^{(3)}\). 

\subsection{Acknowledgments}
This work was initiated while the author was a graduate student at the University of Maryland, College Park. The author gratefully acknowledges support from the National Science Foundation under Grant DMS-2231533. He also thanks the MathQuantum RTG for organizing the Research Interaction Team on Quantum Information at the University of Maryland, College Park in Spring 2026, which focused on the Clifford hierarchy. The discussions that led to the present work originated through this activity. This work was completed during the author's current affiliation with the University of Michigan, Ann Arbor. 
\section{Preliminaries}\label{sec:preliminaries}
We collect preliminary results used throughout the paper. In \cref{notation}, we discuss the notation used throughout this paper. In \cref{weyl-group}, we discuss the \(n\)-qudit Weyl group. In \cref{clifford-hierarchy}, we discuss aspects of the Clifford hierarchy. Finally, \cref{sec:derivative-characterization}  discusses the quantum derivative. 

\subsection{Notation}\label{notation}
We discuss the main notation used throughout this work.

\subsubsection{Standard Notation}
Throughout, \(n \geq 1\) is an integer and \(p\) is a prime. 
Let \(\mathbb F\) denote an arbitrary field. For a prime \(p\), let \(\omega:=e^{2\pi i/p}\) be a primitive \(p\)-th root of unity, and let \(\mathbb{F}_p\) denote the finite field with \(p\) elements. Let \(U(1)\) denote the circle group, and let \(\mu_m\subseteq U(1)\) denote the subgroup of \(m\)-th roots of unity. Let \(\Gamma_p := \mu_p\) if \(p\) is odd, and \(\Gamma_p := \mu_4\) if \(p=2\). 

\subsubsection{Linear Algebra Notation}\label{la-not}
Let \(V\) be a finite-dimensional vector space over \(\mathbb F\). We denote by \(\operatorname{End}(V)\) the algebra of linear operators on \(V\). If \(V\) is a complex inner product space, then \(\operatorname{End}(V)\) is equipped with the normalized Hilbert--Schmidt inner product
$\langle A,B\rangle_{\mathrm{HS}}:=\frac{\operatorname{Tr}(A^*B)}{\dim V}$
for \(A,B\in\operatorname{End}(V)\). \(I_V\in\operatorname{End}(V)\) denotes the identity operator on \(V\). \(\operatorname{GL}(V)\) denotes the subgroup of \(\operatorname{End}(V)\) consisting of invertible linear operators. If \((V,\langle\, , \,\rangle)\) is an inner product space over \(\mathbb C\), then \(\mathcal U(V)\) denotes the subgroup of \(\operatorname{GL}(V)\) consisting of unitary operators
\begin{equation}
\mathcal U(V):=\{A\in\operatorname{GL}(V):\langle Av,Aw\rangle=\langle v,w\rangle \text{ for all }v,w\in V\}.
\end{equation}

A symplectic vector space is a pair \((V,[\, , \,])\), where \(V\) is a vector space over \(\mathbb F\) and \([\, , \,]:V\times V\to\mathbb F\) is a non-degenerate alternating bilinear form, called a symplectic form. Every finite-dimensional symplectic vector space has even dimension. \(\operatorname{Sp}(V)\) denotes the symplectic group of linear operators on \(V\) that preserve the symplectic form
\begin{equation}
\operatorname{Sp}(V)=\{A\in\operatorname{GL}(V):[Av,Aw]=[v,w]\text{ for all }v,w\in V\}.
\end{equation}
If \(U\subseteq V\) is a subspace of \((V,[\,\cdot,\cdot\,])\), then its symplectic orthogonal complement is defined by
\begin{equation}
L^\perp:=\{v\in V:[v,\ell]=0\text{ for every }\ell\in L\}.
\end{equation}
A subspace \(L\subseteq V\) is called isotropic if \(L\subseteq L^\perp\), coisotropic if \(L^\perp\subseteq L\)  and Lagrangian if \(L=L^\perp\). Equivalently, if \(\dim_{\mathbb F}V=2m\), then \(L\) is Lagrangian if and only if \(L\) is isotropic and \(\dim_{\mathbb F}L=m\). If \(L\subseteq V\) is Lagrangian, there exists a Lagrangian subspace \(L'\subseteq V\) such that \(V=L\oplus L'\) and the symplectic pairing induces an isomorphism \(L' \cong L^*\), where $L^*$ is the dual space of $L$. The reader is referred to \cite{EckhardSGnotes,taylor1992geometry} for further details on symplectic vector spaces.

\subsubsection{Group Theory Notation}\label{gp-not}
For groups \(G\) and \(H\), we write \(\operatorname{Hom}(G,H)\) for the set of group homomorphisms from \(G\) to \(H\). Elements of \(\operatorname{Hom}(G,U(1))\) are called characters. For \(g,h\in G\), the commutator of \(g\) and \(h\) is $ghg^{-1}h^{-1}$. We write \(Z(G)\) for the center of \(G\), consisting of those elements \(g\in G\) such that \([g,h]\) is the identity element for every \(h\in G\). If \(G\) acts on a set \(X\) on the left, we write \(g\cdot x\) for the action of \(g\in G\) on \(x\in X\). If \(G\) acts on a group \(N\) by automorphisms, the corresponding semidirect product is denoted by \(N\rtimes G\), with multiplication
\begin{equation}
(n,g)(n',g')=(n(g\cdot n'),gg').
\end{equation}
If \(E\) is a \(p\)-group of the form $(\mathbb Z/p\Z)^q$ for some $q \geq 0$, then \(E\) has a canonical structure of an \(\mathbb F_p\)-vector space. We have $\dim (\mathbb Z/p\Z)^q = q$ and
 \(|E|=p^{q}\).
 Any group homomorphism between elementary abelian \(p\)-groups is automatically \(\mathbb F_p\)-linear with respect to these vector-space structures.  
A short exact sequence of groups is a sequence of group homomorphisms
\begin{equation}\label{ses}
1\longrightarrow N\xrightarrow{\hspace{0.2cm}i\hspace{0.2cm}}E\xrightarrow{\hspace{0.2cm}\pi\hspace{0.2cm}}Q\longrightarrow1
\end{equation}
such that \(i\) is injective, \(\pi\) is surjective, and \(\operatorname{im}(i)=\ker(\pi)\). The sequence is called central if \(i(N)\) is contained in \(Z(E)\). If \(f:H\to Q\) is a group homomorphism, the pullback of \cref{ses} along \(f\) is the group
\begin{equation}
f^*E:=\{(h,e)\in H\times E:f(h)=\pi(e)\},
\end{equation}
which fits into the short exact sequence
\begin{equation}\label{pullback}
1\longrightarrow N\longrightarrow f^*E\longrightarrow H\longrightarrow1.
\end{equation}
If \cref{ses} is a central extension, then \cref{pullback} is also a central extension.

\subsubsection{Group Cohomology Notation}\label{gp-cohom}
Let \(G\) act on an abelian group \(M\). We write \(C^r(G,M)\), \(Z^r(G,M)\), and \(B^r(G,M)\) for the groups of \(r\)-cochains, \(r\)-cocycles, and \(r\)-coboundaries, respectively, and define
\begin{equation}
H^r(G,M):=Z^r(G,M)/B^r(G,M).
\end{equation}
If \(\eta\in Z^r(G,M)\), we write \([\eta]\in H^r(G,M)\) for its cohomology class. When \(M\) is written multiplicatively, a \(1\)-cochain is a map \(\phi:G\to M\), and it is a \(1\)-cocycle if
\begin{equation}
\phi(gh)=\phi(g)(g\cdot\phi(h))
\end{equation}
for all \(g,h\in G\). A \(2\)-cochain is a map \(\eta:G\times G\to M\), and it is a \(2\)-cocycle if
\begin{equation}
(g\cdot\eta(h,k))\eta(g,hk)=\eta(gh,k)\eta(g,h)
\end{equation}
for all \(g,h,k\in G\).  A \(2\)-cochain is a \(2\)-coboundary if it is of the form \(\delta\phi\), where $\phi$ is a \(1\)-cochain and 
\begin{equation}
(\delta\phi)(g,h):=\frac{\phi(g)(g\cdot\phi(h))}{\phi(gh)}.
\end{equation}
If \(G\) acts on abelian groups \(M\) and \(N\) and \(\psi:M\to N\) is a compatible map, then \(\psi\) induces
\begin{equation}\label{eqn:cohomology-coefficient-map}
\psi_*:H^r(G,M)\longrightarrow H^r(G,N).
\end{equation}
Equivalence classes of central extensions
are classified by \(H^2(G,A)\), where \(G\) acts trivially on \(A\).  If \(f:H\to G\) is a group homomorphism, its defining \(2\)-cocycle is the pullback
\begin{equation}
(f^*\eta)(h_1,h_2):=\eta(f(h_1),f(h_2)),
\end{equation}
and hence its extension class is $f^*[\eta]\in H^2(H,A)$. Suppose that \(G\) acts on abelian groups \(M\) and \(N\).  For \(f\in C^1(G,M)\) and \(g\in C^1(G,N)\), their degree-\((1,1)\) cup product is the map
\begin{align}
\smile : H^1(G,M)\times H^1(G,N) &\longrightarrow H^2(G,M\otimes N),\\
(f\smile g)(a,b) &\longmapsto (f\smile g)(a,b)=f(a)\otimes a\cdot g(b).
\end{align}
If \(G\) also acts on an abelian group \(P\) and \(M\otimes N\to P\) is a linear map compatible with the \(G\)-actions, then composing with the cup product gives a pairing with values in \(H^2(G,P)\).

\subsubsection{Representation Theory Notation}
A representation of a group \(G\) on a complex vector space \(W\) is a homomorphism \(\rho:G\to\operatorname{GL}(W)\). If \(W\) has an inner product and \(\rho(G)\subseteq\mathcal U(W)\), then \(\rho\) is called a unitary representation. A projective representation of \(G\) on \(W\) is a homomorphism \(\bar\rho:G\to\operatorname{GL}(W)/\mathbb C^\times\). Equivalently, it is given by lifts \(\rho(g)\in\operatorname{GL}(W)\) satisfying
\begin{equation}
\rho(g)\rho(h)=\eta(g,h)\rho(gh)
\end{equation}
for some function \(\eta:G\times G\to\mathbb C^\times\), called the multiplier. Associativity implies that
\(\eta \in Z^2(G,\mathbb C^\times)\), where \(G\) acts trivially on
\(\mathbb C^\times\).  If $\rho'$ is another projective representation such that $\rho'(g)=\varphi(g)\rho(g)$ for all $g\in G$, where $\varphi:G\to\mathbb{C}^{\times}$, then $\eta'$ changes by a coboundary
\begin{equation}
\eta'(g,h)
=
(\delta\varphi)(g,h)\eta(g,h)
=
\frac{\varphi(g)\varphi(h)}{\varphi(gh)}\eta(g,h).
\end{equation}
Thus,  \([\eta]\in H^2(G,\mathbb C^\times)\) depends only on the projective representation. In the unitary case, \([\eta]\in H^2(G, U(1))\). 

\subsection{Weyl Group}\label{weyl-group} 
An \(n\)-qudit quantum system with local dimension \(p\) is described by \(\mathcal H_n:=(\mathbb C^p)^{\otimes n}\cong\mathbb C^{p^n}\) with computational basis \(\{\ket{x}:x=(x_1,\dots,x_n)\in\mathbb F_p^n\}\). For \((s,t)\in V_n\), define
\begin{equation}
X^s\ket{x}=\ket{x+s}, \qquad Z^t\ket{x}=\omega^{t\cdot x}\ket{x}.
\end{equation}
Here addition and the dot product are taken modulo \(p\). The operator \(Z^t\) is diagonal in the computational basis and encodes the additive character \(x\mapsto\omega^{t\cdot x}\), while \(X^s\) acts by translation on computational basis states. 
These operators satisfy the relations
\begin{equation}\label{eqn:XZ-relations}
X^sX^{s'}=X^{s+s'}, \qquad Z^tZ^{t'}=Z^{t+t'}, \qquad
X^sZ^t=\omega^{-s\cdot t}Z^tX^s,
\end{equation}

\begin{defn}\label{weyl-def}
The \(n\)-qudit Weyl group is the subgroup of \(\mathcal U(\mathcal H_n)\) generated by the operators \(X^s\), \(Z^t\), and the scalar phases in \(\Gamma_p\). Equivalently,
\begin{equation}
\mathcal W_n=\{\lambda Z^tX^s:\lambda\in\Gamma_p,\ s,t\in\mathbb F_p^n\}\subseteq\mathcal U(\mathcal H_n).
\end{equation}
\end{defn}

The \(n\)-qudit Weyl group is non-abelian. 
If \(V_n:=\mathbb F_p^n\oplus\mathbb F_p^n\cong\mathbb F_p^{2n}\) is equipped with the symplectic form
\begin{equation}
[a,b]=[(s,t),(s',t')]
:=t\cdot s'-s\cdot t',
\end{equation}
for \(s,t,s',t'\in\mathbb F_p^n\), then
\begin{equation}
Z^tX^sZ^{t'}X^{s'}=\omega^{[a,b]}Z^{t'}X^{s'}Z^tX^s.
\end{equation}
Hence, \(Z^tX^s\) and \(Z^{t'}X^{s'}\) commute if and only if \([a,b]=0\). It is convenient to modify the operators \(Z^tX^s\) by scalar phases. This is done in \cref{modify-ops}.

\begin{defn}\label{modify-ops}
For \(a=(s,t)\in V_n\), define the normalized Weyl operator
\begin{equation}
w(a)=w(s,t):=
\begin{cases}
\omega^{-2^{-1}s\cdot t}Z^tX^s, & p \text{ odd},\\
i^{-s\cdot t}Z^tX^s, & p=2.
\end{cases}
\end{equation}
\end{defn}

The normalized Weyl operators form an orthonormal basis of \(\operatorname{End}(\mathcal H_n)\) with respect to the normalized Hilbert--Schmidt inner product. Thus, every \(B\in\operatorname{End}(\mathcal H_n)\) has a unique expansion \(B=\sum_{a\in V_n}\Xi_B(a)w(a)\), where \(\Xi_B(a)=p^{-n}\operatorname{Tr}(w(a)^*B)\). Moreover,
\begin{equation}\label{weyl-inner}
\frac{1}{p^{n}}\operatorname{Tr}(w(a)^*w(b))=\begin{cases}1, & a=b,\\0, & a\neq b.\end{cases}
\end{equation}
for all \(a,b\in V_n\). The normalized Weyl operators define a projective representation of \(V_n\). Since we will use this fact repeatedly, we prove it explicitly in \cref{weyl-mul-law}.

\begin{lem}\label{weyl-mul-law}
Let \(p\) be an odd prime. The map
\begin{align}
w : V_n & \longrightarrow \mathcal{U}(\mathcal{H}_n) \\
a & \longmapsto w(a)
\end{align}
is a projective representation of the additive group \(V_n\). More explicitly, \(w(a)w(b)=\omega^{\frac12[a,b]}w(a+b)\) for all \(a,b\in V_n\). Consequently,
\begin{equation}\label{comm-relation-odd prime}
w(a)w(b)=\omega^{[a,b]}w(b)w(a)
\end{equation}
for all \(a,b\in V_n\).
\end{lem}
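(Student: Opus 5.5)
We plan a direct computation from the defining relations \cref{eqn:XZ-relations}. Throughout, \(2^{-1}\) denotes the inverse of \(2\) in \(\mathbb F_p\); this is available because \(p\) is odd. Accordingly, \(\omega^{\frac12 x}\) means \(\omega^{2^{-1}x}\) for \(x\in\mathbb F_p\), and it is well defined since \(\omega\) has order \(p\).

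Write \(a=(s,t)\) and \(b=(s',t')\). The first step is to move \(X^s\) past \(Z^{t'}\). Using \(X^sZ^{t'}=\omega^{-s\cdot t'}Z^{t'}X^s\) together with additivity of exponents, we get
\begin{equation}
Z^tX^sZ^{t'}X^{s'}=\omega^{-s\cdot t'}Z^{t+t'}X^{s+s'}.
\end{equation}
Inserting the normalizing phases from \cref{modify-ops} then gives
\begin{equation}
w(a)w(b)=\omega^{-2^{-1}(s\cdot t+s'\cdot t')-s\cdot t'}\,\omega^{2^{-1}(s+s')\cdot(t+t')}\,w(a+b).
\end{equation}

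The key step is bookkeeping of the exponent. Expanding \((s+s')\cdot(t+t')=s\cdot t+s\cdot t'+s'\cdot t+s'\cdot t'\), the exponent becomes
\begin{equation}
2^{-1}(s\cdot t'+s'\cdot t)-s\cdot t'=2^{-1}(t\cdot s'-s\cdot t')=2^{-1}[a,b].
\end{equation}
This yields \(w(a)w(b)=\omega^{\frac12[a,b]}w(a+b)\). The multiplier \(\eta(a,b)=\omega^{\frac12[a,b]}\) is then a \(2\)-cocycle, either by bilinearity or automatically from associativity. The identity also shows that \(w\) is a projective representation. Each \(w(a)\) is unitary, being a phase times a product of unitaries.

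For the commutation relation, we apply the product formula in both orders. This gives \(w(b)w(a)=\omega^{\frac12[b,a]}w(a+b)=\omega^{-\frac12[a,b]}w(a+b)\), using that the form is alternating. Dividing the two expressions gives \(w(a)w(b)=\omega^{[a,b]}w(b)w(a)\).

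The only delicate point is sign consistency with the stated convention \([a,b]=t\cdot s'-s\cdot t'\). This is easy to get backwards, and we would double-check it against the displayed relation \(Z^tX^sZ^{t'}X^{s'}=\omega^{[a,b]}Z^{t'}X^{s'}Z^tX^s\).
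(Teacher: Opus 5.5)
Your proposal is correct and matches the paper's own proof: you move \(X^s\) past \(Z^{t'}\), compare the normalizing phases of \(w(a)w(b)\) and \(w(a+b)\) to get the exponent \(2^{-1}(t\cdot s'-s\cdot t')=2^{-1}[a,b]\), and then obtain the commutation relation from \([b,a]=-[a,b]\). Your sign bookkeeping also agrees with the paper's convention \([a,b]=t\cdot s'-s\cdot t'\), so the check you flag at the end goes through.
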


\begin{proof}
Let \(a=(s,t)\) and \(b=(s',t')\) be elements of \(V_n\). By definition, \(w(a)=\omega^{-2^{-1}s\cdot t}Z^tX^s\) and \(w(b)=\omega^{-2^{-1}s'\cdot t'}Z^{t'}X^{s'}\). Using \(X^sZ^{t'}=\omega^{-s\cdot t'}Z^{t'}X^s\), we obtain
\begin{align}
w(a)w(b)=\omega^{-2^{-1}s\cdot t}\omega^{-2^{-1}s'\cdot t'}Z^tX^sZ^{t'}X^{s'} = \omega^{-2^{-1}s\cdot t-2^{-1}s'\cdot t'-s\cdot t'} Z^{t+t'}X^{s+s'}.
\end{align}
On the other hand, \(w(a+b)=\omega^{-2^{-1}(s+s')\cdot(t+t')}Z^{t+t'}X^{s+s'}\). Therefore,
\begin{align}
w(a)w(b) &= \omega^{-2^{-1}s\cdot t-2^{-1}s'\cdot t'-s\cdot t' +2^{-1}(s+s')\cdot(t+t')} w(a+b) \\ &= \omega^{2^{-1}(t\cdot s'-s\cdot t')} w(a+b) = \omega^{\frac12[a,b]}w(a+b).
\end{align}
\cref{comm-relation-odd prime} follows by comparing \(w(a)w(b)\) and \(w(b)w(a)\) and using \([b,a]=-[a,b]\). This completes the proof.
\end{proof}

\begin{remark}\label{qubit-mult-comm}
Let \(p=2\). For \(a=(s,t)\in V_n\), define $\vartheta(a):=s\cdot t,$ where the dot product is evaluated in \(\mathbb Z\) using the representatives \(0,1\). If \(a=(s,t)\) and \(b=(s',t')\), then
\begin{equation}\label{p=2}
w(a)w(b)=i^{\mu(a,b)}w(a+b),
\end{equation}
where
$\mu(a,b):=\vartheta(a+b)-\vartheta(a)-\vartheta(b)+2s\cdot t'
\pmod 4.$
It can be checked that the analog of \cref{comm-relation-odd prime} is
\begin{equation}\label{comm-weyl-qubit}
w(a)w(b)=(-1)^{[a,b]}w(b)w(a),
\end{equation}
where $[a,b]=t\cdot s'+s\cdot t'\in\mathbb F_2.$
Thus, although the commutation relation has the same symplectic form as in the odd prime case, the multiplication multiplier is different.  In what follows, we prove statements for odd \(p\) and indicate the modifications required when \(p=2\). 
\end{remark}

\begin{remark}
\label{symplectic-invariance-weyl-multiplier}
For \(p=2\), the commutator determined by the symplectic form remains
invariant under \(\operatorname{Sp}(V_n)\), but the fourth-root multiplier
\(i^{\mu(a,b)}\) from \cref{p=2} need not be symplectically invariant.
This failure occurs already for one qubit. 
Consider the linear map
\begin{align}
S\colon \mathbb F_2^2 &\longrightarrow \mathbb F_2^2,\\
(s,t) &\longmapsto (s,t+s).
\end{align}
It can be checked that $S \in \operatorname{Sp}(\mathbb F^2_2)$. Let $a=(1,0)$ and $b=(0,1)$.  Then
$\mu(a,b)=3\pmod 4$, $Sa=(1,1)$, $Sb=(0,1)$ and $\mu(Sa,Sb)=1\pmod 4$. Therefore,
$i^{\mu(Sa,Sb)} = i \neq -i = i^{\mu(a,b)}.$
Thus, the multiplier in \cref{p=2} is not invariant under the symplectic group. This distinction will be relevant in \cref{formal-framework}.
\end{remark}

\subsection{Clifford Hierarchy}\label{clifford-hierarchy}
The Clifford hierarchy plays a central role in fault-tolerant quantum computation. By the Gottesman--Knill theorem~\cite{gottesman1998heisenberg}, quantum circuits constructed from stabilizer-state preparation, gates in the first two levels of the Clifford hierarchy and Pauli measurements are classically simulable. In contrast, the higher levels contain the non-Clifford gates required for universal quantum computation.

\begin{defn}\label{clifford-def}
Let \(\mathcal C^{(0)}:=U(1)I\). The Clifford hierarchy is the family \(\{\mathcal C^{(k)}\}_{k\geq 0}\) of subsets of \(\mathcal U(\mathcal H_n)\) defined recursively as follows. The first level is the Weyl group (up to phase):
\begin{equation}
\mathcal C^{(1)}:=\{\lambda w(a):\lambda\in U(1),\ a\in V_n\}.
\end{equation}
For \(k\geq2\),
\begin{equation}
\mathcal C^{(k)}
:=
\{U\in\mathcal U(\mathcal H_n):U\mathcal C^{(1)}U^*\subseteq\mathcal C^{(k-1)}\}.
\end{equation}
\end{defn}

The second level \(\mathcal C^{(2)}\) is precisely the Clifford group, namely the normalizer of the Weyl group up to phase. For \(k\geq 1\), we write \(\overline{\mathcal C}^{(k)}:=\overline{\mathcal C}^{(k)}\) for the \(k\)-th level of the Clifford hierarchy modulo multiplication by \(U(1)\) phases, and denote the image of \(U\in\mathcal C^{(k)}\) in \(\overline{\mathcal C}^{(k)}\) by \([U]\). We record, without proof, some basic properties of the Clifford hierarchy.

\begin{fact}\label{clifford-props-no-proof}
The Clifford hierarchy satisfies the following properties:
\begin{enumerate}
    \item For each \(k\geq0\), \(\mathcal C^{(k)}\subseteq\mathcal C^{(k+1)}\).
    \item For \(k\geq3\), \(\mathcal C^{(k)}\) is generally not a group.
    \item For each \(k\geq1\), if \(U\in\mathcal C^{(k)}\) then \(w(a)Uw(b)\in\mathcal C^{(k)}\) for all \(a,b\in V_n\).
    \item For each \(k\geq1\), if \(U\in\mathcal C^{(k)}\) then \(CUC^*\in\mathcal C^{(k)}\) for all \(C\in\mathcal C^{(2)}\).
    \item For each \(k\geq2\), if \(U\in\mathcal C^{(k)}\) then $C_1UC_2\in\mathcal C^{(k)}$ for all \(C_1,C_2\in\mathcal C^{(2)}\). 
\end{enumerate}
\end{fact}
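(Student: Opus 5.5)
The plan is to prove the five items essentially independently, using only \cref{clifford-def} and the multiplication law of \cref{weyl-mul-law} (for \(p=2\), \cref{qubit-mult-comm}). Two preliminary observations do most of the work. First, \(\mathcal C^{(1)}\subseteq\mathcal C^{(2)}\): conjugating \(w(b)\) by \(\lambda w(a)\) gives \(\omega^{[a,b]}w(b)\in\mathcal C^{(1)}\). Second, by definition every \(C\in\mathcal C^{(2)}\) satisfies \(Cw(a)C^*=\lambda_a w(a')\) for some \(\lambda_a\in U(1)\) and \(a'\in V_n\). Since \(C^*\) also lies in \(\mathcal C^{(2)}\) (it is the normalizer up to phase), the same holds for \(C^*w(a)C\).

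For (1) I will argue by induction on \(k\). The inclusion \(\mathcal C^{(0)}\subseteq\mathcal C^{(1)}\) holds by taking \(a=0\), and \(\mathcal C^{(1)}\subseteq\mathcal C^{(2)}\) is the first observation. If \(\mathcal C^{(k-1)}\subseteq\mathcal C^{(k)}\), then for \(U\in\mathcal C^{(k)}\) we get \(U\mathcal C^{(1)}U^*\subseteq\mathcal C^{(k-1)}\subseteq\mathcal C^{(k)}\), so \(U\in\mathcal C^{(k+1)}\).

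For (4) I will again induct on \(k\), noting that every level is closed under multiplication by \(U(1)\) phases. The base case \(k=1\) is just the statement that Cliffords conjugate Weyl operators to phased Weyl operators. For \(k\geq2\), take \(U\in\mathcal C^{(k)}\) and \(C\in\mathcal C^{(2)}\), and write \(C^*w(a)C=\lambda w(a')\). Then
\begin{equation}
(CUC^*)\,w(a)\,(CUC^*)^*=\lambda\, C\bigl(Uw(a')U^*\bigr)C^*.
\end{equation}
Here \(Uw(a')U^*\in\mathcal C^{(k-1)}\), so the inductive hypothesis places the right-hand side in \(\mathcal C^{(k-1)}\).

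For (5), take \(k\geq2\), \(U\in\mathcal C^{(k)}\) and \(C_1,C_2\in\mathcal C^{(2)}\), and write \(C_2w(a)C_2^*=\lambda w(a')\). Then
\begin{equation}
C_1UC_2\,w(a)\,(C_1UC_2)^*=\lambda\,C_1\bigl(Uw(a')U^*\bigr)C_1^*.
\end{equation}
This lies in \(\mathcal C^{(k-1)}\) by (4) applied at level \(k-1\geq1\). Item (3) then follows from (5) when \(k\geq2\), since \(w(a),w(b)\in\mathcal C^{(2)}\). For \(k=1\), \cref{weyl-mul-law} gives \(w(a)\lambda w(c)w(b)\in U(1)\,w(a+b+c)\) directly.

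Item (2) is the only non-formal step and I expect it to be the main obstacle, because it needs an explicit witness showing \(\mathcal C^{(3)}\) is not closed under products. I would first try the single-qubit pair \(T=\mathrm{diag}(1,e^{i\pi/4})\) and \(HTH\), both in \(\mathcal C^{(3)}\) by (4). I would then show that \(V=T\,HTH\) sends some Pauli, say \(Z\), to an operator \(VZV^*\) that does not conjugate all Paulis to phased Paulis. It suffices to check that for some Pauli \(P\), the conjugate \((VZV^*)P(VZV^*)^*\) has more than one nonzero Weyl coefficient \(\Xi\), which is a finite computation. For odd \(p\) I would use the analogous pair of a qudit \(T\)-type diagonal gate with cubic phase and its Fourier conjugate. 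For \(k>3\) I would reduce to \(k=3\) by embedding via controlled versions, or simply cite the known literature since the statement says ``generally.''
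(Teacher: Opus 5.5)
The paper records this fact without proof. Its only justification is the remark that follows, which derives (2) from Anderson's example: $HT\in\mathcal C^{(3)}$, while $(HT)^2$ lies in no level.

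Your arguments for (1), (3), (4), (5) therefore supply proofs the paper omits, and they are correct:
\begin{itemize}
\item (1) follows by induction from the base cases $\mathcal C^{(0)}\subseteq\mathcal C^{(1)}\subseteq\mathcal C^{(2)}$.
\item (4) follows by induction, using $C^*w(a)C\in U(1)\,w(a')$ and the fact that every level is closed under phases.
\item (5) follows from (4) at level $k-1$.
\item (3) follows from (5), together with the multiplication law at $k=1$.
\end{itemize}
The one input you assert without proof is $C^*\in\mathcal C^{(2)}$. It follows in one line: conjugation by $C$ induces an injective self-map of the finite set $\mathcal C^{(1)}/U(1)\cong V_n$, and an injective self-map of a finite set is bijective.

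For (2), your witness is Anderson's example up to Clifford conjugation, since $T\cdot HTH=(TH)^2=H(HT)^2H$. Your explicit computation does certify $V\notin\mathcal C^{(3)}$. For instance, $VZV^*=\tfrac{1}{\sqrt2}Z+\tfrac12X-\tfrac12Y$, and conjugating $X$ by this operator gives $-\tfrac12X-\tfrac12Y+\tfrac{1}{\sqrt2}Z$. But it says nothing about higher levels.

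The ``controlled versions'' step for $k>3$ is not a reduction to $k=3$. You would still have to show that the product leaves the higher level, which is the original problem. Drop it and use (1) and (4) instead. By (1), $T,HTH\in\mathcal C^{(3)}\subseteq\mathcal C^{(k)}$ for every $k\geq3$. By (4) with $C=H$, $V\in\mathcal C^{(k)}$ if and only if $(HT)^2\in\mathcal C^{(k)}$, and the cited result excludes this for all $k$.

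Alternatively, none of $VXV^*$, $VYV^*$, $VZV^*$ is a phased Pauli. So $V$ is not semi-Clifford by \cref{lem:semi-clifford-diagonal-equivalence}, and the Zeng--Chen--Chuang theorem that single-qubit hierarchy gates are semi-Clifford gives the same conclusion.

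The trade-off is that your route gives a self-contained, checkable certificate at level three, while the citation settles all levels at once.
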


\begin{remark}
The second statement of \cref{clifford-props-no-proof} follows from Anderson's example~\cite{anderson2024groups}: the single-qubit gate \(HT\) lies in \(\mathcal C^{(3)}\), whereas \((HT)^2\) does not belong to \(\mathcal C^{(k)}\) for any \(k\geq1\).
\end{remark}

Diagonal gates in the Clifford hierarchy typically admit simpler circuit decompositions than arbitrary gates at the same level. This motivates the study of semi-Clifford gates, which are defined in \cref{def:semi-clifford}. 

\begin{defn}\label{def:semi-clifford}
An \(n\)-qudit unitary operator \(U\) is called semi-Clifford if there exist \(C_1,C_2\in\mathcal C^{(2)}\) and an \(n\)-qudit diagonal unitary \(D\) such that \(U=C_1DC_2\).
\end{defn}

\cref{def:semi-clifford} gives a convenient normal-form definition of semi-Clifford gates as those that are Clifford-equivalent to diagonal unitaries. However, Zeng et al.~\cite{zeng2008semi} originally defined qubit semi-Clifford gates as operators that map, by conjugation, at least one maximal abelian subgroup of \(\mathcal C^{(1)}\) to another. For completeness, we prove the equivalence of these definitions in \cref{lem:semi-clifford-diagonal-equivalence}, since the more difficult implication is given only in compressed form in~\cite{zeng2008semi}. Moreover, our proof extends directly to the qudit setting.  The proof is relegated to \cref{appendix:semi-clifford-proof}.

\begin{lem}[Proposition 1 in \cite{zeng2008semi}]
\label{lem:semi-clifford-diagonal-equivalence}
An \(n\)-qudit unitary operator \(U\) is semi-Clifford if and only if there exist maximal abelian subgroups \(A,A'\subseteq \mathcal C^{(1)}\) such that \(UAU^*=A'\).
\end{lem}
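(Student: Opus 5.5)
The plan is to prove the two implications separately, with the standard diagonal maximal abelian subgroup \(\mathcal D:=\{\lambda w(0,t):\lambda\in U(1),\ t\in\mathbb F_p^n\}\) as the pivot. Throughout, I read ``maximal abelian subgroup of \(\mathcal C^{(1)}\)'' as a maximal abelian subgroup containing the scalars \(U(1)I\). Otherwise a subgroup not containing the full phase group could not be carried to another subgroup of \(\mathcal C^{(1)}\) in the intended way.

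First I will classify these subgroups. If \(A\supseteq U(1)I\) is abelian, its image \(\bar A\subseteq V_n\) is isotropic, by \cref{comm-relation-odd prime} (or \cref{comm-weyl-qubit} when \(p=2\)). Conversely, every isotropic subspace lifts to an abelian group \(U(1)\cdot\{w(a):a\in L\}\). Hence the maximal abelian subgroups are exactly \(A_L:=\{\lambda w(a):\lambda\in U(1),\ a\in L\}\) for \(L\) Lagrangian, and \(\mathcal D=A_{L_Z}\) with \(L_Z=0\oplus\mathbb F_p^n\).

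\emph{Forward direction.} Suppose \(U=C_1DC_2\). A diagonal unitary \(D\) commutes with every \(Z^t\), so \(D\mathcal D D^*=\mathcal D\). Since Clifford gates normalize \(\mathcal C^{(1)}\) and induce symplectic maps on \(V_n\), the group \(C_2^*\mathcal D C_2\) is abelian and maximal, hence equal to some \(A_L\). Set \(A:=C_2^*\mathcal D C_2\). Then
\[
UAU^*=C_1\mathcal D C_1^*=:A',
\]
which is again of the form \(A_{L'}\), as required.

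\emph{Reverse direction.} Suppose \(UA_LU^*=A_{L'}\). Since \(\operatorname{Sp}(V_n)\) acts transitively on Lagrangians and every symplectic map is realized by a Clifford gate, I can choose \(C,C'\in\mathcal C^{(2)}\) with \(CA_{L_Z}C^*=A_L\) and \(C'A_{L'}C'^*=A_{L_Z}\). Then \(V:=C'UC\) satisfies \(V\mathcal D V^*=\mathcal D\). The induced map \(t\mapsto t'\) on \(L_Z\), defined by \(VZ^tV^*=\lambda(t)Z^{t'}\), is a group automorphism \(M\) of \(\mathbb F_p^n\), and \(\lambda\) is a phase function with \(\lambda(t)^p\)-type constraints coming from \((Z^t)^p=I\).

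The key step is to show that \(V\) equals a Clifford gate times a diagonal matrix. First, the Clifford gate \(P_M:\ket{x}\mapsto\ket{M^{-T}x}\) realizes \(Z^t\mapsto Z^{Mt}\), so \(V':=P_M^*V\) satisfies \(V'Z^tV'^*=\lambda(t)Z^t\). Next, \(\lambda\) is a character of \(\mathbb F_p^n\) (multiplicativity follows from the homomorphism property, plus the \(p\)-torsion constraint), so \(\lambda(t)=\omega^{t\cdot s}\) for some \(s\). This gives \(\lambda(t)Z^t=X^{-s}Z^tX^{s}\) up to sign conventions, and therefore \(V'':=X^{s}V'\) commutes with every \(Z^t\). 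Since the \(Z^t\) span the diagonal algebra, which is maximal abelian in \(\operatorname{End}(\mathcal H_n)\), the operator \(V''\) is diagonal. Unwinding the definitions gives \(U=C'^*P_MX^{-s}V''C^*\), which is of the form \(C_1DC_2\).

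I expect the main obstacle to be the character step and the \(p=2\) case. For \(p=2\), the scalars \(\Gamma_p=\mu_4\) enter and \(Z^t\) has order 2, so I must check carefully that \(\lambda\) takes values in \(\pm1\) and is a character, using \(\lambda(t)^2=1\) and the homomorphism property. The transitivity and realizability statements (that \(\operatorname{Sp}\) is transitive on Lagrangians and surjects from the Clifford group) I will cite as standard.
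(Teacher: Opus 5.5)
Your proposal is correct and follows essentially the same route as the paper's proof in \cref{appendix:semi-clifford-proof}. Both arguments pivot on the diagonal group \(A_Z\), use transitivity of \(\mathcal C^{(2)}\) on maximal abelian subgroups to reduce to a \(V\) normalizing \(A_Z\), realize the induced automorphism \(Z^t\mapsto\chi(t)Z^{Mt}\) by a basis permutation and an \(X\)-type Weyl operator, and conclude that the remaining factor commutes with \(A_Z\) and is therefore diagonal. Your differences are minor: you derive transitivity from the Lagrangian classification and surjectivity of \(\mathcal C^{(2)}\to\operatorname{Sp}(V_n)\), where the paper simply cites it, and your hedge about subgroups containing \(U(1)I\) is unnecessary, since any maximal abelian subgroup of \(\mathcal C^{(1)}\) automatically contains the central scalars.
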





\subsection{Quantum Derivatives}\label{sec:derivative-characterization}

Bu, Gu, and Jaffe~\cite{bu2025quantum} introduced quantum derivatives of unitary operators and showed that the Clifford hierarchy can be characterized in terms of them. We briefly review the relevant details below. The arguments here do not require the full machinery of quantum higher order Fourier analysis. We begin by defining the action of \(V_n\) on \(\operatorname{End}(\mathcal H_n)\), which plays the role of a noncommutative translation action.

\begin{defn}\label{left-action}
Define the action
\begin{align}
\tau:V_n\times \operatorname{End}(\mathcal H_n)
&\longrightarrow \operatorname{End}(\mathcal H_n),
\label{eqn:tau-def}\\
(a,T)&\longmapsto w(a)T w(a)^* .
\end{align}
We refer to \(\tau\) as the conjugation action of \(V_n\) on \(\operatorname{End}(\mathcal H_n)\) and write \(\tau_a(T):=w(a)T w(a)^*\).
\end{defn}
 
Since \(a\mapsto w(a)\) is a projective representation of \(V_n\) by \cref{weyl-mul-law} and \cref{qubit-mult-comm}, the scalar phase cancels under conjugation. Hence, \(\tau_{a+b}=\tau_a\circ\tau_b\) for all \(a,b\in V_n\), and \(\tau_0=\operatorname{Id}_{\operatorname{End}(\mathcal H_n)}\). Thus, \(\tau\) defines a left action of \(V_n\) on \(\operatorname{End}(\mathcal H_n)\). Since \(V_n\) is abelian, the same family defines a right action via \(T\cdot a=\tau_a(T)\). Indeed,
\begin{equation}
(T\cdot a)\cdot b=\tau_b(\tau_a(T))=\tau_{b+a}(T)=\tau_{a+b}(T)=T\cdot(a+b).
\end{equation}
Using the commutation relations in \cref{comm-relation-odd prime} and \cref{comm-weyl-qubit}, we obtain
\begin{align}\label{tau-on-Weyl}
\tau_a(w(b))=w(a)w(b)w(a)^*=\omega^{[a,b]}w(b)
\end{align}
for all \(a,b\in V_n\). The quantum derivative is defined in \cref{def:derivative}.

\begin{defn}\label{def:derivative}
Let \(U\in\mathcal U(\mathcal H_n)\) and \(a\in V_n\). The quantum derivative of \(U\) in the direction \(a\) is
$\partial_a U:=\tau_a(U)U^*=w(a)Uw(a)^*U^*.$
\end{defn}

 \cref{derivative-characterization} states the characterization of the Clifford hierarchy due to Bu, Gu, and Jaffe.

\begin{fact}[Proposition 22 in \cite{bu2025quantum}]
\label{derivative-characterization}
For every \(k\geq 1\) and every unitary \(U\in\mathcal U(\mathcal H_n)\), we have \(U\in\mathcal C^{(k)}\) if and only if \(\partial_aU\in\mathcal C^{(k-1)}\) for every \(a\in V_n\).
\end{fact}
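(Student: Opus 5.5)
The plan is to rewrite \(\partial_aU\) in terms of the conjugate \(UW U^*\) of a Weyl operator, so that the claim reduces to the recursive definition of \(\mathcal C^{(k)}\) in \cref{clifford-def}. By \cref{def:derivative}, \(\partial_aU=w(a)\,\bigl(Uw(a)^*U^*\bigr)\), and therefore
\begin{equation}
Uw(a)^*U^* = w(a)^*\,\partial_aU .
\end{equation}
The case \(k\ge 2\) then follows from this identity together with closure of the levels under Weyl multiplication. The case \(k=1\) needs a separate argument, because \(\mathcal C^{(0)}=U(1)I\) is not closed under multiplication by Weyl operators.

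\emph{Preliminary facts.} First, \(w(a)^*\) is a phase multiple of \(w(-a)\). For odd \(p\), \cref{weyl-mul-law} gives \(w(a)w(-a)=\omega^{\frac12[a,-a]}w(0)=I\). For \(p=2\), \cref{p=2} gives \(w(a)^2=i^{\mu(a,a)}I\) and \(-a=a\). Second, each \(\mathcal C^{(j)}\) with \(j\ge 1\) is stable under multiplication by \(U(1)\) phases. For \(j=1\) this is built into the definition, and for \(j\ge2\) it holds because \(\lambda U\) and \(U\) induce the same conjugation. Combining these with the definition, \(U\in\mathcal C^{(k)}\) is equivalent to \(Uw(a)^*U^*\in\mathcal C^{(k-1)}\) for every \(a\in V_n\).

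\emph{Case \(k\ge2\).} Here \(k-1\ge1\), so by \cref{clifford-props-no-proof}(3) the set \(\mathcal C^{(k-1)}\) is stable under left multiplication by the Weyl operators \(w(a)^{\pm1}\), again up to phases, which are harmless. Hence \(Uw(a)^*U^*=w(a)^*\partial_aU\) lies in \(\mathcal C^{(k-1)}\) if and only if \(\partial_aU\) does. Quantifying over all \(a\in V_n\) gives both directions at once.

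\emph{Case \(k=1\).} Suppose first that \(U=\lambda w(b)\). By \cref{comm-relation-odd prime} (or \cref{comm-weyl-qubit} for \(p=2\)), \(\partial_aU=\omega^{[a,b]}I\) (respectively \((-1)^{[a,b]}I\)), which lies in \(\mathcal C^{(0)}\).

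For the converse, suppose \(\partial_aU=\lambda(a)I\) for every \(a\). Then \(w(a)Uw(a)^*=\lambda(a)U\). Since \(\tau\) is an action, \(a\mapsto\lambda(a)\) is a character of \(V_n\). Its values are \(p\)-th roots of unity, because \(\tau_a^p=\mathrm{Id}\). By nondegeneracy of \([\,,\,]\), every such character has the form \(a\mapsto\omega^{[a,b]}\) for a unique \(b\in V_n\) (with the analogous statement when \(p=2\)). Using \cref{tau-on-Weyl}, the operator \(T:=U w(b)^*\) then satisfies \(\tau_a(T)=T\) for all \(a\), up to checking the sign convention when choosing \(b\). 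Expanding \(T=\sum_c\Xi_T(c)w(c)\) in the orthonormal Weyl basis gives \(\tau_a(T)=\sum_c\omega^{[a,c]}\Xi_T(c)w(c)\). Invariance for all \(a\) forces \(\Xi_T(c)=0\) whenever \(c\neq0\), so \(T\) is a scalar. Hence \(U\) is a phase multiple of \(w(b)\), that is, \(U\in\mathcal C^{(1)}\).

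\emph{Main obstacle.} The key inputs are the phase bookkeeping and \cref{clifford-props-no-proof}(3), which is stated without proof. If that fact is needed explicitly, I would prove it by induction on \(k\). Write \(w(c)Uw(d)\,w(a)\,\bigl(w(c)Uw(d)\bigr)^*=\tau_c\bigl(U\,\tau_d(w(a))\,U^*\bigr)\). Here \(\tau_d(w(a))\) is a phase multiple of \(w(a)\), and \(\tau_c\) preserves \(\mathcal C^{(k-1)}\) by induction, since \(\tau_c(V)=w(c)Vw(c)^*\) is a product of Weyl multiplications. The case \(k=1\) is a direct computation.
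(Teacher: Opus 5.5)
Your proof is correct. Note that the paper never proves this statement: it imports it as a black box from Bu, Gu, and Jaffe, so there is no in-paper argument to match, and what you have written is a self-contained replacement.

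\textbf{Case $k=1$.} The converse direction (scalar derivatives force $U$ to be a phase times one Weyl operator) is essentially \cref{lem:single-Weyl-support} and \cref{cor:tau-invariants-are-scalars} from the appendix. The paper expands $U$ itself in the Weyl basis and shows its support is a single label. You instead first identify the character $\lambda(a)=\omega^{[a,b]}$ and then show $Uw(b)^*$ is $\tau$-invariant. This is the same computation, organized slightly differently. Your hedge about a sign convention is unnecessary: with $\lambda(a)=\omega^{[a,b]}$ one gets $\tau_a(Uw(b)^*)=\lambda(a)\,\omega^{-[a,b]}\,Uw(b)^*=Uw(b)^*$ directly.

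\textbf{Case $k\ge 2$.} You use the identity $Uw(a)^*U^*=w(a)^*\partial_aU$ together with closure of $\mathcal C^{(k-1)}$ under Weyl multiplication. The paper only records that closure, without proof, as \cref{clifford-props-no-proof}(3). Your induction, based on $w(c)Uw(d)\,w(a)\,(w(c)Uw(d))^*=\omega^{[d,a]}\tau_c(Uw(a)U^*)$, closes that gap as well. It is not circular, since it uses only \cref{clifford-def} and the Weyl commutation relations.

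\textbf{What your route buys.} Both facts the paper cites without proof become elementary consequences of the definitions.

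\textbf{One point of precision.} The equivalence ``$U\in\mathcal C^{(k)}$ iff $Uw(a)^*U^*\in\mathcal C^{(k-1)}$ for all $a$'' in your preliminaries holds only for $k\ge 2$. The level $\mathcal C^{(1)}$ is defined explicitly rather than recursively. With $\mathcal C^{(0)}=U(1)I$ the recursive condition would force every $w(a)$ to be scalar, which is false. Your separate treatment of $k=1$ already respects this, but that sentence should carry the restriction explicitly.
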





\section{Formal Framework}\label{formal-framework}
We begin by reinterpreting first order quantum derivatives as non-abelian \(1\)-cocycles, which yields a recursive cohomological description of the Clifford hierarchy. In \cref{weyl-der}, we show that quantum derivatives satisfy a non-abelian \(1\)-cocycle condition and use this observation to define \emph{flat Weyl derivative fields} abstractly. In \cref{sec:integrability}, we prove the \emph{integrability theorem}, which states that every flat Weyl derivative field arises as the quantum derivative of a unitary operator. In \cref{abstract-class}, we impose the Clifford-level constraint on flat Weyl derivative fields, thereby obtaining a recursive classification of the Clifford hierarchy. Finally, in \cref{sec:low-order}, we validate the framework by recovering the standard descriptions of the $\mathcal{C}^{(1)}$ and $\mathcal{C}^{(2)}$. 

\subsection{Weyl Derivative Fields}\label{weyl-der}
We begin by reinterpreting quantum derivatives as a $1$-cocycle. 

\begin{defn}\label{def:derivative-field}
For \(U\in \mathcal U(\mathcal H_n)\), define its \emph{Weyl derivative field} to be the map
\begin{align}
A_U : V_n &\longrightarrow \mathcal U(\mathcal H_n), \label{eqn:derivative-field-def}\\
a &\longmapsto \partial_a U.
\end{align}
Equivalently, \(A_U\) records the first order quantum derivative of \(U\) in every direction \(a\in V_n\).
\end{defn}

The key observation is that \(A_U\) satisfies a \(1\)-cocycle condition. This is proven in \cref{lem:derivative-field-cocycle-identity}. 

\begin{lem}\label{lem:derivative-field-cocycle-identity}
For every \(U\in \mathcal U(\mathcal H_n)\), the derivative field \(A_U\)
satisfies
\begin{equation}\label{eqn:derivative-field-cocycle-identity}
\tau_b(A_U(a))A_U(b)=A_U(a+b)=\tau_a(A_U(b))A_U(a)
\end{equation}
for all \(a,b\in V_n\). That is,  $A_U$ is a $1$-cocycle.
\end{lem}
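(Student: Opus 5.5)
The plan is to verify the identity by direct expansion, using only the definition $\partial_aU=\tau_a(U)U^*$ from \cref{def:derivative} and the fact, recorded after \cref{left-action}, that $\tau$ is a left action by algebra automorphisms. In particular we use $\tau_{a+b}=\tau_a\circ\tau_b=\tau_b\circ\tau_a$, where the two compositions agree because $V_n$ is abelian, and the multiplicativity $\tau_b(XY)=\tau_b(X)\tau_b(Y)$. The argument therefore holds uniformly for all $p$: the phase discrepancy for $p=2$ noted in \cref{qubit-mult-comm} cancels under conjugation and never enters.

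\textbf{Left equality.} We compute
\begin{equation}
\tau_b(A_U(a))A_U(b)=\tau_b\bigl(\tau_a(U)U^*\bigr)\tau_b(U)U^*=\tau_b(\tau_a(U))\,\tau_b(U)^*\tau_b(U)\,U^*.
\end{equation}
This uses that $\tau_b$ is a $*$-homomorphism, so $\tau_b(U^*)=\tau_b(U)^*$; this holds because $w(b)$ is unitary. The middle factors cancel, and what remains is $\tau_{a+b}(U)U^*=A_U(a+b)$.

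\textbf{Right equality.} This follows by the same computation with $a$ and $b$ exchanged, since $\tau_a\circ\tau_b=\tau_{a+b}$ as well.

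\textbf{Conclusion.} The combined identity says that $A_U(a+b)=A_U(b)\cdot(b\text{-twist of }A_U(a))$ in the right-action convention, or equivalently, in the form $A_U(a+b)=\tau_a(A_U(b))A_U(a)$, that $A_U$ satisfies the non-abelian $1$-cocycle condition for the conjugation action $\tau$ of $V_n$ on $\mathcal U(\mathcal H_n)$. One should also note that $A_U$ indeed takes values in $\mathcal U(\mathcal H_n)$, which is clear since it is a product of unitaries.

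There is no genuine obstacle. The only point needing care is bookkeeping of the cocycle convention, since the paper's multiplicative convention is $\phi(gh)=\phi(g)(g\cdot\phi(h))$ while here the twisted factor appears on the left. I would remark that in the form $\tau_a(A_U(b))A_U(a)$ this is the standard cocycle identity for the opposite multiplication, or equivalently for the right action noted after \cref{left-action}. Either way, the computed identity is exactly \cref{eqn:derivative-field-cocycle-identity}.
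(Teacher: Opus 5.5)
Your proof is correct and is essentially the paper's computation run in the opposite direction: the paper starts from $A_U(a+b)=\tau_a(\tau_b(U))U^*$ and inserts $\tau_a(U^*)\tau_a(U)=I$, whereas you expand $\tau_b(A_U(a))A_U(b)$ and cancel $\tau_b(U)^*\tau_b(U)$. Both arguments rely only on multiplicativity of $\tau$, on $\tau_{a+b}=\tau_a\circ\tau_b$, and on commutativity of $V_n$ for the second equality. One small slip is in your closing remark: in the right-action convention the identity reads $A_U(a+b)=(A_U(a)\cdot b)\,A_U(b)$, with the twisted factor on the left, not $A_U(b)$ multiplied by the twist.
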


\begin{proof}
We have
$A_U(a+b)=\tau_{a+b}(U)U^*=\tau_a(\tau_b(U))U^*.$
Inserting the identity operator in the form $\tau_a(U^*)\tau_a(U) = I$ between the two factors, we get
\begin{align}
A_U(a+b)
&=\tau_a(\tau_b(U))\tau_a(U^*)\tau_a(U)U^*  \\
&=\tau_a(\tau_b(U)U^*)\tau_a(U)U^*=\tau_a(A_U(b))A_U(a).
\end{align}
Since \(V_n\) is abelian, we may interchange \(a\) and \(b\) in the preceding
identity. Hence, $A_U(a+b)=A_U(b+a)=\tau_b(A_U(a))A_U(b)$. This completes the proof. 
\end{proof}

With respect to the right action of \(V_n\) on \(\mathcal U(\mathcal H_n)\) given by \(T\cdot a:=\tau_a(T)\), \cref{lem:derivative-field-cocycle-identity} gives
\begin{equation}
A_U(a+b)=(A_U(a)\cdot b)A_U(b)
\end{equation}
for all \(a,b\in V_n\). Thus, \(A_U\) is a \(1\)-cocycle for this right action. This is the right-action analog of the usual \(1\)-cocycle condition in group cohomology~\cite{brown2012cohomology}. Motivated by this observation, we consider arbitrary maps \(A:V_n\to\mathcal U(\mathcal H_n)\), which need not satisfy the \(1\)-cocycle condition.

\begin{defn}\label{def:flat-field}
Let \(A:V_n\to\mathcal U(\mathcal H_n)\) be a map, called a \emph{derivative field}. \(A\) is \emph{flat} if
\begin{equation}
A(a+b)=(A(a)\cdot b)A(b)
\end{equation}
for all \(a,b\in V_n\), where \(T\cdot b:=\tau_b(T)\). We denote the set of \emph{flat derivative fields} by $Z^1_\tau(V_n,\mathcal U(\mathcal H_n))$. 
\end{defn}

Note that every \(A\in Z^1_\tau(V_n,\mathcal U(\mathcal H_n))\) satisfies \(A(0)=I\). Indeed, setting \(a=b=0\) gives
$A(0)=\tau_0(A(0))A(0)=A(0)^2,$ and multiplying by \(A(0)^*\) yields \(A(0)=I\).

\begin{example}
For every \(U\in\mathcal U(\mathcal H_n)\), \cref{lem:derivative-field-cocycle-identity} implies that \(A_U\) is a flat derivative field.
\end{example}

Since the elements of \(Z^1_\tau(V_n,\mathcal U(\mathcal H_n))\) take values in the non-abelian group \(\mathcal U(\mathcal H_n)\), it is natural to consider the corresponding non-abelian cohomology. Following the standard framework of non-abelian cohomology~\cite{serre1997galois}, adapted to our right-action convention, we define the associated equivalence relation on \(Z^1_\tau(V_n,\mathcal U(\mathcal H_n))\). We refer to this equivalence relation as \emph{gauge equivalence}, and to equivalent cocycles as being related by a \emph{gauge transformation}.

\begin{defn}\label{def:gauge-equivalence}
Let \(A,B\in Z^1_\tau(V_n,\mathcal U(\mathcal H_n))\). We say that \(A\) and \(B\) are gauge equivalent, and write \(A\sim B\), if there exists \(G\in \mathcal U(\mathcal H_n)\) such that $B(a)=\tau_a(G)A(a)G^*$ for all \(a\in V_n\).
\end{defn}

\begin{example}\label{ex:weyl-field-gauge}
Let \(U,G\in \mathcal U(\mathcal H_n)\). Then the Weyl derivative fields
\(A_U\) and \(A_{GU}\) are gauge equivalent. Indeed, for every \(a\in V_n\) we have
\begin{align}
A_{GU}(a)
&=\tau_a(GU)(GU)^*\\
&=\tau_a(G)\tau_a(U)U^*G^*\\
&=\tau_a(G)A_U(a)G^* .
\end{align}
Thus \(A_{GU}\sim A_U\). In particular, gauge equivalence of Weyl derivative
fields records the effect of left multiplication of the underlying unitary.
\end{example}

\begin{lem}\label{lem:gauge-equivalence-preserves-flatness}
Let \(A:V_n\to\mathcal U(\mathcal H_n)\) be a flat derivative field, and let \(G\in\mathcal U(\mathcal H_n)\). Define \(B:V_n\to\mathcal U(\mathcal H_n)\) by $B(a):=\tau_a(G)A(a)G^*$.  Then \(B\) is also a flat derivative field.
\end{lem}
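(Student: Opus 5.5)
The plan is a direct verification of the flatness identity for \(B\), using only that each \(\tau_c\) is a multiplicative automorphism of \(\mathcal U(\mathcal H_n)\) and that \(\tau_{a+b}=\tau_b\circ\tau_a\). First we record that \(\tau_c\) preserves products and adjoints: \(\tau_c(ST)=\tau_c(S)\tau_c(T)\) and \(\tau_c(S^*)=\tau_c(S)^*\), since \(\tau_c\) is conjugation by the unitary \(w(c)\). In particular \(\tau_c\) maps \(\mathcal U(\mathcal H_n)\) to itself, so \(B\) takes values in \(\mathcal U(\mathcal H_n)\) and is a derivative field.

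Next we fix \(a,b\in V_n\) and expand the right-hand side of the flatness condition for \(B\):
\begin{align}
(B(a)\cdot b)B(b)
&=\tau_b\bigl(\tau_a(G)A(a)G^*\bigr)\,\tau_b(G)A(b)G^*\\
&=\tau_b(\tau_a(G))\,\tau_b(A(a))\,\tau_b(G)^*\tau_b(G)\,A(b)G^*\\
&=\tau_{a+b}(G)\,\tau_b(A(a))A(b)\,G^*.
\end{align}
In the middle step the factors \(\tau_b(G)^*\tau_b(G)\) cancel to \(I\). In the last step we use \(\tau_b\circ\tau_a=\tau_{a+b}\), which is the action property recorded after \cref{left-action}, together with commutativity of \(V_n\).

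Finally we apply the flatness of \(A\), namely \(\tau_b(A(a))A(b)=(A(a)\cdot b)A(b)=A(a+b)\), to rewrite the last expression as \(\tau_{a+b}(G)A(a+b)G^*=B(a+b)\). This shows \(B(a+b)=(B(a)\cdot b)B(b)\), which is the flatness condition of \cref{def:flat-field}.

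There is no genuine obstacle here. The only point needing care is the right-action convention: \(T\cdot b=\tau_b(T)\) must be placed on the correct factor, and the composition \(\tau_b\tau_a\) must be identified with \(\tau_{a+b}\), which is legitimate because \(V_n\) is abelian.
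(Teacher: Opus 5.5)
Your proof is correct and is essentially the paper's argument. The paper runs the same chain of equalities in the opposite direction, starting from $B(a+b)$ and using $\tau_{a+b}=\tau_b\circ\tau_a$, the flatness of $A$, and the insertion of $\tau_b(G)^*\tau_b(G)=I$ to regroup into $\tau_b(B(a))B(b)$. It also records $B(0)=I$, which is not needed because it follows from flatness.
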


\begin{proof}
Note that $B(0)=\tau_0(G)A(0)G^*=GG^*=I$. Using the flatness of \(A\), we have
\begin{align}
B(a+b)
&=\tau_{a+b}(G)A(a+b)G^* \\
&=\tau_b(\tau_a(G))\tau_b(A(a))A(b)G^* \\
&=\tau_b(\tau_a(G)A(a)G^*)\tau_b(G)A(b)G^*=\tau_b(B(a))B(b)=(B(a)\cdot b)B(b).
\end{align}
for \(a,b\in V_n\). Hence, \(B\in Z^1_\tau(V_n,\mathcal U(\mathcal H_n))\).
\end{proof}

By \cref{lem:gauge-equivalence-preserves-flatness}, gauge transformations preserve flatness. Therefore, gauge equivalence defines an equivalence relation on \(Z^1_\tau(V_n,\mathcal{U}(\mathcal{H}_n))\). This leads to \cref{def:non-abelian-H1}.

\begin{defn}\label{def:non-abelian-H1}
The first non-abelian cohomology set of \(V_n\) with coefficients in \(\mathcal U(\mathcal H_n)\) for the right conjugation action \(T\cdot a:=\tau_a(T)\) is $H^1_\tau(V_n,\mathcal U(\mathcal H_n)):= Z^1_\tau(V_n,\mathcal U(\mathcal H_n))/\sim$, where \(\sim\) is the relation in \cref{def:gauge-equivalence}.
\end{defn}

We show in \cref{thm:integrability} that every \(A\in Z^1_{\tau}(V_n,\mathcal U(\mathcal H_n))\) is of the form \(A=A_U\) for some \(U\in\mathcal U(\mathcal H_n)\). Consequently, by \cref{ex:weyl-field-gauge}, every flat derivative field is gauge equivalent to the trivial derivative field \(A_I\), and hence \(H^1_\tau(V_n,\mathcal U(\mathcal H_n))\) is trivial.
 
\subsection{Integrability Theorem}\label{sec:integrability}
We now prove \cref{thm:integrability}, which states that every flat derivative field arises as the Weyl derivative field of a unitary operator. Thus, no information is lost in passing from unitary operators to flat derivative fields. The proof relies on the Stone--von Neumann theorem for the Heisenberg group associated with a finite symplectic vector space~\cite{gurevich2010notes, lysenko2022}. 

\begin{fact}[Finite Stone--von Neumann theorem]
\label{fact:finite-stone-von-neumann}
Let $p$ be an odd prime and \((V,\beta)\) be a finite-dimensional symplectic vector space over \(\mathbb F_p\). Let \(H(V)\) be a finite group fitting into an extension
\begin{equation}
\label{eq:finite-heisenberg-extension-full}
1 \longrightarrow \mathbb F_p \overset{\iota}{\longrightarrow}
H(V) \overset{\pi}{\longrightarrow} V \longrightarrow 1
\end{equation}
such that \(\iota(\mathbb F_p)=Z(H(V))\).  Assume that the commutator pairing of \(H(V)\) recovers the symplectic form \(\beta\). Namely, for every \(v,w\in V\), and for any lifts \(\widetilde v,\widetilde w\in H(V)\) satisfying $\pi(\widetilde v)=v$ and  $\pi(\widetilde w)=w$ one has $\widetilde v\widetilde w\widetilde v^{-1}\widetilde w^{-1} = \iota(\beta(v,w))$. Fix a non-trivial character \(\chi:\mathbb F_p\to \mathbb C^\times\) given by \(\chi(z)=\omega^z\). There exists, up to unitary equivalence, a unique irreducible unitary representation \(\rho_\chi:H(V)\to \mathcal U(\mathcal H_\chi)\) such that $\rho_\chi(\iota(z))=\chi(z)I_{\mathcal H_\chi}$ for every $z\in\mathbb F_p$. 
\end{fact}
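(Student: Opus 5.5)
The plan is to prove the statement by character theory for the finite group \(H(V)\), writing \(\dim_{\mathbb F_p}V=2m\) so that \(|H(V)|=p^{2m+1}\). Every representation of a finite group can be made unitary by averaging an inner product, so it suffices to prove existence and uniqueness up to linear isomorphism. For an irreducible \(\rho\), unitary equivalence then follows because an intertwiner between unitary irreducibles is a scalar multiple of a unitary.

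\emph{Existence.} Consider the induced representation \(\operatorname{Ind}_{\iota(\mathbb F_p)}^{H(V)}(\chi\circ\iota^{-1})\).
\begin{enumerate}
\item Since \(\iota(\mathbb F_p)\) is central, every element \(\iota(z)\) acts on this induced representation by the scalar \(\chi(z)\).
\item This representation is nonzero, so it has an irreducible summand \(\rho_\chi\).
\item The summand \(\rho_\chi\) inherits the central action \(\rho_\chi(\iota(z))=\chi(z)I\).
\end{enumerate}

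\emph{Uniqueness.} Let \(\rho\) be any irreducible representation with \(\rho(\iota(z))=\chi(z)I\), and let \(d=\dim\rho\). We compute its character \(\chi_\rho\) in three steps.
\begin{enumerate}
\item \emph{Vanishing off the centre.} Take \(h\in H(V)\) with \(v:=\pi(h)\neq 0\). Because \(\beta\) is nondegenerate, there is \(w\in V\) with \(\beta(w,v)\neq0\). Let \(\widetilde w\) be a lift of \(w\). The commutator hypothesis gives \(\widetilde w h\widetilde w^{-1}=\iota(\beta(w,v))h\).
\item Taking traces, and using that characters are class functions, we get \(\chi_\rho(h)=\chi(\beta(w,v))\,\chi_\rho(h)\). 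Since \(\chi\) is nontrivial and \(\beta(w,v)\neq0\), we have \(\chi(\beta(w,v))\neq1\), hence \(\chi_\rho(h)=0\).
\item \emph{Values on the centre.} We have \(\chi_\rho(\iota(z))=d\,\chi(z)\).
\end{enumerate}
Thus \(\chi_\rho\) is completely determined by \(d\) and \(\chi\).

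\emph{Determining \(d\).} Irreducibility gives \(\langle\chi_\rho,\chi_\rho\rangle=1\). Only the \(p\) central elements contribute to this inner product, so
\begin{equation}
1=\frac{1}{p^{2m+1}}\cdot p\,d^2,
\end{equation}
which forces \(d=p^m\). Any two irreducible representations with central character \(\chi\) therefore have the same character, namely \(p^m\chi\) on the centre and \(0\) elsewhere. Characters determine representations of finite groups, so the two representations are isomorphic.

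\emph{Main obstacle.} The only delicate point is the vanishing step. It uses that the commutator of \emph{arbitrary} lifts equals \(\iota(\beta(v,w))\), together with the sign convention relating \(\beta(w,v)\) and \(\beta(v,w)\). Only nonvanishing of this value matters, so the sign convention is harmless. The hypothesis \(\iota(\mathbb F_p)=Z(H(V))\) is consistent with this vanishing but is not otherwise needed beyond centrality.
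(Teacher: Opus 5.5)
Your proof is correct and complete. The paper does not prove this statement at all: it imports it as a black box from the cited literature, so there is no internal argument to compare against.

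Your route is the standard self-contained character-theoretic proof. Existence comes from an irreducible summand of $\operatorname{Ind}_{\iota(\mathbb F_p)}^{H(V)}\chi$. Uniqueness comes from showing that any irreducible $\rho$ with central character $\chi$ has character $d\,\chi$ on $\iota(\mathbb F_p)$ and $0$ elsewhere, so that $\langle\chi_\rho,\chi_\rho\rangle=1$ forces $d=p^m$. Each step checks out:
\begin{itemize}
\item the conjugation-by-a-lift trick that kills the character off the centre;
\item $|H(V)|=p^{2m+1}$ from exactness;
\item the norm computation;
\item characters determining representations;
\item upgrading an intertwiner $T$ of unitary irreducibles to a unitary by applying Schur to $T^*T$.
\end{itemize}

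The argument also gives more than the bare statement. It produces the dimension $p^m$ explicitly, which is exactly what the proof of \cref{thm:integrability} uses when it asserts that every irreducible summand of $\rho_A$ has dimension $p^n$. The paper reaches that conclusion only indirectly, by combining uniqueness with the separately proved irreducibility of $\rho_w$ (via \cref{cor:tau-invariants-are-scalars}). Your character formula gives that irreducibility immediately, since $\operatorname{Tr}(\omega^z w(a))=p^n\omega^z\delta_{a,0}$ by \cref{weyl-inner}.

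You are also right that the hypothesis $\iota(\mathbb F_p)=Z(H(V))$ is redundant. Your argument only uses that $\chi$ composed with the commutator pairing is a nondegenerate bicharacter on $V$. It therefore applies verbatim to the $\mathbb Z_4$-central group $H_2(V_n)$ of \cref{rem:integrability-qubit-case}: commutators there are $2[a,b]$ and the central character is $z\mapsto i^z$, which yields $d=2^n$. The paper cites a separate characteristic-two reference for that case.
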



\begin{thm}\label{thm:integrability}
Let \(p\) be an odd prime. If \(A \in Z_\tau^1(V_n,\mathcal U(\mathcal H_n))\), there exists a unitary \(U \in \mathcal U(\mathcal H_n)\), unique up to global phase, such that \(A = A_U\). Equivalently, every flat derivative field is the Weyl derivative field of a unitary, and the unitary is unique modulo a \(U(1)\)-phase.
\end{thm}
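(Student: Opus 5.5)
The plan is to convert the condition \(A=A_U\) into an intertwining problem and then invoke \cref{fact:finite-stone-von-neumann}. Observe that \(A(a)=\partial_aU=w(a)Uw(a)^*U^*\) is equivalent to \(A(a)^*w(a)=Uw(a)U^*\). Accordingly, define \(\pi(a):=A(a)^*w(a)\) for \(a\in V_n\). The first step is to show that \(\pi\) is a projective representation of \(V_n\) with exactly the same multiplier as \(w\), namely \(\pi(a)\pi(b)=\omega^{\frac12[a,b]}\pi(a+b)\).

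To verify this, use flatness in the form \(A(a+b)=\tau_b(A(a))A(b)\), so that \(A(a+b)^*=A(b)^*w(b)A(a)^*w(b)^*\). Then
\begin{equation}
\pi(a+b)=A(b)^*w(b)A(a)^*w(b)^*w(a+b).
\end{equation}
By \cref{weyl-mul-law}, \(w(b)^*w(a+b)=\omega^{-\frac12[b,a]}w(a)\) (up to checking the sign convention), which gives \(\pi(a+b)=\omega^{\frac12[a,b]}\pi(b)\pi(a)\). Since \(\pi(b)\pi(a)\) and \(\pi(a)\pi(b)\) should differ by \(\omega^{[b,a]}\), I would instead apply the other form of the cocycle identity, \(A(a+b)=\tau_a(A(b))A(a)\) from \cref{lem:derivative-field-cocycle-identity}, which holds for flat fields because \(V_n\) is abelian. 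This yields \(\pi(a+b)=A(a)^*w(a)A(b)^*w(b)\,w(b)^*w(a)^*w(a+b)=\pi(a)\pi(b)\,\omega^{-\frac12[a,b]}\), which is the desired multiplier identity. Getting the ordering and sign conventions right here is routine but needs care.

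Next, lift both \(w\) and \(\pi\) to the Heisenberg group \(H(V_n)=\mathbb F_p\times V_n\) with the product \((z,a)(z',b)=(z+z'+\tfrac12[a,b],a+b)\); this is valid because \(p\) is odd. Setting \(\rho(z,a)=\omega^z w(a)\) and \(\rho'(z,a)=\omega^z\pi(a)\) gives genuine unitary representations with the same central character \(\chi\). The commutator pairing is \([a,b]\), which is nondegenerate, so \(\iota(\mathbb F_p)=Z(H(V_n))\) and the hypotheses of \cref{fact:finite-stone-von-neumann} hold.

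The main point to check is irreducibility of both representations. For \(\rho\), the operators \(w(a)\) span \(\operatorname{End}(\mathcal H_n)\) by orthonormality \cref{weyl-inner}, so anything commuting with all of them is scalar. For \(\rho'\), I would use a character argument: \(\operatorname{Tr}\rho'(z,a)=0\) for \(a\neq0\), because if \(b\) satisfies \([a,b]\neq0\) then \(\pi(b)\pi(a)\pi(b)^*=\omega^{[b,a]}\pi(a)\) follows from the multiplier law. Hence \(\|\chi_{\rho'}\|^2=p\cdot p^{2n}/|H|=1\). Alternatively, dimension counting works, since \(\dim=p^n\) equals the dimension of the Schrödinger representation. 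Stone–von Neumann then produces a unitary \(U\) with \(U\rho(h)U^*=\rho'(h)\), that is, \(Uw(a)U^*=A(a)^*w(a)\), so \(A=A_U\).

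For uniqueness, suppose \(A_U=A_{U'}\). Then \(U'^*U\) commutes with every \(w(a)\), so by Schur's lemma (or the spanning property of the Weyl operators) it is a scalar, and being unitary, a scalar in \(U(1)\).
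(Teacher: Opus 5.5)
Your proposal is correct and follows essentially the same route as the paper: you rewrite $A=A_U$ as the intertwining condition $A(a)^*w(a)=Uw(a)U^*$, verify the multiplier $\omega^{\frac12[a,b]}$ from flatness, lift both fields to $H(V_n)$ and apply Stone--von Neumann, and prove uniqueness by showing that $U'^*U$ commutes with every $w(a)$. The one variation is how you show $\rho'$ is irreducible: you compute the character norm, using that traces vanish off the center, which is self-contained. The paper instead uses Maschke together with the dimension $p^n$ that Stone--von Neumann forces on each summand. Your first multiplier computation already suffices after swapping $a$ and $b$, so the switch to the other form of the cocycle identity was not needed.
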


\begin{proof} 
Let $H(V_n):=\mathbb F_p \times V_n \cong \mathbb F_p \times \mathbb F^{2n}_p \cong \mathbb F^{2n+1}_p$ with multiplication law
\begin{equation}\label{eqn:explicit-heisenberg-product}
(z,a)(z',b) := \left(z+z'+\frac12[a,b], a+b\right).
\end{equation}
We first verify that \(H(V_n)\) is a central extension of \(V_n\) by \(\mathbb{F}_p\), with central subgroup \(\mathbb{F}_p\times\{0\}\), and that its commutator pairing recovers the symplectic form \([\,\cdot,\cdot\,]\) on \(V_n\). Let \(\pi:H(V_n)\to V_n\) denote the canonical projection. Then \(\pi\) is a group homomorphism with \(\ker\pi=\mathbb{F}_p\times\{0\}\). The inclusion \(\iota:\mathbb{F}_p\to H(V_n)\), defined by \(\iota(z)=(z,0)\), identifies \(\mathbb{F}_p\) with \(\ker\pi\). Moreover, this subgroup is central since
\begin{equation}
(z,0)(z',a)=(z+z',a)=(z',a)(z,0)
\end{equation}
for all \(z,z'\in\mathbb{F}_p\) and \(a\in V_n\).  In fact, this central subgroup is the full center of \(H(V_n)\). Indeed, suppose \((z,a)\in Z(H(V_n))\). Then \((z,a)\) commutes with every \((0,b)\in H(V_n)\). We have
\begin{align}
(z,a)(0,b)(z,a)^{-1}(0,b)^{-1}
&=(z,a)(0,b)(-z,-a)(0,-b)\\
&=\left(z+\frac{1}{2}[a,b],a+b\right)(-z,-a)(0,-b)\\&=\left([a,b],b\right)(0,-b)=\left([a,b],0\right).
\end{align}
Therefore, \([a,b]=0\) for every \(b\in V_n\). Since the symplectic form  on \(V_n\) is non-degenerate, this implies \(a=0\). Hence, every central  element of \(H(V_n)\) lies in \(\mathbb F_p\times\{0\}\). Therefore, $Z(H(V_n))=\mathbb F_p\times\{0\}$. We have that \(H(V_n)\) fits into the central extension of groups
\begin{equation}
1\longrightarrow \mathbb{F}_p
\overset{\iota}{\longrightarrow}
H(V_n)
\overset{\pi}{\longrightarrow}
V_n
\longrightarrow 1.
\end{equation}
It remains to compute the commutator pairing. Let $a,b \in V_n$ and let \((z,a),(z',b)\in H(V_n)\) be arbitrary lifts of \(a\) and \(b\), respectively. Since $(z,a)^{-1}=(-z,-a)$, we compute
\begin{align}
(z,a)(z',b)(z,a)^{-1}(z',b)^{-1}
&=(z,a)(z',b)(-z,-a)(-z',-b) \\
&=\left(z+z'+\frac12[a,b],a+b\right)(-z,-a)(-z',-b) \\
&=\left(z'+\frac12[a,b]+\frac12[a+b,-a],b\right)(-z',-b) \\
&=\left(z'+[a,b],b\right)(-z',-b)=\left([a,b]+\frac12[b,-b],0\right)=([a,b],0).
\end{align}
Hence, the commutator is independent of the lifts and recovers the symplectic form. That is,
\begin{equation}\label{eqn:heisenberg-commutator-recovers-symplectic-form}
[(z,a),(z',b)]_{H(V_n)}=([a,b],0)=\iota([a,b]).
\end{equation}
By \cref{weyl-mul-law}, the map \(a \mapsto w(a)\) defines a projective representation of \(V_n\) with multiplier \(\alpha(a,b)=\omega^{\frac{1}{2}[a,b]}\). Define the map
\begin{align}
B_A : V_n & \longrightarrow \mathcal{U}(\mathcal{H}_n) \\
a & \longmapsto A(a)^*w(a).
\end{align}
Since \(A\) is a flat derivative field, it satisfies \(A(a+b)=\tau_a(A(b))A(a)\). Taking adjoints gives \(A(a+b)^*=A(a)^*\tau_a(A(b)^*)\). Thus, for \(a,b\in V_n\), we have
\begin{align}
B_A(a)B_A(b) &=A(a)^*w(a)A(b)^*w(b) \\ &=A(a)^*\tau_a(A(b)^*)w(a)w(b)\\&=\omega^{\frac12[a,b]}A(a)^*\tau_a(A(b)^*)w(a+b)\\&=\omega^{\frac12[a,b]}A(a+b)^*w(a+b)=\omega^{\frac12[a,b]}B_A(a+b).
\end{align}
Hence, \(B_A\) is also a projective representation of \(V_n\) with the same multiplier \(\alpha\).  We now lift these projective representations to irreducible unitary representations of \(H(V_n)\). Define
\begin{align}
\rho_w:H(V_n)&\longrightarrow \mathcal{U}(\mathcal{H}_n), \label{eqn:rho-w-def}\\
(z,a)&\longmapsto \omega^z w(a).
\end{align}
For every \((z,a)\in H(V_n)\) the operator \(w(a)\) is unitary and \(\omega^z\in U(1)\), so \(\rho_w(z,a)\in \mathcal U(\mathcal H_n)\). Moreover, we have
\begin{align}
\rho_w(z,a)\rho_w(z',b)&=\omega^{z+z'}w(a)w(b) \\
&=\omega^{z+z'+\frac12[a,b]}w(a+b) \\
&=\rho_w\left(z+z'+\frac12[a,b],a+b\right)=\rho_w((z,a)(z',b)).
\end{align}
Hence, $\rho_w$ is a unitary representation of $H(V_n)$. 
\(\rho_w\) is irreducible. Indeed, if \(T\in\operatorname{End}(\mathcal H_n)\) commutes with \(\rho_w(z,a)\) for all \((z,a)\in H(V_n)\), then \(T\) commutes with \(w(a)\) for every \(a\in V_n\). Equivalently, \(\tau_a(T)=T\) for every \(a\in V_n\). 
By \cref{cor:tau-invariants-are-scalars}, we have \(T=\lambda I\) for $\lambda \in U(1)$. Hence, \(\rho_w\) is irreducible. Similarly, define
\begin{align}
\rho_A:H(V_n)&\longrightarrow \mathcal{U}(\mathcal{H}_n), \label{eqn:rho-A-def}\\
(z,a)&\longmapsto \omega^z B_A(a).
\end{align}
As above, \(\rho_A\) is a unitary representation of \(H(V_n)\). We now show that \(\rho_A\) is irreducible. Since \(H(V_n)\) is finite, Maschke's theorem~\cite{serre1977linear,etingof2011introduction} implies that \(\rho_A\) decomposes as a direct sum of irreducible unitary representations. Let \(\mathcal K\subseteq \mathcal H_n\) be any irreducible summand. We have 
$\rho_A(z,0)=\omega^z I$ for \(z\in\mathbb F_p\). Hence, it acts on \(\mathcal K\) by the same scalar character \(z\mapsto \omega^z\). Therefore \(\mathcal K\) is an irreducible representation of \(H(V_n)\) with the central character \(\chi(z)=\omega^z\). By \cref{fact:finite-stone-von-neumann}, we must have \(\dim \mathcal K=p^n\). But \(\dim \mathcal H_n=p^n\). Thus \(\mathcal K=\mathcal H_n\), and so \(\rho_A\) is irreducible. Putting everything together, \(\rho_w\) and \(\rho_A\) are irreducible unitary representations of \(H(V_n)\) with the same central character \(\chi(z)=\omega^z\). By \cref{fact:finite-stone-von-neumann}, \(\rho_w\) and \(\rho_A\) are unitarily equivalent. Hence, there exists \(U\in \mathcal U(\mathcal H_n)\) such that $\rho_A(z,a)=U\rho_w(z,a)U^*$ for all $(z,a)\in H(V_n)$. Setting \(z=0\), we obtain
\begin{equation}
A(a)^*w(a)=B_A(a)=Uw(a)U^* \quad \Longleftrightarrow \quad w(a)Uw(a)^*=A(a)U
\end{equation}
for all $a\in V_n$. Therefore, $\tau_a(U)=A(a)U$. Multiplying on the right by \(U^*\), we obtain
\begin{equation}\label{eqn:integrability-proof-11}
A(a)=\tau_a(U)U^*=\partial_a(U)=A_U(a).
\end{equation}
This proves existence. We now prove uniqueness. Suppose \(U_1,U_2\in \mathcal U(\mathcal H_n)\) both satisfy  $A(a)=\tau_a(U_j)U_j^*$ for all $a\in V_n$ and $j=1,2$. Then $\tau_a(U_2^*U_1)=U_2^*U_1$ for all $a\in V_n$. By \cref{cor:tau-invariants-are-scalars}, we have \(U_2^*U_1=\lambda I\) for $\lambda \in U(1)$. Hence, \(U_1\) and \(U_2\) differ by a $U(1)$ phase. This completes the proof.
\end{proof}

\begin{remark}\label{rem:integrability-qubit-case}
When \(p=2\), the normalized Weyl operators satisfy \cref{p=2}. The corresponding Heisenberg group is \(H_2(V_n):=\mathbb Z_4\times V_n\) with multiplication
\begin{equation}
(r,a)(s,b):=(r+s+\mu(a,b),a+b),
\end{equation}
where \(\mu(a,b)\) is the multiplier from \cref{p=2}. Its center is \(\mathbb Z_4\times\{0\}\), and the central character is \(z\mapsto i^z\). Thus, for \(p=2\), the \(\mathbb F_p\)-central Heisenberg group used above is replaced by this \(\mathbb Z_4\)-central extension. The $p=2$ version of the finite Stone--von Neumann theorem~\cite{gurevich-hadani-char2} states that there is, up to unitary equivalence, a unique irreducible unitary representation of $H_2(V_n)$ with this central character. Applying this result to $a\mapsto A(a)^*w(a)$ gives the same conclusion as \cref{thm:integrability}.  
\end{remark}

\begin{cor}\label{cor:trivial-H1}
Every flat derivative field is gauge equivalent to the trivial field. Equivalently, \(H^1_\tau(V_n,\mathcal U(\mathcal H_n))=\{[A_I]\}\), where \(A_I:V_n\to \mathcal U(\mathcal H_n)\) is defined by $A_I(a)=I$ for all \(a\in V_n\).
\end{cor}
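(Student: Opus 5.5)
The corollary follows by combining \cref{thm:integrability} with the gauge computation in \cref{ex:weyl-field-gauge}. Fix \(A\in Z^1_\tau(V_n,\mathcal U(\mathcal H_n))\). First, for odd \(p\), apply \cref{thm:integrability} to obtain \(U\in\mathcal U(\mathcal H_n)\) with \(A=A_U\). For \(p=2\), use the analogous conclusion from \cref{rem:integrability-qubit-case}.

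Next, I will exhibit a gauge transformation carrying \(A_I\) to \(A_U\). Note that \(A_I(a)=\tau_a(I)I^*=I\) for every \(a\), so \(A_I\) is indeed the trivial field. Applying \cref{ex:weyl-field-gauge} with \(G=U\) and underlying unitary \(I\) gives
\begin{equation}
A_U(a)=A_{UI}(a)=\tau_a(U)A_I(a)U^*
\end{equation}
for all \(a\in V_n\). Directly, this is just \(\tau_a(U)\,I\,U^*=\partial_aU\). By \cref{def:gauge-equivalence}, \(A_I\sim A_U=A\).

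Since \(\sim\) is an equivalence relation on \(Z^1_\tau\) (by \cref{lem:gauge-equivalence-preserves-flatness} and the group structure of \(\mathcal U(\mathcal H_n)\)), every class in \(H^1_\tau(V_n,\mathcal U(\mathcal H_n))\) therefore equals \([A_I]\). The set is nonempty because \(A_I\) is itself flat, so it is exactly the singleton \(\{[A_I]\}\).

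There is no real obstacle, since all the substantive work lies in \cref{thm:integrability}. The only point needing care is the direction of the gauge relation, namely that \(B(a)=\tau_a(G)A(a)G^*\) with \(A=A_I\), \(B=A_U\) and \(G=U\). Symmetry of \(\sim\) follows by taking \(G^*\), so the direction does not affect the conclusion.
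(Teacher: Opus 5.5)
Your proposal is correct and matches the paper's proof essentially verbatim: you use \cref{thm:integrability} (with \cref{rem:integrability-qubit-case} for $p=2$) to write $A=A_U$, and then read $A_U(a)=\tau_a(U)A_I(a)U^*$ as a gauge transformation via \cref{ex:weyl-field-gauge}. Your remark on the symmetry of $\sim$ (replace $G$ by $G^*$) makes explicit a step the paper leaves implicit. The paper writes $A\sim A_I$, whereas the displayed identity literally gives $A_I\sim A$.
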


\begin{proof}
By \cref{thm:integrability} and \cref{rem:integrability-qubit-case}, every flat derivative field \(A\) has the form $A(a)=\tau_a(U)U^*$ for some \(U\in \mathcal U(\mathcal H_n)\). Since \(A_I(a)=I\), this can be
rewritten as $A(a)=\tau_a(U)A_I(a)U^*$ for all \(a\in V_n\). By \cref{ex:weyl-field-gauge}, \(A \sim A_I\). Hence, $H^1_\tau(V_n,\mathcal U(\mathcal H_n))=\{[A_I]\}$.
\end{proof}

Note that \cref{cor:trivial-H1} does not trivialize the classification problem for the Clifford hierarchy. At the \(k\)-th level, one requires not only flatness but also the pointwise condition \(A(a)\in\mathcal C^{(k-1)}\) for all \(a\in V_n\). Thus, the classification of the Clifford hierarchy is not an unrestricted problem in non-abelian cohomology, but rather the problem of classifying constrained non-abelian $1$-cocycles. We develop this viewpoint in \cref{abstract-class}.

\subsection{Classification Theorem}\label{abstract-class}
\cref{thm:integrability} shows that flat derivative fields can be identified with first order quantum derivatives, up to global phase. We now impose a constraint on the image of these maps to obtain an abstract classification of the Clifford hierarchy.

\begin{defn}\label{def:constrained-cocycle-space}
For $k\ge 1$, define
\begin{equation}\label{eqn:Z1k-def}
Z^{1,k}_\tau(V_n) :=
\{A\in Z^1_\tau(V_n,\mathcal U(\mathcal H_n)): A(a)\in \mathcal C^{(k-1)}\ \text{for every } a\in V_n \}.
\end{equation}
\end{defn}

\cref{thm:abstract-classification} establishes our abstract classification of the Clifford hierarchy. It is based on the characterization of membership in the Clifford hierarchy using quantum derivatives in \cref{derivative-characterization}.

\begin{thm}\label{thm:abstract-classification}
For every \(k\ge 1\), the map 
\begin{align}
\Phi_k:\mathcal{C}^{(k)}&\longrightarrow Z^{1,k}_\tau(V_n), \label{eqn:abstract-classification-map}\\
U&\longmapsto A_U.
\end{align}
induces a bijection $\overline{\mathcal C}^{(k)}\cong Z^{1,k}_\tau(V_n)$. 
\end{thm}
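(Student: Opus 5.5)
The plan is to check that \(\Phi_k\) is well defined, constant on \(U(1)\)-orbits, injective modulo phase, and surjective. Each step reduces to an earlier result: \cref{derivative-characterization}, \cref{lem:derivative-field-cocycle-identity}, and \cref{thm:integrability} (together with \cref{rem:integrability-qubit-case} when \(p=2\)).

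\emph{Well-definedness.} Let \(U\in\mathcal C^{(k)}\). By \cref{lem:derivative-field-cocycle-identity}, \(A_U\) is a flat derivative field, so \(A_U\in Z^1_\tau(V_n,\mathcal U(\mathcal H_n))\). By the forward direction of \cref{derivative-characterization}, \(A_U(a)=\partial_aU\in\mathcal C^{(k-1)}\) for every \(a\in V_n\). Hence \(A_U\in Z^{1,k}_\tau(V_n)\).

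\emph{Descent to \(\overline{\mathcal C}^{(k)}\).} For \(\lambda\in U(1)\), we have \(\tau_a(\lambda U)(\lambda U)^*=|\lambda|^2\tau_a(U)U^*=A_U(a)\). Also \(\mathcal C^{(k)}\) is stable under multiplication by phases: this is immediate from \cref{clifford-def}, because conjugation by \(\lambda U\) equals conjugation by \(U\). So \(\Phi_k\) factors through a map \(\overline\Phi_k:\overline{\mathcal C}^{(k)}\to Z^{1,k}_\tau(V_n)\).

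\emph{Injectivity.} Suppose \(U_1,U_2\in\mathcal C^{(k)}\) satisfy \(A_{U_1}=A_{U_2}\). The uniqueness part of \cref{thm:integrability} then gives \(U_1=\lambda U_2\) for some \(\lambda\in U(1)\), i.e. \([U_1]=[U_2]\). Concretely, \(U_2^*U_1\) is \(\tau\)-invariant and hence scalar by \cref{cor:tau-invariants-are-scalars}.

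\emph{Surjectivity.} Let \(A\in Z^{1,k}_\tau(V_n)\). Since \(A\) is in particular a flat derivative field, \cref{thm:integrability} produces \(U\in\mathcal U(\mathcal H_n)\) with \(A=A_U\). Then \(\partial_aU=A(a)\in\mathcal C^{(k-1)}\) for all \(a\). The converse direction of \cref{derivative-characterization} gives \(U\in\mathcal C^{(k)}\), so \(\overline\Phi_k([U])=A\).

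The only substantive step is surjectivity, and its difficulty is entirely absorbed by \cref{thm:integrability}. What remains is bookkeeping: the integrated unitary \(U\) is a priori only in \(\mathcal U(\mathcal H_n)\), and the Clifford-level constraint on \(A\) must be transferred back to \(U\). That is exactly what \cref{derivative-characterization} provides, in both directions and for every \(k\ge1\). For \(k=1\) this yields \(\partial_aU\in\mathcal C^{(0)}=U(1)I\).
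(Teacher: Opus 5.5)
Your proposal is correct and follows essentially the same route as the paper's proof. Well-definedness uses \cref{lem:derivative-field-cocycle-identity} and \cref{derivative-characterization}, descent uses phase invariance of \(A_U\), injectivity uses the uniqueness part of \cref{thm:integrability}, and surjectivity uses \cref{thm:integrability} together with the converse direction of \cref{derivative-characterization}.
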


\begin{proof}
We first check that $\Phi_k$ is well-defined. Let \(U\in \mathcal C^{(k)}\). By \cref{derivative-characterization}, we have \(A_U(a)=\partial_aU\in \mathcal C^{(k-1)}\) for every \(a\in V_n\). By \cref{lem:derivative-field-cocycle-identity}, \(A_U\) is flat. Therefore, $A_U\in Z^{1,k}_\tau(V_n)$. Moreover, if \(U'=\lambda U\) for some \(\lambda\in U(1)\), then $A_{U'}(a)=\tau_a(\lambda U)(\lambda U)^*=A_U(a)$. Thus \(\Phi_k\) is constant modulo global phases and descends to a map
\begin{align}
\overline{\Phi}_k:\overline{\mathcal C}^{(k)}&\longrightarrow Z^{1,k}_\tau(V_n) \\
[U] &\longmapsto A_U
\end{align}
We first prove injectivity. Suppose \(\overline{\Phi}_k([U])=\overline{\Phi}_k([V])\). Then \(A_U=A_V\).  By uniqueness in \cref{thm:integrability} and \cref{rem:integrability-qubit-case}, \(U\) and \(V\) differ by a global phase. Hence \([U]=[V]\) in \(\overline{\mathcal C}^{(k)}\). We now prove surjectivity. Let \(A\in Z^{1,k}_\tau(V_n)\). By \cref{thm:integrability} and \cref{rem:integrability-qubit-case}, there exists a unitary \(U\in \mathcal U(\mathcal H_n)\), unique up to global phase, such that
$A(a)=\tau_a(U)U^*=A_U(a)$
for every \(a\in V_n\). Since \(A\in Z^{1,k}_\tau(V_n)\), we have \(A(a)\in \mathcal C^{(k-1)}\) for every \(a\in V_n\). Therefore, \(A_U(a)=\partial_aU\in \mathcal C^{(k-1)}\) for every \(a\in V_n\). By \cref{derivative-characterization}, \(U\in \mathcal C^{(k)}\). Hence, \(A=\Phi_k(U)\).
\end{proof}

The proof of \cref{thm:abstract-classification} crucially uses \cref{thm:integrability} (and \cref{rem:integrability-qubit-case}), which in turn relies on the fact that the map \(B_A\), defined by \(B_A(a)=A(a)^*w(a)\), is a projective representation unitarily equivalent to \(a\mapsto w(a)\). This allows us to reformulate \cref{thm:abstract-classification} in terms of \(B_A\), replacing the cocycle condition on \(A\) by a projective representation condition. For an odd prime \(p\), define
\begin{equation}
Y^{k}_{\omega}(V_n):=
\{
B:V_n\mapsto \mathcal C^{(k-1)} : B(a)B(b)=\omega^{\frac12[a,b]}B(a+b)
\ \text{for all }a,b\in V_n
\}.
\end{equation}
We call an element of \(Y^{k}_{\omega}(V_n)\) a \(\mathcal C^{(k-1)}\)-valued projective Weyl field.

\begin{cor}\label{cor:equiv-abstract}
Assume that \(p\) is an odd prime number. For every \(k\geq 2\), the assignment 
\begin{align} \overline \Theta_k:\overline{\mathcal C}^{(k)} &\longrightarrow Y^{k}_\omega(V_n), \\ [U]&\longmapsto B_U, \end{align}
where $B_U(a):=Uw(a)U^*$ induces a bijection $\overline{\mathcal C}^{(k)} \cong Y^{k}_\omega(V_n)$.  Equivalently, \(\overline{\mathcal C}^{(k)}\) is identified with the set of all \(\mathcal C^{(k-1)}\)-valued projective Weyl fields.
\end{cor}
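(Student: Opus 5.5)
We will obtain \cref{cor:equiv-abstract} by composing the bijection $\overline{\Phi}_k$ of \cref{thm:abstract-classification} with an explicit bijection $Z^{1,k}_\tau(V_n)\to Y^k_\omega(V_n)$ given by $A\mapsto B_A$, where $B_A(a)=A(a)^*w(a)$ as in the proof of \cref{thm:integrability}. The first observation is that for $A=A_U$ we have
\begin{equation}
B_{A_U}(a)=\bigl(w(a)Uw(a)^*U^*\bigr)^*w(a)=Uw(a)U^*=B_U(a),
\end{equation}
so $\overline\Theta_k=\Psi\circ\overline{\Phi}_k$ with $\Psi(A):=B_A$. It therefore suffices to show that $\Psi$ is a well-defined bijection from $Z^{1,k}_\tau(V_n)$ onto $Y^k_\omega(V_n)$. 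Well-definedness of $\overline\Theta_k$ modulo phases is immediate, since $B_{\lambda U}=B_U$.

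For the multiplier condition, the computation in the proof of \cref{thm:integrability} already shows that flatness of $A$ gives $B_A(a)B_A(b)=\omega^{\frac12[a,b]}B_A(a+b)$. Conversely, given any map $B$ with this multiplier, we set $A(a):=w(a)B(a)^*$ and run the computation in reverse. The relation $B(a+b)=\omega^{-\frac12[a,b]}B(a)B(b)$ together with $w(a)w(b)=\omega^{\frac12[a,b]}w(a+b)$ yields
\begin{equation}
A(a+b)=w(a)w(b)B(b)^*B(a)^*.
\end{equation}
Inserting $w(a)^*w(a)$, this becomes $\tau_a(A(b))A(a)$, which is flatness. The two assignments are mutually inverse, so $\Psi$ is a bijection between flat fields and $\omega^{\frac12[\cdot,\cdot]}$-projective fields.

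The main obstacle is the level constraint: we need $A(a)\in\mathcal C^{(k-1)}$ for all $a$ if and only if $B(a)\in\mathcal C^{(k-1)}$ for all $a$. This is where the hypothesis $k\ge 2$ enters, and we plan to handle it in two steps.

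\begin{enumerate}
\item Since $B(a)=A(a)^*w(a)$ and $A(a)=w(a)B(a)^*$, it suffices to know that $\mathcal C^{(k-1)}$ is closed under adjoints and under left and right multiplication by Weyl operators.
\item Closure under Weyl multiplication is \cref{clifford-props-no-proof}(3), which requires $k-1\ge1$, i.e.\ $k\ge 2$.
\end{enumerate}

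Closure under adjoints is not listed among the stated facts. The cleanest route is to bypass it using $U$ directly. For the forward direction, $B_U(a)=Uw(a)U^*\in\mathcal C^{(k-1)}$ holds by \cref{clifford-def}. For the reverse direction, given $B\in Y^k_\omega(V_n)$, we use \cref{thm:integrability} applied to the flat field $A=w(\cdot)B(\cdot)^*$ to produce $U$ with $B=B_U$. Then $Uw(a)U^*\in\mathcal C^{(k-1)}$ for all $a$, and because $\mathcal C^{(k-1)}$ is invariant under $U(1)$ phases, $U\mathcal C^{(1)}U^*\subseteq\mathcal C^{(k-1)}$. Hence $U\in\mathcal C^{(k)}$ directly from \cref{clifford-def}.

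This argument yields surjectivity of $\overline\Theta_k$ without any adjoint-closure lemma. Injectivity follows because $B_U=B_V$ forces $A_U=A_V$, and then uniqueness in \cref{thm:integrability} applies.
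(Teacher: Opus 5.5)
Your proof is correct, but the argument that actually carries it differs from the paper's.

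\textbf{The paper's route.} The paper composes \cref{thm:abstract-classification} with an explicit, unitary-free bijection $Z^{1,k}_\tau(V_n)\to Y^k_\omega(V_n)$, given by $A\mapsto B_A$ with inverse $B\mapsto w(\cdot)B(\cdot)^*$. It disposes of the adjoint issue you flagged by substituting $b=-a$. Flatness gives $A(a)^*=\tau_a(A(-a))$, so $B_A(a)=w(a)A(-a)$. The multiplier relation gives $B(a)^*=B(-a)$, so $w(a)B(a)^*=w(a)B(-a)$. The level constraint then transfers using only closure of $\mathcal C^{(k-1)}$ under left Weyl multiplication, \cref{clifford-props-no-proof}(3), and that is where the paper uses $k\ge 2$.

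\textbf{Your route.} Your bypass never transports the level constraint between $A$ and $B$ at all. For $B_U$, the condition $B_U(a)\in\mathcal C^{(k-1)}$ for all $a$ is just the recursive clause of \cref{clifford-def}, once you add the invariance of $\mathcal C^{(k-1)}$ under $U(1)$ phases (immediate from that definition). So in your final argument $k\ge 2$ enters through that clause, not through Weyl-closure as your narrative says. The only nontrivial input is \cref{thm:integrability}: existence lets you write an arbitrary $B\in Y^k_\omega(V_n)$ as $B_U$, and uniqueness gives injectivity.

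\textbf{Comparison.} Your route needs neither the Bu--Gu--Jaffe characterization \cref{derivative-characterization} (which enters the paper's proof through \cref{thm:abstract-classification}) nor \cref{clifford-props-no-proof}(3). The factorization $\overline\Theta_k=\Psi\circ\overline\Phi_k$ you set up at the start therefore becomes superfluous. What the paper's route buys is the intrinsic identification $Z^{1,k}_\tau(V_n)\cong Y^k_\omega(V_n)$ of the two constrained sets of fields. This presents the corollary as a reformulation of the cocycle classification the paper is organized around. Conversely, combining your direct bijection with that identification would recover \cref{thm:abstract-classification} for $k\ge 2$ without appealing to the derivative characterization.
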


\begin{proof}
For \(A\in Z^{1,k}_{\tau}(V_n)\), we define \(B_A\) as above. Conversely, for \(B\in Y^{k}_\omega(V_n)\), define \(A_B:V_n\to\mathcal U(\mathcal H_n)\) by \(A_B(a)=w(a)B(a)^*\) for all \(a\in V_n\). We show that the maps
\begin{equation}
\begin{aligned}
f : Z^{1,k}_{\tau}(V_n) &\longrightarrow Y^{k}_\omega(V_n) \\
A &\longmapsto B_A
\end{aligned}
\qquad\qquad
\begin{aligned}
g : Y^{k}_\omega(V_n) &\longrightarrow Z^{1,k}_{\tau}(V_n) \\
B &\longmapsto A_B
\end{aligned}
\end{equation}
are inverse to each other. We first show that $f$ is well-defined. Let $A \in Z^{1,k}_{\tau}(V_n)$. By \cref{thm:integrability}, $B_A$ is indeed a projective representation with multiplier $\omega^{\frac{1}{2}[a,b]}$. Putting $b=-a$ in the cocycle identity, we obtain $A(a)^*=\tau_a(A(-a))$ for all $a \in V_n$. Therefore, we have
\begin{align}
B_A(a)=A(a)^*w(a)=\tau_a(A(-a))w(a)&=w(a)A(-a).
\end{align}
Since $A(-a)\in \mathcal C^{(k-1)}$ and $\mathcal C^{(k-1)}$ is stable under left multiplication by Weyl operators by \cref{clifford-props-no-proof}, it follows that $B_A(a)\in \mathcal C^{(k-1)}$ for every $a\in V_n$. Hence, $f$ is well-defined. We now show that $g$ is well-defined.
Let $B \in Y^{k}_\omega(V_n)$. We have $B(0)=I$ and $B(a)B(-a)=I$ for all $a \in V_n$. Hence, $B(a)^*=B(-a)$ for all $a \in V_n$. Therefore, we have
\begin{align}
A_B(a)=w(a)B(a)^*=w(a)B(-a).
\end{align}
Since $B(-a)\in \mathcal C^{(k-1)}$ and $\mathcal C^{(k-1)}$ is stable under left multiplication by Weyl operators by \cref{clifford-props-no-proof}, it follows that $A_B(a)\in \mathcal C^{(k-1)}$ for every $a\in V_n$. We now check flatness. Taking adjoints in the defining relation for the projective representation, we obtain
\begin{equation}\label{eqn:B-projective-adjoint}
B(a+b)^*=\omega^{\frac12[a,b]}B(b)^*B(a)^*.
\end{equation}
for all $a,b \in V_n$. Therefore, we have
\begin{align}
A_B(a+b)
&=w(a+b)B(a+b)^* \\
&=\omega^{\frac12[a,b]}w(a+b)B(b)^*B(a)^* =w(a)w(b)B(b)^*B(a)^*=\tau_a(A_B(b))A_B(a).
\end{align}
Since $V_n$ is abelian, this is equivalent to the cocycle identity. Therefore, $A_B\in Z^{1,k}_{\tau}(V_n)$. Hence, $g$ is well-defined. Note that $g \circ f = \operatorname{Id}_{Z^{1,k}_{\tau}(V_n)}$. Indeed, for \(A\in Z^{1,k}_{\tau}(V_n)\) we have
\begin{align}
A_{B_A}(a)&=w(a)B_A(a)^*\\&=w(a)(A(a)^*w(a))^*=w(a)w(a)^*A(a)=A(a).
\end{align}
for all $a \in V_n$. Hence, $g \circ f = \operatorname{Id}_{Z^{1,k}_{\tau}(V_n)}$. A similar argument shows that $f \circ g = \operatorname{Id}_{Y^{k}_\omega(V_n)}$. Hence, we have $Z^{1,k}_{\tau}(V_n) \cong Y^{k}_\omega(V_n)$. By \cref{thm:abstract-classification}, we have $\mathcal{C}^{(k)}/U(1) \cong Y^{k}_\omega(V_n)$. Explicitly, the bijection is given by the map
\begin{align}
\overline \Theta_k:\overline{\mathcal C}^{(k)}&\longrightarrow Y^{k}_\omega(V_n), \\
[U]&\longmapsto B_{A_U}
\end{align}
such that
\begin{align}
B_{A_U}(a)
&=A_U(a)^*w(a) \\
&=(w(a)Uw(a)^*U^*)^*w(a) \\
&=Uw(a)U^*.
\end{align}
for all $a \in V_n$. This completes the proof.
\end{proof}

\begin{remark}\label{rem:equiv-abstract-qubit}
When \(p=2\), one replaces the multiplier \(\omega^{\frac12[a,b]}\) by the multiplier \(i^{\mu(a,b)}\) from \cref{qubit-mult-comm}. Moreover, for \(k\geq2\) define \(Y^{k}_2(V_n)\) to be the set of maps \(B:V_n\to\mathcal C^{(k-1)}\) satisfying $B(a)B(b)=i^{\mu(a,b)}B(a+b)$ for all \(a,b\in V_n\). The proof of \cref{cor:equiv-abstract} is otherwise unchanged.
\end{remark}

\subsection{Low Order Consequences}\label{sec:low-order}
We now recover the standard descriptions of \(\mathcal C^{(1)}\) and \(\mathcal C^{(2)}\) from \cref{thm:abstract-classification}. This provides a consistency check for the formalism we applying it to the third level in \cref{third-clifford}.

\subsubsection{First Clifford Level} 
By \cref{clifford-def}, we have $\mathcal{C}^{(1)}/U(1) \cong V_n$. We now recover this definition as a consequence of our classification theorem by explicitly computing $Z^{1,1}_\tau(V_n)$.

\begin{prop}\label{level-one-computation}
The map
\begin{align}
f :V_n&\longrightarrow Z^{1,1}_\tau(V_n), \label{eqn:level-one-theta-map}\\
v&\longmapsto A^{(v)}
\end{align}
where $A^{(v)}(a):=\omega^{[a,v]}I$ for all $a \in V_n$.
Consequently, we have $\overline{\mathcal C}^{(1)}\cong V_n$. 
\end{prop}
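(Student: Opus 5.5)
The plan is to show directly that $f$ is a bijection onto $Z^{1,1}_\tau(V_n)$ and then compose with the bijection $\overline{\Phi}_1:\overline{\mathcal C}^{(1)}\cong Z^{1,1}_\tau(V_n)$ of \cref{thm:abstract-classification}.

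\textbf{Well-definedness.} Since $\mathcal C^{(0)}=U(1)I$, the values $A^{(v)}(a)=\omega^{[a,v]}I$ lie in $\mathcal C^{(0)}$. Scalars are fixed by $\tau_b$, so the flatness condition for a scalar-valued field reduces to $A(a+b)=A(a)A(b)$. This holds for $A^{(v)}$ because $a\mapsto\omega^{[a,v]}$ is a character, by bilinearity of $[\cdot,\cdot]$. Hence $A^{(v)}\in Z^{1,1}_\tau(V_n)$.

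\textbf{Injectivity.} Suppose $A^{(v)}=A^{(v')}$. Then $\omega^{[a,v-v']}=1$ for all $a$, so $[a,v-v']=0$ in $\mathbb F_p$ for all $a$. Non-degeneracy of the symplectic form gives $v=v'$.

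\textbf{Surjectivity.} Let $A\in Z^{1,1}_\tau(V_n)$ and write $A(a)=\chi(a)I$ with $\chi(a)\in U(1)$. By the remark above, flatness says exactly that $\chi:V_n\to U(1)$ is a homomorphism. Since $V_n$ has exponent $p$, $\chi$ takes values in $\mu_p$, so $\chi(a)=\omega^{\ell(a)}$ for an additive, hence $\mathbb F_p$-linear, map $\ell:V_n\to\mathbb F_p$. Non-degeneracy of the symplectic form identifies $V_n$ with its dual via $v\mapsto[\cdot,v]$, so $\ell=[\cdot,v]$ for a unique $v$. Thus $A=A^{(v)}$. This step is the only one with any content, and it amounts to the standard identification $\operatorname{Hom}(V_n,U(1))\cong V_n$ given by the symplectic pairing.

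\textbf{Conclusion.} Composing $f^{-1}$ with $\overline{\Phi}_1$ gives $\overline{\mathcal C}^{(1)}\cong V_n$. As a consistency check, this bijection sends $[w(v)]$ to the $v$ with $A_{w(v)}=A^{(v)}$. By \cref{tau-on-Weyl}, $\partial_a w(v)=\tau_a(w(v))w(v)^*=\omega^{[a,v]}I$, so $[w(v)]\mapsto v$, matching \cref{clifford-def}. For $p=2$ the same argument goes through: the characters of $V_n$ still take values $\pm1=(-1)^{[a,v]}$, and \cref{comm-weyl-qubit} plays the role of \cref{tau-on-Weyl}.
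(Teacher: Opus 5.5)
Your proposal is correct and follows essentially the same route as the paper: flatness makes a scalar-valued field a character $\chi=\omega^{\ell}$ with $\ell$ linear, non-degeneracy of the symplectic form gives a unique $v$ with $\ell=[\cdot,v]$ (and injectivity of $f$), and \cref{thm:abstract-classification} then yields $\overline{\mathcal C}^{(1)}\cong V_n$. Your explicit check that $A^{(v)}\in Z^{1,1}_\tau(V_n)$ and your verification that $[w(v)]\mapsto v$ are small additions that the paper leaves implicit.
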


\begin{proof}
Let \(A\in Z^{1,1}_\tau(V_n)\). By definition, \(A(a)\in \mathcal C^{(0)}=U(1)I\) for every \(a\in V_n\). Hence, there exists a function \(\chi:V_n\to U(1)\) such that
$A(a)=\chi(a)I$. Since \(A\) is flat, we have $A(a+b)=\tau_a(A(b))A(a)$. 
Substituting \(A(a)=\chi(a)I\) gives 
$\chi(a+b)=\chi(a)\chi(b)$
for all $a,b\in V_n$. 
Thus, \(\chi\) is a character of the additive group \(V_n\). Since \(V_n\) is an \(\mathbb F_p\)-vector space, every such character is of the form $\chi(a)=\omega^{\ell(a)}$ for a unique linear functional \(\ell:V_n\to \mathbb F_p\). 
By non-degeneracy of the symplectic form, the map
\begin{align}
V_n &\longrightarrow V_n^*, \\
v &\longmapsto (a \longmapsto [a,v])
\end{align}
is an isomorphism. Therefore there exists a unique \(v\in V_n\) such that $\ell(a) = [a,v]$ for all $a\in V_n$. Therefore, \(A=A^{(v)}\) so \(f\) is surjective. Suppose \(A^{(v)}=A^{(v')}\). Then $\omega^{[a,v]}=\omega^{[a,v']}$ for all $a\in V_n$. Equivalently, $[a,v-v']=0$ for all $a\in V_n$. By non-degeneracy of the symplectic form, \(v-v'=0\). Hence, \(v=v'\) so \(f\) is injective. Thus,  \(Z^{1,1}_\tau(V_n)\cong V_n\). By \cref{thm:abstract-classification}, we have $\overline{\mathcal C}^{(1)}\cong V_n$. 
\end{proof}

\subsubsection{Second Clifford Level}\label{subsec:second-clifford-level}
We now specialize to the second level of the Clifford hierarchy. In this case, the derivative field is Weyl-valued up to phase, so the flatness condition reduces to explicit algebraic constraints on a Weyl-label map and a phase function.
This is made precise in \cref{Pauli-cocycle-equations}.

\begin{lem}\label{Pauli-cocycle-equations}
Let \(A:V_n\to \mathcal C^{(1)}\) be a map. Write $A(a)=\beta(a)w(\ell(a))$ for $\beta(a)\in U(1)$ and  $\ell(a)\in V_n$ for all $a \in V_n$. Then \(A\in Z^{1,2}_\tau(V_n)\) if and only if the following conditions hold:
\begin{enumerate}
    \item $\ell(a+b)=\ell(a)+\ell(b)$ for all $a,b\in V_n$. 
    \item If \(p\) is odd, then
    \begin{equation}\label{eqn:beta-cocycle-odd}
    \beta(a+b)=\beta(a)\beta(b)\omega^{[a,\ell(b)]+\frac12[\ell(b),\ell(a)]}
    \end{equation}
    for all $a,b\in V_n$. If \(p=2\), then
    \begin{equation}\label{eqn:beta-cocycle-qubit}
    \beta(a+b)=\beta(a)\beta(b)(-1)^{[a,\ell(b)]}i^{\mu(\ell(b),\ell(a))}
    \end{equation}
    for all $a,b\in V_n$. 
\end{enumerate}
\end{lem}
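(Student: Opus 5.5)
Since $A$ takes values in $\mathcal C^{(1)}\subseteq\mathcal C^{(1)}$ by hypothesis, membership in $Z^{1,2}_\tau(V_n)$ amounts to flatness alone. The plan is therefore to substitute $A(a)=\beta(a)w(\ell(a))$ into the flatness identity in the form $A(a+b)=\tau_a(A(b))A(a)$ (equivalent to \cref{def:flat-field} since $V_n$ is abelian, cf.\ \cref{lem:derivative-field-cocycle-identity}). We then compare both sides in the orthonormal Weyl basis. The uniqueness of the decomposition is what lets us read off a linear condition on $\ell$ and a phase condition on $\beta$ separately.

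For odd $p$, the right-hand side is computed in two steps. First, \cref{tau-on-Weyl} gives
\begin{equation}
\tau_a(A(b))=\beta(b)\,\omega^{[a,\ell(b)]}w(\ell(b)).
\end{equation}
Second, \cref{weyl-mul-law} gives $w(\ell(b))w(\ell(a))=\omega^{\frac12[\ell(b),\ell(a)]}w(\ell(b)+\ell(a))$. Combining these, flatness is equivalent to
\begin{equation}
\beta(a+b)\,w(\ell(a+b))=\beta(a)\beta(b)\,\omega^{[a,\ell(b)]+\frac12[\ell(b),\ell(a)]}\,w(\ell(a)+\ell(b))
\end{equation}
for all $a,b\in V_n$. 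By \cref{weyl-inner}, distinct normalized Weyl operators are orthogonal, so two nonzero multiples $\lambda w(c)$ and $\lambda' w(c')$ are equal if and only if $c=c'$ and $\lambda=\lambda'$. Applying this to the display yields exactly condition (1) together with \cref{eqn:beta-cocycle-odd}. Conversely, if (1) and \cref{eqn:beta-cocycle-odd} hold, the same computation read backwards gives flatness. Hence $A\in Z^{1,2}_\tau(V_n)$, because the pointwise condition $A(a)\in\mathcal C^{(1)}$ holds by assumption.

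For $p=2$ the argument is identical, with two substitutions: the commutation relation \cref{comm-weyl-qubit} gives the factor $(-1)^{[a,\ell(b)]}$ in place of $\omega^{[a,\ell(b)]}$, and the multiplication law \cref{p=2} gives the factor $i^{\mu(\ell(b),\ell(a))}$ in place of $\omega^{\frac12[\ell(b),\ell(a)]}$. This produces \cref{eqn:beta-cocycle-qubit}.

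There is no real obstacle in this argument. The only points needing care are the order of factors, so that the multiplier is $\frac12[\ell(b),\ell(a)]$ rather than $\frac12[\ell(a),\ell(b)]$, and the observation that the decomposition $A(a)=\beta(a)w(\ell(a))$ is unique: if $\beta w(c)=\beta' w(c')$, then \cref{weyl-inner} forces $c=c'$ and hence $\beta=\beta'$. This uniqueness is what makes $\ell$ and $\beta$ well defined from $A$, and it is the same fact used to split the flatness identity into its two conditions.
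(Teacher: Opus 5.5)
Your proposal is correct and follows the paper's proof essentially verbatim. You substitute $A(a)=\beta(a)w(\ell(a))$ into $A(a+b)=\tau_a(A(b))A(a)$, use the conjugation and multiplication rules for Weyl operators, and split the resulting identity by uniqueness of Weyl labels and phases, then treat $p=2$ by swapping in the qubit multiplier. The only difference is that you derive this uniqueness from orthogonality of the Weyl basis, whereas the paper cites it as a separate lemma proved by the same trace argument. One minor slip: your opening should say that $Z^{1,2}_\tau(V_n)$ requires values in $\mathcal C^{(2-1)}=\mathcal C^{(1)}$, which holds by hypothesis, not ``$\mathcal C^{(1)}\subseteq\mathcal C^{(1)}$''.
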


\begin{proof}
If $p$ is odd, using \cref{tau-on-Weyl} and \cref{weyl-mul-law} we have
\begin{align}
\beta(a+b) w(\ell(a+b))
&=: A(a+b) \\
&= \tau_a(A(b)) A(a) \\
&= \beta(a)\beta(b)\omega^{[a,\ell(b)]} w(\ell(b))w(\ell(a))\\
&= \beta(a)\beta(b)
   \omega^{[a,\ell(b)] + \frac{1}{2}[\ell(b),\ell(a)]}
   w(\ell(b)+\ell(a)).
\end{align}
Similarly, if \(p=2\) then \cref{qubit-mult-comm} gives
\begin{align}
A(a+b)
&:= \beta(a+b)w(\ell(a+b)) \\
&= \tau_a(A(b))A(a) = \beta(a)\beta(b)(-1)^{[a,\ell(b)]}i^{\mu(\ell(b),\ell(a))}w(\ell(b)+\ell(a)).
\end{align}
Invoking \cref{lem:Weyl-label-phase-uniqueness} shows that $\ell$ is additive and proves \cref{eqn:beta-cocycle-odd} and \cref{eqn:beta-cocycle-qubit}. Conversely, suppose that \(\ell\) and \(\beta\) satisfy (1) and (2). We show that \(A\) is flat. If \(p\) is odd, we have
\begin{align}
\tau_a(A(b))A(a)
&= \beta(a)\beta(b)\,
   \omega^{[a,\ell(b)]+\frac{1}{2}[\ell(b),\ell(a)]}
   w(\ell(b)+\ell(a)) \\& 
   = \beta(a+b)w(\ell(a+b))= A(a+b).
\end{align}
Thus, \(A(a+b)=\tau_a(A(b))A(a)\) for all \(a,b\in V_n\).  A similar argument works for the case \(p=2\). Therefore, \(A\in Z^{1,2}_\tau(V_n)\). This completes the proof.
\end{proof}

\cref{Pauli-cocycle-equations} is the first point at which \cref{thm:abstract-classification} reduces the classification problem to a concrete system of algebraic equations.  \cref{lem:clifford-symplectic-consequence} and \cref{second-level-symplectic-quotient} show that these conditions recover the standard description of \(\mathcal C^{(2)}\).

\begin{lem}\label{lem:clifford-symplectic-consequence}
Let \(C\in \mathcal C^{(2)}\), and write $A_C(a)=\beta(a)w(\ell(a))$. 
Then \(\ell:V_n\to V_n\) is \(\mathbb F_p\)-linear. Moreover, let \(S_C:V_n\to V_n\) be the Weyl-label map determined by
\begin{equation}\label{eqn:UwaUstar-S}
Cw(a)C^*=\gamma(a)w(S_Ca).
\end{equation}
Then \(S_C\) is \(\mathbb F_p\)-linear and symplectic, and
$\ell=\operatorname{Id}_{V_n} - S_C$.
\end{lem}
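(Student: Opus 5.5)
The plan is to read off \(\ell\) from the definition of the quantum derivative, using the conjugation data of \(C\). First, \(C\in\mathcal C^{(2)}\), so \(Cw(a)C^*\in\mathcal C^{(1)}\) for every \(a\in V_n\). We may therefore write \(Cw(a)C^*=\gamma(a)w(S_Ca)\) with \(\gamma(a)\in U(1)\). The label \(S_Ca\) is uniquely determined, by orthonormality \cref{weyl-inner} (equivalently \cref{lem:Weyl-label-phase-uniqueness}).

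\textbf{Properties of \(S_C\).} Conjugate the multiplication law of \cref{weyl-mul-law}:
\begin{equation*}
Cw(a)C^*\,Cw(b)C^*=\omega^{\frac12[a,b]}Cw(a+b)C^*.
\end{equation*}
Expanding both sides with \cref{weyl-mul-law} again and comparing labels gives \(S_C(a+b)=S_Ca+S_Cb\). An additive map between \(\mathbb F_p\)-vector spaces is \(\mathbb F_p\)-linear. Next, conjugate the commutation relation \cref{comm-relation-odd prime} (or \cref{comm-weyl-qubit} when \(p=2\)). The scalars \(\gamma\) cancel, and this yields \(\omega^{[S_Ca,S_Cb]}=\omega^{[a,b]}\), hence \([S_Ca,S_Cb]=[a,b]\). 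Linearity together with preservation of a non-degenerate form forces injectivity, so \(S_C\in\operatorname{Sp}(V_n)\).

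\textbf{Computing \(\ell\).} By definition, \(A_C(a)=w(a)Cw(a)^*C^*\). Since \(w(a)^*\) is a phase times \(w(-a)\), we get
\begin{equation*}
Cw(a)^*C^*=\overline{\gamma(a)}\,w(S_Ca)^*.
\end{equation*}
Therefore
\begin{equation*}
A_C(a)=\overline{\gamma(a)}\,w(a)w(S_Ca)^*=\overline{\gamma(a)}\,w(a)w(-S_Ca)\cdot(\text{phase}).
\end{equation*}
Applying \cref{weyl-mul-law} (or \cref{p=2}) once more, this is a phase times \(w(a-S_Ca)\). Uniqueness of Weyl labels then gives \(\ell(a)=a-S_Ca\), that is, \(\ell=\operatorname{Id}_{V_n}-S_C\). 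This is linear, being a difference of linear maps. Alternatively, linearity of \(\ell\) already follows from item (1) of \cref{Pauli-cocycle-equations}, since \(A_C\in Z^{1,2}_\tau(V_n)\) by \cref{thm:abstract-classification}.

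\textbf{Main obstacle.} The step needing the most care is the bookkeeping of phases. It matters that \(w(a)^*\) and \(w(-a)\) differ only by a scalar, so that every identity compares labels up to \(U(1)\). For \(p=2\), the multiplier \(i^{\mu}\) is not symplectically invariant (\cref{symplectic-invariance-weyl-multiplier}). So additivity of \(S_C\) should be derived by label comparison only, and the symplectic property from the commutator \cref{comm-weyl-qubit}, never from matching multipliers.
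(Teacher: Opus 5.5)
Your proposal is correct and follows essentially the same route as the paper: define \(S_C\) by conjugation, get additivity by comparing Weyl labels, get the symplectic property by conjugating the Weyl commutator, and read off \(\ell(a)=a-S_Ca\) from \(A_C(a)\propto w(a)w(-S_Ca)\). The differences are minor. The paper obtains linearity of \(\ell\) first from \cref{Pauli-cocycle-equations} (your stated alternative), whereas you deduce it from \(\ell=\operatorname{Id}_{V_n}-S_C\); you also explicitly verify injectivity of \(S_C\), which the paper leaves implicit.
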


\begin{proof}
Since \(C\in \mathcal C^{(2)}\), \cref{derivative-characterization} implies that \(A_C(a)\in \mathcal C^{(1)}\) for every \(a\in V_n\). Therefore, \(A_C\in Z^{1,2}_\tau(V_n)\). By \cref{Pauli-cocycle-equations}, \(\ell\) is additive. Since \(V_n\) is a vector space over \(\mathbb F_p\), every additive map \(V_n\to V_n\) is \(\mathbb F_p\)-linear. Hence, \(\ell\) is \(\mathbb F_p\)-linear. 
Define \(S_C\) by
\begin{equation}\label{eqn:S-label-definition}
Cw(a)C^*=\gamma(a)w(S_Ca), \quad \gamma(a) \in U(1).
\end{equation}
Such an \(S_C\) exists because \(C\in \mathcal C^{(2)}\). Since $Cw(a)w(b)C^* = (Cw(a)C^*)(Cw(b)C^*)$, comparison of both sides gives $S_C(a+b)=S_Ca+S_Cb$. Thus, \(S_C\) is \(\mathbb F_p\)-linear. Next, we show that \(S_C\) preserves the symplectic form. Since $w(a)w(b)w(a)^*w(b)^*=\omega^{[a,b]}I$ for all $a,b \in V_n$, we have
\begin{equation}\label{eqn:S-symplectic-proof-1}
Cw(a)w(b)w(a)^*w(b)^*C^*=\omega^{[a,b]}I.
\end{equation}
Using \(Cw(a)C^*=\gamma(a)w(S_Ca)\) and \(Cw(b)C^*=\gamma(b)w(S_Cb)\) the left-hand side becomes the commutator of \(\gamma(a)w(S_Ca)\) and \(\gamma(b)w(S_Cb)\). The scalar phases cancel in the commutator, so
\begin{align}
Cw(a)w(b)w(a)^*w(b)^*C^*
&= w(S_Ca)w(S_Cb)w(S_Ca)^*w(S_Cb)^*= \omega^{[S_Ca,S_Cb]} I.
\end{align}
Hence, $\omega^{[S_Ca,S_Cb]}=\omega^{[a,b]}$ for all $a,b\in V_n$. Since \(\omega\) is a primitive \(p\)-th root of unity, it follows that $[S_Ca,S_Cb]=[a,b]$ for all \(a,b\in V_n\). Thus, $S_C\in \operatorname{Sp}(V_n)$. Finally, note that
\begin{align}
\label{eqn:ell-S-proof-1}
A_C(a)= w(a)Cw(a)^*C^* \propto w(a)w(-S_Ca)\propto w(a-S_Ca),
\end{align}
where the proportionality factor is a $U(1)$ phase depending on whether $p=2$ or $p$ is odd. Uniqueness of Weyl labels gives $\ell(a)=a-S_Ca$. Hence,  $\ell=\operatorname{Id}_{V_n}-S_C$.
\end{proof}

\begin{prop}\label{second-level-symplectic-quotient}
Let \(p\) be an odd prime. The following is a short exact sequence of groups
\begin{equation}\label{eqn:clifford-symplectic-exact-sequence}
1
\longrightarrow
\mathcal C^{(1)}
\longrightarrow
\mathcal C^{(2)}
\longrightarrow
\operatorname{Sp}(V_n)
\longrightarrow
1.
\end{equation}
Consequently, $\mathcal C^{(2)}/\mathcal C^{(1)} \cong \operatorname{Sp}(V_n).$
\end{prop}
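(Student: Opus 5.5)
The map $\mathcal C^{(2)}\to\operatorname{Sp}(V_n)$ will be $C\mapsto S_C$, where $S_C$ is the Weyl-label map of \cref{lem:clifford-symplectic-consequence}. That lemma already shows $S_C$ is linear and symplectic, so the map is well defined. The plan is to check three things: that it is a homomorphism, that its kernel is $\mathcal C^{(1)}$, and that it is surjective. Before that, I should note that $\mathcal C^{(2)}$ is a group, being the normalizer of the Weyl group up to phase. The inclusion $\mathcal C^{(1)}\subseteq\mathcal C^{(2)}$ comes from \cref{clifford-props-no-proof}.

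\textbf{Homomorphism.} For $C,D\in\mathcal C^{(2)}$ we have $CDw(a)D^*C^*=\gamma_D(a)\,Cw(S_Da)C^*\propto w(S_CS_Da)$. By uniqueness of Weyl labels (\cref{weyl-inner}), this gives $S_{CD}=S_CS_D$.

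\textbf{Kernel.} If $C=\lambda w(v)$, then \cref{tau-on-Weyl} gives $Cw(a)C^*=\omega^{[v,a]}w(a)$, so $S_C=\operatorname{Id}$. Conversely, suppose $S_C=\operatorname{Id}$. Then $Cw(a)C^*=\gamma(a)w(a)$. Multiplicativity in $a$, together with \cref{weyl-mul-law}, forces $\gamma$ to be a character of $V_n$. By the argument of \cref{level-one-computation}, $\gamma(a)=\omega^{[v,a]}$ for a unique $v$. Then $w(v)^*C$ commutes with every $w(a)$, so by \cref{cor:tau-invariants-are-scalars} it is a scalar, and hence $C\in\mathcal C^{(1)}$.

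\textbf{Surjectivity.} This is the main step, and I would handle it with the Stone--von Neumann argument used in \cref{thm:integrability}. Given $S\in\operatorname{Sp}(V_n)$, define $\rho_S(z,a):=\omega^z w(Sa)$. Since $p$ is odd and $[Sa,Sb]=[a,b]$, \cref{weyl-mul-law} gives
\begin{equation}
w(Sa)w(Sb)=\omega^{\frac12[a,b]}w(S(a+b)),
\end{equation}
so $\rho_S$ is a unitary representation of $H(V_n)$ with central character $z\mapsto\omega^z$. This is exactly where $p$ odd matters; compare \cref{symplectic-invariance-weyl-multiplier}. The representation $\rho_S$ is irreducible because its image, up to scalars, is all Weyl operators, since $S$ is a bijection. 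Alternatively, irreducibility follows from the dimension count in \cref{thm:integrability}.

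By \cref{fact:finite-stone-von-neumann}, there is a unitary $C$ with $Cw(a)C^*=w(Sa)$ for all $a$. This $C$ normalizes the Weyl group, so $C\in\mathcal C^{(2)}$ and $S_C=S$.

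Exactness at $\mathcal C^{(2)}$ is the kernel computation, and exactness at $\mathcal C^{(1)}$ is just the inclusion. Together these give the short exact sequence and the isomorphism $\mathcal C^{(2)}/\mathcal C^{(1)}\cong\operatorname{Sp}(V_n)$.
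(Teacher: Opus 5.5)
Your proposal is correct and follows the paper's proof essentially step for step: the same map $C\mapsto S_C$, the same label comparison for the homomorphism property, and the same Stone--von Neumann argument applied to $a\mapsto w(Sa)$ for surjectivity. You are in fact slightly more careful than the paper in justifying that $\rho_S$ is irreducible; the only other difference is in the kernel step, where you show $\gamma$ is a character $\omega^{[v,\cdot]}$ and use the trivial commutant of the Weyl operators, whereas the paper notes $\partial_aC\in U(1)I$ and invokes the $k=1$ case of \cref{derivative-characterization}, which comes to the same thing as \cref{lem:single-Weyl-support}.
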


\begin{proof}
Define $\Psi:\mathcal C^{(2)} \rightarrow \operatorname{Sp}(V_n)$ by \(\Psi(C)=S_C\), where $Cw(a)C^*=\gamma_C(a)w(S_Ca).$ By \cref{lem:clifford-symplectic-consequence}, \(S_C\in\operatorname{Sp}(V_n)\). Let \(C,D\in\mathcal C^{(2)}\) and \(a\in V_n\). We have
\begin{align}
\gamma_{CD}(a)w(S_{CD}a)
&=(CD)w(a)(CD)^*\\
&=C(Dw(a)D^*)C^*\\
&=C(\gamma_D(a)w(S_Da))C^*\\
&=\gamma_D(a)\gamma_C(S_Da)w(S_C(S_Da)).
\end{align}
By uniqueness, we have $S_{CD}(a)=S_C(S_Da)$ for all \(a\in V_n\). Therefore, $S_{CD}=S_C\circ S_D,$ so \(\Psi\) is a group homomorphism. We now identify \(\ker\Psi\). Suppose \(C\in\ker\Psi\). Then \(S_C=\operatorname{Id}_{V_n}\). Hence, we have $Cw(a)C^*=\gamma_C(a)w(a)$ for every \(a\in V_n\). Therefore,
\begin{equation}\label{eqn:kernel-derivative-scalar}
\partial_aC = w(a)Cw(a)^*C^* \in U(1)I
\end{equation}
for every \(a\in V_n\). By \cref{derivative-characterization} with \(k=1\), we have \(C\in\mathcal C^{(1)}\). Conversely, suppose that \(C\in\mathcal C^{(1)}\). Then $C=\lambda w(v)$ for some \(\lambda\in U(1)\) and \(v\in V_n\). By \cref{weyl-mul-law}, conjugation by \(w(v)\) only multiplies each \(w(a)\) by a scalar phase. Hence, $S_C=\operatorname{Id}_{V_n}.$ Therefore, $\ker\Psi=\mathcal C^{(1)}.$ It remains to prove surjectivity. Let \(S\in\operatorname{Sp}(V_n)\) and define $w_S(a):=w(Sa).$ Since \(p\) is odd and \(S\) is symplectic, it follows from
\cref{weyl-mul-law} that
\begin{align}
w_S(a)w_S(b)
&=w(Sa)w(Sb)\\
&=\omega^{\frac12[Sa,Sb]}w(Sa+Sb)\\
&=\omega^{\frac12[a,b]}w(S(a+b))=\omega^{\frac12[a,b]}w_S(a+b).
\end{align}
Thus, \(a\mapsto w_S(a)\) is a projective representation of \(V_n\) with the same multiplier as \(a\mapsto w(a)\). Hence, it lifts to an ordinary representation of \(H(V_n)\) with the same central character as the standard Weyl representation. By \cref{fact:finite-stone-von-neumann}, there exists
\(C_S\in\mathcal U(\mathcal H_n)\) such that
$C_Sw(a)C_S^*=w(Sa)$
for every \(a\in V_n\). Hence, \(C_S\in\mathcal C^{(2)}\) and $\Psi(C_S)=S.$ Hence, \(\Psi\) is surjective. 
\end{proof}

\begin{remark}
When \(p=2\), \cref{eqn:clifford-symplectic-exact-sequence} continues to hold after adapting it to the qubit case.
\end{remark}

Assume for the remainder of \cref{subsec:second-clifford-level} that \(p\) is odd. In \cref{second-level-symplectic-quotient}, we showed that the maps \(a\mapsto w(a)\) and \(a\mapsto w(Sa)\) are projective representations with the same multiplier. For \(S\in \operatorname{Sp}(V_n)\), the Stone--von Neumann theorem implies   there exists \(M_S\in\mathcal C^{(2)}\) such that
\begin{equation}
M_Sw(x)M_S^*=w(Sx)
\end{equation}
for every \(x\in V_n\). We fix such a choice \(S\mapsto M_S\), with \(M_{\operatorname{Id}_{V_n}}=I\). Let \(S,T\in\operatorname{Sp}(V_n)\). Since \(\Psi\) is a group homomorphism, we have $\Psi(M_SM_T)=ST=\Psi(M_{ST}).$ Hence,
\begin{equation}
M_SM_TM_{ST}^*\in\ker\Psi=\mathcal C^{(1)}.
\end{equation}
Therefore, there exist \(\kappa(S,T)\in U(1)\) and \(c(S,T)\in V_n\) such that
\begin{equation}
\label{eqn:odd-pre-kappa-product}
M_SM_T = \kappa(S,T)w(c(S,T))M_{ST}.
\end{equation}
In fact, the term \(c(S,T)\) necessarily vanishes. Indeed, we have \(M_SM_Tw(x)M_T^*M_S^*=w(STx)\) for every \(x\in V_n\).  On the other hand, using \cref{eqn:odd-pre-kappa-product}, we obtain
\begin{align}
M_SM_Tw(x)M_T^*M_S^*
&=
w(c(S,T))M_{ST}w(x)M_{ST}^*w(c(S,T))^*\\
&=
w(c(S,T))w(STx)w(c(S,T))^*\\
&=
\omega^{[c(S,T),STx]}w(STx).
\end{align}
Comparing the two expressions yields $\omega^{[c(S,T),STx]}=1$ for every \(x\in V_n\). Since \(\omega\) is a primitive \(p\)-th root of unity, this implies $[c(S,T),STx]=0$ for every \(x\in V_n\). As \(ST\) is invertible, \(STx\) ranges over all of \(V_n\), and hence $[c(S,T),y]=0$ for every \(y\in V_n\). By non-degeneracy of the symplectic form, $c(S,T)=0.$ Consequently,
\begin{equation}
\label{eqn:kappa-third-level-definition}
M_SM_T=\kappa(S,T)M_{ST}.
\end{equation}
Associativity implies that
\(\kappa:\operatorname{Sp}(V_n)\times\operatorname{Sp}(V_n)\to U(1)\) is a \(2\)-cocycle. Since \(M_{\operatorname{Id}_{V_n}}=I\), it is normalized. Although the cocycle \(\kappa\) depends on the chosen representatives \(S\mapsto M_S\), its cohomology class does not. This is made precise in \cref{lem:kappa-cohomology-class}.

\begin{lem}\label{lem:kappa-cohomology-class}
Let \(p\) be an odd prime. The normalized \(2\)-cocycle
\begin{equation}
\kappa:\operatorname{Sp}(V_n)\times\operatorname{Sp}(V_n)\longrightarrow U(1)
\end{equation}
determines a cohomology class $[\kappa]\in H^2(\operatorname{Sp}(V_n),U(1))$ that is independent of the choice of representatives \(S\mapsto M_S\) 
for every \(S\in\operatorname{Sp}(V_n)\). 
\end{lem}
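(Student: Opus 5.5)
Take a second normalized choice of representatives \(S\mapsto M'_S\in\mathcal C^{(2)}\) with \(M'_Sw(x)M_S'^*=w(Sx)\) for every \(x\in V_n\) and \(M'_{\operatorname{Id}_{V_n}}=I\). Let \(\kappa'\) be its cocycle. The plan is to show \(M'_S=\lambda(S)M_S\) for a function \(\lambda:\operatorname{Sp}(V_n)\to U(1)\) with \(\lambda(\operatorname{Id}_{V_n})=1\), and then that \(\kappa'=(\delta\lambda)\kappa\). Since \(\operatorname{Sp}(V_n)\) acts trivially on \(U(1)\), this gives \([\kappa']=[\kappa]\) in \(H^2(\operatorname{Sp}(V_n),U(1))\).

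\textbf{Step 1 (representatives differ by phases).} Fix \(S\) and set \(T:=M_S^*M'_S\). For every \(x\in V_n\),
\begin{equation}
Tw(x)T^*=M_S^*w(Sx)M_S=w(S^{-1}Sx)=w(x),
\end{equation}
where the middle equality uses \(M_Sw(y)M_S^*=w(Sy)\) with \(y=S^{-1}Sx\). Hence \(\tau_x(T)=T\) for every \(x\in V_n\). By \cref{cor:tau-invariants-are-scalars}, the same fact used for uniqueness in \cref{thm:integrability}, \(T=\lambda(S)I\) with \(\lambda(S)\in U(1)\). Thus \(M'_S=\lambda(S)M_S\), and the normalizations force \(\lambda(\operatorname{Id}_{V_n})=1\).

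An alternative route uses the kernel computation from \cref{second-level-symplectic-quotient}: \(M_S^*M'_S\in\ker\Psi=\mathcal C^{(1)}\), say \(\mu w(c)\). Commuting with every \(w(x)\) then forces \(c=0\), by the same nondegeneracy argument that killed \(c(S,T)\).

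\textbf{Step 2 (compare the cocycles).} Compute
\begin{equation}
M'_SM'_T=\lambda(S)\lambda(T)M_SM_T=\lambda(S)\lambda(T)\kappa(S,T)M_{ST}=\frac{\lambda(S)\lambda(T)}{\lambda(ST)}\kappa(S,T)M'_{ST}.
\end{equation}
Therefore \(\kappa'(S,T)=(\delta\lambda)(S,T)\kappa(S,T)\), so \(\kappa'\) and \(\kappa\) differ by a normalized coboundary.

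It remains to confirm that \([\kappa]\) is a genuine class. The cocycle identity \(\kappa(S,T)\kappa(ST,R)=\kappa(T,R)\kappa(S,TR)\) follows by expanding \((M_SM_T)M_R=M_S(M_TM_R)\) with \cref{eqn:kappa-third-level-definition} and cancelling the invertible operator \(M_{STR}\).

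The only nontrivial input is Step 1, namely that an operator commuting with every Weyl operator is scalar. This is supplied by \cref{cor:tau-invariants-are-scalars}, or equivalently by irreducibility of \(\rho_w\) and Schur's lemma, so I expect no real obstacle.
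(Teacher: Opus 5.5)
Your proposal is correct and follows essentially the same argument as the paper: you show that two normalized lifts differ by a scalar, since $M_S^*M'_S$ commutes with every Weyl operator and is therefore a scalar by \cref{cor:tau-invariants-are-scalars}, and then read off $\kappa'=\kappa\cdot\delta\lambda$ from $M'_SM'_T$. Your extra check of the $2$-cocycle identity via associativity also matches the remark the paper makes just before the lemma.
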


\begin{proof}
Let \(S\mapsto M_S\) and \(S\mapsto M'_S\) be two normalized choices such that \(M_{\operatorname{Id}_{V_n}}=M'_{\operatorname{Id}_{V_n}}=I\). For every \(S\in\operatorname{Sp}(V_n)\), we have
\begin{align}
M'_SM_S^*w(x)M_SM_S^{\prime *}
&=M'_Sw(S^{-1}x)M_S^{\prime *}=w(x)
\end{align}
for every \(x\in V_n\). Hence, \(M'_SM_S^*\) commutes with every Weyl operator. By \cref{cor:tau-invariants-are-scalars}, there exists \(\eta(S)\in U(1)\) such that $M'_S=\eta(S)M_S$.  Since both choices are normalized, \(\eta(\operatorname{Id}_{V_n})=1\). Let \(\kappa\) and \(\kappa'\) denote the corresponding normalized \(2\)-cocycles. We obtain
\begin{align}
\kappa'(S,T)\eta(ST)M_{ST}
&=\kappa'(S,T)M'_{ST}\\
&=M'_SM'_T\\
&=\eta(S)\eta(T)M_SM_T=\eta(S)\eta(T)\kappa(S,T)M_{ST}.
\end{align}
Therefore,
\begin{equation}\label{eqn:kappa-coboundary-change}
\kappa'(S,T)=\frac{\eta(S)\eta(T)}{\eta(ST)}\kappa(S,T).
\end{equation}
Thus, \(\kappa'=\kappa\cdot\delta\eta\), so \(\kappa\) and \(\kappa'\) determine the same class in \(H^2(\operatorname{Sp}(V_n),U(1))\).
\end{proof}

The class \([\kappa]\) is the multiplier class of the projective symplectic representation \(S\mapsto[M_S]\). Indeed, \cref{eqn:kappa-third-level-definition} implies that $[M_S][M_T]=[M_{ST}]$
in the projective unitary group.  

\begin{remark}\label{p2-clifford-remark}
When \(p=2\), the choice of representatives \(M_S\) requires a modification. The reason is that, for \(S\in\operatorname{Sp}(V_n)\), the projective representations \(a\mapsto w(a)\) and \(a\mapsto w(Sa)\) need not have the same multiplier. The details are discussed in \cref{appendix:characteristic-two-third-level}. As in the odd prime case, \cref{eqn:odd-pre-kappa-product} continues to hold when \(p=2\). However, in the odd prime case we showed that the representatives may be chosen so that \(c(S,T)=0\), whereas this need not hold a priori when \(p=2\). This is why the discussion above and the parametrization in \cref{third-clifford} are developed under the assumption that \(p\) is odd. The discussion of \(\kappa\) remains unchanged.
\end{remark}


\begin{lem}\label{clifford-normal-form}
Let $p$ be an odd prime. Every \(C\in\mathcal C^{(2)}\) determines a unique triple \((\alpha_C,q_C,S_C)\in U(1)\times V_n\times\operatorname{Sp}(V_n)\) such that
\begin{equation}
C=\alpha_Cw(q_C)M_{S_C},
\end{equation}
for the fixed choice \(S\mapsto M_S\) above.
\end{lem}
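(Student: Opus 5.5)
Existence comes from the exact sequence in \cref{second-level-symplectic-quotient}. Given \(C\in\mathcal C^{(2)}\), let \(S_C=\Psi(C)\in\operatorname{Sp}(V_n)\) be its symplectic label from \cref{lem:clifford-symplectic-consequence}. By construction, \(\Psi(M_{S_C})=S_C\). Since \(\Psi\) is a homomorphism, \(\Psi(CM_{S_C}^*)=\operatorname{Id}_{V_n}\), so \(CM_{S_C}^*\in\ker\Psi=\mathcal C^{(1)}\). By \cref{clifford-def}, this means \(CM_{S_C}^*=\alpha_Cw(q_C)\) for some \(\alpha_C\in U(1)\) and \(q_C\in V_n\). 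Multiplying on the right by \(M_{S_C}\) gives the desired factorization \(C=\alpha_Cw(q_C)M_{S_C}\).

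For uniqueness, suppose \(C=\alpha w(q)M_S=\alpha' w(q')M_{S'}\). We determine the three components in turn.
\begin{enumerate}
\item Apply \(\Psi\) to both expressions. Since \(\Psi\) kills \(\mathcal C^{(1)}\), it sends \(\alpha w(q)M_S\) to \(S\) and \(\alpha'w(q')M_{S'}\) to \(S'\). Hence \(S=S'\), and this common value is forced to be \(S_C\).
\item Cancel the unitary \(M_S\) on the right to obtain \(\alpha w(q)=\alpha'w(q')\).
\item Take the normalized Hilbert--Schmidt inner product of both sides with \(w(q)\), using the orthonormality \cref{weyl-inner} of the normalized Weyl operators. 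The left side gives \(\alpha\). The right side gives \(\alpha'\) if \(q'=q\) and \(0\) otherwise. Since \(\alpha\neq0\), this forces \(q=q'\) and then \(\alpha=\alpha'\).
\end{enumerate}
Step (3) is the uniqueness of Weyl labels and phases (\cref{lem:Weyl-label-phase-uniqueness}).

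The argument is short, and there is no serious obstacle. The only point needing care is that the representatives \(M_S\) are fixed in advance, so \(\alpha_C\) is well defined only relative to that choice. The triple is unique for the fixed section \(S\mapsto M_S\), not canonically. The well-definedness of \(\Psi\) and the identification \(\ker\Psi=\mathcal C^{(1)}\) are already supplied by \cref{second-level-symplectic-quotient}, so nothing further is needed.
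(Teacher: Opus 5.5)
Your proof is correct and follows the paper's argument essentially verbatim. Existence comes from $CM_{S_C}^*\in\ker\Psi=\mathcal C^{(1)}$, and uniqueness from applying $\Psi$ to pin down $S$, cancelling $M_S$, and then using uniqueness of Weyl labels and phases. The only cosmetic difference is that you carry out that last step directly via Hilbert--Schmidt orthonormality of the $w(a)$, while the paper cites \cref{lem:Weyl-label-phase-uniqueness}, whose proof is the same trace computation.
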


\begin{proof}
Let \(\Psi:\mathcal C^{(2)}\to\operatorname{Sp}(V_n)\) be the map from \cref{second-level-symplectic-quotient}. For \(C\in\mathcal C^{(2)}\), define \(S_C:=\Psi(C)\). By the choice of \(M_{S_C}\), we have \(\Psi(M_{S_C})=S_C\). Since \(\Psi\) is a group
homomorphism and \(M_{S_C}^*=M_{S_C}^{-1}\), we have
\begin{equation}
\Psi(CM_{S_C}^*) = \Psi(C)\Psi(M_{S_C}^*) = S_CS_C^{-1} = \operatorname{Id}_{V_n}.
\end{equation}
Hence, \(CM_{S_C}^*\in\ker\Psi=\mathcal C^{(1)}\). Therefore, there exist \(\alpha_C\in U(1)\) and \(q_C\in V_n\) such that $CM_{S_C}^*=\alpha_Cw(q_C).$ Multiplying on the right by \(M_{S_C}\), we obtain $C=\alpha_Cw(q_C)M_{S_C}$. It remains to prove uniqueness. Suppose that \(C=\alpha w(q)M_S\) for some \(\alpha\in U(1)\), \(q\in V_n\), and \(S\in\operatorname{Sp}(V_n)\). Applying \(\Psi\) gives \(\Psi(C)=S\), since \(\alpha I\) and \(w(q)\) lie in \(\mathcal C^{(1)}=\ker\Psi\). Thus, \(S=S_C\). Therefore, we have $\alpha w(q)M_{S_C} = \alpha_Cw(q_C)M_{S_C},$ which implies that \(\alpha w(q)=\alpha_Cw(q_C)\). By \cref{lem:Weyl-label-phase-uniqueness}, we have \(q=q_C\) and \(\alpha=\alpha_C\).
\end{proof}

\begin{remark}
\cref{clifford-normal-form} continues to hold for \(p=2\), with essentially the same argument. The only difference lies in the multiplication law for the chosen representatives \(M_S\).
\end{remark}

 The normal form in \cref{clifford-normal-form} identifies \(\overline{\mathcal C}^{(2)}\) with the affine symplectic group \(\operatorname{ASp}(V_n):=V_n\rtimes\operatorname{Sp}(V_n)\), whose multiplication is given by $(q,S)(r,T):=(q+Sr,ST).$  

\begin{cor}\label{cor:projective-clifford-affine-symplectic}
Let \(p\) be an odd prime. The map
\begin{align}
\Xi:\operatorname{ASp}(V_n)&\longrightarrow\overline{\mathcal C}^{(2)},\label{eqn:projective-clifford-asp-map}\\
(q,S)&\longmapsto[w(q)M_S]
\end{align}
is a group isomorphism. In particular, there is a split short exact sequence
\begin{equation}\label{eqn:projective-clifford-asp-exact-sequence}
1\longrightarrow V_n\longrightarrow\overline{\mathcal C}^{(2)}\longrightarrow\operatorname{Sp}(V_n)\longrightarrow1.
\end{equation}
\end{cor}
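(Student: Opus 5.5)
The plan is to verify directly that \(\Xi\) is a well-defined homomorphism and then obtain bijectivity from the normal form of \cref{clifford-normal-form}. \emph{Well-definedness:} \(w(q)\in\mathcal C^{(1)}\subseteq\mathcal C^{(2)}\), and \(M_S\in\mathcal C^{(2)}\) by construction. Since \(\mathcal C^{(2)}\) is a group, \(w(q)M_S\in\mathcal C^{(2)}\), and its class in \(\overline{\mathcal C}^{(2)}\) makes sense. Note that \(\overline{\mathcal C}^{(2)}\) is a group because \(U(1)I\) is central in \(\mathcal C^{(2)}\).

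\emph{Homomorphism:} for \((q,S),(r,T)\in\operatorname{ASp}(V_n)\) we compute
\begin{equation}
w(q)M_S\,w(r)M_T = w(q)\,\bigl(M_Sw(r)M_S^*\bigr)\,M_SM_T = w(q)w(Sr)\,\kappa(S,T)M_{ST},
\end{equation}
using \(M_Sw(r)M_S^*=w(Sr)\) and \cref{eqn:kappa-third-level-definition}. By \cref{weyl-mul-law}, \(w(q)w(Sr)=\omega^{\frac12[q,Sr]}w(q+Sr)\). Hence the product equals a \(U(1)\)-phase times \(w(q+Sr)M_{ST}\), so
\begin{equation}
\Xi(q,S)\Xi(r,T)=\Xi(q+Sr,ST)=\Xi((q,S)(r,T)).
\end{equation}
The crucial input here is the vanishing of \(c(S,T)\) established before \cref{lem:kappa-cohomology-class}. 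Without it an extra Weyl factor would appear and the formula would only hold up to a twist. This step is where the odd-prime hypothesis enters.

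\emph{Bijectivity:} for surjectivity, any \([C]\in\overline{\mathcal C}^{(2)}\) has the form \(C=\alpha_Cw(q_C)M_{S_C}\) by \cref{clifford-normal-form}, so \([C]=\Xi(q_C,S_C)\). For injectivity, suppose \([w(q)M_S]=[w(q')M_{S'}]\). Then \(w(q)M_S=\lambda w(q')M_{S'}\) for some \(\lambda\in U(1)\). The uniqueness part of \cref{clifford-normal-form} then gives \(S=S'\) and \(q=q'\).

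\emph{Split exact sequence:} the standard sequence \(1\to V_n\to V_n\rtimes\operatorname{Sp}(V_n)\to\operatorname{Sp}(V_n)\to1\) is split by \(S\mapsto(0,S)\). Transporting it through the isomorphism \(\Xi\) gives \cref{eqn:projective-clifford-asp-exact-sequence}. Under this transport, \(V_n\mapsto[w(q)]\) identifies \(V_n\cong\overline{\mathcal C}^{(1)}\), and the quotient map is the map induced by \(\Psi\) from \cref{second-level-symplectic-quotient}. The splitting is \(S\mapsto[M_S]\), which is a homomorphism into \(\overline{\mathcal C}^{(2)}\) precisely because \(M_SM_T=\kappa(S,T)M_{ST}\) holds only up to a phase. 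I expect no real obstacle; the only delicate point is the cancellation of phases in the homomorphism check, which depends on \(c(S,T)=0\).
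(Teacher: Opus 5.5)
Your proposal is correct and follows essentially the same route as the paper: you check the homomorphism property by moving \(M_S\) past \(w(r)\) and using \(M_SM_T=\kappa(S,T)M_{ST}\) together with the Weyl multiplication law, and you obtain bijectivity from existence and uniqueness in \cref{clifford-normal-form}. The splitting \(S\mapsto[M_S]\) is also the one the paper uses.
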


\begin{proof}
We first show that \(\Xi\) is a homomorphism. Let \((q,S),(r,T)\in\operatorname{ASp}(V_n)\). We obtain
\begin{align}
[w(q)M_S][w(r)M_T]
&=[w(q)M_Sw(r)M_T]\\
&=[w(q)w(Sr)M_SM_T]\\
&=[\kappa(S,T)\omega^{\frac12[q,Sr]}w(q+Sr)M_{ST}]\\
&=[w(q+Sr)M_{ST}]\\
&=\Xi(q+Sr,ST).
\end{align}
This is precisely $\Xi((q,S)(r,T))=\Xi(q,S)\Xi(r,T).$
Thus, \(\Xi\) is a group homomorphism. We next prove surjectivity. Let \([C]\in\overline{\mathcal C}^{(2)}\). By \cref{clifford-normal-form}, there exist unique \(\alpha_C\in U(1)\), \(q_C\in V_n\), and \(S_C\in\operatorname{Sp}(V_n)\) such that  
$C=\alpha_Cw(q_C)M_{S_C}.$
Passing to projective classes gives
\begin{equation}
[C]=[w(q_C)M_{S_C}]=\Xi(q_C,S_C),
\end{equation}
so \(\Xi\) is surjective. Finally, suppose that $\Xi(q,S)=\Xi(r,T).$ Then there exists \(\lambda\in U(1)\) such that
$w(q)M_S=\lambda w(r)M_T.$
By \cref{clifford-normal-form}, we have \(S=T\) and \(q=r\). Hence, \(\Xi\) is injective and therefore an isomorphism. Under this identification, the inclusion \(V_n\to\overline{\mathcal C}^{(2)}\) is given by $q\mapsto[w(q)],$ and the projection \(\overline{\mathcal C}^{(2)}\to\operatorname{Sp}(V_n)\) is given by $[w(q)M_S]\mapsto S.$ The map $S\mapsto[M_S]$ is a section of this projection because \([M_SM_T]=[M_{ST}]\). Hence, \cref{eqn:projective-clifford-asp-exact-sequence} is split.
\end{proof}

\section{Third Clifford Level}\label{third-clifford}
We now apply the formalism developed in \cref{formal-framework} to the third level of the Clifford hierarchy. In \cref{third-level-linear-algebraic}, we describe a gate in $\mathcal C^{(3)}$ in terms of symplectic, affine, and phase data $(S,q,\alpha)$. In \cref{cohom-inter}, we give the $H^1$ and $H^2$ interpretation of these data, and in \cref{generator-level-criteria-third}, we translate the resulting conditions into a finite number of equations. We then return to the cohomological arguments in \cref{cohom-lifting-third} and obtain a cohomological moduli description of $\mathcal C^{(3)}$. Throughout this section, we assume that $p$ is an odd prime. We discuss the case $p=2$ case in \cref{appendix:characteristic-two-third-level}.

\subsection{Affine, Symplectic and Phase Data}\label{third-level-linear-algebraic}
Applying \cref{cor:equiv-abstract} with \(k=3\), we have
\begin{equation}\label{eqn:third-level-projective-weyl-fields}
\overline{\mathcal C}^{(3)} 
\cong
\{B:V_n \rightarrow \mathcal C^{(2)} : B(a)B(b)=\omega^{\frac12[a,b]}B(a+b)\ \text{for every } a,b \in V_n \} := Y^3_{\omega}(V_n). 
\end{equation}
Thus, it suffices to describe the \(\mathcal C^{(2)}\)-valued projective Weyl fields.  We do this in \cref{third-level-classification} by applying \cref{clifford-normal-form} pointwise.

\begin{prop}\label{third-level-classification}
Let \(p\) be an odd prime.
For each \(a\in V_n\), write \(B(a)\) as
\begin{equation}\label{eqn:third-level-B-normal-form}
B(a)=\alpha_B(a)w(q_B(a))M_{S_B(a)},
\end{equation}
where $\alpha_B(a)\in U(1)$, $q_B(a)\in V_n$ and $S_B(a)\in \operatorname{Sp}(V_n)$. Then \(B\in Y^3_\omega(V_n)\) if and only if for all \(a,b\in V_n\) we have
\begin{align}
S_B(a+b)&=S_B(a)S_B(b), \label{eqn:third-level-S-equation}\\
q_B(a+b)&=q_B(a)+S_B(a)q_B(b), \label{eqn:third-level-q-equation}\\
\alpha_B(a+b)
&=\alpha_B(a)\alpha_B(b)\kappa(S_B(a),S_B(b))
\omega^{\frac12\left([q_B(a),S_B(a)q_B(b)]-[a,b]\right)}.
\label{eqn:third-level-alpha-equation}
\end{align}
\end{prop}

\begin{proof}
Fix a lifting map $S \mapsto M_S$ appearing in \cref{clifford-normal-form}. Then every element of \(\mathcal C^{(2)}\) has a unique expression of the form $\alpha w(q)M_S$ for some $\alpha\in U(1),\ q\in V_n,\ S\in \operatorname{Sp}(V_n)$. Consequently, for every map \(B:V_n\to \mathcal C^{(2)}\), the functions
\begin{equation}
\alpha_B:V_n\longrightarrow U(1),\qquad
q_B:V_n\longrightarrow V_n,\qquad
S_B:V_n\longrightarrow \operatorname{Sp}(V_n)
\end{equation}
defined by $B(a)=\alpha_B(a)w(q_B(a))M_{S_B(a)}$ are uniquely determined by \(B\). Using \cref{weyl-mul-law} and
\cref{eqn:kappa-third-level-definition}, we have
\begin{align}
B(a)B(b)
&=\alpha_B(a)\alpha_B(b)w(q_B(a))M_{S_B(a)}w(q_B(b))M_{S_B(b)} \\
&=\alpha_B(a)\alpha_B(b)w(q_B(a))w(S_B(a)q_B(b))M_{S_B(a)}M_{S_B(b)} \\
&=\alpha_B(a)\alpha_B(b)\kappa(S_B(a),S_B(b))
\omega^{\frac12[q_B(a),S_B(a)q_B(b)]}
w(q_B(a)+S_B(a)q_B(b))M_{S_B(a)S_B(b)} \notag,
\end{align}
where \(\kappa\) is defined at the end of \cref{sec:low-order}.
On the other hand, we have
\begin{align}
B(a)B(b)
&=\omega^{\frac12[a,b]}B(a+b)=\omega^{\frac12[a,b]}\alpha_B(a+b)w(q_B(a+b))M_{S_B(a+b)}.
\end{align}
Using the uniqueness in \cref{clifford-normal-form}, we obtain
\begin{align}
S_B(a+b)&=S_B(a)S_B(b),\\
q_B(a+b)&=q_B(a)+S_B(a)q_B(b), \\
\alpha_B(a)\alpha_B(b)\kappa(S_B(a),S_B(b))
\omega^{\frac12[q_B(a),S_B(a)q_B(b)]}
&= \omega^{\frac12 [a,b]} \alpha_B(a+b).
\end{align}
This completes the proof.
\end{proof}

\cref{third-level-classification} implies that \(\overline{\mathcal C}^{(3)} \) is in bijection with triples $(\alpha,q,S)$
where $\alpha:V_n\to U(1)$, $q:V_n\to V_n$ and $S:V_n\to \operatorname{Sp}(V_n)$ are maps satisfying
\begin{align}
S(a+b)&=S(a)S(b), \label{eqn:third-level-triple-S}\\
q(a+b)&=q(a)+S(a)q(b), \label{eqn:third-level-triple-q}\\
\alpha(a+b)
&=\alpha(a)\alpha(b)\kappa(S(a),S(b))
\omega^{\frac12\left([q(a),S(a)q(b)]-[a,b]\right)} \label{eqn:third-level-triple-alpha}
\end{align}
for all \(a,b\in V_n\). In fact, the tuple \((q,S)\) can be described in terms of the affine symplectic group \(\operatorname{ASp}(V_n):=V_n\rtimes \operatorname{Sp}(V_n)\). This is made precise in \cref{cor:third-level-affine-symplectic-homomorphism}. 

\begin{cor}\label{cor:third-level-affine-symplectic-homomorphism}
Let \(p\) be an odd prime.
Let \(S:V_n\to \operatorname{Sp}(V_n)\) and \(q:V_n\to V_n\) be maps. Then \cref{eqn:third-level-triple-S} and \cref{eqn:third-level-triple-q} hold if and only if the map
\begin{align}
\Phi_{q,S}: V_n &\longrightarrow \operatorname{ASp}(V_n), \\
a &\longmapsto  (q(a),S(a) )
\end{align}
is a group homomorphism.
\end{cor}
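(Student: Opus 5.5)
This is a direct unwinding of the multiplication law in \(\operatorname{ASp}(V_n)=V_n\rtimes\operatorname{Sp}(V_n)\), recorded after \cref{clifford-normal-form} as \((q,S)(r,T)=(q+Sr,ST)\). For each \(a,b\in V_n\), I would compute both sides of the homomorphism identity. On one side, \(\Phi_{q,S}(a+b)=(q(a+b),S(a+b))\). On the other,
\begin{equation}
\Phi_{q,S}(a)\Phi_{q,S}(b)=(q(a),S(a))(q(b),S(b))=(q(a)+S(a)q(b),\,S(a)S(b)).
\end{equation}
Two elements of \(V_n\times\operatorname{Sp}(V_n)\) are equal exactly when their components agree. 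So \(\Phi_{q,S}(a+b)=\Phi_{q,S}(a)\Phi_{q,S}(b)\) holds if and only if both \cref{eqn:third-level-triple-S} and \cref{eqn:third-level-triple-q} hold at \((a,b)\). Quantifying over all \(a,b\) gives both implications at once.

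The only point that needs care is that being a group homomorphism also implies \(\Phi_{q,S}(0)=(0,\operatorname{Id}_{V_n})\), the identity of \(\operatorname{ASp}(V_n)\). This comes for free from the multiplicativity identity, since a multiplicative map between groups sends the identity to an idempotent, hence to the identity. Concretely, \(S(0)=S(0)^2\) forces \(S(0)=\operatorname{Id}_{V_n}\) by invertibility. Then \(q(0)=q(0)+q(0)\) gives \(q(0)=0\).

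I would also remark that \cref{eqn:third-level-triple-q} is exactly the statement that \(q\) is a \(1\)-cocycle for the action \(a\cdot v=S(a)v\) of \(V_n\) on \(V_n\) (well defined once \(S\) is a homomorphism), which prepares the \(H^1\) interpretation. No step here is a real obstacle; the only thing to check is that the semidirect product convention \((q,S)(r,T)=(q+Sr,ST)\) matches the order of factors in \cref{eqn:third-level-triple-q}, and it does.
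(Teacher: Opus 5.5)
Your proposal is correct and matches the paper's proof. Both compute \(\Phi_{q,S}(a)\Phi_{q,S}(b)=(q(a)+S(a)q(b),\,S(a)S(b))\) and compare it componentwise with \(\Phi_{q,S}(a+b)\); your extra remark that the identity is preserved automatically is valid but not needed, since multiplicativity alone is what the paper checks.
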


\begin{proof}
Note that 
\begin{equation}
\Phi_{q,S}(a)\Phi_{q,S}(b)=(q(a)+S(a)q(b),S(a)S(b)).
\end{equation}
Therefore, \(\Phi_{q,S}(a+b)=\Phi_{q,S}(a)\Phi_{q,S}(b)\) is equivalent to $S(a+b)=S(a)S(b)$ and $q(a+b)=q(a)+S(a)q(b)$ for all $a,b \in V_n$.
\end{proof}

\cref{cor:third-level-affine-symplectic-homomorphism} packages the symplectic and affine data \((S,q)\) into a single homomorphism to the affine symplectic group. The remaining phase data \(\alpha\) can be incorporated by passing to a central extension of this group, so that all three equations defining \((\alpha,q,S)\) admit a unified formulation. 
This is made precise in \cref{cor:third-level-central-extension-form}.

\begin{cor} \label{cor:third-level-central-extension-form}
Let \(p\) be an odd prime. Define $\widetilde{\operatorname{ASp}}(V_n):=U(1)\times V_n\times \operatorname{Sp}(V_n)$ with multiplication
\begin{equation}
\label{eqn:central-extension-asp-multiplication}
(\alpha,q,S)(\beta,r,T)= (\alpha\beta\,\kappa(S,T)\omega^{\frac12[q,Sr]},q+Sr,ST ).
\end{equation}
Then \(\widetilde{\operatorname{ASp}}(V_n)\) is a central extension
\begin{equation}
1\longrightarrow U(1)\longrightarrow \widetilde{\operatorname{ASp}}(V_n)\longrightarrow \operatorname{ASp}(V_n)\longrightarrow 1,
\end{equation}
where the inclusion \(U(1)\to \widetilde{\operatorname{ASp}}(V_n)\) is given by $\lambda \mapsto (\lambda,0,\operatorname{Id}_{V_n})$ and the projection \(\widetilde{\operatorname{ASp}}(V_n)\to \operatorname{ASp}(V_n)\) is given by $(\alpha,q,S) \mapsto (q,S)$. Moreover, let \(B:V_n\to \mathcal C^{(2)}\) be written in the normal form as in \cref{eqn:third-level-B-normal-form}. Then \(B\) belongs to \(Y^3_\omega(V_n)\) if and only if the associated map
\begin{align}
\widetilde{\Phi}_B:V_n&\longrightarrow \widetilde{\operatorname{ASp}}(V_n), \\
a&\longmapsto  (\alpha_B(a),q_B(a),S_B(a) )
\end{align}
satisfies $\widetilde{\Phi}_B(a)\widetilde{\Phi}_B(b)= (\omega^{\frac12[a,b]},0,\operatorname{Id}_{V_n} )\widetilde{\Phi}_B(a+b)$ for all \(a,b\in V_n\).
\end{cor}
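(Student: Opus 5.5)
The plan is to verify directly that the stated multiplication is a group law, identify the central extension, and then obtain the final equivalence by unwinding \cref{third-level-classification}.

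\textbf{Group axioms.} For associativity, multiply $(\alpha,q,S)(\beta,r,T)(\gamma,s,R)$ in both bracketings. The $V_n$ and $\operatorname{Sp}(V_n)$ components agree because $\operatorname{ASp}(V_n)$ is a group; this is the computation in \cref{cor:third-level-affine-symplectic-homomorphism}. The $U(1)$ components differ only in two places.
\begin{enumerate}
\item The $\kappa$-factors $\kappa(S,T)\kappa(ST,R)$ versus $\kappa(T,R)\kappa(S,TR)$. These agree by the $2$-cocycle identity for $\kappa$ with trivial action, as noted after \cref{eqn:kappa-third-level-definition}.
\item The $\omega$-exponents. The left bracketing gives $\tfrac12[q,Sr]+\tfrac12[q+Sr,STs]$. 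The right bracketing gives $\tfrac12[r,Ts]+\tfrac12[q,S(r+Ts)]$. Using bilinearity and $[Sr,STs]=[r,Ts]$ (since $S$ is symplectic), both equal $\tfrac12([q,Sr]+[q,STs]+[r,Ts])$.
\end{enumerate}
Because $\kappa$ is normalized and $M_{\operatorname{Id}}=I$, the identity element is $(1,0,\operatorname{Id}_{V_n})$. Inverses exist because $U(1)$ and $\operatorname{ASp}(V_n)$ are groups: we can solve for the $V_n$ and $\operatorname{Sp}$ parts, and then for the scalar. Alternatively, all of this can be seen conceptually: the map $(\alpha,q,S)\mapsto\alpha w(q)M_S$ is a bijection onto $\mathcal C^{(2)}$ by \cref{clifford-normal-form}, and the computation in the proof of \cref{third-level-classification} shows it is multiplicative. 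Hence $\widetilde{\operatorname{ASp}}(V_n)\cong\mathcal C^{(2)}$ as groups. I would present this isomorphism as the main argument, since it gives associativity, identity and inverses at once, and use the direct computation above only as a check.

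\textbf{Central extension.} The projection $(\alpha,q,S)\mapsto(q,S)$ is a homomorphism onto $\operatorname{ASp}(V_n)$, since the $(q,S)$ components multiply exactly as in $\operatorname{ASp}(V_n)$. It is surjective, and its kernel is $\{(\lambda,0,\operatorname{Id}_{V_n})\}$. The inclusion $\lambda\mapsto(\lambda,0,\operatorname{Id})$ is a homomorphism because $\kappa(\operatorname{Id},\operatorname{Id})=1$ and $[0,0]=0$. For centrality, compute $(\lambda,0,\operatorname{Id})(\beta,r,T)=(\lambda\beta\,\kappa(\operatorname{Id},T),r,T)$ and $(\beta,r,T)(\lambda,0,\operatorname{Id})=(\beta\lambda\,\kappa(T,\operatorname{Id}),r,T)$. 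By normalization, $\kappa(\operatorname{Id},T)=\kappa(T,\operatorname{Id})=1$, which follows from $M_{\operatorname{Id}}=I$ in \cref{eqn:kappa-third-level-definition}. So the two products coincide.

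\textbf{Reformulation of $Y^3_\omega$.} The product $\widetilde\Phi_B(a)\widetilde\Phi_B(b)$ has components
\begin{equation*}
\bigl(\alpha_B(a)\alpha_B(b)\kappa(S_B(a),S_B(b))\,\omega^{\frac12[q_B(a),S_B(a)q_B(b)]},\; q_B(a)+S_B(a)q_B(b),\; S_B(a)S_B(b)\bigr).
\end{equation*}
Left multiplication by $(\omega^{\frac12[a,b]},0,\operatorname{Id})$ multiplies only the first component, by $\omega^{\frac12[a,b]}$. So the required identity holds exactly when $S_B(a+b)=S_B(a)S_B(b)$, $q_B(a+b)=q_B(a)+S_B(a)q_B(b)$, and $\alpha_B(a)\alpha_B(b)\kappa\,\omega^{\frac12[q_B(a),S_B(a)q_B(b)]}=\omega^{\frac12[a,b]}\alpha_B(a+b)$. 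These are precisely \cref{eqn:third-level-S-equation}, \cref{eqn:third-level-q-equation} and \cref{eqn:third-level-alpha-equation}, so \cref{third-level-classification} finishes the proof.

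The only point needing care is the symplectic cancellation in the associativity exponent. If the isomorphism with $\mathcal C^{(2)}$ is used, even that step is bypassed.
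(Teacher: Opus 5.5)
Your proposal is correct and follows the paper's proof. The group law is obtained by transporting operator multiplication through the normal-form bijection $(\alpha,q,S)\mapsto\alpha w(q)M_S$ of \cref{clifford-normal-form}, and the kernel and centrality are checked directly. The final equivalence is then read off componentwise from \cref{third-level-classification}. Your explicit associativity check, using the cocycle identity for $\kappa$ and $[Sr,STs]=[r,Ts]$, and your use of the normalization $\kappa(\operatorname{Id},T)=\kappa(T,\operatorname{Id})=1$ for centrality, supply details the paper leaves implicit.
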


\begin{proof}
Observe that the multiplication law on \(\widetilde{\operatorname{ASp}}(V_n)\) is induced by the normal form gates in $\mathcal{C}^{(2)}$. Indeed, using the multiplication law and the relation $M_SM_T=\kappa(S,T)M_{ST}$, we have 
\begin{align}
(\alpha w(q)M_S) \cdot (\beta w(r)M_T) &=\alpha\beta w(q)M_Sw(r)M_T \\ &=\alpha\beta w(q)w(Sr)M_SM_T=\alpha\beta\kappa(S,T)\omega^{\frac12[q,Sr]}w(q+Sr)M_{ST}.
\end{align}
Thus,  the product in \cref{eqn:central-extension-asp-multiplication} is associative, because it is transported from operator multiplication. The projection $(\alpha,q,S)\mapsto (q,S)$ is a group homomorphism, since the multiplication law on \(\operatorname{ASp}(V_n)\) is $(q,S)(r,T)=(q+Sr,ST)$. Its kernel is $\{(\lambda,0,\operatorname{Id}):\lambda\in U(1)\}$,  which is naturally identified with \(U(1)\). This subgroup is central because
\begin{equation}
(\lambda,0,\operatorname{Id})(\alpha,q,S)=(\lambda\alpha,q,S)=(\alpha,q,S)(\lambda,0,\operatorname{Id}).
\end{equation}
This defines the claimed central extension. Note that we have
\begin{align}
\widetilde{\Phi}_B(a)\widetilde{\Phi}_B(b) &= ( \alpha_B(a)\alpha_B(b)\kappa(S_B(a),S_B(b))\omega^{\frac12[q_B(a),S_B(a)q_B(b)]}, \\
& q_B(a)+S_B(a)q_B(b), S_B(a)S_B(b) ).
\end{align}
The right-hand side corresponds to
\begin{align}
(\omega^{\frac12[a,b]},0,\operatorname{Id})\widetilde{\Phi}_B(a+b) &= ( \omega^{\frac12[a,b]}\alpha_B(a+b), q_B(a+b), S_B(a+b) ).
\end{align}
Therefore, $B(a)B(b)=\omega^{\frac12[a,b]}B(a+b)$ is equivalent to $\widetilde{\Phi}_B(a)\widetilde{\Phi}_B(b)= (\omega^{\frac12[a,b]},0,\operatorname{Id})\widetilde{\Phi}_B(a+b)$. 
\end{proof}

\subsection{Cohomological Interpretation}\label{cohom-inter} 
We now show that the triple \((q,S,\alpha)\) admits a cohomological interpretation. Fix a homomorphism \(S:V_n\to \operatorname{Sp}(V_n)\). Let \(V_{n,S}\) be the group \(V_n\) with group action of $V_n$ defined by \(a\cdot v:=S(a)v\). The \(1\)-cocycles determined by the pair \((V_n,V_{n,S})\) are
\begin{equation}
Z^1(V_n,V_{n,S}):=\{q:V_n\to V_n:q(a+b)=q(a)+S(a)q(b)\text{ for all }a,b\in V_n\}.
\end{equation}
The subgroup of \(1\)-coboundaries is
\begin{equation}
B^1(V_n,V_{n,S}):=\{q_v:v\in V_n,\ q_v(a)=S(a)v-v\}.
\end{equation}
The first cohomology group is then $H^1(V_n,V_{n,S}):=Z^1(V_n,V_{n,S})/B^1(V_n,V_{n,S})$.

\begin{prop}\label{affine-part-H1}
Let \(p\) be an odd prime and \(S:V_n\to \operatorname{Sp}(V_n)\) be a homomorphism. The map $V_n \to V_n$ in \cref{eqn:third-level-triple-q} is an element of \(Z^1(V_n,V_{n,S})\). Replacing \(q\) by \(q+q_v\), where \(q_v(a)=S(a)v-v\), corresponds to conjugating the affine homomorphism $\Phi : V_n \to \operatorname{ASp}(V_n)$ by the translation \((-v,\operatorname{Id}_{V_n})\in \operatorname{ASp}(V_n)\). Consequently, affine symplectic homomorphisms with fixed symplectic part \(S\), up to conjugation by translations, are parametrized by $H^1(V_n,V_{n,S})$. 
\end{prop}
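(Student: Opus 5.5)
The plan has three parts: the cocycle membership, the effect of a coboundary, and the resulting parametrization.

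\textbf{Cocycle membership.} First, \cref{eqn:third-level-triple-q} reads $q(a+b)=q(a)+S(a)q(b)$. This is exactly the defining condition of $Z^1(V_n,V_{n,S})$ for the action $a\cdot v=S(a)v$. That the action is well defined follows from \cref{eqn:third-level-triple-S}: $S$ is a homomorphism into $\operatorname{Sp}(V_n)\subseteq\operatorname{GL}(V_n)$, so $V_{n,S}$ is a $V_n$-module. So membership in $Z^1$ is immediate. I will also check that $Z^1(V_n,V_{n,S})$ is an abelian group under pointwise addition and that each $q_v$ is a cocycle, using $S(a+b)v-v=(S(a)v-v)+S(a)(S(b)v-v)$.

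\textbf{Coboundaries as conjugation.} Next I compute the conjugate of $\Phi_{q,S}(a)=(q(a),S(a))$ by the translation $t_v:=(-v,\operatorname{Id}_{V_n})\in\operatorname{ASp}(V_n)$. Its inverse is $t_v^{-1}=(v,\operatorname{Id}_{V_n})$, and the multiplication law $(q,S)(r,T)=(q+Sr,ST)$ gives
\begin{equation}
t_v\,\Phi_{q,S}(a)\,t_v^{-1}=(-v+q(a),S(a))(v,\operatorname{Id}_{V_n})=(q(a)+S(a)v-v,S(a)).
\end{equation}
The right-hand side is $\Phi_{q+q_v,S}(a)$. Hence replacing $q$ by $q+q_v$ is the same as conjugating by $t_v$, and the symplectic part $S$ is unchanged. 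Conversely, conjugation by any translation $(u,\operatorname{Id})$ is of this form with $v=-u$, so the conjugation orbit of $\Phi_{q,S}$ under translations is exactly $\{\Phi_{q+q_v,S}:v\in V_n\}$.

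\textbf{The parametrization.} By \cref{cor:third-level-affine-symplectic-homomorphism}, homomorphisms $V_n\to\operatorname{ASp}(V_n)$ with fixed symplectic part $S$ correspond bijectively to $q\in Z^1(V_n,V_{n,S})$ via $q\mapsto\Phi_{q,S}$. Two such homomorphisms are conjugate by a translation if and only if their cocycles differ by an element of $B^1(V_n,V_{n,S})$. Therefore the orbit set is $Z^1/B^1=H^1(V_n,V_{n,S})$.

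The only step needing care is the direction of the conjugation and its sign convention, that is, whether $(-v,\operatorname{Id})$ or $(v,\operatorname{Id})$ yields $+q_v$. The computation above settles this, and nothing deeper is required.
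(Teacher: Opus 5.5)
Your proposal is correct and takes essentially the same approach as the paper: you read the cocycle condition directly off \cref{eqn:third-level-triple-q}, and your key step is the same computation $(-v,\operatorname{Id}_{V_n})(q(a),S(a))(v,\operatorname{Id}_{V_n})=(q(a)+S(a)v-v,S(a))$. Your extra checks make explicit the passage to $Z^1/B^1$ that the paper leaves implicit: that each $q_v$ is a cocycle, that every conjugate by a translation arises this way, and that $q\mapsto\Phi_{q,S}$ is a bijection by \cref{cor:third-level-affine-symplectic-homomorphism}.
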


\begin{proof}
The first statement is exactly \cref{eqn:third-level-triple-q}, which is a \(1\)-cocycle from $V_n$ to \(V_{n,S}\). For the second statement, let \(v\in V_n\). In \(\operatorname{ASp}(V_n)\), we have
\begin{align}
(-v,\operatorname{Id}_{V_n})(q(a),S(a))(v,\operatorname{Id}_{V_n})
&=(-v+q(a)+S(a)v,S(a))\\
&=(q(a)+S(a)v-v,S(a)).
\end{align}
Thus, conjugation by the translation \((-v,\operatorname{Id}_{V_n})\) replaces
\(q\) by \(q+q_v\). This proves the claim.
\end{proof}

\cref{second-char} and \cref{phase-cohomology} show that \cref{eqn:third-level-triple-alpha} can be rephrased in terms of $H^2$.

\begin{lem}\label{second-char}
Let \(p\) be an odd prime and \(R_{q,S}:V_n\times V_n\to U(1)\) such that
\begin{equation}\label{rqs}
R_{q,S}(a,b):= \kappa(S(a),S(b)) \omega^{\frac12\left([q(a),S(a)q(b)]-[a,b]\right)}.
\end{equation}
\(R_{q,S}\) is a \(U(1)\)-valued \(2\)-cocycle on \(V_n\).
\end{lem}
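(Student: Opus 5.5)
Presumably the lemma is meant under the standing hypotheses of \cref{cor:third-level-affine-symplectic-homomorphism}: \(S\) is a homomorphism and \(q\in Z^1(V_n,V_{n,S})\). Without these, \(R_{q,S}\) need not be a cocycle. Since \(U(1)\) carries the trivial \(V_n\)-action, the cocycle identity to verify is
\[
R_{q,S}(b,c)\,R_{q,S}(a,b+c)=R_{q,S}(a+b,c)\,R_{q,S}(a,b)
\qquad\text{for all } a,b,c\in V_n .
\]
The quickest route is to transport associativity from the group \(\widetilde{\operatorname{ASp}}(V_n)\) of \cref{cor:third-level-central-extension-form}. I would also check the three factors of \(R_{q,S}\) directly, since \(R_{q,S}\) splits as a product of three \(U(1)\)-valued functions and \(2\)-cocycles form a group under pointwise multiplication.

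\textbf{The factor \(\kappa(S(a),S(b))\).} This is the pullback \(S^*\kappa\) of the normalized \(2\)-cocycle \(\kappa\) on \(\operatorname{Sp}(V_n)\) along the homomorphism \(S\). Because \(S(a+b)=S(a)S(b)\), the cocycle identity for \(\kappa\) at \((S(a),S(b),S(c))\) is exactly the identity for \(S^*\kappa\) at \((a,b,c)\). Hence \(S^*\kappa\) is a cocycle.

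\textbf{The factor \(\omega^{-\frac12[a,b]}\).} The form \((a,b)\mapsto -\tfrac12[a,b]\) is bilinear. Any bilinear \(\beta:V_n\times V_n\to\mathbb F_p\) satisfies
\[
\beta(b,c)+\beta(a,b+c)=\beta(a+b,c)+\beta(a,b),
\]
so \(\omega^{\beta}\) is a \(2\)-cocycle.

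\textbf{The factor \(\omega^{\frac12[q(a),S(a)q(b)]}\).} Write \(f(a,b):=[q(a),S(a)q(b)]\). The aim is to show
\[
f(b,c)+f(a,b+c)=f(a+b,c)+f(a,b).
\]
First expand \(q(b+c)=q(b)+S(b)q(c)\) inside \(f(a,b+c)\), giving
\[
f(a,b+c)=[q(a),S(a)q(b)]+[q(a),S(a)S(b)q(c)].
\]
Next expand \(q(a+b)=q(a)+S(a)q(b)\) and \(S(a+b)=S(a)S(b)\) inside \(f(a+b,c)\), giving
\[
f(a+b,c)=[q(a),S(a)S(b)q(c)]+[S(a)q(b),S(a)S(b)q(c)].
\]
Since \(S(a)\) is symplectic, the last term equals \([q(b),S(b)q(c)]=f(b,c)\). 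Comparing the two sides then gives the identity.

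This expansion, where the cocycle property of \(q\) and the symplecticity of \(S(a)\) must be combined, is the only step needing care, and it is the step I expect to be the main obstacle. As a cross-check, \(a\mapsto(1,q(a),S(a))\) is a set-theoretic section of the homomorphism \(\Phi_{q,S}\) into the central extension \(\widetilde{\operatorname{ASp}}(V_n)\). Multiplying two values of this section gives
\[
(1,q(a),S(a))\,(1,q(b),S(b))=\Bigl(\kappa(S(a),S(b))\,\omega^{\frac12[q(a),S(a)q(b)]},\,q(a+b),\,S(a+b)\Bigr),
\]
so the first factor is the pullback of the extension class and is a cocycle by associativity. Multiplying by the bilinear factor \(\omega^{-\frac12[a,b]}\) preserves the cocycle property, and this yields the lemma.
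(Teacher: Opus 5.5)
Your proposal is correct. You also rightly make explicit the hypotheses that the paper's statement omits: $S$ a homomorphism and $q\in Z^1(V_n,V_{n,S})$. The paper's proof uses them silently when it rewrites $w(q(a)+S(a)q(b))M_{S(a)S(b)}$ as $T_{q,S}(a+b)$.

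Your main route differs from the paper's. The paper never checks the cocycle identity by hand. It observes that $T_{q,S}(a):=w(q(a))M_{S(a)}$ is a projective representation with multiplier $F_{q,S}(a,b)=\kappa(S(a),S(b))\,\omega^{\frac12[q(a),S(a)q(b)]}$. Hence $F_{q,S}$ is a cocycle by associativity of operator multiplication, and $R_{q,S}$ is obtained by dividing by the Weyl multiplier $\omega^{\frac12[a,b]}$, which is a cocycle for the same reason. Your cross-check through $\widetilde{\operatorname{ASp}}(V_n)$ is this same argument transported along the normal form $(\alpha,q,S)\mapsto\alpha w(q)M_S$; the paper in fact justifies associativity of $\widetilde{\operatorname{ASp}}(V_n)$ by exactly that transport.

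The paper's version is shorter and sets up the multiplier comparison used in \cref{phase-cohomology} and \cref{cor:third-level-phase-lifting}. Your factor-by-factor check costs a short computation but shows exactly where each hypothesis enters. The homomorphism property of $S$ is used for $S^*\kappa$, and the cocycle property of $q$ together with $S(a)\in\operatorname{Sp}(V_n)$ is used for the middle factor. It also proves at the cochain level that $(a,b)\mapsto[q(a),S(a)q(b)]$ is by itself a $2$-cocycle, i.e.\ that $q\smile_S q$ is a cocycle. The paper only establishes this later, via the Leibniz rule, in \cref{lem:third-level-symplectic-cup-product}. So your decomposition already exhibits the factorization $[R_{q,S}]=S^*[\kappa]\cdot\chi_*\bigl(\tfrac12[q]\smile_S[q]\bigr)\cdot h_W^{-1}$ of \cref{cor:third-level-cup-square-obstruction}.
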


\begin{proof}
Define $T_{q,S}(a):=w(q(a))M_{S(a)}.$ Using the multiplication law for Weyl operators and the cocycle \(\kappa\), we obtain
\begin{align}
T_{q,S}(a)T_{q,S}(b) &=
w(q(a))M_{S(a)}w(q(b))M_{S(b)} \\ &= w(q(a))w(S(a)q(b))M_{S(a)}M_{S(b)} \\ &= \kappa(S(a),S(b)) \omega^{\frac12[q(a),S(a)q(b)]} w(q(a)+S(a)q(b))M_{S(a)S(b)} \\ & =  \kappa(S(a),S(b)) \omega^{\frac12[q(a),S(a)q(b)]} T_{q,S}(a+b).
\end{align}
Thus, \(T_{q,S}\) is a projective representation of \(V_n\) and its multiplier is
\begin{equation}\label{multiplier}
F_{q,S}(a,b)
:=
\kappa(S(a),S(b))
\omega^{\frac12[q(a),S(a)q(b)]}.
\end{equation}
Associativity implies that \(F_{q,S}\) is a \(2\)-cocycle. The multiplier $\omega^{\frac12[a,b]}$ is also a \(2\)-cocycle. Therefore, $R_{q,S}(a,b) = F_{q,S}(a,b)\omega^{-\frac12[a,b]}$ is a \(2\)-cocycle since a ratio of two \(2\)-cocycles is a \(2\)-cocycle. 
\end{proof}

\begin{prop}\label{phase-cohomology}
Let \(p\) be an odd prime, Let \(q : V_n\to V_n\) and \(S : V_n\to \operatorname{Sp}(V_n)\) satisfy \cref{eqn:third-level-triple-S} and \cref{eqn:third-level-triple-q}, respectively. There exists a phase function \(\alpha:V_n\to U(1)\) such that
$B(a):=\alpha(a)w(q(a))M_{S(a)}$
is in \(Y^3_\omega(V_n)\) if and only if $[R_{q,S}]=0$ in $H^2(V_n,U(1))$, where \(H^2(V_n,U(1))\) denotes group cohomology of \(V_n\) with coefficients in \(U(1)\)  and \(V_n\) acts trivially on \(U(1)\).
\end{prop}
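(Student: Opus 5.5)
The plan is to reduce the statement to the phase equation \cref{eqn:third-level-triple-alpha} via \cref{third-level-classification}, and then read that equation as a coboundary condition. By \cref{third-level-classification}, since \(S\) and \(q\) already satisfy \cref{eqn:third-level-triple-S} and \cref{eqn:third-level-triple-q}, the map \(B(a)=\alpha(a)w(q(a))M_{S(a)}\) lies in \(Y^3_\omega(V_n)\) exactly when, for all \(a,b\in V_n\),
\begin{equation}
\alpha(a+b)=\alpha(a)\alpha(b)R_{q,S}(a,b).
\end{equation}
Here \(R_{q,S}\) is the function in \cref{rqs}, which is a \(2\)-cocycle by \cref{second-char}. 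One point needs a quick check: \cref{third-level-classification} was stated for \(\mathcal C^{(2)}\)-valued \(B\). This holds automatically, since \(\alpha(a)w(q(a))M_{S(a)}\in\mathcal C^{(2)}\) by \cref{clifford-normal-form}.

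With trivial action of \(V_n\) on \(U(1)\), the coboundary of a \(1\)-cochain \(\phi\) is
\begin{equation}
(\delta\phi)(a,b)=\phi(a)\phi(b)\phi(a+b)^{-1}.
\end{equation}
The displayed condition is therefore equivalent to
\begin{equation}
R_{q,S}=\delta(\alpha^{-1}),
\end{equation}
since \(\delta(\alpha^{-1})(a,b)=\alpha(a+b)\alpha(a)^{-1}\alpha(b)^{-1}\).

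For the forward direction, suppose a compatible \(\alpha\) exists. Then \(R_{q,S}=\delta(\alpha^{-1})\) is a coboundary, so \([R_{q,S}]=0\). For the converse, suppose \([R_{q,S}]=0\). Then \(R_{q,S}=\delta\phi\) for some \(\phi:V_n\to U(1)\). Setting \(\alpha:=\phi^{-1}\) gives \(\alpha(a+b)=\alpha(a)\alpha(b)R_{q,S}(a,b)\), and \cref{third-level-classification} then places \(B\) in \(Y^3_\omega(V_n)\).

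The main obstacle is bookkeeping rather than anything conceptual. I must match the sign and inversion conventions of the coboundary operator in \cref{gp-cohom} exactly, and confirm that no normalization condition such as \(\alpha(0)=1\) is silently needed. Setting \(a=b=0\) forces \(\alpha(0)=R_{q,S}(0,0)^{-1}\), which is automatically consistent for any \(\alpha\) obtained from \(\phi\), so no extra constraint arises.

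As a remark, I would also record that the set of admissible \(\alpha\) is then a torsor under \(\operatorname{Hom}(V_n,U(1))\).
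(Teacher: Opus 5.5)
Your proposal is correct and follows the paper's proof essentially verbatim. Like the paper, you use \cref{third-level-classification} to reduce membership in $Y^3_\omega(V_n)$ to the equation $\alpha(a+b)=\alpha(a)\alpha(b)R_{q,S}(a,b)$, and then read this as $R_{q,S}=\delta(\alpha^{-1})$ under the coboundary convention of \cref{gp-cohom}. Your extra checks are harmless additions that the paper leaves implicit: that $B$ is automatically $\mathcal C^{(2)}$-valued, and that the value $\alpha(0)$ is consistent (indeed $R_{q,S}(0,0)=1$).
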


\begin{proof}
By \cref{third-level-classification}, the map $B(a)=\alpha(a)w(q(a))M_{S(a)}$ belongs to \(Y^3_\omega(V_n)\) if and only if
\begin{equation}
\frac{\alpha(a+b)}{\alpha(a)\alpha(b)} = R_{q,S}(a,b)
\end{equation}
for all \(a,b\in V_n\). Recall that for a \(1\)-cochain \(\phi:V_n\to U(1)\), its \(2\)-coboundary is defined by
\begin{equation}
(\delta\phi)(a,b) = \frac{\phi(a)\phi(b)}{\phi(a+b)}.
\end{equation}
Thus, we have $R_{q,S}(a,b) = (\delta(\alpha^{-1}))(a,b)$. Hence, \(\alpha\) exists if and only if \(R_{q,S}\) is a \(2\)-coboundary. Equivalently, $[R_{q,S}]=0$ in \(H^2(V_n,U(1))\).
\end{proof}

A natural question is whether the cohomology class in \cref{phase-cohomology} depends on the choice of lifts \(S\mapsto M_S\).  \cref{no-choice-kappa} shows that the resulting cohomology class is independent of this choice.

\begin{cor}\label{no-choice-kappa}
Let \(p\) be an odd prime.
Let \(S\mapsto M_S\) and \(S\mapsto M'_S\) be two choices of Clifford lifts such that $M'_S=\eta(S)M_S$ for some function \(\eta:\operatorname{Sp}(V_n)\to U(1)\). Let \(\kappa\) and \(\kappa'\) denote the corresponding \(2\)-cocycles, and let \(R_{q,S}\) and \(R'_{q,S}\) denote the corresponding phase-obstruction cocycles. Then the associated cohomology classes agree. That is, $[R'_{q,S}]=[R_{q,S}]$ in $H^2(V_n,U(1))$. 
\end{cor}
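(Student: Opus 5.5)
The plan is to compare the two obstruction cocycles directly and check that they differ by a coboundary built from $\eta\circ S$. By \cref{lem:kappa-cohomology-class}, specifically \cref{eqn:kappa-coboundary-change}, the cocycles attached to the two choices of lifts satisfy
\begin{equation}
\kappa'(T_1,T_2)=\frac{\eta(T_1)\eta(T_2)}{\eta(T_1T_2)}\kappa(T_1,T_2)
\end{equation}
for all $T_1,T_2\in\operatorname{Sp}(V_n)$. The term $\omega^{\frac12([q(a),S(a)q(b)]-[a,b])}$ in \cref{rqs} involves only $q$, $S$ and the symplectic form, so it does not depend on the lifts. Hence
\begin{equation}
\frac{R'_{q,S}(a,b)}{R_{q,S}(a,b)}=\frac{\kappa'(S(a),S(b))}{\kappa(S(a),S(b))}=\frac{\eta(S(a))\eta(S(b))}{\eta(S(a)S(b))}.
\end{equation}

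Next, I will use the fact that $S$ is a homomorphism, which is \cref{eqn:third-level-triple-S}. This gives $S(a)S(b)=S(a+b)$. Now define a $1$-cochain $\phi:=\eta\circ S:V_n\to U(1)$. The ratio above is then exactly
\begin{equation}
\frac{\phi(a)\phi(b)}{\phi(a+b)}=(\delta\phi)(a,b),
\end{equation}
using the coboundary convention from \cref{gp-cohom} with trivial action on $U(1)$. Therefore $R'_{q,S}=R_{q,S}\cdot\delta\phi$, so the two cocycles define the same class in $H^2(V_n,U(1))$. This is just the pullback along $S$ of the statement that $[\kappa]$ is independent of the lifts.

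One small point needs care: $q$ must be unchanged under the change of lifts. This holds because the data $(q,S)$ are fixed in the statement. Changing $M_S$ to $\eta(S)M_S$ only rescales the phase part of the normal form in \cref{clifford-normal-form}, so it leaves $q_B$ and $S_B$ alone. No real obstacle is expected. The only step needing attention is the one where the homomorphism property of $S$ turns $\eta(S(a)S(b))$ into $\eta(S(a+b))$; without it the ratio would not be a coboundary on $V_n$.
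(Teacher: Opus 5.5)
Your proposal is correct and is essentially the paper's own proof: substitute the coboundary change of $\kappa$ from \cref{lem:kappa-cohomology-class}, use $S(a)S(b)=S(a+b)$, and observe that $R'_{q,S}=R_{q,S}\cdot\delta(\eta\circ S)$. Your extra check that rescaling $M_S$ by $\eta(S)$ changes only the phase part of the normal form, leaving $q$ and $S$ fixed, is a correct detail that the paper leaves implicit.
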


\begin{proof}
By \cref{lem:kappa-cohomology-class}, we have $\kappa'(S,T)=\frac{\eta(S)\eta(T)}{\eta(ST)}\kappa(S,T)$. It follows that
\begin{align}
R'_{q,S}(a,b)
&=\kappa'(S(a),S(b))\omega^{\frac12\left([q(a),S(a)q(b)]-[a,b]\right)} \\
&=\frac{\eta(S(a))\eta(S(b))}{\eta(S(a)S(b))}R_{q,S}(a,b)=\frac{\eta(S(a))\eta(S(b))}{\eta(S(a+b))}R_{q,S}(a,b).
\end{align}
If we define $\eta_S(a):=\eta(S(a))$, then $R'_{q,S}=R_{q,S}\cdot\delta\eta_S$.  Thus, \(R'_{q,S}\) and \(R_{q,S}\) differ by a $2$-coboundary, and therefore determine the same cohomology class in \(H^2(V_n,U(1))\).
\end{proof}

\subsection{Generator Level Criteria}\label{generator-level-criteria-third}
\cref{third-level-linear-algebraic} gives an explicit description of \(\mathcal C^{(3)}\) in terms of the triple \((\alpha,q,S)\). We now make this description algorithmic. We first show that the affine symplectic data \((q,S)\) are determined by their values on a basis of \(V_n\), subject only to the relations imposed by the additive group structure of \(V_n\).

\begin{cor}\label{cor:third-level-generator-equations}
Let \(p\) be an odd prime.
Let \(e_1,\ldots,e_{2n}\) be an \(\mathbb F_p\)-basis of \(V_n\). A pair of maps $S:V_n\to \operatorname{Sp}(V_n)$ and $q:V_n\to V_n$ satisfies \cref{eqn:third-level-triple-S} and \cref{eqn:third-level-triple-q} if and only if the data $S_i:=S(e_i)$ and $q_i:=q(e_i)$ satisfy
\begin{align}
S_iS_j&=S_jS_i, \label{eqn:third-level-generator-S-commute}\\ S_i^p&=\operatorname{Id}_{V_n}, \label{eqn:third-level-generator-S-order}\\ q_i+S_iq_j&=q_j+S_jq_i, \label{eqn:third-level-generator-q-commute}\\
q_i+S_iq_i+\cdots+S_i^{p-1}q_i&=0 \label{eqn:third-level-generator-q-order}
\end{align}
for all \(1\leq i,j\leq 2n\).
\end{cor}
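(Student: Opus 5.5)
The cleanest route is through \cref{cor:third-level-affine-symplectic-homomorphism}: conditions \cref{eqn:third-level-triple-S} and \cref{eqn:third-level-triple-q} together say exactly that \(\Phi_{q,S}:a\mapsto(q(a),S(a))\) is a group homomorphism \(V_n\to\operatorname{ASp}(V_n)\). Since \(V_n\cong(\mathbb Z/p\mathbb Z)^{2n}\) is the free abelian group of exponent \(p\) on \(e_1,\dots,e_{2n}\), it has the presentation with generators \(e_i\) and relations \(e_ie_j=e_je_i\) and \(e_i^p=1\). By von Dyck's theorem, a homomorphism out of \(V_n\) is therefore the same thing as an assignment \(e_i\mapsto g_i\in\operatorname{ASp}(V_n)\) with \(g_ig_j=g_jg_i\) and \(g_i^p=1\), and this assignment determines the homomorphism uniquely. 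So the proof reduces to rewriting these two relations for \(g_i=(q_i,S_i)\) in coordinates.

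The coordinate computation uses the multiplication \((q,S)(r,T)=(q+Sr,ST)\). First,
\begin{equation}
g_ig_j=(q_i+S_iq_j,\,S_iS_j),\qquad g_jg_i=(q_j+S_jq_i,\,S_jS_i),
\end{equation}
so \(g_ig_j=g_jg_i\) is equivalent to \cref{eqn:third-level-generator-S-commute} together with \cref{eqn:third-level-generator-q-commute}. Second, an induction on \(m\) gives
\begin{equation}
g_i^m=\Bigl(\textstyle\sum_{k=0}^{m-1}S_i^kq_i,\;S_i^m\Bigr).
\end{equation}
Hence \(g_i^p=(0,\operatorname{Id}_{V_n})\) is equivalent to \cref{eqn:third-level-generator-S-order} together with \cref{eqn:third-level-generator-q-order}.

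For necessity, one can also argue directly. Iterating \cref{eqn:third-level-triple-S} and \cref{eqn:third-level-triple-q} along \(e_i+e_i+\cdots\) gives \(S(me_i)=S_i^m\) and \(q(me_i)=\sum_{k<m}S_i^kq_i\). Taking \(m=p\) and using \(S(0)=\operatorname{Id}_{V_n}\) and \(q(0)=0\) yields the order relations. The commutation relations come from expanding \(S(e_i+e_j)\) and \(q(e_i+e_j)\) in both orders, using \(e_i+e_j=e_j+e_i\). Here \(S(0)=\operatorname{Id}_{V_n}\) follows from \(S(0)=S(0)^2\) with \(S(0)\) invertible, and \(q(0)=0\) follows from \(q(0)=q(0)+S(0)q(0)\).

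For sufficiency, the only point needing care is well-definedness, if one wants to avoid quoting von Dyck. Define \(\Phi(\sum_i m_ie_i):=g_1^{m_1}\cdots g_{2n}^{m_{2n}}\) with \(0\le m_i<p\). Because the \(g_i\) commute pairwise and satisfy \(g_i^p=1\), the subgroup they generate is abelian of exponent \(p\). The map \(\mathbb Z^{2n}\to\operatorname{ASp}(V_n)\) sending \(m\mapsto\prod_i g_i^{m_i}\) is then a homomorphism, and it factors through \((\mathbb Z/p\mathbb Z)^{2n}=V_n\). Reading off the components of this homomorphism recovers \(S\) and \(q\), and \cref{cor:third-level-affine-symplectic-homomorphism} then gives \cref{eqn:third-level-triple-S} and \cref{eqn:third-level-triple-q}.

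None of these steps is a real obstacle; the nearest thing to one is keeping the semidirect-product bookkeeping straight in the formula for \(g_i^m\).
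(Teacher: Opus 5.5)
Your proposal follows the paper's proof: you reduce via \cref{cor:third-level-affine-symplectic-homomorphism} to homomorphisms $V_n\to\operatorname{ASp}(V_n)$, use the presentation of $V_n$ by commuting generators $e_i$ of order $p$, and compute $g_ig_j$ and $g_i^p$ in the semidirect product. One point, which the paper's proof shares: the ``if'' direction is only true when read as ``admissible generator data $(q_i,S_i)$ extend uniquely to a pair $(S,q)$ satisfying \cref{eqn:third-level-triple-S} and \cref{eqn:third-level-triple-q}'', because arbitrary maps $S,q$ are unconstrained off the basis, so your step ``reading off the components recovers $S$ and $q$'' should be stated as that existence-and-uniqueness claim.
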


\begin{proof}
By \cref{cor:third-level-affine-symplectic-homomorphism}, specifying the pair \((q,S)\) is equivalent to specifying a group homomorphism \(\Phi:V_n\to \operatorname{ASp}(V_n)\) such that \(\Phi(a)=(q(a),S(a))\). The additive group \(V_n\) is generated by \(e_1,\ldots,e_{2n}\), subject to the relations $e_i+e_j=e_j+e_i$ and $pe_i=0$ for \(1\leq i,j\leq 2n\). Therefore, \(\Phi\) is a homomorphism if and only if the elements $\Phi(e_i)=(q_i,S_i)$ commute pairwise and have order dividing \(p\). Using the multiplication law in \(\operatorname{ASp}(V_n)\), the commutativity condition $(q_i,S_i)(q_j,S_j)=(q_j,S_j)(q_i,S_i)$ becomes equivalent to
\begin{align}
S_iS_j=S_jS_i, \qquad
q_i+S_iq_j=q_j+S_jq_i.
\end{align}
Similarly, since \((q_i,S_i)^p= (q_i+S_iq_i+\cdots+S_i^{p-1}q_i, S_i^p )\), the condition \((q_i,S_i)^p=(0,\operatorname{Id}_{V_n})\) holds if and only if $S_i^p=\operatorname{Id}_{V_n}$ and 
\begin{align}
q_i+S_iq_i+\cdots+S_i^{p-1}q_i=0.
\end{align}
This completes the proof.
\end{proof}

Once the tuple \((q,S)\) has been fixed, the remaining obstruction class \([R_{q,S}]\) can be tested explicitly on generators through the commutator bicharacter associated with \(R_{q,S}\). The proof
of  \cref{cor-cohom} is relegated to \cref{appendix:cohom-facts}. All relevant cohomological facts and results used here are discussed there.  

\begin{prop}\label{cor-cohom}
Let \(p\) be an odd prime.
Let \(q:V_n\to V_n\) and \(S:V_n\to \operatorname{Sp}(V_n)\) 
satisfy \cref{eqn:third-level-triple-S} and \cref{eqn:third-level-triple-q}, respectively. The obstruction class \([R_{q,S}]\in H^2(V_n,U(1))\) vanishes if and only if
\begin{equation}
\frac{R_{q,S}(a,b)}{R_{q,S}(b,a)}=1 
\iff 
\frac{\kappa(S(a),S(b))}{\kappa(S(b),S(a))}\,\omega^{\frac12\left([q(a),S(a)q(b)]-[q(b),S(b)q(a)]-2[a,b]\right)}=1
\end{equation}
In particular, if \(e_1,\ldots,e_{2n}\) is an \(\mathbb{F}_p\)-basis of \(V_n\), then it suffices to check that
\begin{equation}
\frac{\kappa(S(e_i),S(e_j))}{\kappa(S(e_j),S(e_i))}\,\omega^{\frac12\left([q(e_i),S(e_i)q(e_j)]-[q(e_j),S(e_j)q(e_i)]-2[e_i,e_j]\right)}=1
\end{equation}
for all \(1\leq i,j\leq 2n\).
\end{prop}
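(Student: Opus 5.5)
The plan is to use the standard description of $H^2(A,U(1))$ for a finite abelian group $A$, applied to $A=V_n$ with trivial action. Since $U(1)$ is divisible, the universal coefficient theorem gives
\[
H^2(A,U(1))\cong \operatorname{Hom}(\Lambda^2 A,U(1)).
\]
The isomorphism sends a class $[R]$ to the commutator form $c_R(a,b):=R(a,b)/R(b,a)$. I would first check directly that $c_R$ is well defined on classes. If $R$ is replaced by $R\cdot\delta\phi$, then $(\delta\phi)(a,b)=\phi(a)\phi(b)/\phi(a+b)$ is symmetric in $a,b$ because $V_n$ is abelian, so $c_R$ does not change. Next I would show that $c_R$ is an alternating bicharacter. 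This follows from the $2$-cocycle identity by the usual manipulation. Concretely, take a central extension $1\to U(1)\to E_R\to V_n\to 1$ built from $R$. Then $c_R(a,b)$ is the group commutator of lifts of $a$ and $b$ in $E_R$. Since $E_R$ is nilpotent of class $2$, the commutator is bimultiplicative. It is also alternating, since $c_R(a,a)=1$.

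The key step is injectivity: if $c_R\equiv 1$, then $[R]=0$. When $c_R\equiv1$, the extension $E_R$ is abelian. An abelian extension of $V_n$ by the divisible group $U(1)$ splits, because $U(1)$ is injective as a $\mathbb Z$-module and so $\operatorname{Ext}^1_{\mathbb Z}(V_n,U(1))=0$. A splitting $s(a)=(\phi(a),a)$ gives $R=\delta\phi^{\pm1}$, so $R$ is a coboundary. The converse direction is the well-definedness observation above, since $c_{\delta\phi}\equiv1$. Together these prove the first equivalence: $[R_{q,S}]=0$ if and only if $R_{q,S}(a,b)/R_{q,S}(b,a)=1$ for all $a,b$. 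I expect this step, specifically invoking injectivity of $U(1)$ correctly, to be the main point that needs care. I would put the full argument in \cref{appendix:cohom-facts}.

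The displayed explicit formula then comes from substituting \cref{rqs}. The $\kappa$ factors give $\kappa(S(a),S(b))/\kappa(S(b),S(a))$. The $\omega$ exponents combine as
\[
\tfrac12\bigl([q(a),S(a)q(b)]-[a,b]\bigr)-\tfrac12\bigl([q(b),S(b)q(a)]-[b,a]\bigr),
\]
and $-[a,b]+[b,a]=-2[a,b]$, which matches the stated expression.

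For the generator criterion, $c_{R_{q,S}}$ is bimultiplicative. Since $V_n$ is generated by $e_1,\dots,e_{2n}$, it follows that
\[
c(a,b)=\prod_{i,j} c(e_i,e_j)^{a_ib_j}
\]
for $a=\sum a_ie_i$ and $b=\sum b_je_j$ with integer representatives. These values are well defined because $c(e_i,e_j)^p=c(pe_i,e_j)=1$. Therefore $c\equiv1$ if and only if $c(e_i,e_j)=1$ for all $i,j$, and this is exactly the finite check stated in the proposition.
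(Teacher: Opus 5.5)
Your proposal is correct and follows essentially the same route as the paper's proof in \cref{appendix:cohom-facts}: show that the commutator form $R(a,b)/R(b,a)$ depends only on the class $[R]$, prove injectivity by noting that a trivial commutator makes $E_R$ abelian and that the extension then splits because $U(1)$ is injective, and finish by using bimultiplicativity to reduce to basis pairs. The only difference is that the paper also proves the full isomorphism $H^2(V_n,U(1))\cong\operatorname{Hom}(\Lambda^2 V_n,U(1))$ via the universal coefficient and K\"unneth theorems, whereas you correctly use only well-definedness and injectivity, which is all the vanishing criterion requires.
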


The discussion above gives a procedure for constructing \(\mathcal{C}^{(3)}/U(1)\). Fix a basis \(e_1,\ldots,e_{2n}\) of \(V_n\), and write \(S_i:=S(e_i)\) and \(q_i:=q(e_i)\). The relations in \cref{cor:third-level-generator-equations} determine when these values extend to an affine symplectic component \((q,S)\), and the obstruction criterion in \cref{cor-cohom} determines whether \((q,S)\) admits a compatible phase function \(\alpha\).

\subsection{Cohomological Moduli Description}\label{cohom-lifting-third}
\cref{cohom-inter} identifies the affine data with a first cohomology group and the phase condition with the vanishing of an obstruction class in a second cohomology group.
We now refine this description to obtain a moduli-theoretic description of \(\mathcal{C}^{(3)}\).
First, we give an extension-theoretic reformulation of the phase obstruction in \cref{phase-cohomology}.  Let
\begin{equation}\label{eqn:third-level-central-extension-cocycle}
\sigma((q,S),(r,T)):=\kappa(S,T)\omega^{\frac12[q,Sr]}
\end{equation}
for $(q,S),(r,T)\in\operatorname{ASp}(V_n)$. The multiplication law in \cref{eqn:central-extension-asp-multiplication} shows that $\sigma$ is a normalized $U(1)$-valued $2$-cocycle representing the central extension
\begin{equation}\label{eqn:third-level-central-extension-repeat}
1\longrightarrow U(1)\longrightarrow\widetilde{\operatorname{ASp}}(V_n)\longrightarrow\operatorname{ASp}(V_n)\longrightarrow 1.
\end{equation}
The homomorphism $\Phi_{q,S}:V_n\to\operatorname{ASp}(V_n)$ pulls back \cref{eqn:third-level-central-extension-repeat} to a central extension
\begin{equation}\label{eqn:third-level-pulled-back-central-extension}
1\longrightarrow U(1)\longrightarrow\Phi_{q,S}^*\widetilde{\operatorname{ASp}}(V_n)\longrightarrow V_n\longrightarrow 1,
\end{equation}
whose extension class is $\Phi_{q,S}^*[\sigma]\in H^2(V_n,U(1))$. On the other hand, the Weyl operators satisfy
\begin{equation}
w(a)w(b)=m_W(a,b)w(a+b),
\end{equation}
where $m_W(a,b)=\omega^{\frac12[a,b]}$. Since $m_W$ is a normalized $U(1)$-valued $2$-cocycle, it defines a central extension $\widetilde V_{n,W}:=U(1)\times V_n$ with multiplication
\begin{equation}
(\lambda,a)(\mu,b):=(\lambda\mu\,m_W(a,b),a+b).
\end{equation}
Thus, $\widetilde V_{n,W}$ fits into the short exact sequence
\begin{equation}\label{eqn:third-level-weyl-central-extension}
1\longrightarrow U(1)\longrightarrow\widetilde V_{n,W}\longrightarrow V_n\longrightarrow 1,
\end{equation}
whose extension class is $h_W=[m_W]\in H^2(V_n,U(1))$. Hence, \cref{eqn:third-level-weyl-central-extension} is the central extension associated with the projective Weyl representation $a\mapsto w(a)$. \cref{cor:third-level-phase-lifting} compares the two central extensions in \cref{eqn:third-level-pulled-back-central-extension,eqn:third-level-weyl-central-extension} and shows that a compatible phase function exists precisely when the two central extensions determine the same class in \(H^2(V_n,U(1))\), by identifying the difference between these extension classes with the obstruction class \([R_{q,S}]\).


\begin{lem}\label{cor:third-level-phase-lifting}
Let $p$ be an odd prime. Let $S:V_n\to\operatorname{Sp}(V_n)$ be a homomorphism, let $q\in Z^1(V_n,V_{n,S})$, and let $\Phi_{q,S}$ be as in \cref{cor:third-level-affine-symplectic-homomorphism}. Then $[R_{q,S}]=\Phi_{q,S}^*[\sigma]\,h_W^{-1}$ in $H^2(V_n,U(1))$. Consequently, the following statements are equivalent:
\begin{enumerate}
\item there exists a phase function $\alpha:V_n\to U(1)$ such that $B(a):=\alpha(a)w(q(a))M_{S(a)}$ is in $Y^3_\omega(V_n)$;
\item $\Phi_{q,S}^*[\sigma]=h_W$ in $H^2(V_n,U(1))$;
\item $[R_{q,S}]=0$ in $H^2(V_n,U(1))$.
\end{enumerate}
\end{lem}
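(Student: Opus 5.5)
The identity $[R_{q,S}]=\Phi_{q,S}^*[\sigma]\,h_W^{-1}$ holds already at the level of cocycles, and the three equivalences then follow from it together with \cref{phase-cohomology}.

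First, I will compute the pullback cocycle directly. By the formula for pulling back a $2$-cocycle along a homomorphism (see \cref{gp-cohom}), $\Phi_{q,S}^*[\sigma]$ is represented by
\begin{equation}
(\Phi_{q,S}^*\sigma)(a,b)=\sigma\bigl((q(a),S(a)),(q(b),S(b))\bigr)=\kappa(S(a),S(b))\,\omega^{\frac12[q(a),S(a)q(b)]},
\end{equation}
which is exactly the multiplier $F_{q,S}$ from \cref{multiplier}. Since $h_W$ is represented by $m_W(a,b)=\omega^{\frac12[a,b]}$, the definition \cref{rqs} gives the identity of cocycles
\begin{equation}
R_{q,S}=(\Phi_{q,S}^*\sigma)\cdot m_W^{-1}.
\end{equation}
This uses the fact that $\Phi_{q,S}$ is a homomorphism (\cref{cor:third-level-affine-symplectic-homomorphism}), so the pullback is a genuine $2$-cocycle. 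Passing to classes in the abelian group $H^2(V_n,U(1))$ yields $[R_{q,S}]=\Phi_{q,S}^*[\sigma]\,h_W^{-1}$.

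Second, the equivalence of (2) and (3) is immediate from this identity, because $H^2(V_n,U(1))$ is a group. The equivalence of (1) and (3) is precisely \cref{phase-cohomology}: existence of $\alpha$ with $\alpha(a+b)/(\alpha(a)\alpha(b))=R_{q,S}(a,b)$ is the same as $R_{q,S}$ being a coboundary.

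There is no real obstacle beyond bookkeeping. The only points needing care are that $\sigma$ is indeed a normalized $2$-cocycle, which is inherited from associativity of $\widetilde{\operatorname{ASp}}(V_n)$ in \cref{cor:third-level-central-extension-form}, and that $m_W$ is a $2$-cocycle, which follows from associativity of Weyl multiplication (\cref{weyl-mul-law}). The trivial action of $V_n$ on $U(1)$ makes the quotient of cocycles again a cocycle.
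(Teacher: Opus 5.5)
Your proposal is correct and follows essentially the same route as the paper. Like the paper, you pull $\sigma$ back along $\Phi_{q,S}$ to get $\kappa(S(a),S(b))\,\omega^{\frac12[q(a),S(a)q(b)]}$ and divide by $m_W$ to recover $R_{q,S}$ at the cocycle level. You then pass to classes and use \cref{phase-cohomology} for the equivalence with (1).
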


\begin{proof}
Pulling back the cocycle $\sigma$ in \cref{eqn:third-level-central-extension-cocycle} along $\Phi_{q,S}$ gives
\begin{equation}\label{eqn:third-level-pulled-back-extension-cocycle}
(\Phi_{q,S}^*\sigma)(a,b)=\kappa(S(a),S(b))\omega^{\frac12[q(a),S(a)q(b)]}.
\end{equation}
Dividing \cref{eqn:third-level-pulled-back-extension-cocycle} by the Weyl multiplier $m_W(a,b)=\omega^{\frac12[a,b]}$ gives
\begin{equation}
(\Phi_{q,S}^*\sigma)(a,b)m_W(a,b)^{-1}=\kappa(S(a),S(b))\omega^{\frac12\left([q(a),S(a)q(b)]-[a,b]\right)}=R_{q,S}(a,b).
\end{equation}
Thus, $R_{q,S}=(\Phi_{q,S}^*\sigma)m_W^{-1}.$ Passing to cohomology and using $h_W=[m_W]$ gives
\begin{equation}
[R_{q,S}]=\Phi_{q,S}^*[\sigma]\,h_W^{-1}.
\end{equation}
Therefore, $\Phi_{q,S}^*[\sigma]=h_W$ if and only if $[R_{q,S}]=0$. By \cref{phase-cohomology}, the latter condition is equivalent to the existence of a phase function $\alpha$ for which $B(a)=\alpha(a)w(q(a))M_{S(a)}$ belongs to $Y^3_\omega(V_n)$.
\end{proof}

We next show in \cref{prop:third-level-obstruction-H1-invariant} that the obstruction in second cohomology depends only on the cohomology class \([q]\in H^1(V_n,V_{n,S})\) of the affine data. 

\begin{lem}\label{prop:third-level-obstruction-H1-invariant}
Let $p$ be an odd prime. Let $S:V_n\to\operatorname{Sp}(V_n)$ be a homomorphism. Suppose that $q,q'\in Z^1(V_n,V_{n,S})$ satisfy
$q'=q+q_v,$
where $q_v(a)=S(a)v-v$ for some $v\in V_n$. Then
$[R_{q',S}]=[R_{q,S}]$
in $H^2(V_n,U(1))$.
\end{lem}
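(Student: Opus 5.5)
The approach is to realize the change $q\mapsto q+q_v$ at the operator level and read off how the multiplier changes. By \cref{affine-part-H1}, $\Phi_{q',S}$ is the conjugate of $\Phi_{q,S}$ by the translation $(-v,\operatorname{Id}_{V_n})$. The natural operator lift of this translation is $w(-v)=w(v)^*$.

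Consider the projective representation $T_{q,S}(a)=w(q(a))M_{S(a)}$ from the proof of \cref{second-char}. Its multiplier is $F_{q,S}$, and $R_{q,S}=F_{q,S}\,m_W^{-1}$. Since $m_W$ does not depend on $q$, it suffices to show that $F_{q',S}$ and $F_{q,S}$ differ by a coboundary. Define
\[
T'(a):=w(v)^*\,T_{q,S}(a)\,w(v).
\]
Conjugation preserves multipliers, so $T'(a)T'(b)=F_{q,S}(a,b)\,T'(a+b)$.

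Next we compute $T'(a)$ in normal form. Using $M_{S}w(v)=w(Sv)M_S$ and \cref{weyl-mul-law}, we get
\[
T'(a)=w(-v)\,w(q(a))\,w(S(a)v)\,M_{S(a)}=\lambda(a)\,w\bigl(q(a)+S(a)v-v\bigr)M_{S(a)}=\lambda(a)\,T_{q',S}(a),
\]
where $\lambda(a)\in U(1)$ is an explicit product of factors $\omega^{\frac12[\cdot,\cdot]}$ built from $v$, $q(a)$ and $S(a)v$. Substituting $T'(a)=\lambda(a)T_{q',S}(a)$ into the multiplication law gives
\[
F_{q,S}(a,b)=\frac{\lambda(a)\lambda(b)}{\lambda(a+b)}\,F_{q',S}(a,b).
\]
Hence $F_{q',S}=F_{q,S}\cdot(\delta\lambda)^{-1}$, so $R_{q',S}=R_{q,S}\cdot\delta(\lambda^{-1})$, and the two classes coincide in $H^2(V_n,U(1))$.

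An alternative route uses \cref{cor:third-level-phase-lifting}. Write $[R_{q,S}]=\Phi_{q,S}^*[\sigma]\,h_W^{-1}$, and note that pulling back along a homomorphism conjugated by an element $g$ of the target gives the same class, because inner automorphisms act trivially on $H^2(\cdot,U(1))$. The direct computation above avoids having to cite that fact.

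The only real care is in the normal-form step. The uniqueness in \cref{clifford-normal-form} guarantees that $T'(a)$ equals $T_{q',S}(a)$ up to a scalar, with the same $M_{S(a)}$ and Weyl label $q'(a)$. This holds because the $M_S$ are fixed representatives and $S$ is unchanged. No control over the exact value of $\lambda$ is needed, only that it is a $U(1)$-valued function of $a$. I expect no genuine obstacle beyond bookkeeping the order of operator products so that the multiplier comparison is stated with the correct convention for $\delta$.
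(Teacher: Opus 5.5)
Your proposal is correct and matches the paper's proof essentially step for step. Like the paper, you conjugate $T_{q,S}$ by $w(-v)=w(v)^*$, note that the multiplier $F_{q,S}$ is unchanged, and rewrite the conjugate as $\lambda(a)T_{q',S}(a)$; since $m_W$ does not depend on $q$, this gives $R_{q',S}=R_{q,S}\cdot\delta(\lambda^{-1})$ (the appeal to uniqueness of the normal form is unnecessary, because the direct computation already exhibits the scalar $\lambda(a)$).
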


\begin{proof}
Define $T_{q,S}$ and $T_{q',S}$ as in \cref{cor:third-level-phase-lifting}. Let $\widetilde{T}(a):=w(-v)T_{q,S}(a)w(v)$. We have
\begin{align}
\widetilde{T}(a)&=w(-v)T_{q,S}(a)w(v)\\
&=w(-v)w(q(a))M_{S(a)}w(v)=w(-v)w(q(a))w(S(a)v)M_{S(a)}=\lambda_v(a)T_{q',S}(a),
\end{align}
where
\begin{equation}\label{eqn:third-level-lambda-v}
\lambda_v(a):=\omega^{\frac12\left([-v,q(a)]+[q(a)-v,S(a)v]\right)}.
\end{equation}
Let \(F_{q,S}\) be the multiplier of \(T_{q,S}\), as in \cref{multiplier}. Since conjugation by the fixed unitary \(w(-v)\) does not change the multiplier $F_{q,s}$, we have
\begin{equation}
\widetilde{T}(a)\widetilde{T}(b)=F_{q,S}(a,b)\widetilde{T}(a+b).
\end{equation}
Hence, we have
\begin{align}
\lambda_v(a)\lambda_v(b)F_{q',S}(a,b)T_{q',S}(a+b)&=\widetilde{T}(a)\widetilde{T}(b)=F_{q,S}(a,b)\lambda_v(a+b)T_{q',S}(a+b).
\end{align}
This implies we have
\begin{equation}
F_{q',S}(a,b)=F_{q,S}(a,b)\frac{\lambda_v(a+b)}{\lambda_v(a)\lambda_v(b)}=F_{q,S}(a,b)\bigl(\delta(\lambda_v^{-1})\bigr)(a,b).
\end{equation}
Hence,
$F_{q',S}=F_{q,S}\cdot\delta(\lambda_v^{-1}).$
The multiplier \(m_W(a,b)=\omega^{\frac12[a,b]}\) depends only on \(a\) and \(b\), and is therefore unchanged when \(q\) is replaced by \(q'\). Since \(R_{q,S}=F_{q,S}m_W^{-1}\), it follows that
$R_{q',S}=R_{q,S}\cdot\delta(\lambda_v^{-1}).$
Thus, \(R_{q',S}\) and \(R_{q,S}\) differ by a \(2\)-coboundary and therefore determine the same cohomology class in \(H^2(V_n,U(1))\).
\end{proof}

Let \(S:V_n\to\operatorname{Sp}(V_n)\) be a fixed homomorphism. Combining \cref{prop:third-level-obstruction-H1-invariant} with \cref{no-choice-kappa} defines a map
\begin{align}
\operatorname{ob}_S:H^1(V_n,V_{n,S})&\longrightarrow H^2(V_n,U(1)),\label{eqn:third-level-obstruction-map}\\
[q]&\longmapsto[R_{q,S}],
\end{align}
which determines the phase obstruction governing the existence of compatible phase data. For fixed \(S:V_n\to\operatorname{Sp}(V_n)\), the dependence of \(R_{q,S}\) on \(q\) is contained entirely in the term \([q(a),S(a)q(b)]\). In \cref{lem:third-level-symplectic-cup-product}, we identify this term as a cohomological term obtained by pairing the group cohomological cup product with the symplectic form.

\begin{lem}\label{lem:third-level-symplectic-cup-product}
Let $p$ be an odd prime.
Let $S:V_n\to\operatorname{Sp}(V_n)$ be a homomorphism and let $q,r\in Z^1(V_n,V_{n,S})$. Define
\begin{equation}\label{eqn:third-level-symplectic-cup-cochain}
(q\smile_S r)(a,b):=[q(a),S(a)r(b)].
\end{equation}
Then $q\smile_S r$ is an $\mathbb F_p$-valued $2$-cocycle on $V_n$. Its cohomology class depends only on $[q]$ and $[r]$, and therefore determines a bilinear pairing
\begin{equation}\label{eqn:third-level-symplectic-cup-pairing}
\smile_S:H^1(V_n,V_{n,S})\times H^1(V_n,V_{n,S})\longrightarrow H^2(V_n,\mathbb F_p).
\end{equation}
\end{lem}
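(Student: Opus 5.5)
The plan is to identify \(q\smile_S r\) with the degree-\((1,1)\) cup product of \cref{gp-cohom}, composed with the pairing \(V_{n,S}\otimes V_{n,S}\to\mathbb F_p\), \(v\otimes w\mapsto[v,w]\). This pairing is \(V_n\)-equivariant for the trivial action on \(\mathbb F_p\), because each \(S(a)\in\operatorname{Sp}(V_n)\) gives \([S(a)v,S(a)w]=[v,w]\). Since the general cup-product facts are only recorded, not proved, I will check everything by direct computation.

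First I show the cochain is a cocycle. For trivial coefficients in \(\mathbb F_p\), written additively, the \(2\)-cocycle condition reads
\begin{equation}
f(b,c)-f(a+b,c)+f(a,b+c)-f(a,b)=0 .
\end{equation}
I expand \(f(a+b,c)=[q(a+b),S(a+b)r(c)]\) using \(q(a+b)=q(a)+S(a)q(b)\) and \(S(a+b)=S(a)S(b)\). I also expand \(f(a,b+c)=[q(a),S(a)r(b)+S(a)S(b)r(c)]\) using the cocycle identity for \(r\). Bilinearity splits both expressions into pieces. The term \([S(a)q(b),S(a)S(b)r(c)]\) equals \([q(b),S(b)r(c)]\) by symplecticity of \(S(a)\). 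After this substitution everything cancels: \(f(b,c)\) cancels this rewritten term, and the two copies of \([q(a),S(a)S(b)r(c)]\) and of \([q(a),S(a)r(b)]\) cancel in pairs.

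Second I show bilinearity on cochains. The map \((q,r)\mapsto q\smile_S r\) is visibly \(\mathbb F_p\)-bilinear \(Z^1\times Z^1\to Z^2\). It therefore suffices to show that \(q_v\smile_S r\) and \(q\smile_S q_v\) are coboundaries, where \(q_v(a)=S(a)v-v\).

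For the first, compute
\begin{equation}
[S(a)v-v,S(a)r(b)]=[v,r(b)]-[v,S(a)r(b)].
\end{equation}
The cocycle identity gives \(S(a)r(b)=r(a+b)-r(a)\). Setting \(\phi(a):=[v,r(a)]\), the expression becomes \(\phi(a)+\phi(b)-\phi(a+b)=(\delta\phi)(a,b)\), since \(\delta\phi\) is written additively with trivial action.

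For the second, compute
\begin{equation}
[q(a),S(a)(S(b)v-v)]=[q(a),S(a+b)v]-[q(a),S(a)v].
\end{equation}
I will use \(q(a)=q(a+b)-S(a)q(b)\) together with symplecticity, \([S(a)q(b),S(a)S(b)v]=[q(b),S(b)v]\). I will then try the candidate \(\psi(a):=[q(a),S(a)v]\). The expression becomes \(\psi(a+b)-\psi(b)-\psi(a)\), which is \(-\delta\psi\), a coboundary.

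I expect this last manipulation, the second-variable invariance, to be the main obstacle: choosing the right primitive \(\psi\) and keeping the signs and the placement of \(S(a)\) straight. If the candidate \(\psi\) fails, I would instead try combinations such as \([q(a),v]\) and absorb the leftover terms. With both invariances established, the pairing \eqref{eqn:third-level-symplectic-cup-pairing} is well defined and bilinear.
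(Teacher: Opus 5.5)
Your argument is correct as written: the cocycle identity cancels exactly as you describe, and the candidate $\psi$ works, so you do not need a fallback. It differs from the paper's proof mainly in being hands-on. The paper makes the same identification $q\smile_S r=\beta\circ(q\smile r)$, with $\beta(x\otimes y)=[x,y]$ equivariant because each $S(a)$ is symplectic. It then quotes, without proof, the Leibniz rule $\delta(f\smile g)=(\delta f)\smile g+(-1)^{\deg f}f\smile(\delta g)$ to get both the cocycle property and independence of representatives. For the coboundary step this tacitly uses cup products with a $0$-cochain, which the paper never writes down.

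You instead check everything at the cochain level, and your primitives are exactly what the Leibniz rule would produce. Your $\phi(a)=[v,r(a)]$ is $\beta\circ(v\smile r)$, and your $\psi(a)=[q(a),S(a)v]$ is $\beta\circ(q\smile v)$; the sign in $-\delta\psi$ is the $(-1)^{\deg q}$. You use symplecticity of $S(a)$ at precisely the points where the paper uses equivariance of $\beta$.

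The paper's route is shorter and applies verbatim to any equivariant pairing and any degrees. Yours is self-contained and yields explicit primitives, which are useful later. For $q'=q+q_v$ they give $q'\smile_S q'-q\smile_S q=\delta g$ with $g(a)=[v,q(a)]-[q(a),S(a)v]+[v,S(a)v]$. The function $a\mapsto\omega^{\frac12 g(a)}$ is exactly the correction $\lambda_v^{-1}$ computed in \cref{prop:third-level-obstruction-H1-invariant}.
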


\begin{proof}
The symplectic form is an \(\mathbb F_p\)-bilinear map \(V_{n,S}\times V_{n,S}\to\mathbb F_p\). Hence, it induces an \(\mathbb F_p\)-linear map
\begin{align}
\beta:V_{n,S}\otimes_{\mathbb F_p}V_{n,S}&\longrightarrow\mathbb F_p,\\
x\otimes y&\longmapsto[x,y].
\end{align}
We equip \(V_{n,S}\otimes_{\mathbb F_p}V_{n,S}\) with the diagonal \(V_n\)-action. That is, \(a\cdot(x\otimes y)=S(a)x\otimes S(a)y\) for $a \in V_n$. Similarly, we equip \(\mathbb F_p\) with the trivial \(V_n\)-action. Since \(S(a)\in\operatorname{Sp}(V_n)\) for every \(a\in V_n\), we have
\begin{equation}
\beta\bigl(a\cdot(x\otimes y)\bigr)=\beta\bigl(S(a)x\otimes S(a)y\bigr)=[S(a)x,S(a)y]=[x,y]=a\cdot\beta(x\otimes y).
\end{equation}
Thus, \(\beta\) is \(V_n\)-equivariant and therefore induces a homomorphism
\begin{align}
\beta_*:H^2\left(V_n,V_{n,S}\otimes_{\mathbb F_p}V_{n,S}\right)&\longrightarrow H^2(V_n,\mathbb F_p),\\
[\eta]&\longmapsto[\beta\circ\eta].
\end{align}
For \(q,r\in Z^1(V_n,V_{n,S})\), their ordinary cup product is the \(2\)-cochain with values in \(V_{n,S}\otimes_{\mathbb F_p}V_{n,S}\) given by
\begin{equation}
(q\smile r)(a,b)=q(a)\otimes S(a)r(b).
\end{equation}
The cup product is compatible with the group-cohomology differential:
\begin{equation}
\delta(f\smile g)=(\delta f)\smile g+(-1)^{\deg f}f\smile(\delta g).
\end{equation}
Since \(q\) and \(r\) are \(1\)-cocycles, \(\delta q=\delta r=0\), and hence \(q\smile r\) is a \(2\)-cocycle. The same identity shows that replacing either \(q\) or \(r\) by a cohomologous \(1\)-cocycle changes \(q\smile r\) by a \(2\)-coboundary. Therefore, the cohomology class \([q\smile r]\) depends only on \([q]\) and \([r]\), and the ordinary cup product induces a bilinear map
\begin{align}
\smile:\;H^1(V_n,V_{n,S})\times H^1(V_n,V_{n,S})&\longrightarrow H^2\left(V_n,V_{n,S}\otimes_{\mathbb F_p}V_{n,S}\right),\\
([q],[r])&\longmapsto[q\smile r].
\end{align}
Composing the ordinary cup product with \(\beta_*\) gives a bilinear pairing
\begin{align}
H^1(V_n,V_{n,S})\times H^1(V_n,V_{n,S})&\xrightarrow{\ \smile\ }H^2\left(V_n,V_{n,S}\otimes_{\mathbb F_p}V_{n,S}\right)\xrightarrow{\ \beta_*\ }H^2(V_n,\mathbb F_p),\\
([q],[r])&\longmapsto[q\smile r]\longmapsto[\beta\circ(q\smile r)].
\end{align}
At the cochain level,
\begin{equation}
\bigl(\beta\circ(q\smile r)\bigr)(a,b)=\beta\bigl(q(a)\otimes S(a)r(b)\bigr)=[q(a),S(a)r(b)]=(q\smile_S r)(a,b).
\end{equation}
Therefore,
$[q]\smile_S[r]=\beta_*([q\smile r])\in H^2(V_n,\mathbb F_p),$ which is the pairing in \cref{eqn:third-level-symplectic-cup-pairing}.
\end{proof}

\cref{lem:third-level-symplectic-cup-product} identifies the term \([q(a),S(a)q(b)]\) appearing in the phase obstruction as the symplectically paired cup product of the affine cocycle \(q\). We now use this identification to express the entire obstruction map intrinsically in terms of the cohomology class \([q]\). In particular, we show in \cref{cor:third-level-cup-square-obstruction} that, for fixed symplectic data \(S\), all dependence of the phase obstruction on the affine class \([q]\) is captured by the symplectically paired cup square \([q]\smile_S[q]\).

\begin{cor}\label{cor:third-level-cup-square-obstruction}
Let $p$ be an odd prime and let $S:V_n\to\operatorname{Sp}(V_n)$ be a homomorphism. For every $[q]\in H^1(V_n,V_{n,S})$, the obstruction map satisfies
\begin{equation}\label{eqn:third-level-cup-square-obstruction}
\operatorname{ob}_S([q])=S^*[\kappa]\cdot\chi_*\left(\frac12([q]\smile_S[q])\right)\cdot h_W^{-1}.
\end{equation}
Here $\chi:\mathbb F_p\to U(1)$
is the standard additive character and
\begin{equation}
\chi_*:H^2(V_n,\mathbb F_p)\longrightarrow H^2(V_n,U(1))
\end{equation}
is the induced map on cohomology.  
\end{cor}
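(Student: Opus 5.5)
The plan is to decompose the cocycle $R_{q,S}$ multiplicatively at the cochain level into three factors, identify the class of each factor, and then use that $H^2(V_n,U(1))$ is an abelian group in which classes multiply.

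First, fix a representative $q\in Z^1(V_n,V_{n,S})$ of $[q]$. By definition \cref{rqs},
\begin{equation}
R_{q,S}(a,b)=\kappa(S(a),S(b))\cdot\omega^{\frac12[q(a),S(a)q(b)]}\cdot\omega^{-\frac12[a,b]}.
\end{equation}

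\textbf{First factor.} Since $S$ is a homomorphism, $(a,b)\mapsto\kappa(S(a),S(b))$ is exactly the pullback $S^*\kappa$ of the normalized $2$-cocycle $\kappa$ on $\operatorname{Sp}(V_n)$. It is therefore a $2$-cocycle on $V_n$ with class $S^*[\kappa]$. By \cref{lem:kappa-cohomology-class}, this class does not depend on the choice of lifts $M_S$.

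\textbf{Second factor.} Write $\omega^{\frac12 x}=\chi(2^{-1}x)$, where $2^{-1}\in\mathbb F_p$; this is legitimate because $p$ is odd. The second factor is then $\chi\circ\bigl(2^{-1}(q\smile_S q)\bigr)$. By \cref{lem:third-level-symplectic-cup-product}, $q\smile_S q$ is an $\mathbb F_p$-valued $2$-cocycle whose class is $[q]\smile_S[q]$. Multiplying by the scalar $2^{-1}$ preserves cocycles and acts on cohomology by the same scalar. Since $\chi$ is a homomorphism $\mathbb F_p\to U(1)$ compatible with the trivial actions, composing with $\chi$ realizes $\chi_*$ at the cochain level.

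\textbf{Third factor.} This is $m_W^{-1}$, whose class is $h_W^{-1}$ by definition.

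Multiplying the three classes gives \cref{eqn:third-level-cup-square-obstruction}.

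I would also note two consistency points. The right-hand side depends only on $[q]$, which recovers \cref{prop:third-level-obstruction-H1-invariant} independently. The left-hand side is well defined on classes by that lemma together with \cref{no-choice-kappa}. This compatibility is what makes the formula a statement about $\operatorname{ob}_S$ rather than about a particular representative.

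The main obstacle is verifying that the middle factor really is a $2$-cocycle with the claimed class. The individual factors are cocycles separately, so it is not enough to know that their product $R_{q,S}$ is a cocycle. This step relies on the cup-product cocycle property from \cref{lem:third-level-symplectic-cup-product}, which uses that $S(a)$ is symplectic. If that is in doubt, I would check $\delta(q\smile_S q)=0$ directly from the $1$-cocycle identity for $q$ and the $\operatorname{Sp}$-invariance of $[\cdot,\cdot]$, via the expansion
\begin{equation}
[q(a),S(a)q(b+c)]=[q(a),S(a)q(b)]+[q(a),S(a)S(b)q(c)].
\end{equation}
The statement is presumably read with $\frac12([q]\smile_S[q])$ meaning multiplication by $2^{-1}\in\mathbb F_p$, and I would state that reading explicitly.
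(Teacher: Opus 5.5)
Your proposal is correct and is essentially the paper's argument. The paper likewise splits $R_{q,S}$ at the cochain level into three factors and identifies their classes as $S^*[\kappa]$, $\chi_*\bigl(\tfrac12([q]\smile_S[q])\bigr)$ and $h_W^{-1}$; the only difference is that it first groups the first two factors as $\Phi_{q,S}^*\sigma$ via \cref{cor:third-level-phase-lifting}, whereas you read them off directly from \cref{rqs}. Your explicit check that each factor is separately a cocycle (the pullback of $\kappa$ along the homomorphism $S$, and \cref{lem:third-level-symplectic-cup-product}) supplies a step the paper leaves implicit.
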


\begin{proof}
By \cref{cor:third-level-phase-lifting}, the map \(\operatorname{ob}_S\) satisfies
\begin{equation}\label{eqn:third-level-obstruction-via-extension}
\operatorname{ob}_S([q])=[R_{q,S}]=\Phi_{q,S}^*[\sigma]\,h_W^{-1}.
\end{equation}
The cocycle \(\kappa(S(a),S(b))\) represents the pullback class \(S^*[\kappa]\). By \cref{eqn:third-level-symplectic-cup-cochain}, the term \(\omega^{\frac12[q(a),S(a)q(b)]}\) represents
\(\chi_*\left(\frac12([q]\smile_S[q])\right)\). Hence,
\begin{equation}\label{pullback-identify}
\Phi_{q,S}^*[\sigma]
=
S^*[\kappa]\cdot\chi_*\left(\frac12([q]\smile_S[q])\right).
\end{equation}
Substituting \cref{pullback-identify} into \cref{eqn:third-level-obstruction-via-extension} gives \cref{eqn:third-level-cup-square-obstruction}.
\end{proof}

Using \cref{cor:third-level-cup-square-obstruction}, we can now record the quadratic structure implied by the cup-square formula. In what follows, we use multiplicative notation in \(H^2(V_n,U(1))\) and additive notation in \(H^1(V_n,V_{n,S})\) and \(H^2(V_n,\mathbb F_p)\). Define

\begin{equation}\label{eqn:third-level-QS}
Q_S(x):=\chi_*\left(\frac12(x\smile_S x)\right)
\end{equation}
for \(x\in H^1(V_n,V_{n,S})\). By bilinearity of the paired cup product,
\begin{equation}
(x+y)\smile_S(x+y)=x\smile_Sx+x\smile_Sy+y\smile_Sx+y\smile_Sy.
\end{equation}
Multiplying by \(\frac12\) in \(H^2(V_n,\mathbb F_p)\), applying the homomorphism \(\chi_*\), and using multiplicative notation in \(H^2(V_n,U(1))\) gives
\begin{equation}\label{eqn:third-level-QS-polarization}
Q_S(x+y)Q_S(x)^{-1}Q_S(y)^{-1}=\chi_*\left(\frac12(x\smile_S y+y\smile_S x)\right).
\end{equation}
The right-hand side is bilinear in \(x\) and \(y\), since \(\smile_S\) is bilinear and \(\chi_*\) is a homomorphism, and it is symmetric under interchange of \(x\) and \(y\). Thus, \(Q_S\) is a quadratic map. By \cref{cor:third-level-cup-square-obstruction}, we have
\begin{equation}
\operatorname{ob}_S(x)=S^*[\kappa]\,Q_S(x)\,h_W^{-1}.
\end{equation}
Hence, \(\operatorname{ob}_S\) is an affine quadratic map with constant term $\operatorname{ob}_S(0)=S^*[\kappa]\,h_W^{-1}$.  The cohomology classes of affine data compatible with third-level phase data therefore form the quadratic locus
\begin{equation}\label{eqn:third-level-admissible-locus}
\mathfrak M_S:=\operatorname{ob}_S^{-1}(0)\subseteq H^1(V_n,V_{n,S}).
\end{equation}
The cohomological moduli result that we wish to prove is formulated in terms of \(\mathfrak M_S\). We first show that the compatible phase functions form a torsor over \(H^1(V_n,U(1))\) and identify this ambiguity with right multiplication by Weyl operators. We also identify the coboundary ambiguity in the affine data with left multiplication by Weyl operators. This is proved in \cref{lem:third-level-phase-Weyl-actions}.

\begin{lem}\label{lem:third-level-phase-Weyl-actions}
Let \(p\) be an odd prime. Let \(S:V_n\to\operatorname{Sp}(V_n)\) be a homomorphism, and let \(q\in Z^1(V_n,V_{n,S})\) satisfy \(\operatorname{ob}_S([q])=0\). Then the following statements hold.
\begin{enumerate}[label=\textnormal{(\arabic*)}]
\item The set of phase functions \(\alpha:V_n\to U(1)\) for which $B(a)=\alpha(a)w(q(a))M_{S(a)}$ belongs to \(Y^3_\omega(V_n)\) is a torsor over
\begin{equation}\label{eqn:third-level-phase-H1}
H^1(V_n,U(1))=\operatorname{Hom}(V_n,U(1)) \cong V_n. 
\end{equation}

\item If \([U]\in\overline{\mathcal C}^{(3)}\) and  $B_U(a):=Uw(a)U^*=\alpha(a)w(q(a))M_{S(a)},$ then left multiplication by \(w(-v)\), for \(v\in V_n\), leaves \(S\) unchanged and replaces \(q\) by the cohomologous cocycle $q+q_v$, where $q_v(a):=S(a)v-v$. 

\item Under the same assumptions as in (2), right multiplication by \(w(v)\) leaves \(S\) and \(q\) unchanged and replaces \(\alpha\) by \(\chi_v\alpha\), where $\chi_v(a)=\omega^{[v,a]}$. 
\end{enumerate}
\end{lem}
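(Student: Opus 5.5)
For part (1), I will use the characterization in \cref{third-level-classification}. For fixed \((q,S)\), a phase function \(\alpha\) is admissible exactly when
\begin{equation}
\frac{\alpha(a+b)}{\alpha(a)\alpha(b)}=R_{q,S}(a,b)
\end{equation}
for all \(a,b\in V_n\). Because \(\operatorname{ob}_S([q])=0\), \cref{phase-cohomology} guarantees that at least one such \(\alpha_0\) exists. Let \(\alpha\) be any other admissible function and set \(\chi=\alpha/\alpha_0\). Dividing the two defining identities shows \(\chi(a+b)=\chi(a)\chi(b)\), so \(\chi\in\operatorname{Hom}(V_n,U(1))\). Conversely, multiplying \(\alpha_0\) by any character preserves the identity. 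Hence the admissible set is a torsor under \(\operatorname{Hom}(V_n,U(1))\), and this action is free and transitive.

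It remains to identify \(\operatorname{Hom}(V_n,U(1))\) with \(H^1(V_n,U(1))\) and with \(V_n\). Since the action on \(U(1)\) is trivial, \(1\)-cocycles are homomorphisms and \(1\)-coboundaries \(\phi(a)\phi(a)^{-1}\) are trivial, so \(H^1(V_n,U(1))=\operatorname{Hom}(V_n,U(1))\). The isomorphism with \(V_n\) is given by \(v\mapsto\chi_v\), with \(\chi_v(a)=\omega^{[v,a]}\). This is the same argument as in \cref{level-one-computation}: every character takes values in \(\mu_p\) and so is \(\omega^{\ell}\) for a linear functional \(\ell\), and the symplectic form is nondegenerate.

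For part (2), let \(U'=w(-v)U\). Then
\begin{equation}
B_{U'}(a)=w(-v)B_U(a)w(v)=\alpha(a)\,w(-v)w(q(a))M_{S(a)}w(v).
\end{equation}
This is the same computation as in \cref{prop:third-level-obstruction-H1-invariant}. I will push \(w(v)\) through \(M_{S(a)}\) using \(M_Sw(v)=w(Sv)M_S\). Then \cref{weyl-mul-law} combines the Weyl factors into \(\lambda_v(a)\,w(q(a)+S(a)v-v)M_{S(a)}\). By the uniqueness in \cref{clifford-normal-form}, the new symplectic label is still \(S(a)\), the new affine label is \(q+q_v\), and the new phase is \(\alpha\lambda_v\). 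The phase is not needed here. Finally, \(U'\in\mathcal C^{(3)}\) by \cref{clifford-props-no-proof}.

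For part (3), let \(U''=Uw(v)\). Then
\begin{equation}
B_{U''}(a)=Uw(v)w(a)w(v)^*U^*=\omega^{[v,a]}\,Uw(a)U^*,
\end{equation}
where I use \cref{tau-on-Weyl}, namely \(w(v)w(a)w(v)^*=\omega^{[v,a]}w(a)\). So \(B_{U''}=\chi_vB_U\). By uniqueness of the normal form, \(S\) and \(q\) are unchanged and \(\alpha\) becomes \(\chi_v\alpha\).

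The only delicate point is the bookkeeping of phases and signs in (2): getting the sign of \(q_v\) right and confirming that the left Weyl factor contributes \(-v\). None of the steps is a real obstacle, since each reduces to uniqueness of Weyl labels and of the Clifford normal form.
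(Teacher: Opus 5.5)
Your proposal is correct and follows the paper's proof essentially step for step. For (1) you use that the ratio of two admissible phase functions is a character. For (2) you use $B_{w(-v)U}(a)=w(-v)B_U(a)w(v)$ together with the computation in \cref{prop:third-level-obstruction-H1-invariant}. For (3) you use $w(v)w(a)w(v)^*=\omega^{[v,a]}w(a)$.

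Beyond the paper's argument, you make two points explicit that the paper leaves implicit. First, the torsor is nonempty because $\operatorname{ob}_S([q])=0$, via \cref{phase-cohomology}. Second, $\operatorname{Hom}(V_n,U(1))\cong V_n$ through $v\mapsto\chi_v$.
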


\begin{proof}
Let \(\alpha\) and \(\alpha'\) be two compatible phase functions for the same \(q\) and \(S\). By \cref{third-level-classification}, both satisfy
\begin{equation}\label{ratio}
\frac{\alpha(a+b)}{\alpha(a)\alpha(b)}=R_{q,S}(a,b)=\frac{\alpha'(a+b)}{\alpha'(a)\alpha'(b)}.
\end{equation}
Define \(\chi(a):=\alpha'(a)\alpha(a)^{-1}\). \cref{ratio} implies 
$\chi(a+b)=\chi(a)\chi(b),$
so \(\chi\in\operatorname{Hom}(V_n,U(1))\). Conversely, let \(\chi\in\operatorname{Hom}(V_n,U(1))\) and let \(\alpha\) be a compatible phase function. Since \(\chi(a+b)=\chi(a)\chi(b)\), we have
\begin{align}
\frac{(\chi\alpha)(a+b)}{(\chi\alpha)(a)(\chi\alpha)(b)}
&=\frac{\chi(a+b)\alpha(a+b)}{\chi(a)\chi(b)\alpha(a)\alpha(b)}=\frac{\alpha(a+b)}{\alpha(a)\alpha(b)}=R_{q,S}(a,b).
\end{align}
Thus, \(\chi\alpha\) is again a compatible phase function. Hence, \(\operatorname{Hom}(V_n,U(1))\) acts on the set of compatible phase functions by pointwise multiplication. The calculation above shows that the action is transitive, since any two compatible phase functions differ by a character, and it is free, since \(\chi\alpha=\alpha\) implies \(\chi=1\). Therefore, the set of compatible phase functions is a torsor over
\begin{equation}
H^1(V_n,U(1))=\operatorname{Hom}(V_n,U(1)).
\end{equation}
This proves (1). (2) follows from the calculation in \cref{prop:third-level-obstruction-H1-invariant}, since for \(U_L:=w(-v)U\) we have $B_{U_L}(a)=w(-v)B_U(a)w(v)$. Finally, let \(U_R:=Uw(v)\). We have
\begin{align}
B_{U_R}(a)&=Uw(v)w(a)w(v)^*U^*=\omega^{[v,a]}Uw(a)U^*=\chi_v(a)B_U(a).
\end{align}
Therefore, \(S\) and \(q\) are unchanged, while \(\alpha\) is replaced by \(\chi_v\alpha\). This proves (3).
\end{proof}

In what follows, for a fixed homomorphism \(S:V_n\to\operatorname{Sp}(V_n)\), define
\begin{equation}\label{eqn:third-level-fixed-S-stratum}
\overline{\mathcal C}^{(3)}_S:=\{[U]\in\overline{\mathcal C}^{(3)}:S_U=S\}.
\end{equation}
We define an equivalence relation \(\sim_W\) on \(\overline{\mathcal C}^{(3)}_S\) by declaring $[U]\sim_W[U']$
if there exist \([W_L],[W_R]\in\overline{\mathcal C}^{(1)}\) such that
$[U']=[W_LUW_R].$
By \cref{lem:third-level-phase-Weyl-actions}, left and right Weyl multiplication leave the symplectic part \(S_U\) unchanged and therefore preserve \(\overline{\mathcal C}^{(3)}_S\). Thus, the quotient by \(\sim_W\) is parametrized by the admissible cohomology classes \([q]\in\mathfrak M_S\), yielding \cref{prop:third-level-cohomological-moduli}.

\begin{prop}\label{prop:third-level-cohomological-moduli}
Let \(p\) be an odd prime. For every homomorphism \(S:V_n\to\operatorname{Sp}(V_n)\), there is a bijection
\begin{equation}\label{eqn:third-level-fixed-S-moduli}
\overline{\mathcal C}^{(3)}_S/ \sim_W \cong \mathfrak M_S=\operatorname{ob}_S^{-1}(0).
\end{equation}
Consequently,
\begin{equation}\label{eqn:third-level-global-moduli}
\overline{\mathcal C}^{(3)}/\!\sim_W \cong \bigsqcup_{S\in\operatorname{Hom}(V_n,\operatorname{Sp}(V_n))}\operatorname{ob}_S^{-1}(0).
\end{equation}
Thus, modulo left and right Weyl multiplication, the third Clifford level is classified by the zero loci of the phase-obstruction maps on the groups \(H^1(V_n,V_{n,S})\).
\end{prop}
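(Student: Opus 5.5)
The plan is to define an explicit map $\Lambda_S:\overline{\mathcal C}^{(3)}_S\to H^1(V_n,V_{n,S})$, show that it lands in $\mathfrak M_S$, show that it is constant on $\sim_W$-classes, and then show that the induced map on the quotient is a bijection onto $\mathfrak M_S$.

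\textbf{Definition and image.} Given $[U]\in\overline{\mathcal C}^{(3)}_S$, we use \cref{cor:equiv-abstract} and \cref{third-level-classification} to write
\[
B_U(a)=Uw(a)U^*=\alpha(a)\,w(q(a))\,M_{S(a)}.
\]
This triple is uniquely determined by $[U]$ once the lifts $S\mapsto M_S$ are fixed. By \cref{eqn:third-level-triple-q} we have $q\in Z^1(V_n,V_{n,S})$, and we set $\Lambda_S([U]):=[q]$. The existence of the compatible phase $\alpha$ forces $[R_{q,S}]=0$ by \cref{phase-cohomology}, so $\operatorname{ob}_S([q])=0$ and $\Lambda_S([U])\in\mathfrak M_S$. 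Note that $\operatorname{ob}_S$ is well defined by \cref{prop:third-level-obstruction-H1-invariant} and \cref{no-choice-kappa}.

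\textbf{Invariance under $\sim_W$.} Let $[U']=[W_LUW_R]$ with $W_L=\lambda w(u)$ and $W_R=\mu w(v)$. The scalars do not matter projectively. By \cref{lem:third-level-phase-Weyl-actions}(2), applied with $v$ replaced by $-u$, left multiplication changes $q$ only by the coboundary $q_{-u}$. By part (3) of the same lemma, right multiplication changes only $\alpha$. Hence $\Lambda_S$ descends to a map $\overline\Lambda_S:\overline{\mathcal C}^{(3)}_S/\!\sim_W\to\mathfrak M_S$.

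\textbf{Surjectivity.} Take $[q]\in\mathfrak M_S$ and a representative $q$. Then $[R_{q,S}]=0$, so \cref{phase-cohomology} supplies a phase function $\alpha$ with $B:=\alpha\,w(q)\,M_S\in Y^3_\omega(V_n)$. By \cref{cor:equiv-abstract} there is $[U]\in\overline{\mathcal C}^{(3)}$ with $B_U=B$. Uniqueness of the normal form then gives $S_U=S$ and $\Lambda_S([U])=[q]$.

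\textbf{Injectivity.} Suppose $\Lambda_S([U])=\Lambda_S([U'])$. Then $q'=q+q_v$ for some $v\in V_n$. Replacing $U$ by $w(-v)U$ makes $q'=q$ on the nose, by \cref{lem:third-level-phase-Weyl-actions}(2). The two phase functions are now compatible with the same pair $(q,S)$, so by part (1) of that lemma they differ by a character $\chi\in\operatorname{Hom}(V_n,U(1))$.

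The main point is that every such character has the form $\chi_u(a)=\omega^{[u,a]}$. The argument is the one in \cref{level-one-computation}: every character of $V_n$ is $a\mapsto\omega^{\ell(a)}$ for a linear functional $\ell$, and non-degeneracy of the symplectic form realizes $\ell$ as $a\mapsto[u,a]$ for a unique $u$. By part (3) of the lemma, right multiplication by $w(u)$ then matches the phase functions exactly. At that point $B_{U'}=B_{w(-v)Uw(u)}$, and \cref{cor:equiv-abstract} gives $[U']=[w(-v)Uw(u)]$, so $[U]\sim_W[U']$.

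\textbf{Global statement.} Every $[U]\in\overline{\mathcal C}^{(3)}$ has a well-defined homomorphism $S_U$ by \cref{eqn:third-level-triple-S}. Moreover $\sim_W$ preserves $S_U$. So $\overline{\mathcal C}^{(3)}$ splits as the disjoint union of the strata $\overline{\mathcal C}^{(3)}_S$, the quotient by $\sim_W$ respects this splitting, and the global bijection \cref{eqn:third-level-global-moduli} follows from the fixed-$S$ case.
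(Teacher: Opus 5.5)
Your proposal is correct and follows the paper's proof essentially step by step. The paper uses the same map \([U]\mapsto[q]\) and gets \(\sim_W\)-invariance from the Weyl-action lemma. Surjectivity comes from the phase-obstruction criterion together with the projective Weyl field correspondence. Injectivity is obtained by a left Weyl shift that matches \(q\), followed by a right Weyl shift that absorbs the remaining character. Your only additions make explicit what the paper leaves implicit: that scalars in \(W_L,W_R\) are irrelevant, and that every character of \(V_n\) has the form \(\chi_u\) by non-degeneracy of the symplectic form.
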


\begin{proof}
Let \([U]\in\overline{\mathcal C}^{(3)}_S\) and \(B_U(a)=\alpha(a)w(q(a))M_{S(a)}\). By \cref{affine-part-H1}, the affine data \(q\) determine a class \([q]\in H^1(V_n,V_{n,S})\). By \cref{phase-cohomology}, the existence of the phase function \(\alpha\) implies \(\operatorname{ob}_S([q])=0\). Hence, there is a map \begin{align}
\overline{\mathcal C}^{(3)}_S &\longrightarrow \mathfrak M_S,\\
[U] &\longmapsto [q].
\end{align} By \cref{lem:third-level-phase-Weyl-actions}, left Weyl multiplication changes \(q\) by a \(1\)-coboundary, while right Weyl multiplication leaves \(q\) unchanged. Therefore, the map is constant on \(\sim_W\)-equivalence classes and descends to a map
\begin{align}
\Theta_S:\overline{\mathcal C}^{(3)}_S/\!\sim_W&\longrightarrow\mathfrak M_S, \\
[U]_{\sim_W}&\longmapsto[q].
\end{align}
We show that \(\Theta_S\) is both injective and surjective.
\begin{enumerate} 
\item (Surjectivity)
Let \([q]\in\mathfrak M_S\), and choose a cocycle representative \(q\). Since \(\operatorname{ob}_S([q])=0\), \cref{phase-cohomology} gives a phase function \(\alpha\) such that \(B(a)=\alpha(a)w(q(a))M_{S(a)}\) belongs to \(Y^3_\omega(V_n)\). By \cref{cor:equiv-abstract}, there exists \([U]\in\overline{\mathcal C}^{(3)}\) with \(B=B_U\). Its symplectic component is \(S\), so \([U]\in\overline{\mathcal C}^{(3)}_S\), and its \(\sim_W\)-equivalence class maps to \([q]\). Thus, \(\Theta_S\) is surjective.

\item (Injectivity)
Suppose that \([U],[U']\in\overline{\mathcal C}^{(3)}_S\) determine the same class in \(\mathfrak M_S\), and write their data as \((\alpha,q,S)\) and \((\alpha',q',S)\). Since \([q']=[q]\), there exists \(v\in V_n\) such that \(q'=q+q_v\). By \cref{lem:third-level-phase-Weyl-actions}, left multiplication of \(U\) by \(w(-v)\) replaces \(q\) by \(q'\). We may therefore replace \(U\) by a \(\sim_W\)-equivalent gate whose affine cocycle is exactly \(q'\). After this replacement, both gates have the same affine and symplectic data \((q',S)\). By \cref{lem:third-level-phase-Weyl-actions}, their phase functions differ by a character of \(V_n\), which is of the form \(\chi_u\) for some \(u\in V_n\). Again by \cref{lem:third-level-phase-Weyl-actions}, right multiplication by \(w(u)\) realizes multiplication of the phase function by \(\chi_u\). Thus, after a further \(\sim_W\)-equivalent replacement, the two projective Weyl fields agree. By \cref{cor:equiv-abstract}, the resulting projective gates agree. Hence, \([U]\sim_W[U']\), so \(\Theta_S\) is injective.
\end{enumerate}
Therefore, \(\Theta_S\) is a bijection, establishing \cref{eqn:third-level-fixed-S-moduli}. Since left and right Weyl multiplication preserve the symplectic component \(S\), the equivalence relation \(\sim_W\) does not mix the strata \(\overline{\mathcal C}^{(3)}_S\). Taking the disjoint union over all homomorphisms \(S:V_n\to\operatorname{Sp}(V_n)\) gives \cref{eqn:third-level-global-moduli}.
\end{proof}

\section{Semi-Clifford Gates in the Third Level}\label{semi-third}
We now explain how the description of \(\mathcal C^{(3)}\) developed in \cref{third-clifford} can be used to determine whether a gate is semi-Clifford. In \cref{general-results}, we reformulate the semi-Clifford condition in terms of Lagrangian subspaces of \(V_n\) and establish general criteria for detecting semi-Clifford gates. In \cref{two}, we give a new proof that every two-qudit gate is semi-Clifford. Finally, in \cref{three}, we prove that every three-qudit gate is semi-Clifford.

\subsection{General Results}\label{general-results}
We begin by reformulating semi-Clifford condition in terms of Lagrangian subspaces of \(V_n\) by using the correspondence between maximal abelian subgroups of \(\mathcal C^{(1)}\) and Lagrangian subspaces of \(V_n\). For a Lagrangian subspace \(L\subseteq V_n\), let
\begin{equation}
W_L:=\{\lambda w(\ell):\lambda\in U(1),\ \ell\in L\}\subseteq\mathcal C^{(1)}
\end{equation}
denote the corresponding maximal abelian subgroup. \cref{lem:semi-clifford-lagrangian-reformulation} gives an equivalent characterization of semi-Clifford gates in terms of these subgroups.

\begin{lem}\label{lem:semi-clifford-lagrangian-reformulation}
Let \(p\) be a prime and \(U\in U(\mathcal H_n)\). \(U\) is semi-Clifford if and only if there exist Lagrangian subspaces \(L,L'\subseteq V_n\) such that $U W_L U^*=W_{L'}$. 
\end{lem}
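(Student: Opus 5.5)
The plan is to reduce the statement to \cref{lem:semi-clifford-diagonal-equivalence}, which characterizes semi-Clifford gates by the existence of maximal abelian subgroups \(A,A'\subseteq\mathcal C^{(1)}\) with \(UAU^*=A'\). It therefore suffices to show that the maximal abelian subgroups of \(\mathcal C^{(1)}\) are exactly the groups \(W_L\) with \(L\subseteq V_n\) Lagrangian. Given that, both implications follow at once: substitute \(A=W_L\) and \(A'=W_{L'}\) in one direction, and conversely write any maximal abelian \(A,A'\) in this form.

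First I will show that \(W_L\) is abelian. For \(\ell,\ell'\in L\), \cref{comm-relation-odd prime} (or \cref{comm-weyl-qubit} when \(p=2\)) gives \(w(\ell)w(\ell')=\omega^{[\ell,\ell']}w(\ell')w(\ell)\), and \([\ell,\ell']=0\) because \(L\) is isotropic. Scalars are central, so \(W_L\) is abelian.

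Next I will show that \(W_L\) is maximal. Suppose \(\lambda w(v)\) commutes with every element of \(W_L\). Then \(\omega^{[v,\ell]}=1\), and hence \([v,\ell]=0\), for all \(\ell\in L\). This means \(v\in L^\perp=L\), so \(\lambda w(v)\in W_L\).

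Conversely, let \(A\subseteq\mathcal C^{(1)}\) be maximal abelian and set \(L_A:=\{a\in V_n:\lambda w(a)\in A \text{ for some }\lambda\}\).
\begin{enumerate}
\item By maximality \(A\) contains \(U(1)I\), since scalars commute with everything. Hence \(A=\{\lambda w(a):a\in L_A\}=W_{L_A}\), using that \(w(a)w(b)\) is a phase times \(w(a+b)\).
\item \(L_A\) is closed under addition and negation by the multiplication law (\cref{weyl-mul-law} or \cref{p=2}), so it is an additive subgroup of \(V_n\) and hence an \(\mathbb F_p\)-subspace.
\item Commutativity of \(A\) forces \([a,b]=0\) on \(L_A\), so \(L_A\) is isotropic.
\item If \(L_A\) were not Lagrangian, I would pick \(v\in L_A^\perp\setminus L_A\). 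Then \(W_{L_A+\mathbb F_pv}\) is abelian, since \(L_A+\mathbb F_pv\) is isotropic, and it strictly contains \(A\). This contradicts maximality, so \(L_A=L_A^\perp\) is Lagrangian.
\end{enumerate}

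The main point requiring care is making sure the phase conventions cause no trouble, particularly for \(p=2\), where the multiplier is \(i^{\mu}\). The argument only uses that products of Weyl operators are a phase times the Weyl operator of the sum, together with the symplectic commutation relation, and both hold for all primes. It also relies on the fact that an isotropic subspace properly contained in its perp can be enlarged to a larger isotropic subspace, which is immediate because \(\mathbb F_pv\) is isotropic for any \(v\) (\([v,v]=0\) for an alternating form).
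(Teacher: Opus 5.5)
Your proposal is correct and follows the same route as the paper. The paper proves this lemma in one line by appealing to \cref{lem:semi-clifford-diagonal-equivalence} and takes for granted that the maximal abelian subgroups of \(\mathcal C^{(1)}\) are exactly the groups \(W_L\) with \(L\) Lagrangian; your argument supplies a valid proof of that correspondence for every prime \(p\), including the maximality of \(W_L\) and the enlargement step forcing \(L_A\) to be Lagrangian.
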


\begin{proof}
This follows directly from \cref{lem:semi-clifford-diagonal-equivalence}. 
\end{proof}

Before considering semi-Clifford 
gates, we first give a criterion for when a gate in \(\mathcal C^{(3)}\) lies in \(\mathcal C^{(2)}\). Let \(U\in\mathcal C^{(3)}\). We have
\begin{equation}
\label{eqn:BU-third-data-semi}
Uw(a)U^*=B_U(a)=\alpha_U(a)w(q_U(a))M_{S_U(a)}
\end{equation}
for every \(a\in V_n\). The component \(S_U(a)\in\operatorname{Sp}(V_n)\) records the non-Pauli Clifford part of the conjugated Weyl operator. \cref{C3-in-C2-criterion} shows that \(U\in\mathcal C^{(2)}\) precisely when this component is trivial for every \(a\in V_n\).

\begin{lem}\label{C3-in-C2-criterion}
Let \(p\) be an odd prime.
Let \(U\in\mathcal C^{(3)}\). Then \(U\in\mathcal C^{(2)}\) if and only if \(S_U(a)=\operatorname{Id}_{V_n}\) for every \(a\in V_n\).
\end{lem}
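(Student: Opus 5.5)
The plan is to prove both implications directly from the normal form $B_U(a)=\alpha_U(a)w(q_U(a))M_{S_U(a)}$ in \cref{eqn:BU-third-data-semi}, the uniqueness statement of \cref{clifford-normal-form}, and the kernel identification $\ker\Psi=\mathcal C^{(1)}$ from \cref{second-level-symplectic-quotient}. We fixed the lift with $M_{\operatorname{Id}_{V_n}}=I$, so the condition $S_U(a)=\operatorname{Id}_{V_n}$ says exactly that $B_U(a)=\alpha_U(a)w(q_U(a))$ is a phase times a Weyl operator.

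For the forward direction, suppose $U\in\mathcal C^{(2)}$. By \cref{clifford-def} (equivalently, by \cref{lem:clifford-symplectic-consequence}), $Uw(a)U^*=\gamma(a)w(S_Ca)\in\mathcal C^{(1)}$ for every $a\in V_n$, with $C=U$. This same element also has the normal form $\alpha_U(a)w(q_U(a))M_{S_U(a)}$. Since $\gamma(a)w(S_Ca)=\gamma(a)w(S_Ca)M_{\operatorname{Id}_{V_n}}$ is itself a normal form, the uniqueness in \cref{clifford-normal-form} forces $S_U(a)=\operatorname{Id}_{V_n}$. We could equally apply $\Psi$ and use $\Psi(\mathcal C^{(1)})=\{\operatorname{Id}_{V_n}\}$. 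Either way, the forward direction follows.

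For the converse, suppose $S_U(a)=\operatorname{Id}_{V_n}$ for all $a$. Then $Uw(a)U^*=\alpha_U(a)w(q_U(a))\in\mathcal C^{(1)}$ for every $a$. This means $U\mathcal C^{(1)}U^*\subseteq\mathcal C^{(1)}$, because scalars pass through conjugation. By \cref{clifford-def} with $k=2$, it follows that $U\in\mathcal C^{(2)}$. An alternative route goes through derivatives. Here $A_U(a)=w(a)B_U(a)^*$ lies in $\mathcal C^{(1)}$ for all $a$, so $U\in\mathcal C^{(2)}$ by \cref{derivative-characterization}.

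Neither direction presents a real obstacle. The only point needing care is that the normal form is taken with respect to the fixed lift satisfying $M_{\operatorname{Id}_{V_n}}=I$, so that a Weyl operator's symplectic component is read off as exactly $\operatorname{Id}_{V_n}$ rather than some other $S$ with $M_S\propto I$. This is guaranteed by injectivity of $S\mapsto M_S$ modulo $\mathcal C^{(1)}$, since $\Psi(M_S)=S$.
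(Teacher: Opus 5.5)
Your proof is correct and follows essentially the same route as the paper. For the forward direction, the paper applies $\Psi$ to the normal form and uses $\ker\Psi=\mathcal C^{(1)}$; this is your stated alternative, and your uniqueness argument is equivalent to it, since the uniqueness in \cref{clifford-normal-form} is itself derived from $\Psi$. The converse matches the paper exactly: it reads off $Uw(a)U^*=\alpha_U(a)w(q_U(a))\in\mathcal C^{(1)}$, just as you do.
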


\begin{proof}
If \(U\in\mathcal C^{(2)}\), then \(Uw(a)U^*\in\mathcal C^{(1)}\) for every \(a\in V_n\). By \cref{eqn:BU-third-data-semi}, we have
\begin{equation}
Uw(a)U^*=\alpha_U(a)w(q_U(a))M_{S_U(a)} \in \mathcal{C}^{(1)}
\end{equation}
for all $a \in V_n$. Let $\Psi$ be the map from \cref{second-level-symplectic-quotient}. Since $\ker \Psi = \mathcal C^{(1)}$, we have
\begin{align}
\operatorname{Id}_{V_n}
&= \Psi(\alpha_U(a)w(q_U(a))M_{S_U(a)}) \\
&= \Psi(\alpha_U(a)w(q_U(a)))\Psi(M_{S_U(a)})= \Psi(M_{S_U(a)})=S_U(a)
\end{align}
for all $a \in V_n$. Conversely, if \(S_U(a)=\operatorname{Id}_{V_n}\) for every \(a \in V_n\), then $Uw(a)U^*=\alpha_U(a)w(q_U(a))\in\mathcal C^{(1)}$ for every \(a\in V_n\). Hence, \(U\in\mathcal C^{(2)}\).
\end{proof}

\cref{C3-in-C2-criterion} characterizes the Clifford gates inside \(\mathcal C^{(3)}\) by requiring the symplectic component \(S_U(a)\) to be trivial for every \(a\in V_n\). For semi-Clifford gates, one only requires this condition on a Lagrangian subspace. Indeed, \(U\) is semi-Clifford precisely when there exists a Lagrangian subspace \(L\subseteq V_n\) such that \(Uw(\ell)U^*\) is a Weyl operator for every \(\ell\in L\). Equivalently, \(S_U(\ell)=\operatorname{Id}_{V_n}\) for every \(\ell\in L\). \cref{semi-clifford-kernel-criterion} shows that this condition is both necessary and sufficient. Before stating and proving \cref{semi-clifford-kernel-criterion}, we record in \cref{lem:kernel-cocycle-consequences} some basic facts about the cocycle data \(q_U\) that will be used repeatedly below.

\begin{lem}\label{lem:kernel-cocycle-consequences}
Let \(p\) be an odd prime. Let \(U\in\mathcal C^{(3)}\), and set \(K_U:=\ker S_U\) and \(E_U:=\operatorname{im}S_U\). Then \(q_U|_{K_U}\) is linear and injective. Moreover,
\(q_U|_{K_U}\) preserves the symplectic form. That is, 
\begin{equation}\label{eqn:kernel-cocycle-isometry}
[q_U(k),q_U(\ell)]=[k,\ell]
\end{equation}
for all \(k,\ell\in K_U\). Moreover, \(E_U\) fixes \(q_U(K_U)\) pointwise, and
\begin{equation}\label{eqn:kernel-cocycle-mixed-pairing}
[q_U(k),q_U(a)]=[k,a]
\end{equation}
for all \(k\in K_U\) and \(a\in V_n\).
\end{lem}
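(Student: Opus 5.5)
We rely on the cocycle identities from \cref{third-level-classification} together with the group-theoretic facts about $B_U$. We write $S=S_U$, $q=q_U$ and $\alpha=\alpha_U$, and we use the equations \cref{eqn:third-level-triple-S,eqn:third-level-triple-q,eqn:third-level-triple-alpha}.

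\emph{Linearity.} For $k,\ell\in K_U$ we have $S(k)=\operatorname{Id}$, so \cref{eqn:third-level-triple-q} reduces to $q(k+\ell)=q(k)+q(\ell)$. Hence $q|_{K_U}$ is additive, and therefore $\mathbb F_p$-linear.

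\emph{Fixing.} For $k\in K_U$ and any $a$, commutativity of $V_n$ gives $q(a+k)=q(k+a)$. Expanding each side with \cref{eqn:third-level-triple-q} yields $q(a)+S(a)q(k)=q(k)+S(k)q(a)=q(k)+q(a)$. Thus $S(a)q(k)=q(k)$, so $E_U$ fixes $q(K_U)$ pointwise.

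\emph{Mixed pairing.} The cleanest route is through operators. For $k\in K_U$ we have $B_U(k)=Uw(k)U^*=\alpha(k)w(q(k))$, which is a Weyl operator up to phase. Conjugating the commutation relation \cref{comm-relation-odd prime} by $U$ gives
\[
B_U(k)B_U(a)=\omega^{[k,a]}B_U(a)B_U(k).
\]
Now insert $B_U(a)=\alpha(a)w(q(a))M_{S(a)}$ and move $w(q(k))$ past each factor. Passing it through $w(q(a))$ produces $\omega^{[q(k),q(a)]}$ by \cref{comm-relation-odd prime}. Passing it through $M_{S(a)}$ uses $M_{S(a)}^*w(q(k))M_{S(a)}=w(S(a)^{-1}q(k))=w(q(k))$, where the last step is the fixing property just proved. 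Therefore $w(q(k))$ commutes with $M_{S(a)}$, and comparing the two sides gives $\omega^{[q(k),q(a)]}=\omega^{[k,a]}$. Since $\omega$ is a primitive $p$-th root of unity, this yields \cref{eqn:kernel-cocycle-mixed-pairing}.

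\emph{Isometry and injectivity.} Specializing the mixed pairing to $a=\ell\in K_U$ gives \cref{eqn:kernel-cocycle-isometry}. For injectivity, suppose $q(k)=0$. Then $[k,a]=[q(k),q(a)]=0$ for all $a\in V_n$, and non-degeneracy of the symplectic form forces $k=0$.

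The main obstacle is the mixed-pairing step. It requires care with the convention $M_Sw(x)M_S^*=w(Sx)$, so that conjugating by $M_{S(a)}$ in the needed direction corresponds to $S(a)^{-1}$. It also requires that $E_U$ fixes $q(K_U)$, which holds because $E_U$ is a group and hence closed under inverses. An alternative is a purely cocycle-level derivation from \cref{eqn:third-level-triple-alpha}, but that route would need a symmetry property of $\kappa$ on commuting pairs. The operator argument avoids $\kappa$ entirely.
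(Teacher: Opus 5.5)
Your proof is correct and follows the paper's argument. Linearity and the fixing property both come from the affine cocycle identity, using $q(a+k)=q(k+a)$ for the latter. The mixed pairing comes from conjugating the Weyl commutator relation by $U$ and noting that $M_{S(a)}$ commutes with $w(q(k))$.

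The only difference is organizational. You derive the isometry and injectivity as corollaries of the mixed pairing, injectivity via non-degeneracy. The paper proves the isometry by a separate commutator computation and proves injectivity from the fact that $w(k)\in U(1)I$ forces $k=0$.
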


\begin{proof}
If \(k,\ell\in K_U\), then \(S_U(k)=S_U(\ell)=\operatorname{Id}_{V_n}\), so \cref{eqn:third-level-triple-q} gives $q_U(k+\ell)=q_U(k)+q_U(\ell).$ Thus,  \(q_U|_{K_U}\) is linear. If \(q_U(k)=0\), then
\begin{equation}
Uw(k)U^*=B_U(k)=\alpha_U(k)w(q_U(k))=\alpha_U(k)I.
\end{equation}
Hence, \(w(k)\in U(1)I\), so \(k=0\) by \cref{lem:weyl-scalar-label}. Thus, \(q_U|_{K_U}\) is injective.
For \(k,\ell\in K_U\), conjugating the Weyl commutator by \(U\) gives
\begin{equation}\label{co-1}
B_U(k)B_U(\ell)B_U(k)^*B_U(\ell)^*=Uw(k)w(\ell)w(k)^*w(\ell)^*U^*=\omega^{[k,\ell]}I.
\end{equation}
Since \(k,\ell\in K_U\), we have \(B_U(k)=\alpha_U(k)w(q_U(k))\) and \(B_U(\ell)=\alpha_U(\ell)w(q_U(\ell))\). Therefore, we have
\begin{equation}\label{co-2}
B_U(k)B_U(\ell)B_U(k)^*B_U(\ell)^*=\omega^{[q_U(k),q_U(\ell)]}I.
\end{equation}
Comparing \cref{co-1} and \cref{co-2} gives \cref{eqn:kernel-cocycle-isometry}. Now let \(k\in K_U\) and \(a\in V_n\). Since \(V_n\) is abelian, \cref{eqn:third-level-triple-q} gives
\begin{equation}
q_U(k)+q_U(a)=q_U(k+a)=q_U(a+k)=q_U(a)+S_U(a)q_U(k).
\end{equation}
Hence \(S_U(a)q_U(k)=q_U(k)\). Thus, \(E_U\) fixes \(q_U(K_U)\) pointwise. We have
\begin{equation}\label{co-3}
B_U(k)B_U(a)B_U(k)^*B_U(a)^*=\omega^{[k,a]}I.
\end{equation}
Since \(E_U\) fixes \(q_U(K_U)\) pointwise, we have
\begin{equation}
M_{S_U(a)}w(q_U(k))M_{S_U(a)}^*=w(S_U(a)q_U(k))=w(q_U(k)),
\end{equation}
so \(M_{S_U(a)}\) commutes with \(w(q_U(k))\). Since \(k\in K_U\), we have $B_U(k)=\alpha_U(k)w(q_U(k))$. Since  $B_U(a)=\alpha_U(a)w(q_U(a))M_{S_U(a)}$ and the scalar phases cancel in the commutator, we have
\begin{align}
B_U(k)B_U(a)B_U(k)^*B_U(a)^*
&=w(q_U(k))w(q_U(a))M_{S_U(a)}w(q_U(k))^*M_{S_U(a)}^*w(q_U(a))^* \\
&=w(q_U(k))w(q_U(a))w(q_U(k))^*w(q_U(a))^*\\
&=\omega^{[q_U(k),q_U(a)]}I. \label{co-4}
\end{align}
Comparing \cref{co-3} and \cref{co-4} gives \cref{eqn:kernel-cocycle-mixed-pairing}.
\end{proof}

We can now state and prove in \cref{semi-clifford-kernel-criterion} a criterion for determining whether a gate in \(\mathcal C^{(3)}\) is semi-Clifford in terms of Lagrangian subspaces of \(V_n\).

\begin{prop}\label{semi-clifford-kernel-criterion}
Let \(p\) be an odd prime.
Let \(U\in\mathcal C^{(3)}\) and $K_U := \ker S_U =\{a\in V_n:S_U(a)=\operatorname{Id}_{V_n}\}$. 
Then \(U\) is semi-Clifford if and only if \(K_U\) contains a Lagrangian subspace of \(V_n\).
\end{prop}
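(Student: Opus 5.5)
The plan is to reduce both directions to \cref{lem:semi-clifford-lagrangian-reformulation}. We then translate the condition \(UW_LU^*=W_{L'}\) into the statement that \(S_U\) is trivial on \(L\), using the normal form \(B_U(a)=\alpha_U(a)w(q_U(a))M_{S_U(a)}\) from \cref{eqn:BU-third-data-semi} together with the kernel \(\ker\Psi=\mathcal C^{(1)}\) from \cref{second-level-symplectic-quotient}.

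\emph{Necessity.} Suppose \(U\) is semi-Clifford. By \cref{lem:semi-clifford-lagrangian-reformulation} there are Lagrangians \(L,L'\) with \(UW_LU^*=W_{L'}\). Then for each \(\ell\in L\) we get \(B_U(\ell)=Uw(\ell)U^*\in\mathcal C^{(1)}=\ker\Psi\). Applying \(\Psi\) to the normal form gives \(S_U(\ell)=\Psi(M_{S_U(\ell)})=\operatorname{Id}_{V_n}\), exactly as in the proof of \cref{C3-in-C2-criterion}. Hence \(L\subseteq K_U\).

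\emph{Sufficiency.} Suppose \(L\subseteq K_U\) is Lagrangian. For \(\ell\in L\) we have \(B_U(\ell)=\alpha_U(\ell)w(q_U(\ell))\in\mathcal C^{(1)}\). Set \(L':=q_U(L)\). By \cref{lem:kernel-cocycle-consequences}, \(q_U|_{K_U}\) is linear, injective and symplectic, so \(L'\) is a subspace with \(\dim L'=\dim L=n\). Moreover \([q_U(k),q_U(\ell)]=[k,\ell]=0\) for \(k,\ell\in L\), so \(L'\) is isotropic of dimension \(n\), hence Lagrangian. It follows that
\[
UW_LU^*=\{\lambda\alpha_U(\ell)w(q_U(\ell)):\lambda\in U(1),\ \ell\in L\}=W_{L'},
\]
where the set equality uses that the phases \(\lambda\alpha_U(\ell)\) range over all of \(U(1)\) and \(q_U|_L\) is a bijection onto \(L'\). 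Then \cref{lem:semi-clifford-lagrangian-reformulation} gives that \(U\) is semi-Clifford.

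The only step needing care is the sufficiency direction. There one must verify that the image \(q_U(L)\) is genuinely Lagrangian and that conjugation maps \(W_L\) onto, not merely into, \(W_{L'}\). Both points follow from the injectivity and isometry statements of \cref{lem:kernel-cocycle-consequences}, so no further obstacle is expected.
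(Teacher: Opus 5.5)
Your proposal is correct and follows essentially the same route as the paper: necessity comes from applying \(\Psi\) to the normal form of \(B_U(\ell)\), as in the proof of \cref{C3-in-C2-criterion}, and sufficiency comes from showing via \cref{lem:kernel-cocycle-consequences} that \(q_U(L)\) is an \(n\)-dimensional isotropic subspace. The only cosmetic differences are that you read off isotropy of \(q_U(L)\) directly from the isometry identity in that lemma, whereas the paper re-derives it from the commutation of the operators \(B_U(\ell)\), and that you check explicitly that conjugation maps \(W_L\) onto \(W_{L'}\).
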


\begin{proof}
If \(U\) is semi-Clifford, then there exist Lagrangian subspaces \(L,L'\subseteq V_n\) such that \(UW_LU^*=W_{L'}\). Hence, \(Uw(\ell)U^*\) is a Weyl operator for every \(\ell\in L\). An argument analogous to that used in \cref{C3-in-C2-criterion} then shows that \(S_U(\ell)=\operatorname{Id}_{V_n}\) for every \(\ell\in L\). Thus, \(L\subseteq K_U\). Conversely, suppose that \(L\subseteq K_U\) is a Lagrangian subspace. We have 
\begin{equation}
Uw(\ell)U^*=B_U(\ell)=\alpha_U(\ell)w(q_U(\ell))
\end{equation}
for every \(\ell\in L\). Since \(L\) is isotropic, the operators \(w(\ell)\) commute for all \(\ell\in L\). This implies that $B_U(\ell)$ and $B_U(\ell')$ commute, which implies that $w(q_U(\ell))w(q_U(\ell'))= w(q_U(\ell'))w(q_U(\ell))$ for all $\ell,\ell' \in L$. This implies that $[q_U(\ell),q_U(\ell')]=0$ for all $\ell,\ell' \in L$. Let
\begin{equation}
L' := \{ q_U(\ell) \mid \ell \in L\} = q_U(L)
\end{equation}
Note that $q_U|_L$ is a linear map. This follows from \cref{lem:kernel-cocycle-consequences}. The argument presented above shows that $L' $ is an isotropic subspace of $V_n$. 
Moreover, the map \(q_U|_L\) is injective. This also follows from \cref{lem:kernel-cocycle-consequences}. 
Thus, \(q_U|_L : L \to L'\) is a linear isomorphism. We have $\dim  L = n$ since \(L\) is a Lagrangian subspace. Thus, \(L' := q_U(L)\) is an \(n\)-dimensional isotropic subspace, so it is also a Lagrangian subspace. Clearly,  $UW_LU^*=W_{L'}$ so \(U\) is semi-Clifford.
\end{proof}

\cref{semi-clifford-kernel-criterion} reduces the semi-Clifford condition for \(U\in\mathcal{C}^{(3)}\) to determining whether \(K_U:=\ker S_U\) contains a Lagrangian subspace of \(V_n\). 
\cref{lem:witt-index-subspace} gives a dimension-based criterion for determining whether a gate \(U\in\mathcal C^{(3)}\) is semi-Clifford. The result relies on a standard fact from symplectic linear algebra: for any subspace \(K\subseteq V_n\), the maximum dimension of an isotropic subspace contained in \(K\) is determined by \(\dim K\) and \(\dim(K\cap K^\perp)\).

\begin{lem}\label{lem:witt-index-subspace}
Let \(p\) be a prime and let \(K\subseteq V_n\) be a subspace. Define
\begin{equation}
\operatorname{rad}(K):=\{k\in K:[k,k']=0\text{ for all }k'\in K\}=K\cap K^\perp.
\end{equation}
Then there exists an isotropic subspace $L$ of $V_n$ such that \(L\subseteq K\) and
\begin{equation}
\dim L= \frac{\dim K+\dim \operatorname{rad}(K)}{2} = \frac{\dim K+\dim (K\cap K^\perp)}{2}. 
\end{equation}
Moreover, no isotropic subspace of $V_n$  contained in \(K\) has larger dimension. Hence, the maximum dimension of a subspace of \(K\) that is isotropic in \(V_n\) is
\begin{equation}\label{eqn:witt-index-formula}
\frac{\dim K+\dim \operatorname{rad}(K)}{2}
=
\frac{\dim K+\dim (K\cap K^\perp)}{2}.
\end{equation}
\end{lem}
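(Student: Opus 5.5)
The plan is to decompose \(K\) along its radical. Set \(R:=\operatorname{rad}(K)=K\cap K^\perp\) and let \(r:=\dim R\), \(d:=\dim K\). The restriction of \([\,\cdot,\cdot\,]\) to \(K\) is an alternating form whose radical is exactly \(R\). It therefore descends to a non-degenerate alternating form on the quotient \(\overline K:=K/R\). Consequently \(\dim\overline K=d-r\) is even, and we write \(d-r=2m\).

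\textbf{Existence.} Choose a Lagrangian subspace \(\overline L\subseteq\overline K\) for the induced symplectic form; it has dimension \(m\), as recalled in \cref{la-not}. Let \(L\subseteq K\) be its full preimage under \(K\to K/R\), so \(\dim L=r+m=(d+r)/2\). Then \(L\) is isotropic in \(V_n\). Indeed, for \(x,y\in L\), the value \([x,y]\) depends only on the classes modulo \(R\), because \(R\) pairs trivially with all of \(K\). That value equals the pairing of \(\bar x,\bar y\in\overline L\), which vanishes.

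\textbf{Maximality.} Let \(L\subseteq K\) be any subspace that is isotropic in \(V_n\). Then \(L+R\subseteq K\) is still isotropic, since \(R\) pairs trivially with \(K\). Hence we may assume \(R\subseteq L\). The image \(L/R\) is then an isotropic subspace of the non-degenerate space \(\overline K\), so \(\dim(L/R)\le \tfrac12\dim\overline K=m\). This follows from the standard inequality \(\dim W+\dim W^{\perp}=\dim\overline K\) together with \(W\subseteq W^\perp\). Therefore \(\dim L\le r+m\), which is the formula \cref{eqn:witt-index-formula}. Passing from the original \(L\) to \(L+R\) only increases the dimension, so the bound applies to the original \(L\) as well.

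The only step requiring care is the verification that the induced form on \(K/R\) is well defined and non-degenerate. Well-definedness uses \(R\subseteq K^\perp\). For non-degeneracy: if \(\bar x\) pairs to zero with all of \(\overline K\), then \(x\in K\cap K^\perp=R\), so \(\bar x=0\). The rest is standard symplectic linear algebra over \(\mathbb F_p\), valid for any prime \(p\), including \(p=2\), since the form is alternating.
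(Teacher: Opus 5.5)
Your proposal is correct and follows essentially the same argument as the paper. Both pass to the non-degenerate quotient \(K/\operatorname{rad}(K)\) and take the preimage of a Lagrangian there for existence; for maximality, both replace an isotropic subspace \(M\subseteq K\) by \(M+\operatorname{rad}(K)\) and bound its image by half the dimension of the quotient. You also check explicitly that the induced form is well defined and non-degenerate, and note that the argument covers \(p=2\) because the form is alternating — points the paper leaves implicit.
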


\begin{proof}
The restriction of the symplectic form to \(K\) may be degenerate, and its degeneracy is measured precisely by \(\operatorname{rad}(K)\). Since \(\operatorname{rad}(K)\) is the kernel of the restricted alternating form on \(K\), the form descends to a non-degenerate alternating form on the quotient \(K/\operatorname{rad}(K)\). Hence, \(K/\operatorname{rad}(K)\) is a symplectic vector space, and therefore it has Lagrangian subspaces of dimension
\begin{equation}
\frac{1}{2}\dim (K/\operatorname{rad}(K))=
\frac{1}{2}(\dim K-\dim \operatorname{rad}(K)).
\end{equation}
Let \(\overline L\subseteq K/\operatorname{rad}(K)\) be such a Lagrangian subspace, and let \(L\subseteq K\) be its inverse image under the quotient map \(K\to K/\operatorname{rad}(K)\). Since \(\overline L\) is isotropic, the subspace \(L\) is isotropic in \(V_n\). We have
\begin{equation}
\dim L = \dim \operatorname{rad}(K)+\dim \overline L = \frac{\dim K+\dim \operatorname{rad}(K)}{2}.
\end{equation}
This proves the attainability of the dimension in \cref{eqn:witt-index-formula}. Conversely, let \(M\subseteq K\) be any subspace that is isotropic in \(V_n\). Then \(M+\operatorname{rad}(K)\subseteq K\) is also isotropic, because \(M\) is isotropic and \(\operatorname{rad}(K)\) pairs trivially with all of \(K\). The image of \(M+\operatorname{rad}(K)\) in \(K/\operatorname{rad}(K)\) is therefore isotropic in the non-degenerate symplectic vector space \(K/\operatorname{rad}(K)\). Hence, we have
\begin{equation}
\dim  ((M+\operatorname{rad}(K))/\operatorname{rad}(K) )
\leq
\frac{1}{2}\dim (K/\operatorname{rad}(K)).
\end{equation}
Therefore, we have
\begin{align}
\dim M&\leq \dim (M+\operatorname{rad}(K))\\ &= \dim \operatorname{rad}(K) + \dim  ((M+\operatorname{rad}(K))/\operatorname{rad}(K) )\\ &\leq \dim \operatorname{rad}(K) + \frac{1}{2}\dim (K/\operatorname{rad}(K))= \frac{\dim K+\dim \operatorname{rad}(K)}{2}.
\end{align}
Thus, no isotropic subspace of $V_n$ that is contained in \(K\) has dimension larger than the number in \cref{eqn:witt-index-formula}. This completes the proof.
\end{proof}

\cref{lem:witt-index-subspace} shows that the integer \(w(K)\) defined in \cref{eqn:witt-index-formula} is the maximum dimension of an isotropic subspace contained in \(K\). In particular, \(K\) contains a Lagrangian subspace of \(V_n\) if and only if \(w(K)\geq n\). We can use this observation to obtain an alternative characterization of semi-Clifford gates in \cref{witt-index-consequences}.

\begin{cor}\label{witt-index-consequences}
Let \(p\) be an odd prime. Let \(U \in \mathcal C^{(3)}\), and set \(K_U := \ker S_U\). Let \(w(K_U)\) denote the maximum dimension of a subspace of \(K_U\) that is isotropic in \(V_n\). Then the following statements hold:
\begin{enumerate}
\item \(U\) is semi-Clifford if and only if
\begin{equation}\label{eqn:witt-index-semi-clifford}
\dim K_U + \dim(K_U \cap K_U^\perp) \geq 2n.
\end{equation}
\item \(U\) is semi-Clifford if and only if \(K_U^\perp \subseteq K_U\). That is, $K_U$ is a coisotropic subspace of $V_n$. 
\end{enumerate}
\end{cor}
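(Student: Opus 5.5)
Both parts follow by combining \cref{semi-clifford-kernel-criterion} with \cref{lem:witt-index-subspace}, plus a short symplectic linear-algebra argument for the coisotropy reformulation.

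\emph{Part (1).} By \cref{semi-clifford-kernel-criterion}, \(U\) is semi-Clifford if and only if \(K_U\) contains a Lagrangian subspace of \(V_n\), that is, an isotropic subspace of dimension \(n\). Every isotropic subspace of the \(2n\)-dimensional space \(V_n\) has dimension at most \(n\), and any isotropic subspace of dimension \(m\) contains one of every smaller dimension. Hence \(K_U\) contains a Lagrangian subspace exactly when \(w(K_U)\geq n\). By \cref{lem:witt-index-subspace}, \(w(K_U)=\tfrac12(\dim K_U+\dim(K_U\cap K_U^\perp))\), so the condition is \cref{eqn:witt-index-semi-clifford}.

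\emph{Part (2).} It suffices to show that, for any subspace \(K\subseteq V_n\), the inequality \(\dim K+\dim(K\cap K^\perp)\ge 2n\) holds if and only if \(K^\perp\subseteq K\). Non-degeneracy gives \(\dim K^\perp=2n-\dim K\).
\begin{itemize}
\item If \(K^\perp\subseteq K\), then \(K\cap K^\perp=K^\perp\). The left side equals \(\dim K+2n-\dim K=2n\), so the inequality holds.
\item Conversely, \(K\cap K^\perp\subseteq K^\perp\) gives \(\dim(K\cap K^\perp)\le 2n-\dim K\). Hence the inequality forces \(\dim(K\cap K^\perp)=\dim K^\perp\), so \(K\cap K^\perp=K^\perp\), which means \(K^\perp\subseteq K\).
\end{itemize}

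There is also a more direct route: if \(L\subseteq K\) is Lagrangian, then \(K^\perp\subseteq L^\perp=L\subseteq K\). Conversely, if \(K^\perp\subseteq K\), then \(K^\perp\) is isotropic, and any Lagrangian subspace containing it lies in \((K^\perp)^\perp=K\).

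The main point requiring care is the identity \(\dim K^\perp=2n-\dim K\), which uses non-degeneracy. Everything else is immediate from the earlier results.
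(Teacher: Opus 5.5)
Your proposal is correct and matches the paper's proof step for step. Part (1) combines \cref{semi-clifford-kernel-criterion} with \cref{lem:witt-index-subspace}, and part (2) uses \(\dim K^\perp=2n-\dim K\) together with \(K\cap K^\perp\subseteq K^\perp\), exactly as the paper does. Your alternative route is also valid and avoids the dimension count: \(K^\perp\subseteq L^\perp=L\subseteq K\) in one direction, and extending the isotropic \(K^\perp\) to a Lagrangian contained in \((K^\perp)^\perp=K\) in the other.
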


\begin{proof}
By \cref{semi-clifford-kernel-criterion}, \(U\) is semi-Clifford if and only if \(K_U\) contains a Lagrangian subspace of \(V_n\). Since a Lagrangian subspace has dimension \(n\), \cref{lem:witt-index-subspace} implies this is the condition
\begin{equation}
w(K_U)\geq n \iff  \dim K_U+\dim(K_U\cap K_U^\perp) \geq 2n.
\end{equation}
This proves (1). For (2), if \(K_U^\perp\subseteq K_U\), then $K_U\cap K_U^\perp=K_U^\perp$. Hence,
\begin{equation}
\dim K_U+\dim(K_U\cap K_U^\perp)=\dim K_U+\dim K_U^\perp=2n.
\end{equation}
Thus, \(U\) is semi-Clifford by (1). Conversely, suppose that \(U\) is semi-Clifford. Then \cref{eqn:witt-index-semi-clifford} holds. Since $\dim K_U+\dim K_U^\perp=2n$, we have $\dim(K_U\cap K_U^\perp)\geq \dim K_U^\perp.$ On the other hand, \(K_U\cap K_U^\perp\) is a subspace of \(K_U^\perp\), so $\dim(K_U\cap K_U^\perp)\leq \dim K_U^\perp.$ Hence, $\dim(K_U\cap K_U^\perp)=\dim K_U^\perp$ and therefore
$K_U\cap K_U^\perp=K_U^\perp.$
Thus, \(K_U^\perp\subseteq K_U\). 
\end{proof}

\cref{codim2-gen} puts the formalism to the test and proves that, in general, if \(\operatorname{codim}\ker S_U\leq 2\), then every \(U\in\mathcal C^{(3)}\) is semi-Clifford.

\begin{prop}\label{codim2-gen}
Let \(p\) be an odd prime. Let \(U \in \mathcal C^{(3)}\), and set \(K_U := \ker S_U\).  If \(\operatorname{codim} K_U \leq 2\), then \(U\) is semi-Clifford.
\end{prop}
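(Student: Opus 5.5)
By \cref{witt-index-consequences}(2), it suffices to show that \(K_U^\perp\subseteq K_U\). I will split into cases on \(\operatorname{codim}K_U\in\{0,1,2\}\). Only one subcase needs the cocycle structure of \cref{lem:kernel-cocycle-consequences}; the others are pure symplectic linear algebra.

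\emph{Codimension \(0\) and \(1\).} If \(K_U=V_n\), then \(K_U^\perp=0\) and there is nothing to prove. If \(\operatorname{codim}K_U=1\), then \(K_U^\perp\) is a line spanned by some \(x\). Since the form is alternating, \([x,x]=0\), so \(K_U^\perp\) is isotropic. Hence \(K_U^\perp\subseteq (K_U^\perp)^\perp=K_U\).

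\emph{Codimension \(2\).} Now \(K_U^\perp\) is a \(2\)-dimensional subspace. An alternating form on a \(2\)-dimensional space is either identically zero or non-degenerate. If it is zero, \(K_U^\perp\) is isotropic and \(K_U^\perp\subseteq K_U\) as before. So the main step is to rule out the case where \(K_U^\perp\) is a symplectic plane. In that case \(K_U\cap K_U^\perp=0\), so \(K_U\) is itself a non-degenerate subspace of dimension \(2n-2\).

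To exclude this case I will use \cref{lem:kernel-cocycle-consequences}. The map \(q_U|_{K_U}\) is an injective linear isometry, so \(P:=q_U(K_U)\) is a non-degenerate subspace of dimension \(2n-2\), and \(V_n=P\oplus P^\perp\) with \(P^\perp\) a symplectic plane. The group \(E_U=\operatorname{im}S_U\) consists of symplectic maps that fix \(P\) pointwise, so each element of \(E_U\) preserves \(P^\perp\). An element of \(E_U\) is therefore determined by its restriction to \(P^\perp\). This gives an injective homomorphism
\[
E_U\hookrightarrow \operatorname{Sp}(P^\perp)\cong\operatorname{SL}_2(\mathbb F_p).
\]
On the other hand, \(S_U\) is a homomorphism (\cref{eqn:third-level-triple-S}), so \(E_U\cong V_n/K_U\) has order \(p^2\). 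But \(|\operatorname{SL}_2(\mathbb F_p)|=p(p^2-1)\) is not divisible by \(p^2\), which contradicts the embedding.

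Hence \(K_U^\perp\) is isotropic, so \(K_U^\perp\subseteq K_U\), and \(U\) is semi-Clifford by \cref{witt-index-consequences}. The step requiring the most care is the faithfulness claim: an element of \(E_U\) that is trivial on \(P^\perp\) is also trivial on \(P\), so it is the identity on \(V_n=P\oplus P^\perp\).
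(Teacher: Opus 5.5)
Your proposal is correct and follows essentially the same route as the paper: the same split by codimension, the same reduction to excluding a non-degenerate \(K_U\), and the same contradiction obtained by embedding \(E_U\) (of order \(p^2\)) into \(\operatorname{Sp}(q_U(K_U)^\perp)\cong\operatorname{Sp}(\mathbb F_p^2)\), whose order \(p(p^2-1)\) is not divisible by \(p^2\). The only cosmetic difference is in codimension two: you test whether the form on the plane \(K_U^\perp\) is zero, whereas the paper runs a parity argument on \(K_U\). These are equivalent, since both radicals equal \(K_U\cap K_U^\perp\).
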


\begin{proof}
The \(\operatorname{codim} K_U=0\) case is immediate, since then \(K_U=V_n\). If \(\operatorname{codim} K_U=1\), then \(K_U\) is a hyperplane and \(\dim K_U^\perp=1\). Moreover, $K_U^\perp \subseteq K_U$. Indeed, let \(0\neq v\in K_U^\perp\). If \(v\notin K_U\), then $V_n=K_U\oplus \langle v\rangle$ since \(K_U\) is a hyperplane. Since $[v,v]=0$ and $[v,k]=0$ for each $k \in K_U$, the decomposition $V_n=K_U\oplus \langle v\rangle$ implies that $[v,x]=0$ for each $x \in V_n$. This contradicts the non-degeneracy of the symplectic form. Hence, $K_U^\perp \subseteq K_U$. Thus, \(U\) is semi-Clifford by \cref{witt-index-consequences}. Let $\operatorname{codim}K_U=2$. 
\begin{enumerate}
    \item[(i)] Suppose first that \(K_U\cap K_U^\perp\neq 0\). Note that \(K_U\cap K_U^\perp\) is the kernel of the map
\begin{align}
\varphi_K \colon K_U &\longrightarrow K_U^* \label{eq-1}, \\
x &\longmapsto \varphi_K(x), \qquad \varphi_K(x)(y) := [x,y].
\end{align}
By rank-nullity, $\dim K_U = \operatorname{rank}\varphi_K+\dim(K_U\cap K_U^\perp)$. 
Hence, 
\begin{equation}
\operatorname{rank}\varphi_K
=
\dim K_U-\dim(K_U\cap K_U^\perp). \label{eq-2}
\end{equation}
The map \(\varphi_K\) is represented, after choosing a basis of \(K_U\), by a skew-symmetric matrix. Every skew-symmetric matrix has even rank. Therefore, $\dim K_U - \dim(K_U\cap K_U^\perp)$ is even. Since \(\dim K_U=2n-2\) is even, it follows that $\dim(K_U\cap K_U^\perp)$ is also even. Since \(K_U\cap K_U^\perp\neq 0\), we conclude that $\dim(K_U\cap K_U^\perp)\geq 2$. Therefore, we have 
\begin{equation}
\dim K_U+\dim(K_U\cap K_U^\perp)\geq (2n-2)+2=2n.
\end{equation}
Thus, \(U\) is semi-Clifford by \cref{witt-index-consequences}. 

\item[(ii)] Suppose now that \(K_U\cap K_U^\perp = 0\). That is, the symplectic form restricts to a non-degenerate alternating form on \(K_U\). We show that this case cannot occur. Assume, for contradiction, that the symplectic form restricts to a non-degenerate alternating form on \(K_U\). By \cref{lem:kernel-cocycle-consequences}, the restriction \(q_U|_{K_U}\) is a linear injective map. Since $q_U|_{K_U}$ is linear and bijective onto 
$W:=q_U(K_U)$, we have  $\dim q_U(K_U)=\dim K_U=2n-2$.  By \cref{lem:kernel-cocycle-consequences}, we also have
\begin{equation}\label{restrict-nondeg}
[q_U(a),q_U(b)]=[a,b]
\end{equation}
for all \(a,b\in K_U\). Since \(K_U\cap K_U^\perp = 0\), \cref{restrict-nondeg} implies that the symplectic form restricts to a non-degenerate alternating form on \(W\). Since \(W\) is non-degenerate, we have a symplectic direct sum $V_n=W\oplus W^\perp$ such that $\dim W^\perp = 2$\footnote{Indeed, \(W\cap W^\perp=\operatorname{rad}(W)=0\), and non-degeneracy of the ambient symplectic form gives \(\dim W+\dim W^\perp=\dim V_n\). Hence \(V_n=W\oplus W^\perp\). Since \(\dim W=2n-2\), we get \(\dim W^\perp=2\).}.
Set $E_U := \operatorname{im}S_U$. By \cref{lem:kernel-cocycle-consequences}, we have that every element of \(E_U \) fixes \(W\) pointwise. Hence, each \(T\in \operatorname{im}S_U\) preserves \(W^\perp\). Indeed, if \(x\in W^\perp\) and \(w\in W\) then \(Tw=w\) and
\begin{equation}
[Tx,w]=[Tx,Tw]=[x,w]=0.
\end{equation}
Thus, \(Tx\in W^\perp\). Hence, restriction to \(W^\perp\) defines a group homomorphism
\begin{align}
\rho \colon E_U  &\longrightarrow \operatorname{Sp}(W^\perp), \\
T &\longmapsto T|_{W^\perp}.
\end{align}
$\rho$ is injective. Indeed, if \(\rho(T)=\operatorname{Id}_{W^\perp}\), then \(T\) is the identity on \(W^\perp\). Since \(T\) also fixes \(W\) pointwise and \(V_n=W\oplus W^\perp\), it follows that $T=\operatorname{Id}_{V_n}$. Therefore, \(E_U \) embeds into \(\operatorname{Sp}(W^\perp) \cong \operatorname{Sp}(\mathbb F^2_{p})\). Since
\begin{equation}
|\operatorname{Sp}(\mathbb F^2_p)|=p(p^2-1),
\end{equation}
every \(p\)-subgroup of \(\operatorname{Sp}(\mathbb F_p^2)\) has order at most \(p\). However, \(E_U\cong V_n/K_U\), and hence
$|E_U|=p^2.$
This contradiction shows that the case \(K_U\cap K_U^\perp=0\) cannot occur.
\end{enumerate}
This completes the proof.
\end{proof}

The \(\operatorname{codim}K_U=2\) case in \cref{codim2-gen}, under the assumption \(K_U\cap K_U^\perp=0\), passed from \(K_U\) to \(W:=q_U(K_U)\) and used the cocycle identities to show that \(E_U\) fixes the non-degenerate subspace \(W\) pointwise. Consequently, \(E_U\) acts faithfully on the ``smaller'' symplectic space \(W^\perp\cong \mathbb F_p^2\), where abelian \(p\)-subgroups are too small to contain \(E_U\). Similar arguments are used below.

\subsection{Two Qudit Case}\label{two}
We now give a new proof of the known result that every two-qudit gate of odd prime dimension in \(\mathcal C^{(3)}\) is semi-Clifford.  De Silva proved the $p=3$ case~\cite{deSilva2021efficient} and Chen and de Silva later established the $p>3$ case~\cite{chen2024characterising} using techniques from algebraic geometry. Hence, \cref{two-qudit-c3-semi-clifford} provides an application of our formalism by giving a new and simpler proof of the two-qudit result based on different techniques. 
We restrict to the case \(p\geq 3\), since the \(p=2\) case is well known and the formalism developed in \cref{third-level-classification} is formulated primarily for odd \(p\), with the $p=2$ modifications treated separately in \cref{appendix:characteristic-two-third-level}. We will repeatedly use the following fact.

\begin{fact}[Wong {\cite[Theorem 7]{wong1982abelian}}]\label{wong}
Let $p$ be an odd prime and let \(V\) be a \(2m\)-dimensional symplectic vector space over $\mathbb F_p$. Let
\begin{equation}
0=W_0\subset W_1\subset \cdots \subset W_m
\end{equation}
be a complete isotropic flag, so that \(\dim W_i=i\), each \(W_i\) is isotropic, and \(W_m\) is Lagrangian. Let \(P=P(W_0,W_1,\ldots,W_m)\) be the Sylow \(p\)-subgroup of \(\operatorname{Sp}(V)\) consisting of elements fixing each quotient \(W_i/W_{i-1}\) for $i=1,\cdots,m$. For a subspace \(X\subseteq V\), write \(A(X)\) for the subgroup of \(\operatorname{Sp}(V)\) fixing \(X\) pointwise. Then \(A(W_m)\) is the unique abelian subgroup of largest order in \(P\).
\end{fact}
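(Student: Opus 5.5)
The plan is to argue by induction on \(m\), using the Levi-type decomposition of \(P\) attached to the first step \(W_1\) of the flag. First record the two sizes. \(P\) is the group of upper unitriangular symplectic matrices in a basis adapted to the flag, so \(|P|=p^{m^2}\). Next take a Lagrangian complement \(W_m'\) so that \(V=W_m\oplus W_m'\). In this splitting \(A(W_m)\) is the group of block matrices \(\begin{pmatrix} I & X\\ 0 & I\end{pmatrix}\) with \(X\) symmetric. It is therefore abelian of order \(p^{m(m+1)/2}\), and since it fixes \(W_m\) pointwise it lies in \(P\). It thus suffices to show two things for an arbitrary abelian subgroup \(A\subseteq P\):
\begin{equation}
|A|\le p^{m(m+1)/2},
\end{equation}
and that equality forces \(A=A(W_m)\). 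The case \(m=1\) is immediate, since \(P=A(W_1)\) has order \(p\).

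For the inductive step, every element of \(P\) preserves \(W_1\) and \(W_1^\perp\). Restriction therefore gives a homomorphism
\begin{equation}
\pi:P\longrightarrow P',
\end{equation}
where \(P'\) is the Sylow \(p\)-subgroup of \(\operatorname{Sp}(W_1^\perp/W_1)\) attached to the induced flag \(W_i/W_1\). Its kernel \(N\) consists of the elements acting trivially on \(W_1\), on \(W_1^\perp/W_1\) and on \(V/W_1^\perp\). A direct matrix computation should identify \(N\) with a Heisenberg group of order \(p^{2m-1}\), namely \(\mathbb F_p\times(W_1^\perp/W_1)\), whose centre consists of the transvections in the direction of \(W_1\). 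Its commutator pairing is the symplectic form on \(W_1^\perp/W_1\). By the elementary fact that an abelian subgroup of such a Heisenberg group has order at most \(p\cdot p^{m-1}=p^m\), with equality exactly for preimages of Lagrangians containing the centre, we get \(|A\cap N|\le p^m\). Meanwhile \(\pi(A)\) is abelian in \(P'\), so by induction \(|\pi(A)|\le p^{(m-1)m/2}\). Multiplying the two bounds gives \(|A|\le p^{m(m+1)/2}\).

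The main obstacle is uniqueness. Equality forces two things. First, \(\pi(A)=A(W_m/W_1)\) by the inductive hypothesis. Second, \(A\cap N\) is a maximal abelian subgroup of \(N\), i.e.\ it corresponds to a Lagrangian \(\Lambda\subseteq W_1^\perp/W_1\). I would show \(\Lambda=W_m/W_1\) by using that \(A\) is abelian. Every element \(g\in A\) commutes with every \(n\in A\cap N\), and conjugation by \(g\) acts on \(N/Z(N)\cong W_1^\perp/W_1\) through \(\pi(g)\). So \(\Lambda\) must be pointwise fixed by \(A(W_m/W_1)\) on the quotient. The fixed space of \(A(W_m/W_1)\) in \(W_1^\perp/W_1\) is exactly \(W_m/W_1\), since a symmetric \(X\) of full rank is available. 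This gives \(\Lambda=W_m/W_1\).

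It then remains to show that each \(g\in A\) fixes \(W_m\) pointwise, not merely modulo \(W_1\). For this, write \(g\) as a lift of an element of \(A(W_m/W_1)\). Then compute its commutators with the full group \(A\cap N\), which should now equal \(A(W_m)\cap N\). The correction term of \(g\) on \(W_m\) is a linear map \(W_m/W_1\to W_1\). Vanishing of these commutators should force that correction term to vanish, which gives \(A\subseteq A(W_m)\). Equality of orders then yields \(A=A(W_m)\). If this last commutator computation becomes unwieldy, the fallback is to cite Wong's original argument for this step.
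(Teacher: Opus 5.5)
Your proof is correct, and it takes a genuinely different route from the paper. The paper does not prove this statement at all; it imports it as Wong's Theorem~7. You instead give a self-contained induction on $m$ through the stabilizer of the line $W_1$: the Heisenberg kernel $N$ gives $|A\cap N|\le p^m$, and induction gives $|\pi(A)|\le p^{m(m-1)/2}$.

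The steps you hedge on do go through, and the last one is immediate. Fix $0\neq e\in W_1$. Then
\[
N=\{E(u,c)\},\qquad E(u,c)\colon x\longmapsto x+[x,e]u+[x,u]e+c[x,e]e\quad(u\in W_1^\perp,\ c\in\mathbb F_p),
\]
and these elements satisfy $E(u,c)E(v,d)=E(u+v,\,c+d+[v,u])$ and $E(u+te,c)=E(u,c+2t)$. Moreover $gE(u,c)g^{-1}=E(gu,c)$ for every $g\in P$, since $ge=e$. These formulas settle everything you left open:
\begin{itemize}
\item The commutator of $E(u,c)$ and $E(v,d)$ is $E(0,2[v,u])$, which gives the Heisenberg structure.
\item Conjugation by $g$ acts on $N/Z(N)\cong W_1^\perp/W_1$ through $\pi(g)$.
\item Since $p$ is odd, $g$ commutes with $E(u,c)$ if and only if $gu=u$ exactly, not merely modulo $e$.
\end{itemize}
So once you have $A\cap N=\{E(u,c):u\in W_m,\ c\in\mathbb F_p\}$, commutativity of $A$ gives $A\subseteq A(W_m)$ directly. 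No separate analysis of a correction term $W_m/W_1\to W_1$ is needed, and neither is the fallback citation.

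You should, however, say explicitly where oddness of $p$ enters. It is exactly the nondegeneracy of the pairing $2[\cdot,\cdot]$ on $N/Z(N)$, and it is essential. For $p=2$ the group $N$ is elementary abelian. For $m=2$ it is then an abelian subgroup of $P$ of order $8=|A(W_2)|$ that differs from $A(W_2)$: it contains $E(u,0)$ with $u\in W_1^\perp\setminus W_2$, which does not fix the Lagrangian $W_2$ pointwise. So uniqueness genuinely fails in characteristic two.

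What your route buys is a short, elementary proof in the same block-matrix style as the rest of the paper, and it makes visible why $p$ must be odd. The citation buys brevity, at the cost of leaving a key input to the two- and three-qudit theorems as a black box.
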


We can now state \cref{two-qudit-c3-semi-clifford}. 

\begin{thm}
\label{two-qudit-c3-semi-clifford}
Let \(p\) be an odd prime. Every two-qudit gate in \(\mathcal C^{(3)}\) is semi-Clifford.
\end{thm}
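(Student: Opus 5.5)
Write $V_2=\mathbb F_p^4$, $K_U=\ker S_U$ and $E_U=\operatorname{im}S_U\cong V_2/K_U$, which is an elementary abelian $p$-subgroup of $\operatorname{Sp}(V_2)$. The plan is to split on $\operatorname{codim}K_U\in\{0,1,2,3,4\}$. The cases $\operatorname{codim}K_U\le 2$ are already handled by \cref{codim2-gen}. So it suffices to show that $\operatorname{codim}K_U=3$ and $\operatorname{codim}K_U=4$ cannot occur.

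\emph{Codimension $4$.} Since $|\operatorname{Sp}(\mathbb F_p^4)|=p^4(p^2-1)(p^4-1)$, a subgroup of order $p^4$ would be a full Sylow $p$-subgroup $P$. By \cref{wong}, the largest abelian subgroup of $P$ is $A(W_2)$. This group consists of the maps $x\mapsto x+Nx$ with $N:V_2/W_2\to W_2$ symmetric, so it has order $p^3$. Hence $P$ is non-abelian and cannot equal $E_U$.

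\emph{Codimension $3$.} Here $|E_U|=p^3$. Place $E_U$ inside some Sylow $p$-subgroup. By \cref{wong}, $E_U=A(L)$ for a Lagrangian $L\subseteq V_2$. Thus every $S_U(a)$ has the form $I+N(a)$, where $N(a):V_2/L\to L$ is symmetric and vanishes on $L$. Moreover, $a\mapsto N(a)$ is linear with kernel $K_U=\langle k\rangle$, so it induces an isomorphism $V_2/K_U\cong\operatorname{Sym}_2(\mathbb F_p)$. Because each $S_U(a)$ acts trivially modulo $L$, the reduction $\bar q:V_2\to V_2/L$ of $q_U$ is linear by \cref{eqn:third-level-triple-q}. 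By \cref{lem:kernel-cocycle-consequences}, $u:=q_U(k)$ is nonzero and fixed by $E_U$; since the common fixed space of $A(L)$ is $L$, we get $u\in L$ and $\bar q(k)=0$. Hence $\bar q$ descends to a linear map $\varphi:\operatorname{Sym}_2\to\mathbb F_p^2$.

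The commutativity relation $q(a)+S(a)q(b)=q(b)+S(b)q(a)$ (cf. \cref{eqn:third-level-generator-q-commute}) reads $N\varphi(N')=N'\varphi(N)$ for all symmetric $N,N'$. Taking $N'=I$ gives $\varphi(N)=Nv$ with $v=\varphi(I)$. The relation then becomes $(NN'-N'N)v=0$ for all symmetric $N,N'$. Commutators of symmetric $2\times2$ matrices include nonzero antisymmetric (hence invertible) matrices, so $v=0$ and $\bar q=0$, i.e. $q_U(V_2)\subseteq L$. Since $u\in L$ and $L$ is isotropic, $[u,q_U(a)]=0$ for all $a$. But \cref{eqn:kernel-cocycle-mixed-pairing} gives $[u,q_U(a)]=[k,a]$, which is nonzero for some $a$ because $k\ne0$. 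This contradiction rules out codimension $3$.

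The main obstacle is the codimension-$3$ case. Wong's theorem only pins $E_U$ down to $A(L)$, which by itself is not contradictory. The contradiction has to come from the cocycle identity forcing $\bar q=0$, combined with the mixed pairing identity. The fact that symmetric $2\times 2$ matrices do not commute simultaneously on any nonzero vector is the key small computation to verify carefully.
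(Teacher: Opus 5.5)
Your proposal is correct, and its skeleton matches the paper's. You reduce to $\operatorname{codim}K_U\in\{3,4\}$ via \cref{codim2-gen} and exclude codimension $4$ by Sylow considerations. In codimension $3$ you use \cref{wong} to get $E_U=A(L)$, then use the cocycle identity to force the $V_2/L$-component of $q_U$ to vanish.

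The differences are local:
\begin{itemize}
\item In codimension $4$ you get non-commutativity of the Sylow subgroup from the uniqueness clause of \cref{wong}: an abelian $P$ would equal $A(W_2)$, which has order $p^3$. The paper instead exhibits an explicit non-commuting pair in \cref{codim4-two-qubit-rule}.
\item In codimension $3$ you skip the Clifford conjugation into the standard Sylow subgroup. This is harmless because you never need the explicit lifts $M_{u_C}$.
\item Your proof that $\bar q=0$ (take $N'=I$, then use an invertible commutator of symmetric matrices) is the argument the paper uses in \cref{lem:three-qudit-codim-three}. The two-qudit proof \cref{codim3-two-qubit-rule} instead does an explicit $E_{11},E_{22},E_{12}$ computation.
\end{itemize}

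The genuinely different step is the final contradiction. The paper shows that all $B_U(a)$ commute, which requires \cref{lem:upper-unipotent-commuting-lifts}. You instead apply the mixed pairing identity $[q_U(k),q_U(a)]=[k,a]$ from \cref{lem:kernel-cocycle-consequences} to a nonzero $k\in K_U$. Since $L$ is isotropic, this identity is incompatible with $q_U(V_2)\subseteq L$.

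Your route is more elementary and avoids the appendix on commuting lifts. However, it needs $K_U\neq0$, which holds here because $\dim K_U=1$. The paper's commutator argument does not need this. That is why the paper reuses its version when $K_U=0$, e.g.\ the case $\dim E_U=6$ in \cref{lem:three-qudit-large-image-obstruction}.
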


Since the proof of \cref{two-qudit-c3-semi-clifford} requires the analysis of several cases, we organize the argument into a sequence of steps. Let \(U\in\mathcal{C}^{(3)}\) and set \(K_U:=\ker S_U\). By \cref{codim2-gen}, \(U\) is semi-Clifford whenever \(\operatorname{codim} K_U\leq 2\). It therefore remains to consider the cases \(\operatorname{codim} K_U=3,4\). We will show that neither of these cases can occur. \cref{codim4-two-qubit-rule} rules out the $\operatorname{codim} K_U = 4$ case. 

\begin{prop}\label{codim4-two-qubit-rule}
Let \(p\) be an odd prime, and let \(U\in\mathcal{C}^{(3)}\) be a two-qudit gate. Set \(K_U:=\ker S_U\) and \(E_U:=\operatorname{im} S_U\). Then
$\operatorname{codim} K_U\neq 4.$
\end{prop}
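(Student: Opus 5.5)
The plan is to argue by contradiction, exactly as in case (ii) of \cref{codim2-gen}, but now using Wong's theorem (\cref{wong}) in place of the order count for \(\operatorname{Sp}(\mathbb F_p^2)\). Suppose \(\operatorname{codim} K_U=4\). Since \(n=2\) and \(\dim V_2=4\), this means \(K_U=0\), so the homomorphism \(S_U:V_2\to\operatorname{Sp}(V_2)\) from \cref{eqn:third-level-triple-S} is injective. By the first isomorphism theorem, \(E_U=\operatorname{im}S_U\cong V_2/K_U\cong(\mathbb Z/p\mathbb Z)^4\). Thus \(E_U\) is an elementary abelian \(p\)-subgroup of \(\operatorname{Sp}(V_2)\cong\operatorname{Sp}(\mathbb F_p^4)\) of order \(p^4\).

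Next I would locate \(E_U\) inside a Sylow \(p\)-subgroup. We have
\begin{equation}
|\operatorname{Sp}(\mathbb F_p^4)|=p^4(p^2-1)(p^4-1),
\end{equation}
so a Sylow \(p\)-subgroup has order exactly \(p^4\). By Sylow's theorem, \(E_U\) is contained in some Sylow \(p\)-subgroup. Comparing orders, \(E_U\) must equal that Sylow \(p\)-subgroup. All Sylow \(p\)-subgroups are conjugate, and conjugating a complete isotropic flag \(0=W_0\subset W_1\subset W_2\) gives another such flag. Hence \(E_U=P(W_0,W_1,W_2)\) for some complete isotropic flag, as in \cref{wong}. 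In particular, this Sylow subgroup \(P\) would be abelian, so its abelian subgroup of largest order would be \(P\) itself, of order \(p^4\).

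The final step is to compute \(|A(W_2)|\) and compare. Choose a symplectic basis adapted to the Lagrangian \(W_2\) and a complementary Lagrangian, so that \(V_2=W_2\oplus W_2'\) with \(W_2'\cong W_2^*\). An element of \(\operatorname{Sp}(V_2)\) fixing \(W_2\) pointwise then has block form
\begin{equation}
\begin{pmatrix} I & B\\ 0 & I\end{pmatrix}.
\end{equation}
I expect that the symplectic condition forces \(B\) to be a symmetric \(2\times 2\) matrix, after fixing the sign conventions of \([\,\cdot,\cdot\,]\). This gives \(|A(W_2)|=p^{3}\).

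By \cref{wong}, \(A(W_2)\) is the unique abelian subgroup of largest order in \(P\). This contradicts the fact that the abelian subgroup \(E_U=P\) has order \(p^4>p^3\). Hence \(\operatorname{codim}K_U\neq 4\).

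I expect the only mildly delicate point to be checking that the fixed-point group of a Lagrangian has order \(p^{m(m+1)/2}\), which requires the symmetric-matrix computation. Alternatively, one can avoid computing \(|A(W_2)|\) altogether: it suffices to observe that \(A(W_2)\) is a proper subgroup of \(P\). For instance, any element acting nontrivially on \(W_1\oplus\) its partner in a symplectic basis, while preserving the flag, lies in \(P\setminus A(W_2)\). This properness, together with the uniqueness statement in \cref{wong}, already contradicts \(P\) being abelian.
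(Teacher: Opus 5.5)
Your argument is correct, but it reaches the key fact (that \(\operatorname{Sp}(\mathbb F_p^4)\) has no abelian subgroup of order \(p^4\)) by a different route than the paper.

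The paper proves directly that a Sylow \(p\)-subgroup is non-abelian. It writes down the explicit subgroup \(P\) of block matrices \(\begin{pmatrix}A_t & A_tC\\ 0 & (A_t^{-1})^T\end{pmatrix}\) and checks \(|P|=p^4\). It then exhibits two elements, \(D=\operatorname{diag}(A_1,(A_1^{-1})^T)\) and \(U_C\) with \(C=\begin{pmatrix}0&0\\0&1\end{pmatrix}\), which fail to commute because \(A_1CA_1^T\neq C\). Conjugacy of Sylow subgroups then finishes.

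You instead invoke \cref{wong}. If the Sylow subgroup \(P=E_U\) were abelian, its unique abelian subgroup of largest order would be \(P\) itself. That forces \(P=A(W_2)\), which is impossible since \(|A(W_2)|=p^3<p^4\).

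The paper's route is elementary and self-contained. Yours leans on Wong's theorem, which is far stronger than what is needed here. Your fallback ``properness'' variant still requires producing an explicit element of \(P\setminus A(W_2)\), which is essentially the paper's \(D\). The advantage of your version is that it has the same shape as the paper's later arguments for \(\operatorname{codim}K_U=3\) and for \cref{last}, where \cref{wong} identifies \(E_U\) with \(A(W_m)\) by comparing orders. Here the same comparison simply fails, since \(p^3<p^4\).

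Two small points to tighten:
\begin{enumerate}
\item When computing \(|A(W_2)|\), an element fixing \(W_2\) pointwise a priori has block form \(\begin{pmatrix}I&B\\0&D\end{pmatrix}\). The symplectic condition forces \(D=I\) as well as \(B=B^T\), so you should not write the lower-right block as \(I\) from the outset.
\item Your claim that every Sylow \(p\)-subgroup has the form \(P(W_0,W_1,W_2)\) uses that \(P(W_0,W_1,W_2)\) is itself Sylow, of order \(p^{4}\). This is part of the statement of \cref{wong}, but it should be cited explicitly.
\end{enumerate}
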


\begin{proof}
We first rule out the \(\operatorname{codim}K_U=4\) case. Assume, for the sake of contradiction, that \(\operatorname{codim}K_U=4\). We have $E_U\cong \mathbb F^4_p/K_U \cong \mathbb F^4_p$. Thus, \(\operatorname{im}S_U\) would be an abelian $p$-subgroup of \(\operatorname{Sp}(\mathbb F^4_p)\) of order \(p^4\). The standard order formula gives
\begin{equation}
|\operatorname{Sp}(\mathbb F^4_p)|=p^4(p^2-1)(p^4-1).
\end{equation}
Thus, an abelian $p$-subgroup of order \(p^4\) would be a Sylow \(p\)-subgroup. However, Sylow \(p\)-subgroups of \(\operatorname{Sp}(\mathbb F^4_p)\) are non-abelian. We can verify this claim explicitly. Let
\begin{equation}
A_t:=
\begin{pmatrix}
1&t\\
0&1
\end{pmatrix},
\quad t \in \mathbb F_p.
\end{equation}
Consider the set
\begin{equation}\label{sylow-sp4}
P:=\left\{\begin{pmatrix}A_t & A_t C\\0 & (A_t^{-1})^T\end{pmatrix}:t\in\mathbb F_p,\ C=C^T\in M_2(\mathbb F_p)\right\}
\end{equation}
Let
\begin{equation}\label{M-elem}
M=
\begin{pmatrix}
A_t&A_t C\\
0&(A_t^{-1})^T
\end{pmatrix},
\qquad C=C^T \in M_2(\mathbb F_p).
\end{equation}
With respect to the standard symplectic matrix
\begin{equation}\label{standard-sym-matrix}
J=\begin{pmatrix}0&I\\-I&0\end{pmatrix},
\end{equation}
we compute
\begin{align}
M^TJM
&=
\begin{pmatrix}
A_t^T&0\\
C^TA_t^T&A_t^{-1}
\end{pmatrix}
\begin{pmatrix}
0&I\\
-I&0
\end{pmatrix}
\begin{pmatrix}
A_t&A_t C\\
0&(A_t^{-1})^T
\end{pmatrix} \label{sym-1} \\
&=
\begin{pmatrix}
A_t^T&0\\
C^TA_t^T&A_t^{-1}
\end{pmatrix}
\begin{pmatrix}
0&(A_t^{-1})^T\\
-A_t&-A_tC
\end{pmatrix}=
\begin{pmatrix}
0&I\\
-I&C^T-C
\end{pmatrix}
=
\begin{pmatrix}
0&I\\
-I&0
\end{pmatrix}
=J. \label{sym-2}
\end{align}
Hence, $P \subseteq \operatorname{Sp}(\mathbb F^4_p)$. It is straightforward to verify that $P$ is a subgroup. Moreover, $P$ is of order $p^4$. Indeed, \(t\) has \(p\) possible values and a symmetric \(2 \times 2\) matrix in $M_2(\mathbb F_p)$ has \(p^{3}\) possible values. Consequently, $|P|=p\cdot p^{3}=p^{4}$.  Hence, $P$ is a Sylow \(p\)-subgroup. $P$ is non-abelian. Indeed, let $A=A_1$ and $C=\begin{pmatrix}0&0\\0&1\end{pmatrix}$.
Define
\begin{equation}
D:=\begin{pmatrix}
A&0\\
0& (A^{-1})^T
\end{pmatrix},
\qquad
U_C:=\begin{pmatrix}I&C\\ 0&I\end{pmatrix}.
\end{equation}
Both \(D\) and \(U_C\) belong to \(P\). Note that we have
\begin{align}
DU_CD^{-1}
&=
\begin{pmatrix}
A&0\\
0&(A^{-1})^T
\end{pmatrix}
\begin{pmatrix}
I&C\\
0&I
\end{pmatrix}
\begin{pmatrix}
A^{-1}&0\\
0&A^T
\end{pmatrix}=
\begin{pmatrix}
I&ACA^T\\
0&I
\end{pmatrix}.
\end{align}
Since
\begin{equation}
ACA^T
=
\begin{pmatrix}
1&1\\
1&1
\end{pmatrix}
\neq
\begin{pmatrix}
0&0\\
0&1
\end{pmatrix}
=
C,
\end{equation}
we have $DU_CD^{-1}\neq U_C$. Hence, \(P\) is non-abelian. Since all Sylow \(p\)-subgroups are conjugate, no Sylow \(p\)-subgroup of \(\operatorname{Sp}(\mathbb F^4_p)\) is abelian. Therefore, \(\operatorname{Sp}(\mathbb F^4_p)\) cannot contain an abelian subgroup of order \(p^4\). This contradicts \(E_U\cong \mathbb F^4_p\). Hence, $\operatorname{codim}K_U\neq 4$. 
\end{proof}

\cref{codim3-two-qubit-rule} rules out the $\operatorname{codim} K_U = 3$ case. 
The strategy is to use \cref{wong} to identify \(E_U\) with a specific subgroup, thereby converting the symplectic data \(S_U\) into a surjective linear map from \(\mathbb F_p^4\) onto \(\operatorname{Sym}_2(\mathbb F_p)\). The affine cocycle identity then forces the \(L^*\) component of \(q_U\) to vanish, so that \(q_U(a)\)'s lie in a Lagrangian subspace. Consequently, the operators \(B_U(a)\) commute, contradicting the non-degeneracy of the symplectic form.

\begin{prop}\label{codim3-two-qubit-rule}
Let \(p\) be an odd prime, and let \(U\in\mathcal{C}^{(3)}\) be a two-qudit gate. Set \(K_U:=\ker S_U\) and \(E_U:=\operatorname{im} S_U\). Then
$\operatorname{codim} K_U\neq 3.$
\end{prop}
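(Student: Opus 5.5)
Assume for contradiction that $\operatorname{codim}K_U=3$, so $\dim K_U=1$ and $E_U\cong V_2/K_U\cong\mathbb F_p^3$ is an elementary abelian subgroup of order $p^3$ in $\operatorname{Sp}(\mathbb F_p^4)$. The argument has three steps. First pin down $E_U$ using \cref{wong}. Then use the cocycle identities to force $q_U$ into a Lagrangian. Finally derive a contradiction with the Weyl commutation relations.

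\emph{Identifying $E_U$.} Since $|\operatorname{Sp}(\mathbb F_p^4)|_p=p^4$, the group $E_U$ lies in some Sylow $p$-subgroup $P$. This $P$ is the stabilizer of a complete isotropic flag $0\subset W_1\subset L$, with $L$ Lagrangian, and is conjugate to the group in \cref{sylow-sp4}. In that model, $A(L)=\{\left(\begin{smallmatrix}I&C\\0&I\end{smallmatrix}\right):C=C^T\}$ has order $p^3$. By \cref{wong}, $A(L)$ is the unique abelian subgroup of largest order in $P$. To use this I must first check that abelian subgroups of $P$ have order at most $p^3$; this is immediate because $P$ is non-abelian of order $p^4$. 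Hence $E_U=A(L)$. Choosing a complementary Lagrangian $L'\cong L^*$, we have $V_2=L\oplus L'$ and $S_U(a)=\left(\begin{smallmatrix}I&\sigma(a)\\0&I\end{smallmatrix}\right)$. Here $\sigma:V_2\to\operatorname{Sym}_2(\mathbb F_p)$ is a surjective linear map with kernel $K_U$.

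\emph{Forcing $q_U$ into $L$.} Write $q_U(a)=(x(a),y(a))\in L\oplus L'$. The cocycle identity \cref{eqn:third-level-triple-q} gives $q(a+b)=q(a)+S(a)q(b)$. Its $L'$-component reads $y(a+b)=y(a)+y(b)$, so $y$ is linear. The symmetry $q(a)+S(a)q(b)=q(b)+S(b)q(a)$ gives $\sigma(a)y(b)=\sigma(b)y(a)$ for all $a,b$. Since $\sigma$ is onto $\operatorname{Sym}_2$, the relation becomes $Cy(b)=\sigma(b)y(a_C)$ for every symmetric $C$, where $a_C$ is any preimage of $C$. Fix $b$ with $\sigma(b)=0$, that is $b\in K_U$. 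Then $Cy(b)=0$ for all symmetric $C$; taking $C=I$ gives $y(b)=0$. So $y$ vanishes on $K_U$ and descends to a linear map $\bar y:\operatorname{Sym}_2\to\mathbb F_p^2$ satisfying $C\bar y(D)=D\bar y(C)$. Testing on $C=I$ gives $\bar y(D)=D\bar y(I)$. Substituting back yields $CDv=DCv$ for all symmetric $C,D$, where $v=\bar y(I)$. Choosing non-commuting symmetric pairs, for instance $\operatorname{diag}(1,0)$ with $\left(\begin{smallmatrix}0&1\\1&0\end{smallmatrix}\right)$ and their variants, should force $v=0$. Hence $y\equiv 0$ and $q_U(V_2)\subseteq L$. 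This is the step I expect to be the main obstacle, and I would check this small linear-algebra computation carefully, perhaps also using the order condition \cref{eqn:third-level-generator-q-order}.

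\emph{Contradiction.} Each $M_{S_U(a)}$, with $S_U(a)\in A(L)$, fixes $w(\ell)$ for $\ell\in L$ up to phase. Also, $w(q_U(b))$ with $q_U(b)\in L$ commutes with $w(q_U(a))$. So the group commutator of $B_U(a)$ and $B_U(b)$ reduces to the commutator of $M_{S(a)}$ and $M_{S(b)}$, which is a scalar since the $S$'s commute, together with cross terms. On the other hand, this commutator must equal $\omega^{[a,b]}$. The cross terms should be controlled by the fact that $M_{S}$ acts trivially on $w(L)$. An alternative route is to apply \cref{lem:kernel-cocycle-consequences}: $[q_U(k),q_U(a)]=[k,a]$ for $k\in K_U$. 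Since both $q_U(k)$ and $q_U(a)$ lie in the isotropic $L$, the left side vanishes, so $[k,a]=0$ for all $a\in V_2$. This forces $k=0$, contradicting $\dim K_U=1$. I would use this second route as the clean conclusion.
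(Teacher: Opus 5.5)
Your proof is correct and follows the paper's overall plan. You identify $E_U$ with the pointwise stabilizer $A(L)$ via \cref{wong}, use the symmetry of the affine cocycle to force $q_U(V_2)\subseteq L$, and then contradict non-degeneracy. It departs from the paper in three places.

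First, you skip the Clifford conjugation $U\mapsto C_RU$ into the standard Sylow subgroup and apply \cref{wong} to whichever Sylow subgroup contains $E_U$. This is fine because everything you do afterwards is coordinate-free. In the paper, that conjugation is what puts $L$ in the standard position assumed by \cref{lem:upper-unipotent-commuting-lifts}.

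Second, you kill $y$ by taking $C=I$ to get $\bar y(D)=Dv$ and then using $(CD-DC)v=0$. This is the argument the paper uses in \cref{lem:three-qudit-codim-three}, rather than the entry-wise computation with $E_{11},E_{22},E_{12}$ used in this proposition. Your hedge is unnecessary: $C=\operatorname{diag}(1,0)$ and $D=\left(\begin{smallmatrix}0&1\\1&0\end{smallmatrix}\right)$ give $CD-DC=\left(\begin{smallmatrix}0&1\\-1&0\end{smallmatrix}\right)$, which is invertible, so $v=0$ without the order relation.

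Third, and most substantively, you close with the mixed identity $[q_U(k),q_U(a)]=[k,a]$ from \cref{lem:kernel-cocycle-consequences}, applied at a nonzero $k\in K_U$. The paper instead uses \cref{lem:upper-unipotent-commuting-lifts} to make all $B_U(a)$ commute. Your route is shorter and avoids constructing commuting lifts altogether. Your discarded first route would have needed exactly that lemma, because commuting symplectic parts only give $M_{S(a)}M_{S(b)}=\frac{\kappa(S(a),S(b))}{\kappa(S(b),S(a))}M_{S(b)}M_{S(a)}$, with no a priori control of the scalar.

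The cost is that your closing step needs $K_U\neq0$. The paper's argument yields $[a,b]=[q_U(a),q_U(b)]$ for all $a,b$. That stronger identity is what gets reused in \cref{lem:three-qudit-large-image-obstruction}, where $K_U=0$ can occur.
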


\begin{proof}
We divide the proof into several steps to make the otherwise lengthy argument easier to follow.
 
\begin{enumerate}
\item  First, we reduce by Clifford conjugation so that \(E_U\) lies in the fixed Sylow \(p\)-subgroup \(P\) in \cref{sylow-sp4}. 
\(E_U\cong \mathbb F^4_p/K_U\) is an abelian \(p\)-subgroup of \(\operatorname{Sp}(\mathbb F_p^4)\) of order \(p^3\). Since \(E_U\) is a \(p\)-subgroup, \(E_U\) is contained in a Sylow \(p\)-subgroup of \(\operatorname{Sp}(\mathbb F_p^4)\). Since all Sylow \(p\)-subgroups of \(\operatorname{Sp}(\mathbb F_p^4)\) are conjugate and \(P\) from \cref{sylow-sp4} is a Sylow \(p\)-subgroup, there exists \(R\in\operatorname{Sp}(\mathbb F_p^4)\) such that $RE_UR^{-1}\subseteq P$. Choose a Clifford lift \(C_R\) of \(R\). Let \(U':=C_RU\). Since \(C_R\in \mathcal C^{(2)}\), the gate \(U\) is semi-Clifford if and only if \(U'\) is semi-Clifford. Moreover,
\begin{align}
B_{U'}(a)
&=U'w(a)U'^* \\
&=C_RB_U(a)C_R^* \\
&=C_R (\alpha_U(a)w(q_U(a))M_{S_U(a)} )C_R^*=\alpha_{U'}(a)w(Rq_U(a))M_{RS_U(a)R^{-1}}\label{last-eq}.
\end{align}
for every \(a\in \mathbb F^4_p\). \cref{last-eq} follows because conjugating \(M_{S_U(a)}\) by \(C_R\) produces a gate in \(\mathcal C^{(2)}\) with symplectic part \(RS_U(a)R^{-1}\) up to a global phase. Hence, $S_{U'}(a)=RS_U(a)R^{-1}$ for every \(a\in \mathbb F^4_p\). It follows that $a\in K_{U'}$ if and only if $a\in K_U$. Thus, $K_{U'}=K_U$. Consequently, after replacing \(U\) by \(U'\), we may assume that \(E_U\subseteq P\).

\item Next, we use \cref{wong} to identify the 
unique abelian subgroup of largest order in \(P\), and then identify \(E_U\) with this subgroup.
Consider $\mathbb F_p^4$ with the symplectic basis $e_1,e_2,f_1,f_2$. Define $L:=\operatorname{span}\{e_1,e_2\}$ and  $L^*:=\operatorname{span}\{f_1,f_2\}$. Thus, $\mathbb F_p^4=L\oplus L^*$. Consider the complete isotropic flag
\begin{equation}\label{flag}
0\subseteq \langle e_1\rangle \subseteq \langle e_1, e_2 \rangle = L.
\end{equation}
We verify that \(P\) in \cref{sylow-sp4} is the Sylow \(p\)-subgroup attached to this flag as required in  \cref{wong}. Let $M$ be as in \cref{M-elem}. Since the lower-left block of \(M\) is zero, \(M\) preserves \(L\). Its induced action on \(L\) is given by \(A_t\). Since
\begin{equation}
A_te_1=e_1,
\qquad
A_te_2=e_2+te_1,
\end{equation}
the line \(W_1=\langle e_1\rangle\) is fixed pointwise, and the induced action on \(W_2/W_1\) is trivial. Hence, $P$ satisfies the assumption in \cref{wong}. Hence, \cref{wong} implies that $A(L)$ is the unique abelian subgroup of largest order in \(P\). We identify $A(L)$ explicitly. Consider the set
\begin{equation}
Q:=\left\{\begin{pmatrix}I&C\\0&I\end{pmatrix}:C=C^T\in M_2(\mathbb F_p)\right\}\subseteq P.
\end{equation}
Clearly, $Q$ is an abelian subgroup of $P$ and $Q$ fixes $L$. Hence, $Q \subseteq A(L)$. Conversely, if $M$ as in \cref{M-elem} fixes \(L\) pointwise then
\begin{equation}
M(x,0)=(A_tx,0)=(x,0)
\end{equation}
for every \(x\in \mathbb F_p^2\). Hence \(A_t=I\), so \(t=0\), and therefore \(M \in Q\). Hence, we have $A(L)=Q$. Since a symmetric \(2\times 2\) matrix has three independent entries, we have $|A(L)|=p^3.$ On the other hand, \(E_U\) is also an abelian subgroup of $P$ of  order $p^3$. Hence, we have $E_U=A(L)$ by \cref{wong}. That is, we have
\begin{equation}\label{ul-group}
E_U=\left\{\begin{pmatrix}I&C\\0&I\end{pmatrix}:C=C^T\in M_2(\mathbb F_p)\right\}:= A(L)
\end{equation}

\item We now express the symplectic data \(S_U\) as a linear map into \(\operatorname{Sym}_2(\mathbb F_p)\). For every \(a\in \mathbb F^4_p\) there is a unique symmetric matrix \(C(a)\in M_2(\mathbb F_p)\) such that
\begin{equation}\label{su-codim-3}
S_U(a)=\begin{pmatrix}I&C(a)\\0&I\end{pmatrix}.
\end{equation}
Since \(S_U:\mathbb F_p^4\to A(L)\) is a group homomorphism and multiplication in \(A(L)\) corresponds to addition in \(\operatorname{Sym}_2(\mathbb F_p)\), the map
\begin{align}
C:\mathbb F_p^4&\longrightarrow \operatorname{Sym}_2(\mathbb F_p),\\
a&\longmapsto C(a)
\end{align}
is \(\mathbb F_p\)-linear. Moreover, \(\ker C=\ker S_U=K_U\). Since \(\operatorname{codim}K_U=3\), we have
\begin{equation}
\operatorname{rank}C=\dim \mathbb F_p^4-\dim K_U=4-1=3.
\end{equation}
Since \(\dim \operatorname{Sym}_2(\mathbb F_p)=3\), it follows that \(C\) is surjective.

\item  We now use the cocycle equation for \(q_U\). Write $q_U(a)=(\alpha(a),\beta(a))\in L\oplus L^*$. Using \cref{su-codim-3}, we have
\begin{equation}
S_U(a)q_U(b)
=
\begin{pmatrix}
I&C(a)\\
0&I
\end{pmatrix}
\begin{pmatrix}
\alpha(b)\\
\beta(b)
\end{pmatrix}
=
\begin{pmatrix}
\alpha(b)+C(a)\beta(b)\\
\beta(b)
\end{pmatrix}.
\end{equation}
Substituting this into the affine cocycle equation gives
\begin{equation}
\begin{pmatrix}
\alpha(a+b)\\
\beta(a+b)
\end{pmatrix}
=q_U(a+b)=q_U(a)+S_U(a)q_U(b)=\begin{pmatrix}
\alpha(a)+\alpha(b)+C(a)\beta(b)\\
\beta(a)+\beta(b)
\end{pmatrix}.
\end{equation}
Hence, we have
\begin{align}
\alpha(a+b)&=\alpha(a)+\alpha(b)+C(a)\beta(b),\\
\beta(a+b)&=\beta(a)+\beta(b).
\end{align}
Since \(\mathbb F^4_p\) is abelian, the same comparison as in \cref{lem:kernel-cocycle-consequences} applied to \(q_U(a+b)=q_U(b+a)\) gives \(C(a)\beta(b)=C(b)\beta(a)\) for all \(a,b\in\mathbb F_p^4\).

\item We show that the surjectivity of \(C\) forces the \(\beta\) component of $q$ to vanish. Since \(C:\mathbb F_p^4\to \operatorname{Sym}_2(\mathbb F_p)\) is surjective, choose elements \(a_{11},a_{22},a_{12}\in \mathbb F_p^4\) such that
\begin{equation}
C(a_{11})=E_{11},\qquad C(a_{22})=E_{22},\qquad C(a_{12})=E_{12},
\end{equation}
where \(E_{11}\), \(E_{12}\), and \(E_{22}\) are the standard basis matrices of \(\operatorname{Sym}_2(\mathbb F_p)\). Write
\begin{equation}
\beta(a_{11})=(u_1,u_2),
\qquad
\beta(a_{22})=(v_1,v_2),
\qquad
\beta(a_{12})=(w_1,w_2).
\end{equation}
Applying \(C(a)\beta(b)=C(b)\beta(a)\) to the pairs \((a_{11},a_{22})\), \((a_{11},a_{12})\) and \((a_{22},a_{12})\) gives
\begin{align}
(v_1,0)=E_{11}\beta(a_{22})&=E_{22}\beta(a_{11})=(0,u_2),\\
(w_1,0)=E_{11}\beta(a_{12})&=E_{12}\beta(a_{11})=(u_2,u_1),\\
(0,w_2)=E_{22}\beta(a_{12})&=E_{12}\beta(a_{22})=(v_2,v_1).
\end{align}
Hence, \(u_{1,2}=v_{1,2}=w_{1,2}=0\). Therefore,
\begin{equation}
\beta(a_{11})=\beta(a_{22})=\beta(a_{12})=0.
\end{equation}
Now let \(a\in\mathbb F_p^4\) be arbitrary. Since \(E_{11},E_{22},E_{12}\) form a basis of \(\operatorname{Sym}_2(\mathbb F_p)\), there exist \(c_{11},c_{22},c_{12}\in\mathbb F_p\) such that
\begin{equation}
C(a)=c_{11}E_{11}+c_{22}E_{22}+c_{12}E_{12}.
\end{equation}
Set $a_0:=c_{11}a_{11}+c_{22}a_{22}+c_{12}a_{12}$. Then $C(a_0)=C(a)$, so \(a-a_0\in\ker C\). We next show that \(\beta\) vanishes on \(\ker C\). If \(k\in\ker C\), then \(C(k)=0\). Hence, we have
\begin{equation}
C(a)\beta(k)=C(k)\beta(a)=0
\end{equation}
for every \(a\in\mathbb F_p^4\). Since \(C\) is surjective, every symmetric matrix annihilates \(\beta(k)\). In particular, the identity matrix annihilates \(\beta(k)\), so $\beta(k)=0$. Thus \(\beta(a-a_0)=0\). Since \(\beta\) is linear, we get $\beta(a)=\beta(a_0)$. Since
\begin{equation}
\beta(a_0)=c_{11}\beta(a_{11})+c_{22}\beta(a_{22})+c_{12}\beta(a_{12})=0,
\end{equation}
we have \(\beta(a)=0\) for every \(a\in\mathbb F_p^4\). Hence, $\beta=0$. Consequently, $q_U(\mathbb F_p^4)\subseteq L$.

\item Finally, we show that the vanishing of \(\beta\) forces all \(B_U(a)\) to commute, contradicting the Weyl commutator relation. Since \(L\) is Lagrangian, the Weyl operators \(w(q_U(a))\) and \(w(q_U(b))\) commute for all \(a,b\in \mathbb F_p^4\). Since every element of \(E_U\) fixes \(L\) pointwise, we have \(S_U(a)q_U(b)=q_U(b)\) for all \(a,b \in \mathbb F_p^4\). Since \(M_{S_U(a)}\) is a lift of \(S_U(a)\), we have
\begin{equation}
M_{S_U(a)}w(q_U(b))M_{S_U(a)}^*=w(S_U(a)q_U(b))=w(q_U(b)).
\end{equation}
Thus,
\begin{equation}
M_{S_U(a)}w(q_U(b))=w(q_U(b))M_{S_U(a)}
\end{equation}
for all \(a,b\in \mathbb F_p^4\). By \cref{lem:upper-unipotent-commuting-lifts} applied with \(m=2\), the fixed Clifford representatives \(M_{S_U(a)}=M_{u_{C(a)}}\) commute pairwise. Putting everything together, we have
\begin{equation}\label{compare-1}
B_U(a)B_U(b)=B_U(b)B_U(a)
\end{equation}
for all \(a,b\in\mathbb F_p^4\). On the other hand, by definition \(B_U(a)=Uw(a)U^*\). Hence, we have
\begin{align}\label{compare-2}
B_U(a)B_U(b)=Uw(a)w(b)U^*=\omega^{[a,b]}Uw(b)w(a)U^*=\omega^{[a,b]}B_U(b)B_U(a).
\end{align}
Comparing \cref{compare-1} and \cref{compare-2} gives $\omega^{[a,b]}=1$ for all \(a,b\in\mathbb F_p^4\). Since \(\omega\) is a primitive \(p\)-th root of unity, this implies $[a,b]=0$ for all \(a,b\in\mathbb F_p^4\). This contradicts the non-degeneracy of the symplectic form. Therefore, \(\operatorname{codim}K_U\neq3\).
\end{enumerate}
This completes the proof.
\end{proof}

We can now prove \cref{two-qudit-c3-semi-clifford}.

\begin{proof}[Proof of \cref{two-qudit-c3-semi-clifford}]
This follows from \cref{codim2-gen}, \cref{codim4-two-qubit-rule}, \cref{codim3-two-qubit-rule}.
\end{proof}

The \(\operatorname{codim} K_U=3\) part of the proof of \cref{two-qudit-c3-semi-clifford} places \(E_U=\operatorname{im}S_U\) inside a Sylow \(p\)-subgroup of \(\operatorname{Sp}(\mathbb F_p^4)\), uses \cref{wong} to identify \(E_U\) with the subgroup fixing a Lagrangian \(L\) pointwise, and then applies the affine cocycle identity to constrain \(q_U(a)=(\alpha(a),\beta(a))\). In the two-qudit case, this forces \(\beta=0\), contradicting the non-degeneracy of the symplectic form. Similar arguments will be used below in the three-qudit case.

\subsection{Three Qudit Case}\label{three}
We now prove that every three-qudit gate of odd prime dimension in \(\mathcal{C}^{(3)}\) is semi-Clifford. To the best of our knowledge, this result has not previously appeared in the literature. It therefore provides a substantive application of the formalism developed in this work.

\begin{thm}
\label{three-qudit-c3-semi-clifford}
Let \(p\) be an odd prime. Every three-qudit gate in \(\mathcal C^{(3)}\) is semi-Clifford.
\end{thm}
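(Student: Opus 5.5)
The plan mirrors the two-qudit argument. Here \(V_3=\mathbb F_p^6\), \(U\in\mathcal C^{(3)}\), \(K_U=\ker S_U\), and \(E_U=\operatorname{im}S_U\cong V_3/K_U\), an elementary abelian \(p\)-subgroup of \(\operatorname{Sp}(\mathbb F_p^6)\). By \cref{codim2-gen} we may assume \(\operatorname{codim}K_U=c\in\{3,4,5,6\}\). For each such \(c\) we show that either \(K_U\) is coisotropic, so \(U\) is semi-Clifford by \cref{witt-index-consequences}, or a contradiction arises.

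\emph{Large codimension.} The Sylow \(p\)-subgroup of \(\operatorname{Sp}(\mathbb F_p^6)\) has order \(p^9\). By \cref{wong}, applied after conjugating \(E_U\) into the standard upper-unipotent Sylow subgroup attached to a flag ending in a Lagrangian \(L\), every abelian subgroup has order at most \(p^6=|A(L)|\cong|\operatorname{Sym}_3(\mathbb F_p)|\). Equality holds only for \(A(L)\) itself. The Clifford conjugation \(U\mapsto C_RU\) from step (1) of \cref{codim3-two-qubit-rule} does not change \(K_U\), so this normalization is harmless.

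For \(c=6\) we get \(E_U=A(L)\), and \(S_U(a)=\begin{pmatrix}I&C(a)\\0&I\end{pmatrix}\) with \(C:V_3\to\operatorname{Sym}_3(\mathbb F_p)\) a linear isomorphism. Write \(q_U=(\alpha,\beta)\) in \(L\oplus L^*\). Exactly as in step (4), \(\beta\) is linear and \(C(a)\beta(b)=C(b)\beta(a)\). Testing this on preimages of \(E_{ii}\), \(E_{ij}\) as in step (5) forces \(\beta=0\). Step (6) with \cref{lem:upper-unipotent-commuting-lifts} (\(m=3\)) then shows all \(B_U(a)\) commute, which contradicts nondegeneracy.

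For \(c=5\) I would argue that an abelian subgroup of order \(p^5\) in the Sylow subgroup is either contained in some \(A(L)\) or belongs to a short explicit list; extracting this from Wong's description is the step I expect to be the main obstacle. In the first case \(C\) has image a hyperplane \(H\subset\operatorname{Sym}_3\). The relation \(C(a)\beta(b)=C(b)\beta(a)\), together with \(\beta|_{K_U}\) being annihilated by all of \(H\), should still force \(\beta(V_3)\) to be an isotropic subspace that commutes appropriately, giving the same commutation contradiction. Otherwise, for exceptional \(H\), it should show \(K_U\) is coisotropic.

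\emph{Codimensions 3 and 4.} Here I would use \cref{lem:kernel-cocycle-consequences}. Let \(W=q_U(K_U)\), so \(\dim W=6-c\), \(q_U|_{K_U}\) is an isometry onto \(W\), and \(E_U\) fixes \(W\) pointwise. Set \(r=\dim\operatorname{rad}K_U\); it has the parity of \(c\). Semi-Cliffordness fails only when \(\dim K_U+r<6\), that is, \(r<c\). If \(r=0\) (possible only for \(c\) even, i.e.\ \(c=4\)), then \(W\) is nondegenerate of dimension \(2\). As in \cref{codim2-gen}, \(E_U\) embeds in \(\operatorname{Sp}(W^\perp)\cong\operatorname{Sp}(\mathbb F_p^4)\) as an abelian group of order \(p^4\), which is impossible by \cref{codim4-two-qubit-rule}.

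In the remaining non-coisotropic cases (\(c=3,r=1\); \(c=4,r=2\)), \(E_U\) fixes \(W\) pointwise, hence also \(\operatorname{rad}W\), and acts on \(\operatorname{rad}W^\perp/\operatorname{rad}W\) and related quotients. I would split \(W=\operatorname{rad}W\oplus W_0\) with \(W_0\) nondegenerate, pass to \(W_0^\perp\), and there rerun the codimension-3 and codimension-4 two-qudit analysis of \cref{codim3-two-qubit-rule}. That analysis uses \cref{wong} in a smaller symplectic space together with the mixed identity \([q_U(k),q_U(a)]=[k,a]\) to force all \(B_U(a)\) for \(a\) in a complement of \(\operatorname{rad}K_U\) to commute, giving a contradiction.

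Combining all cases gives \(K_U^\perp\subseteq K_U\), so every three-qudit \(U\in\mathcal C^{(3)}\) is semi-Clifford. The \(c=5\) classification and the \(c=3,4\) reduction with a degenerate \(W\) are where the real casework lies.
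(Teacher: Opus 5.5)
Your cases \(c=6\) and \(c=4\) with \(K_U\) non-degenerate are sound and match the paper. The remaining cases have genuine gaps.

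\textbf{The case \(c=5\).} \cref{wong} says nothing about abelian subgroups of order \(p^5\); it only singles out the unique abelian subgroup of maximal order \(p^6\) in \(P\). So the dichotomy you invoke is unproved. Your fallback of showing \(K_U\) coisotropic is impossible here, since \(\dim K_U=1\) while \(\dim K_U^\perp=5\); this case must be excluded outright.

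The missing idea is to use the line \(W:=q_U(K_U)\), which is isotropic and fixed pointwise by \(E_U\) by \cref{lem:kernel-cocycle-consequences}. Write \(\mathbb F_p^6=W\oplus H\oplus W^*\). Then every element of \(E_U\) is some \(T(g,A,z)\) with \(g\in\operatorname{Sp}(H)\) and \(A\in M_{1\times4}(\mathbb F_p)\). Let \(\rho:T(g,A,z)\mapsto g\). Commutativity of \(\ker\rho\) forces \(AJD^T=0\) among its \(A\)-components, so these span an isotropic subspace of a \(4\)-dimensional symplectic space. Hence \(\dim\ker\rho\le3\) and \(\dim\rho(E_U)\in\{2,3\}\). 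In both cases one produces a Lagrangian \(W\oplus L_0\) fixed pointwise by \(E_U\).

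This is contradicted by the general fact that no \(E_U\) with \(\dim E_U\ge4\) fixes a Lagrangian pointwise (\cref{lem:three-qudit-large-image-obstruction}). Your \(c=6\) computation is the \(d=6\) instance of that lemma. For \(d=5\), the bound \(d-s\le\frac{(3-s)(4-s)}{2}\), with \(s\) the rank of the map \(T\) through which \(\beta\) factors, again forces \(\beta=0\).

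\textbf{The case \(c=4\) with \(K_U\) isotropic.} Your reduction is empty here. \(W\) is a totally isotropic plane, so \(\operatorname{rad}W=W\), \(W_0=0\) and \(W_0^\perp=\mathbb F_p^6\). There is no smaller symplectic space in which to rerun the two-qudit argument.

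Nor can any argument about \(E_U\) alone succeed. In the notation of the proof of \cref{lem:three-qudit-codim-four-impossible}, take \(A_0\) invertible and consider \(\{T(I,tA_0,Z):t\in\mathbb F_p,\ Z\in\operatorname{Sym}_2(\mathbb F_p)\}\). This group is elementary abelian of order \(p^4\), fixes \(W\) pointwise, and has fixed space exactly \(W\), so it fixes no Lagrangian.

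The paper excludes this group using the cocycle. The identity \([q_U(k),q_U(a)]=[k,a]\) makes the \(W^*\)-component \(\beta\) of \(q_U\) surjective onto \(W^*\). Comparing \(H\)-components in \(q_U(h+b)=q_U(b+h)\) gives \(JA_0^T\beta(b)=0\) on the \(5\)-dimensional preimage of \(\{T(I,0,Z)\}\), so \(\operatorname{rank}\beta\le1\), a contradiction. The common-kernel and \(\dim\rho(E_U)=1\) subcases need further work. This includes the \(d=4\) case of the large-image lemma, where \(s=1\) can occur and the correction \(\eta_a=\nu_a-\frac{\xi}{2}\lambda_a^2\) is needed.

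\textbf{The case \(c=3\), \(r=1\).} Your first steps agree with the paper: split off a non-degenerate plane \(H\subseteq W\), embed \(E_U\) in \(\operatorname{Sp}(H^\perp)\), and apply Wong. But the two-qudit conclusion does not transfer. After enlarging \(L_0\) to a Lagrangian \(\ell\oplus L_0\) of \(\mathbb F_p^6\), the image of \(C\) is \(\{A\in\operatorname{Sym}_3(\mathbb F_p):Ar_0=0\}\), not all of \(\operatorname{Sym}_3(\mathbb F_p)\). So \(\beta\) is only forced into the line \(\mathbb F_pr_0\), and you cannot conclude that the \(B_U(a)\) commute. The contradiction instead comes from \([a,b]=[q_U(a),q_U(b)]=\lambda(a)\nu(b)-\lambda(b)\nu(a)\), which has rank at most \(2\).
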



Since the proof of \cref{three-qudit-c3-semi-clifford} requires checking several cases, we organize the argument below into multiple steps. Let \(U\in \mathcal C^{(3)}\) and \(K_U:=\ker S_U\). By \cref{codim2-gen}, if \(\operatorname{codim}K_U\leq 2\) then \(U\) is semi-Clifford. Thus, it remains to consider the cases \(\operatorname{codim}K_U=3,4,5,6\). We first prove \cref{lem:pointwise-lagrangian-coordinate-setup} which will be used repeatedly below.

\begin{lem}\label{lem:pointwise-lagrangian-coordinate-setup}
Let \(p\) be an odd prime. Let \(U\in\mathcal C^{(3)}\) be a three-qudit gate and suppose that \(E_U:=\operatorname{im}S_U\) fixes a Lagrangian subspace \(L\subseteq \mathbb F^6_p\) pointwise. Choose a complementary Lagrangian \(L^*\), so that $\mathbb F^6_p=L\oplus L^*.$ Then for every \(a\in\mathbb F^6_p\), there is a symmetric matrix \(C(a)\in\operatorname{Sym}_3(\mathbb F_p)\) such that
\begin{equation}\label{eqn:pointwise-lagrangian-upper-unipotent}
S_U(a)=u_{C(a)}:=\begin{pmatrix}I&C(a)\\0&I\end{pmatrix}.
\end{equation}
Moreover, the map $C:\mathbb F^6_p \to \operatorname{Sym}_3(\mathbb F_p)$
is linear. If $q_U(a)=(\alpha(a),\beta(a))\in L\oplus L^*,$
then \(\beta:\mathbb F^6_p\to L^*\) is linear and
$C(a)\beta(b)=C(b)\beta(a)$
for all \(a,b\in\mathbb F^6_p\).
\end{lem}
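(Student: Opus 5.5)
The plan is to work in a symplectic basis adapted to the decomposition \(\mathbb F_p^6=L\oplus L^*\), which is the three-qudit analogue of steps (3)--(4) in the proof of \cref{codim3-two-qubit-rule}. First, choose a basis \(e_1,e_2,e_3\) of \(L\) and a dual basis \(f_1,f_2,f_3\) of \(L^*\) with \([e_i,f_j]=\delta_{ij}\). This is possible because \(L^*\) is a complementary Lagrangian, so the symplectic pairing identifies \(L^*\cong L^*\) (the dual space). In this basis the symplectic form has Gram matrix \(J\) as in \cref{standard-sym-matrix}. Fix \(a\in\mathbb F_p^6\) and write \(T:=S_U(a)\) in block form \(\begin{pmatrix}X&Y\\Z&W\end{pmatrix}\).

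Since \(T\) fixes \(L\) pointwise, we get \(X=I\) and \(Z=0\). For \(x\in L\) and \(y\in L^*\), the identity \([Tx,Ty]=[x,y]\) reads \([x,Yy+Wy]=[x,y]\). The term \([x,Yy]\) vanishes because \(L\) is isotropic, so \([x,Wy]=[x,y]\) for all \(x\in L\). Nondegeneracy of the pairing \(L\times L^*\) then gives \(W=I\).

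Next, apply \([Ty,Ty']=[y,y']=0\) for \(y,y'\in L^*\). Expanding, \([Yy+y,Yy'+y']=[Yy,y']+[y,Yy']\), since both \(L\) and \(L^*\) are isotropic. In coordinates this says \(Y^T-Y=0\) up to the sign conventions of \(J\), i.e. \(Y\) is symmetric. This mirrors the computation \cref{sym-1}--\cref{sym-2} run in reverse. Set \(C(a):=Y\); this gives \cref{eqn:pointwise-lagrangian-upper-unipotent}.

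For linearity, observe that \(u_{C}u_{C'}=u_{C+C'}\). Since \(S_U\) is a homomorphism from \(V_n\) to \(\operatorname{Sp}(V_n)\), by \cref{eqn:third-level-triple-S}, we get \(C(a+b)=C(a)+C(b)\). An additive map between \(\mathbb F_p\)-vector spaces is \(\mathbb F_p\)-linear, as noted in \cref{gp-not}.

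For the \(q_U\) part, substitute \(S_U(a)q_U(b)=(\alpha(b)+C(a)\beta(b),\beta(b))\) into the affine cocycle equation \cref{eqn:third-level-triple-q}. The \(L^*\)-component gives \(\beta(a+b)=\beta(a)+\beta(b)\), so \(\beta\) is additive and hence linear. The \(L\)-component gives \(\alpha(a+b)=\alpha(a)+\alpha(b)+C(a)\beta(b)\). Writing the same equation with \(a\) and \(b\) swapped and using \(a+b=b+a\), then subtracting, yields \(C(a)\beta(b)=C(b)\beta(a)\).

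The only step needing care is the symmetry of \(Y\). There one must track the sign convention of the form \([(s,t),(s',t')]=t\cdot s'-s\cdot t'\) when identifying \(L\) and \(L^*\) with coordinate blocks. I would handle this by relabeling the basis so that the Gram matrix is exactly \(J\), and then citing the block computation \(M^TJM=J\) with \(A_t=I\), as in the two-qudit case. Everything else is routine.
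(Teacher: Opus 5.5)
Your proposal is correct and follows essentially the same route as the paper. You block-decompose \(S_U(a)\) relative to \(L\oplus L^*\) and use pointwise fixing of \(L\) together with symplecticity to get \(S_U(a)=u_{C(a)}\) with \(C(a)\) symmetric. You then deduce additivity of \(C\) from \(u_Cu_D=u_{C+D}\), and compare the \(L\)- and \(L^*\)-components of the affine cocycle identity with \(a\) and \(b\) interchanged. One small improvement over the paper: you fix dual bases with \([e_i,f_j]=\delta_{ij}\), which makes the statement that \(C(a)\) is symmetric well defined, and the paper leaves this point implicit.
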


\begin{proof}
Write \(S_U(a)\) in block form relative to the decomposition \(\mathbb F_p^6=L\oplus L^*\):
\begin{equation}
S_U(a)=\begin{pmatrix}A&B\\C&D\end{pmatrix}.
\end{equation}
Since $S_U(a)$ fixes $L$ for each $a \in \mathbb F_p^6$, we have 
\begin{equation}
(x,0)=S_U(a)(x,0)=(Ax,Cx).
\end{equation}
for each $x \in L$. Thus, \(A=I\) and \(C=0\). Since \(S_U(a)\) is symplectic, a computation analogous to \cref{sym-1}--\cref{sym-2} gives \(D=I\) and \(B=B^T\). Therefore
\begin{equation}
S_U(a)=\begin{pmatrix}I&C(a)\\0&I\end{pmatrix}
\end{equation}
with \(C(a)=C(a)^T\). Thus, \(C(a)\in\operatorname{Sym}_3(\mathbb F_p)\). Since \(S_U:\mathbb F^6_p\to\operatorname{Sp}(\mathbb F^6_p)\) is a homomorphism and \(u_Cu_D=u_{C+D}\), the map $C:\mathbb F^6_p \to  \operatorname{Sym}_3(\mathbb F_p)$ is linear.  Write
\begin{equation}
q_U(a)=(\alpha(a),\beta(a))\in L\oplus L^*.
\end{equation}
Applying the cocycle identity and comparing the \(L\) and \(L^*\) components, as in \cref{codim3-two-qubit-rule}, gives
\begin{align}
\alpha(a+b)&=\alpha(a)+\alpha(b)+C(a)\beta(b),\\
\beta(a+b)&=\beta(a)+\beta(b).
\end{align}
Hence, \(\beta:\mathbb F^6_p\to L^*\) is linear. Since \(\mathbb F^6_p\) is abelian, the same cocycle comparison as in \cref{lem:kernel-cocycle-consequences} applied to \(q_U(a+b)=q_U(b+a)\)  gives $C(a)\beta(b)=C(b)\beta(a)$ for all \(a,b\in\mathbb F^6_p\).
\end{proof}

\cref{lem:three-qudit-codim-three} treats the \(\operatorname{codim}K_U=3\) case. The non-trivial part of the proof combines the two strategies used above. First, as in the codimension two case of \cref{codim2-gen}, we pass from \(K_U\) to \(W:=q_U(K_U)\). However, \(W\) is not non-degenerate in this case. The key difference is therefore that we find a non-degenerate symplectic plane \(H\subseteq W\) and embed \(E_U\) into \(\operatorname{Sp}(H^\perp)\cong\operatorname{Sp}(\mathbb F_p^4)\). Second, as in \cref{codim3-two-qubit-rule}, \cref{wong} identifies $\operatorname{im} S_U$ with the pointwise stabilizer of a Lagrangian plane in this four-dimensional symplectic space, $H^\perp$. The difference from the two-qudit case is that here the analysis is restricted to \(H^\perp\), rather than to all of \(\mathbb F_p^6\). Hence, we enlarge the fixed Lagrangian plane in \(H^\perp\) to a Lagrangian three-plane in \(\mathbb F_p^6\) before using the cocycle identities to derive a contradiction.

\begin{prop}\label{lem:three-qudit-codim-three}
Let \(p\) be an odd prime and let \(U\in\mathcal C^{(3)}\) be a three-qudit gate. Set \(K_U:=\ker S_U\) and \(E_U:=\operatorname{im}S_U\). If \(\operatorname{codim}K_U=3\), then \(U\) is semi-Clifford.
\end{prop}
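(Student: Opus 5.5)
By \cref{witt-index-consequences}, it suffices to show that \(K_U\) is coisotropic. Since \(\dim K_U=3=n\), this is the same as showing that \(K_U\) is Lagrangian, i.e.\ \(K_U\subseteq K_U^\perp\). Rank parity (as in the proof of \cref{codim2-gen}) gives that \(\dim K_U-\dim(K_U\cap K_U^\perp)\) is even. Hence \(\dim\operatorname{rad}(K_U)\in\{1,3\}\), and the case \(\dim\operatorname{rad}(K_U)=3\) is exactly what we want. So assume \(\dim\operatorname{rad}(K_U)=1\) and aim for a contradiction.

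\emph{Reduction to \(\operatorname{Sp}(\mathbb F_p^4)\).} In this case \(K_U\) contains a non-degenerate plane \(P\). Set \(W:=q_U(K_U)\). By \cref{lem:kernel-cocycle-consequences}, \(q_U|_{K_U}\) is an injective isometry, so \(H:=q_U(P)\subseteq W\) is a non-degenerate plane. Moreover, \(E_U\) fixes \(W\) pointwise, and in particular fixes \(H\) pointwise. As in case (ii) of \cref{codim2-gen}, each \(T\in E_U\) preserves \(H^\perp\cong\mathbb F_p^4\). Restriction to \(H^\perp\) is injective because \(\mathbb F_p^6=H\oplus H^\perp\). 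Thus \(E_U\cong\mathbb F_p^6/K_U\) embeds as an abelian subgroup of order \(p^3\) in \(\operatorname{Sp}(H^\perp)\). Conjugating by a Clifford lift of a symplectic map that is the identity on \(H\), exactly as in step (1) of \cref{codim3-two-qubit-rule}, we may put \(E_U|_{H^\perp}\) inside the Sylow subgroup \(P\) of \cref{sylow-sp4}. Then \cref{wong} gives \(E_U|_{H^\perp}=A(L_0)\) for a Lagrangian plane \(L_0\subseteq H^\perp\). The bookkeeping that this conjugation does not disturb the setup (it preserves \(K_U\), and \(H\) stays fixed) should be checked here.

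\emph{Enlarging to a Lagrangian in \(\mathbb F_p^6\) and linearizing \(q_U\).} Choose a symplectic basis \(\ell,\ell'\) of \(H\). Take the complementary Lagrangian \(L_0^*\subseteq H^\perp\), and set \(L:=L_0\oplus\langle\ell\rangle\) and \(L^*:=L_0^*\oplus\langle\ell'\rangle\). Then \(E_U\) fixes \(L\) pointwise, so \cref{lem:pointwise-lagrangian-coordinate-setup} applies. It gives a linear map \(C:\mathbb F_p^6\to\operatorname{Sym}_3(\mathbb F_p)\) whose image is exactly the \(2\times2\) block \(\operatorname{Sym}_2\) acting \(L_0^*\to L_0\), since \(E_U=A(L_0)\) on \(H^\perp\) and \(E_U\) is trivial on \(H\). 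The rank of \(C\) is \(3\), so \(C\) maps onto this block. The lemma also gives the relation \(C(a)\beta(b)=C(b)\beta(a)\). Running the argument of step (5) of \cref{codim3-two-qubit-rule} on the \(L_0^*\)-component of \(\beta\) shows that this component vanishes identically. Hence \(C(a)\beta(b)=0\) for all \(a,b\), and the cocycle formula \(\alpha(a+b)=\alpha(a)+\alpha(b)+C(a)\beta(b)\) shows that \(\alpha\) is additive. Thus \(q_U\) is linear with image in \(L_0\oplus H\).

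\emph{Contradiction.} Every element of \(E_U\) fixes \(L_0\oplus H\) pointwise. Therefore each \(M_{S_U(a)}\) commutes with every \(w(q_U(b))\), and the \(M_{S_U(a)}\) commute pairwise by \cref{lem:upper-unipotent-commuting-lifts}. Comparing commutators of \(B_U(a)\) and \(B_U(b)\), as in step (6) of \cref{codim3-two-qubit-rule}, gives \([a,b]=[q_U(a),q_U(b)]\) for all \(a,b\in\mathbb F_p^6\). The right-hand side is the bilinear form \([\pi_Hq_Ua,\pi_Hq_Ub]\), because \(L_0\) is isotropic and orthogonal to \(H\). Since \(q_U\) is linear, this form has rank at most \(2\), while \([\cdot,\cdot]\) has rank \(6\), a contradiction.

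The main obstacle I expect is the second step: verifying that the Clifford conjugation and the choice of \(L^*\) put \(C\) exactly onto the \(L_0\)-block, and adapting step (5) of \cref{codim3-two-qubit-rule} when \(\beta\) has an extra free \(\ell'\)-component.
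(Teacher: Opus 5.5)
Your proposal is correct and follows essentially the same route as the paper: reduction to $\dim\operatorname{rad}(K_U)=1$, a non-degenerate plane $H\subseteq q_U(K_U)$, Wong's theorem in $\operatorname{Sp}(H^\perp)$ giving $E_U|_{H^\perp}=A(L_0)$, enlargement to the Lagrangian $\ell\oplus L_0$, the relation $C(a)\beta(b)=C(b)\beta(a)$ forcing $\beta$ into a line, and the commutator comparison yielding a form of rank at most $2$. Your explicit choice $L^*=L_0^*\oplus\langle\ell'\rangle$ makes the paper's fixed-space computation of $N(B)=\mathbb F_p\ell'$ immediate, so the obstacle you anticipated does not arise; the Clifford conjugation you flag is harmless and in fact unnecessary, since Wong's theorem applies to the Sylow subgroup attached to any complete isotropic flag.
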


\begin{proof}
Since \(\operatorname{codim}K_U=3\), we have \(\dim K_U=3\). Set \(r_K:=\dim(K_U\cap K_U^\perp)\). As in \cref{eq-1}--\cref{eq-2} of \cref{codim2-gen}, the number \(3-r_K\) is the rank of the alternating bilinear form obtained by restricting the symplectic form to \(K_U\). Since every alternating bilinear form is represented by a skew-symmetric matrix, \(3-r_K\) is even.  Hence, \(r_K=1\) or \(r_K=3\). If \(r_K=3\), then \(K_U\cap K_U^\perp=K_U\), so \(K_U\subseteq K_U^\perp\).
Since $\dim K_U = 3$ in $\mathbb F_p^6$, this implies $K_U$ is a Lagrangian subspace. Hence, \(U\) is semi-Clifford. 
We rule out the \(r_K=1\) case.  We divide the proof into several steps to make the
otherwise lengthy argument easier to follow. 

\begin{enumerate} 
\item First, we reduce from \(K_U\) to a non-degenerate symplectic plane \(H\), so that \(E_U\) can be embedded into a four-dimensional symplectic group. By \cref{lem:kernel-cocycle-consequences}, \(q_U|_{K_U}\) is linear, injective, and preserves the restricted symplectic form. Hence, $W:=q_U(K_U)$ has dimension \(3\) and the radical of the restricted form on \(W\) is $R:=q_U(K_U\cap K_U^\perp),$ which is one-dimensional. Choose any two-dimensional subspace \(H\) in \(W\) such that
$W=H\oplus R. $
The symplectic form is non-degenerate on $H$. Indeed, suppose that \(h\in H\) satisfies \([h,h']=0\) for every \(h'\in H\). Since \(R=\operatorname{rad}(W)\), we also have \([h,r]=0\) for every \(r\in R\). As \(W=H\oplus R\), it follows that \(h\in\operatorname{rad}(W)=R\). But \(h\in H\) and \(H\cap R=0\), so \(h=0\). By \cref{lem:kernel-cocycle-consequences}, \(E_U\) fixes \(W\) pointwise. Hence, \(E_U\) fixes both \(H\) and \(R\) pointwise. Since \(H\) is non-degenerate, the symplectic complement argument used in the codimension two case of \cref{codim2-gen} gives $\mathbb F^6_p=H\oplus H^\perp$ and that \(E_U\) preserves \(H^\perp\). Therefore, we have a homomorphism
\begin{align}
E_U &\longrightarrow \operatorname{Sp}(H^\perp),\\
T &\longmapsto T|_{H^\perp}.
\end{align}
As in \cref{lem:kernel-cocycle-consequences}, the map is injective. Thus, $E_U$ embeds into $\operatorname{Sp}(H^\perp) \cong \operatorname{Sp}(\mathbb F_p^4)$ as an abelian $p$-subgroup of order \(p^3\).

\item Next, as in the two-qudit codimension three proof, we use \cref{wong} to identify the restricted image on \(H^\perp\) with a subgroup fixing a Lagrangian plane \(L_0\) pointwise. Since \(E_U\) is a \(p\)-subgroup of \(\operatorname{Sp}(H^\perp)\), choose a Sylow \(p\)-subgroup \(P\) containing \(E_U\). Conjugating inside \(\operatorname{Sp}(H^\perp)\), and extending this conjugation to \(\mathbb F_p^6=H\oplus H^\perp\) by the identity on \(H\), we may assume that \(P\) is the standard Sylow \(p\)-subgroup used in \cref{codim4-two-qubit-rule}. After this change of coordinates, the complete isotropic flag
\begin{equation}
0\subseteq \langle e_1\rangle\subseteq \langle e_1,e_2\rangle=L_0
\end{equation}
is contained in \(H^\perp\). Choose the complementary Lagrangian \(L_0^*\) in these coordinates, so that \(H^\perp=L_0\oplus L_0^*\). By \cref{wong}, and exactly as in the proof of \cref{codim3-two-qubit-rule}, the unique abelian subgroup of largest order in \(P\) is \(A(L_0)\), the subgroup fixing \(L_0\) pointwise. In the coordinates given by the decomposition \(H^\perp=L_0\oplus L_0^*\), this subgroup is
\begin{equation}
A(L_0)
=
\left\{
\begin{pmatrix}
I&C\\
0&I
\end{pmatrix}
:
C=C^T\in M_2(\mathbb F_p)
\right\}.
\end{equation}
In particular, \(|A(L_0)|=p^3\). Since \(E_U\subseteq P\) is abelian and \(|E_U|=p^3\), we have $E_U=A(L_0)$. It is clear that 
\begin{equation}\label{L0-fix}
L_0 = \{v\in H^\perp : Tv=v \text{ for every } T\in E_U\}.
\end{equation}

\item We now enlarge the Lagrangian plane \(L_0\subseteq H^\perp\) to a full Lagrangian \(L\subseteq \mathbb F_p^6\). Since \(R\subseteq H^\perp\) and \(R\) is fixed pointwise by \(E_U\), we have \(R\subseteq L_0\). Choose a line \(\ell\subseteq H\) and set $L:=\ell\oplus L_0$.  
Then \(\dim L=1+2=3\). Moreover, \(L\) is isotropic. Indeed, if \(x=x_\ell+x_0\) and \(y=y_\ell+y_0\), with \(x_\ell,y_\ell\in \ell\) and \(x_0,y_0\in L_0\), then
\begin{equation}
[x,y]=[x_\ell,y_\ell]+[x_\ell,y_0]+[x_0,y_\ell]+[x_0,y_0].
\end{equation}
The first term is zero because \(\ell\) is isotropic in \(H\), the last term is zero because \(L_0\) is Lagrangian in \(H^\perp\), and the two middle terms are zero because \(H\perp H^\perp\). Hence, \([x,y]=0\) for all \(x,y\in L\). Since \(\dim L=3=\frac{1}{2}\dim \mathbb F^6_p\), \(L\) is Lagrangian. Choose a complementary Lagrangian \(L^*\) such that $\mathbb F^6_p=L\oplus L^*$. Note that \(E_U\) fixes \(\ell\) pointwise because it fixes \(H\) pointwise, and \(E_U\) fixes \(L_0\) pointwise because \(E_U=A(L_0)\). Hence, $E_U$ fixes $L$ pointwise. 

\item We now express the symplectic data \(S_U\) as a linear map into \(\operatorname{Sym}_3(\mathbb F_p)\). \cref{lem:pointwise-lagrangian-coordinate-setup} implies for every \(a\in\mathbb F^6_p\) we may write
\begin{equation}\label{eqn:three-qudit-c3-upper-unipotent-form}
S_U(a)=u_{C(a)}=\begin{pmatrix}I&C(a)\\0&I\end{pmatrix},
\qquad C(a)\in\operatorname{Sym}_3(\mathbb F_p),
\end{equation}
where \(C:\mathbb F^6_p\to\operatorname{Sym}_3(\mathbb F_p)\) is linear. Let $B:=C(\mathbb F^6_p).$ Since \(C(a)=0\) if and only if \(S_U(a)=I\), we have \(\ker C=K_U\). Hence, 
$\dim B =\dim \mathbb F_p^6-\dim K_U=6-3=3.$
Note that, unlike in the two-qudit case, \(C\) is not surjective.

\item   We compute \(B\) explicitly before using the cocycle identity to derive a contradiction. Using \cref{L0-fix} and the fact that \(E_U=A(L_0)\), the subspace of \(\mathbb F_p^6\) fixed by \(E_U\) is \(H\oplus L_0\), which has dimension \(4\). On the other hand, in the coordinates \(\mathbb F^6_p=L\oplus L^*\), the same fixed subspace is $L\oplus N(B),$ where
\begin{equation}\label{eqn:three-qudit-c3-NB}
N(B):=\bigcap_{a\in\mathbb F_p^6} \ker C(a) \subseteq L^*.
\end{equation}
Indeed, \(u_{C(a)}(x,y)=(x+C(a)y,y)\), so \((x,y)\) is fixed by every \(u_{C(a)}\) if and only if \(C(a)y=0\) for every \(a \in \mathbb F_p^6\). Therefore, \(\dim N(B)=1\). Choose \(0\neq r_0\in N(B)\). Since \(r_0\in N(B)\), we have
\begin{equation}
B\subseteq \{A\in\operatorname{Sym}_3(\mathbb F_p):Ar_0=0\}.
\end{equation}
We claim that the space on the right has dimension \(3\). Indeed, choose a basis of \(L^*\) whose third basis vector is \(r_0\). In this basis, the condition \(Ar_0=0\) says that the third column of \(A\) is zero. Since \(A\) is symmetric, the third row is also zero. Therefore every such \(A\) has the form
\begin{equation}
A=
\begin{pmatrix}
a_{11}&a_{12}&0\\
a_{12}&a_{22}&0\\
0&0&0
\end{pmatrix}.
\end{equation}
Thus, the space \(\{A\in\operatorname{Sym}_3(\mathbb F_p):Ar_0=0\}\) has the three free parameters \(a_{11},a_{12},a_{22}\), and hence has dimension \(3\). Since \(\dim B=3\), we have
\begin{equation}\label{eqn:three-qudit-c3-B-space}
B=\{A\in\operatorname{Sym}_3(\mathbb F_p):Ar_0=0\}.
\end{equation}

\item We now use the cocycle identity to constrain the \(L^*\) component of \(q_U\). In contrast with the two-qudit proof, this will force \(\beta\) into a line rather than forcing \(\beta=0\). Write
\begin{equation}\label{eqn:three-qudit-c3-q-alpha-beta}
q_U(a)=(\alpha(a),\beta(a))\in L\oplus L^*.
\end{equation}
By \cref{lem:pointwise-lagrangian-coordinate-setup}, the map \(\beta:\mathbb F^6_p\to L^*\) is linear and
\begin{equation}\label{eqn:three-qudit-c3-C-beta-symmetry}
C(a)\beta(b)=C(b)\beta(a)
\end{equation}
for all \(a,b\in\mathbb F^6_p\). If \(k\in K_U\), then \(C(k)=0\), so \(C(a)\beta(k)=0\) for every \(a\in\mathbb F_p^6\). Hence, \(\beta(k)\in N(B)\). Let \(\pi:L^*\to L^*/N(B)\) be the quotient map. It follows that $K_U=\ker C\subseteq\ker(\pi\circ\beta)$. Therefore, there is a unique linear map
\begin{align}
\widetilde T:\mathbb F_p^6/K_U &\longrightarrow L^*/N(B),\\
a+K_U &\longmapsto \beta(a)+N(B).
\end{align}
Since \(C:\mathbb F_p^6\to B\) is surjective with kernel \(K_U\), we have an isomorphism
\begin{align}
\overline C:\mathbb F_p^6/K_U &\xrightarrow{\sim} B,\\
 a+K_U & \longmapsto C(a).
\end{align}
Therefore, we can define
\begin{equation}
T:=\widetilde T\circ\overline C^{-1}:B\longrightarrow L^*/N(B).
\end{equation}
By construction, $T(C(a))=\beta(a)+N(B)$ for every \(a\in\mathbb F_p^6\). The entire factorization is summarized by the diagram in \cref{factoriztion}.

\begin{equation}\label{factoriztion}
\begin{tikzcd}[column sep=huge,row sep=large]
\mathbb F_p^6 \arrow[r,"\beta"] \arrow[d,two heads,"q"'] & L^* \arrow[r,"\pi"] & L^*/N(B) \\
\mathbb F_p^6/K_U \arrow[r,leftrightarrow,"\overline C"] \arrow[rru,"\widetilde T"',swap, dashed] & B \arrow[ur,"T"'] &
\end{tikzcd}
\end{equation}

\item We show that \(T\) is in fact the zero map. Choose coordinates on \(L^*\) with \(r_0=e_3\), together with the corresponding coordinates on \(L\). Then \cref{eqn:three-qudit-c3-B-space} identifies \(B\) with \(\operatorname{Sym}_2(\mathbb F_p)\), while \(L^*/N(B)\) is identified with \(\mathbb F_p^2\). Since \(N(B)\subseteq\ker A\) for every \(A\in B\), each linear map \(A:L^*\to L\) factors uniquely through the quotient \(L^*/N(B)\). Under the chosen coordinates, the resulting map is represented by the corresponding \(2\times2\) symmetric matrix \(A\). Thus, for \(x\in L^*\), its value on the coset \(x+N(B)\) is represented by \(Ax\).
Let \(A,D\in\operatorname{Sym}_2(\mathbb F_p)\), and choose \(a,b\in\mathbb F_p^6\) such that \(A=C(a)\) and \(D=C(b)\). By the definition of \(T\),
\begin{equation}
T(A)=\beta(a)+N(B),\qquad T(D)=\beta(b)+N(B).
\end{equation}
Therefore, \cref{eqn:three-qudit-c3-C-beta-symmetry} gives
\begin{equation}\label{eqn:three-qudit-c3-ATD-DTA}
AT(D)=DT(A)
\end{equation}
for all \(A,D\in\operatorname{Sym}_2(\mathbb F_p)\). Taking \(D=I\) gives \(AT(I)=T(A)\). Setting \(t:=T(I)\), we therefore have \(T(A)=At\) for every \(A\in\operatorname{Sym}_2(\mathbb F_p)\). Substituting this into \cref{eqn:three-qudit-c3-ATD-DTA} gives
\begin{equation}\label{eqn:three-qudit-c3-commutator-kills-t}
(AD-DA)t=0
\end{equation}
for all symmetric \(A,D\). Taking
\begin{equation}
A=\begin{pmatrix}1&0\\0&0\end{pmatrix},
\qquad
D=\begin{pmatrix}0&1\\1&0\end{pmatrix}
\end{equation}
gives
\begin{equation}
AD-DA=\begin{pmatrix}0&1\\-1&0\end{pmatrix},
\end{equation}
which is invertible over \(\mathbb F_p\). Thus \(t=0\), and hence \(T=0\). Since \(T(C(a))=\beta(a)+N(B)\), it follows that
\begin{equation}
\beta(\mathbb F_p^6)\subseteq N(B)=\mathbb F_pr_0.
\end{equation}
Write \(\beta(a)=\lambda(a)r_0\) for a linear functional \(\lambda:\mathbb F_p^6\to\mathbb F_p\), and define \(\nu(a):=r_0^T\alpha(a)\).

\item Finally, as in the two-qudit codimension three proof, we compare the commutator of \(B_U(a)\) and \(B_U(b)\) in two ways. Since \(C(a)r_0=0\) for every \(a\in\mathbb F^6_p\), each \(u_{C(a)}\) fixes \(q_U(b)\) for every \(b\in\mathbb F^6_p\). Therefore, \(M_{S_U(a)}\) commutes with \(w(q_U(b))\) for all \(a,b\). By \cref{lem:upper-unipotent-commuting-lifts}, applied with \(m=3\), the fixed Clifford representatives \(M_{S_U(a)}=M_{u_{C(a)}}\) commute pairwise. Therefore, we have
\begin{equation}\label{hi-1}
B_U(a)B_U(b)=\omega^{[q_U(a),q_U(b)]}B_U(b)B_U(a).
\end{equation} 
On the other hand, since \(B_U(a)=Uw(a)U^*\), we also have
\begin{equation}\label{hi-2}
B_U(a)B_U(b)=\omega^{[a,b]}B_U(b)B_U(a).
\end{equation}
Comparing \cref{hi-1} and \cref{hi-2} gives $[a,b]=[q_U(a),q_U(b)]$
for all \(a,b\in\mathbb F^6_p\). Since \(q_U(a)=(\alpha(a),\lambda(a)r_0)\), we obtain
\begin{equation}\label{eqn:three-qudit-c3-rank-two-form}
[a,b]=[q_U(a),q_U(b)]=\lambda(a)\nu(b)-\lambda(b)\nu(a).
\end{equation}
It remains to note that \(\nu\) is linear. By \cref{lem:pointwise-lagrangian-coordinate-setup}, we have
\begin{equation}
\alpha(a+b)=\alpha(a)+\alpha(b)+C(a)\beta(b).
\end{equation}
Since \(\beta(b)=\lambda(b)r_0\) and \(C(a)r_0=0\), this becomes $\alpha(a+b)=\alpha(a)+\alpha(b).$
Thus, \(\nu(a+b)=\nu(a)+\nu(b)\). Therefore, the right side of \cref{eqn:three-qudit-c3-rank-two-form} is an alternating form of rank at most \(2\), contradicting the fact that the symplectic form on \(\mathbb F^6_p\) has rank \(6\). Hence, the case \(r_K=1\) is impossible.
\end{enumerate}
This completes the proof.
\end{proof}

The remaining cases satisfy $\dim E_U=\operatorname{codim}K_U \geq 4$.  Before treating them separately, we establish in \cref{lem:three-qudit-large-image-obstruction} that such a large image cannot fix a Lagrangian subspace pointwise. The proof follows the same general argument as \cref{lem:three-qudit-codim-three}, but the higher rank assumption forces \(N(B)\) to vanish, allowing \(\beta\) to factor directly through \(C\). At the same time, however, \(B\) is now only an arbitrary higher dimensional subspace of \(\operatorname{Sym}_3(\mathbb F_p)\), rather than the rigid copy of \(\operatorname{Sym}_2(\mathbb F_p)\) that arises after quotienting in the codimension \(3\) case. Consequently, the proof requires a new dimension bound for the induced map \(T\) before obtaining a contradiction with the non-degeneracy of the symplectic form.

\begin{lem}\label{lem:three-qudit-large-image-obstruction}
Let \(p\) be an odd prime. Let \(U\in\mathcal C^{(3)}\) be a three-qudit gate, and set \(E_U:=\operatorname{im}S_U\). If \(\dim E_U\geq4\), then \(E_U\) does not fix any Lagrangian subspace of \(\mathbb F_p^6\) pointwise.
\end{lem}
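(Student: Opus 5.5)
We argue by contradiction. Suppose \(\dim E_U\geq 4\) and that \(E_U\) fixes some Lagrangian \(L\subseteq\mathbb F_p^6\) pointwise. Choose a complementary Lagrangian \(L^*\) and apply \cref{lem:pointwise-lagrangian-coordinate-setup}. This gives \(S_U(a)=u_{C(a)}\) with \(C:\mathbb F_p^6\to\operatorname{Sym}_3(\mathbb F_p)\) linear, and \(q_U=(\alpha,\beta)\) with \(\beta\) linear and \(C(a)\beta(b)=C(b)\beta(a)\). Set \(B:=C(\mathbb F_p^6)\). Since \(\ker C=K_U\), we have \(\dim B=\dim E_U\geq 4\).

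\textbf{Step 1: \(N(B)=0\).} Let \(N(B):=\bigcap_a\ker C(a)\). If \(0\neq r_0\in N(B)\), the computation in step (5) of \cref{lem:three-qudit-codim-three} shows
\[
B\subseteq\{A\in\operatorname{Sym}_3(\mathbb F_p):Ar_0=0\},
\]
and the right-hand side has dimension \(3\). This contradicts \(\dim B\geq4\), so \(N(B)=0\).

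\textbf{Step 2: factor \(\beta\) through \(C\).} For \(k\in K_U\) we have \(C(a)\beta(k)=C(k)\beta(a)=0\) for all \(a\), so \(\beta(k)\in N(B)=0\). Hence \(\beta\) factors as \(\beta=T\circ C\) for a linear map \(T:B\to L^*\) satisfying
\[
A\,T(D)=D\,T(A)\qquad\text{for all }A,D\in B.
\]

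\textbf{Step 3: show \(T=0\).} This is the step I expect to be the main obstacle, since \(B\) is now an arbitrary subspace of dimension \(\geq4\) in the \(6\)-dimensional space \(\operatorname{Sym}_3(\mathbb F_p)\), rather than a copy of all of \(\operatorname{Sym}_2\). I would first try the approach of \cref{lem:three-qudit-codim-three}.
\begin{enumerate}
\item If \(B\) contains an invertible matrix \(D_0\), then \(T(A)=D_0^{-1}A\,T(D_0)\), so \(T\) is determined by a single vector \(t\in L^*\).
\item The relation then becomes \((AD_0^{-1}D-DD_0^{-1}A)t=0\) for all \(A,D\in B\).
\item I would show that for \(\dim B\geq4\) these operators have no common nonzero kernel vector. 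The idea is a dimension count: the symmetric matrices \(A\) satisfying \(At\in\operatorname{span}(D_0t)\), together with those related to it by \(D_0\), form a subspace of dimension at most \(3\). So some pair \(A,D\) in \(B\) violates the relation unless \(t=0\).
\end{enumerate}
If \(B\) contains no invertible element, I would instead take \(A\in B\) of maximal rank, which is at most \(2\). A subspace of \(\operatorname{Sym}_3(\mathbb F_p)\) of dimension \(\geq4\) consisting entirely of singular matrices must have a very restricted form: either all members have a common kernel vector, which is excluded by \(N(B)=0\), or \(B\) is a rank-bounded family, which I would classify by hand. In each case I would check that the relation \(A\,T(D)=D\,T(A)\) forces \(T(D)\in N(B)=0\).

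As a fallback, if \(T\neq0\) cannot be excluded directly, I would bound \(\operatorname{rank}\beta=\operatorname{rank}T\) and feed this into the commutator comparison below. The aim would then be to show that the induced alternating form \([q_U(a),q_U(b)]\), together with its correction terms, has rank \(<6\).

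\textbf{Step 4: derive the contradiction.} Once \(T=0\), we have \(\beta=0\), so \(q_U(\mathbb F_p^6)\subseteq L\). Every \(u_{C(a)}\) fixes \(L\) pointwise, so each \(M_{S_U(a)}\) commutes with every \(w(q_U(b))\). By \cref{lem:upper-unipotent-commuting-lifts} with \(m=3\), the representatives \(M_{u_{C(a)}}\) commute pairwise. Since \(L\) is isotropic, the operators \(w(q_U(a))\) also commute. Hence all \(B_U(a)\) commute. Comparing with
\[
B_U(a)B_U(b)=\omega^{[a,b]}B_U(b)B_U(a),
\]
exactly as in step (6) of \cref{codim3-two-qubit-rule}, gives \([a,b]=0\) for all \(a,b\in\mathbb F_p^6\). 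This contradicts non-degeneracy of the symplectic form.
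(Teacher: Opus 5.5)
\textbf{The gap is in Step 3.} Your Steps 1, 2 and 4 match the paper, but Step 3 cannot be carried out as stated. The relation $A\,T(D)=D\,T(A)$, together with $N(B)=0$ and $\dim B\geq4$, does not force $T=0$.

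For a counterexample, let $E_{ij}$ denote the matrix units and take
\[
B=\operatorname{span}\{E_{11},E_{22},E_{12}+E_{21},E_{33}\}\subseteq\operatorname{Sym}_3(\mathbb F_p).
\]
Then $N(B)=0$. Every $A\in B$ satisfies $Ae_3=A_{33}e_3$, so the nonzero map $T(A):=A_{33}e_3$ satisfies $A\,T(D)=A_{33}D_{33}e_3=D\,T(A)$ for all $A,D\in B$. This $B$ contains $I$, so your sub-argument (1)--(3) also fails: with $D_0=I$ and $t=e_3$, every operator $AD-DA$ kills $e_3$. The dimension count in (3) is the culprit. For $t\neq0$, the map $A\mapsto At$ from $\operatorname{Sym}_3(\mathbb F_p)$ to $\mathbb F_p^3$ is surjective with $3$-dimensional kernel. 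Hence the symmetric $A$ with $At\in\operatorname{span}(D_0t)$ form a $4$-dimensional space, not one of dimension at most $3$.

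What the linear algebra does give is the paper's bound. Every $A\in\ker T$ annihilates $\operatorname{im}T$, so with $s=\dim\operatorname{im}T$ and $d=\dim B$,
\[
d-s\leq\frac{(3-s)(4-s)}{2}.
\]
This forces $T=0$ only when $d\geq5$; when $d=4$ it gives only $s\leq1$. With this bound in place of your Step 3, your Step 4 finishes the cases $d=5,6$.

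\textbf{The case $d=4$, $\beta(a)=\lambda(a)r\neq0$ is not covered.} Your Step 4 does not apply here: $u_{C(a)}$ no longer fixes $q_U(b)$, so $M_{S_U(a)}$ does not commute with $w(q_U(b))$. Your fallback points in the right direction but omits the actual argument. The paper compares the commutators exactly and obtains
\[
[a,b]=[q_U(a),q_U(b)]+\tfrac12\bigl(\beta(a)^TC(b)\beta(a)-\beta(b)^TC(a)\beta(b)\bigr).
\]
It then proceeds as follows.
\begin{enumerate}
\item From $A\,T(D_0)=D_0\,T(A)$ with $\lambda(D_0)=1$, it gets $r^TAr=\xi\lambda(A)$, where $\xi:=r^TD_0r$.
\item Setting $\nu:=r^T\alpha$, this turns the identity into an alternating form plus a cubic term $\tfrac{\xi}{2}(\lambda_a^2\lambda_b-\lambda_a\lambda_b^2)$.
\item It absorbs the cubic term by setting $\eta:=\nu-\tfrac{\xi}{2}\lambda^2$. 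The $\alpha$-cocycle equation gives $\nu_{a+b}=\nu_a+\nu_b+\xi\lambda_a\lambda_b$, which makes $\eta$ linear.
\item Hence $[a,b]=\lambda_a\eta_b-\lambda_b\eta_a$, a form of rank at most $2$, contradicting non-degeneracy.
\end{enumerate}
Without this computation, or an equivalent one, your proof does not handle $\dim E_U=4$.
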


\begin{proof}
Set \(d:=\dim E_U\geq4\). Suppose, for contradiction, that \(E_U\) fixes a Lagrangian subspace \(L\subseteq\mathbb F_p^6\) pointwise. Choose a complementary Lagrangian \(L^*\). By \cref{lem:pointwise-lagrangian-coordinate-setup}, we may write
\begin{equation}\label{eqn:large-image-upper-unipotent-form}
S_U(a)=u_{C(a)}=\begin{pmatrix}I&C(a)\\0&I\end{pmatrix},
\end{equation}
where \(C:\mathbb F_p^6\to\operatorname{Sym}_3(\mathbb F_p)\) is linear. By \cref{lem:pointwise-lagrangian-coordinate-setup}, if \(q_U(a)=(\alpha(a),\beta(a))\in L\oplus L^*\) then \(\beta:\mathbb F_p^6\to L^*\) is linear and satisfies
\begin{equation}\label{eqn:large-image-C-beta-symmetry}
C(a)\beta(b)=C(b)\beta(a)
\end{equation}
for all \(a,b\in\mathbb F_p^6\). As in the proof of \cref{lem:three-qudit-codim-three}, let \(B:=C(\mathbb F_p^6)\). Since \(\ker C=K_U\), we have
\begin{equation}
\dim B=\operatorname{codim}K_U=\dim E_U=d.
\end{equation}
Define $N(B)$ as in \cref{eqn:three-qudit-c3-NB}. If \(N(B)\neq0\), choose \(0\neq r\in N(B)\). Then
\begin{equation}
B\subseteq\{A\in\operatorname{Sym}_3(\mathbb F_p):Ar=0\}.
\end{equation}
The dimension calculation in the proof of \cref{lem:three-qudit-codim-three} shows that the space on the right has dimension \(3\). This contradicts \(\dim B=d\geq4\). Hence,
$N(B)=0$. If \(k\in K_U\), then \(C(k)=0\), so \cref{eqn:large-image-C-beta-symmetry} gives \(C(a)\beta(k)=0\) for every \(a\in\mathbb F_p^6\). Hence \(\beta(k)\in N(B)\), and therefore \(\beta(k)=0\). Thus, \(K_U=\ker C\subseteq\ker\beta\). Since \(C:\mathbb F_p^6\to B\) is surjective with kernel \(K_U\), the same factorization argument used in \cref{lem:three-qudit-codim-three}, now with \(L^*/N(B)=L^*\), gives a unique linear map
\begin{equation}\label{eqn:large-image-T-map}
T:B\longrightarrow L^*
\end{equation}
such that
$T(C(a))=\beta(a)$
for every \(a\in\mathbb F_p^6\). Let \(A,D\in B\), and choose \(a,b\in\mathbb F_p^6\) such that \(A=C(a)\) and \(D=C(b)\). By \cref{eqn:large-image-C-beta-symmetry}, we have
\begin{equation}\label{eqn:large-image-ATD-DTA}
AT(D)=DT(A)
\end{equation}
for all \(A,D\in B\). Set
$s:=\dim\operatorname{im}T.$
If \(A\in\ker T\), then \cref{eqn:large-image-ATD-DTA} gives
$AT(D)=0$
for every \(D\in B\). Since \(T(B)=\operatorname{im}T\), the matrix \(A\) therefore annihilates \(\operatorname{im}T\). A symmetric \(3\times3\) matrix annihilating a fixed \(s\)-dimensional subspace is determined by a symmetric \((3-s)\times(3-s)\) block. Consequently,
\begin{equation}\label{eqn:large-image-dimension-inequality}
d-s\leq\frac{(3-s)(4-s)}{2}.
\end{equation}
Suppose first that \(d=5\) or \(d=6\). Then \cref{eqn:large-image-dimension-inequality} forces \(s=0\). Hence \(T=0\), so \(\beta=0\) and
$q_U(\mathbb F_p^6)\subseteq L.$
The same commutator argument used at the end of \cref{codim3-two-qubit-rule}, together with \cref{lem:upper-unipotent-commuting-lifts} applied with \(m=3\), implies that the operators \(B_U(a)\) commute pairwise. On the other hand,
\begin{equation}
B_U(a)B_U(b)=\omega^{[a,b]}B_U(b)B_U(a)
\end{equation}
for all \(a,b\in\mathbb F_p^6\). This contradicts the non-degeneracy of the symplectic form. Hence \(d\neq5,6\). It remains to consider \(d=4\). By \cref{eqn:large-image-dimension-inequality}, either \(s=0\) or \(s=1\). If \(s=0\), then \(T=0\), hence \(\beta=0\), and the same commuting-lift argument used above gives a contradiction. Therefore,  \(s=1\). Thus,
$T(A)=\lambda(A)r$
for some non-zero \(r\in L^*\) and non-zero linear functional \(\lambda:B\to\mathbb F_p\). Choose \(D_0\in B\) such that \(\lambda(D_0)=1\). Taking \(D=D_0\) in \cref{eqn:large-image-ATD-DTA} gives
\begin{equation}\label{eqn:large-image-Ar-lambda}
Ar=\lambda(A)D_0r
\end{equation}
for every \(A\in B\). Set \(v:=D_0r\) and \(\xi:=r^Tv\). Then
\begin{equation}\label{eqn:large-image-rAr}
r^TAr=\xi\lambda(A)
\end{equation}
for every \(A\in B\). For \(a\in\mathbb F_p^6\), write
\begin{equation}
A_a:=C(a),\qquad \lambda_a:=\lambda(A_a),\qquad \nu_a:=r^T\alpha(a).
\end{equation}
Then
$\beta(a)=T(A_a)=\lambda_ar.$
Using the commuting lifts from \cref{lem:upper-unipotent-commuting-lifts} and comparing the commutators of \(B_U(a)\) and \(B_U(b)\), we obtain
\begin{equation}\label{eqn:large-image-commutator-formula}
[a,b]=[q_U(a),q_U(b)]+\frac12\left(\beta(a)^TA_b\beta(a)-\beta(b)^TA_a\beta(b)\right).
\end{equation}
Substituting \(\beta(a)=\lambda_ar\) and using \cref{eqn:large-image-rAr} gives
\begin{equation}\label{eqn:large-image-pre-eta}
[a,b]=\lambda_a\nu_b-\lambda_b\nu_a+\frac{\xi}{2}\left(\lambda_a^2\lambda_b-\lambda_b^2\lambda_a\right).
\end{equation}
Define
$\eta_a:=\nu_a-\frac{\xi}{2}\lambda_a^2.$
By \cref{lem:pointwise-lagrangian-coordinate-setup}, we have
\begin{equation}
\alpha(a+b)=\alpha(a)+\alpha(b)+A_a\beta(b).
\end{equation}
Using \(\beta(b)=\lambda_br\) and \cref{eqn:large-image-Ar-lambda}, we obtain
\begin{equation}
\nu_{a+b}=\nu_a+\nu_b+\xi\lambda_a\lambda_b.
\end{equation}
Since \(\lambda_{a+b}=\lambda_a+\lambda_b\), it follows that
\begin{align}
\eta_{a+b}
&=\nu_{a+b}-\frac{\xi}{2}\lambda_{a+b}^2=\nu_a+\nu_b+\xi\lambda_a\lambda_b-\frac{\xi}{2}(\lambda_a+\lambda_b)^2=\eta_a+\eta_b.
\end{align}
Thus \(\eta:\mathbb F_p^6\to\mathbb F_p\) is linear. \cref{eqn:large-image-pre-eta} therefore becomes
\begin{equation}\label{eqn:large-image-rank-two-form}
[a,b]=\lambda_a\eta_b-\lambda_b\eta_a.
\end{equation}
As in the proof of \cref{lem:three-qudit-codim-three}, the alternating form on the right has rank at most \(2\), whereas the symplectic form on \(\mathbb F_p^6\) has rank \(6\). This contradiction rules out \(d=4\), and completes the proof.
\end{proof}

\cref{lem:three-qudit-codim-four-impossible} rules out the codimension \(4\) case. Since \(\dim K_U=2\), the restriction of the symplectic form to \(K_U\) is either non-degenerate or identically zero. The non-degenerate case reduces immediately to the non-existence of an abelian \(p\)-subgroup of order \(p^4\) in \(\operatorname{Sp}(\mathbb F_p^4)\).
In the isotropic case, \(E_U\) fixes the isotropic plane \(W=q_U(K_U)\) pointwise. We therefore analyze \(E_U\) using a homomorphism induced by its action on a complementary symplectic plane, and apply \cref{lem:three-qudit-large-image-obstruction} whenever the resulting structure forces \(E_U\) to fix a Lagrangian subspace pointwise.

\begin{prop}\label{lem:three-qudit-codim-four-impossible}
Let \(p\) be an odd prime and let \(U\in\mathcal C^{(3)}\) be a three-qudit gate. Set \(K_U:=\ker S_U\) and \(E_U:=\operatorname{im}S_U\). Then \(\operatorname{codim}K_U\neq4\).
\end{prop}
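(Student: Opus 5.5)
The plan is to split according to the restriction of the symplectic form to \(K_U\). Since \(\dim K_U=2\), this restriction is either non-degenerate or identically zero. By \cref{lem:kernel-cocycle-consequences}, the plane \(W:=q_U(K_U)\) is two-dimensional, carries the same restricted form as \(K_U\), and is fixed pointwise by \(E_U\). Moreover \(|E_U|=p^4\), since \(E_U\cong\mathbb F_p^6/K_U\).

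\emph{Non-degenerate case.} I would copy the codimension two argument in \cref{codim2-gen}. Because \(W\) is non-degenerate, \(\mathbb F_p^6=W\oplus W^\perp\) with \(\dim W^\perp=4\). The group \(E_U\) preserves \(W^\perp\), and restriction gives an injective homomorphism \(E_U\hookrightarrow\operatorname{Sp}(W^\perp)\cong\operatorname{Sp}(\mathbb F_p^4)\). Its image would be an abelian subgroup of order \(p^4\), hence an abelian Sylow \(p\)-subgroup. This is impossible by the explicit computation in \cref{codim4-two-qubit-rule}.

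\emph{Isotropic case, first reduction.} Now \(W\) is an isotropic plane fixed pointwise by \(E_U\). Every \(T\in E_U\) preserves \(W^\perp\), which is four-dimensional and contains \(W\). So \(T\) induces an element of \(\operatorname{Sp}(W^\perp/W)\cong\operatorname{Sp}(\mathbb F_p^2)\), giving a homomorphism \(\rho:E_U\to\operatorname{Sp}(W^\perp/W)\). The image of \(\rho\) is a \(p\)-group, hence has order at most \(p\) and is unipotent. Therefore \(\rho(E_U)\) fixes some line \(\bar\ell\subseteq W^\perp/W\). Its preimage \(L:=W+\ell\) is a Lagrangian three-plane in \(\mathbb F_p^6\) that is \(E_U\)-stable. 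On \(L\), each \(T\) fixes \(W\) and acts trivially on \(L/W\). Hence \(T|_L=\mathrm{I}+\varphi_T\), where \(\varphi_T\) factors through \(L/W\) and takes values in \(W\), and \(T\mapsto\varphi_T\) is a homomorphism \(E_U\to W\).

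\emph{Isotropic case, main step.} If \(\varphi\equiv0\), then \(E_U\) fixes the Lagrangian \(L\) pointwise. Since \(\dim E_U=4\), \cref{lem:three-qudit-large-image-obstruction} then gives a contradiction. The main obstacle is therefore to show that \(\varphi\) vanishes, or that some other choice of line \(\bar\ell\) yields a pointwise fixed Lagrangian. The choice of \(\bar\ell\) is forced only when \(\rho\) is non-trivial; when \(\rho\) is trivial, every line in \(W^\perp/W\) is available.

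My first attempt would use the cocycle constraints, as was done for \(\beta\) in the earlier proofs:
\begin{itemize}
\item the mixed pairing \([q_U(k),q_U(a)]=[k,a]\) for \(k\in K_U\), which controls the component of \(q_U(a)\) transverse to \(W^\perp\);
\item the symmetry \(S_U(a)q_U(b)-q_U(b)=S_U(b)q_U(a)-q_U(a)\), which follows from \(q_U(a+b)=q_U(b+a)\).
\end{itemize}
Writing \(S_U\) in block form adapted to the flag \(W\subset L\subset W^\perp\), these relations should force the components \(\varphi_{S_U(a)}\) to satisfy a commuting relation analogous to \(AT(D)=DT(A)\). Combined with the fact that \(E_U\) is abelian of order \(p^4\), this should kill \(\varphi\).

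Failing that, I would count dimensions. The pointwise stabilizer \(\ker\varphi\) of \(L\) has order at least \(p^2\). I would then compare with Wong's description (\cref{wong}) of the largest abelian subgroup \(A(L)\) of a Sylow \(p\)-subgroup of \(\operatorname{Sp}(\mathbb F_p^6)\) whose flag passes through \(W\subset L\). The aim is to show that an abelian group of order \(p^4\) commuting with \(\ker\varphi\) must lie in \(A(L)\) itself. If the answer is only that \(\dim E_U\cap A(L)\geq 3\), I would instead rerun the rank-two-form argument of \cref{lem:three-qudit-codim-three} on that subgroup, producing \([a,b]\) as an alternating form of rank at most \(4<6\), which is again a contradiction.
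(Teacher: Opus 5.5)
\textbf{There is a genuine gap in the isotropic case.}

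Your non-degenerate case is the paper's argument, and your set-up of the isotropic case is correct. Your \(\rho\) is the paper's middle-block map \(T(g,A,Z)\mapsto g\) onto \(W^\perp/W\cong H\). The plane \(L=W+\ell\) is an \(E_U\)-stable Lagrangian, and \(T\mapsto\varphi_T\) is a homomorphism. But the step you call the main obstacle is the whole isotropic case, and you only sketch it with ``should.''

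\emph{The target \(\varphi=0\) is unreachable from \(E_U\) alone.} You aim to make \(\varphi\) vanish for some \(\bar\ell\), so that \cref{lem:three-qudit-large-image-obstruction} applies. Work in the coordinates \(W\oplus H\oplus W^*\), with \(T(g,A,Z)\) as in \cref{eqn:three-qudit-c4-stabilizer}, take \(A_0\in M_2(\mathbb F_p)\) invertible, and set
\[ E=\{T(I,tA_0,Z):t\in\mathbb F_p,\ Z\in\operatorname{Sym}_2(\mathbb F_p)\}. \]
Since \(A_0JA_0^T\) is skew, \cref{eqn:three-qudit-c4-product} shows \(E\) is elementary abelian of order \(p^4\). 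It fixes \(W\) pointwise and \(\rho(E)=1\). The conditions \(JA_0^Tz=0\) and \(A_0y=0\) show its pointwise fixed space is exactly \(W\). Hence \(\varphi\neq0\) for every line \(\ell\subseteq H\), and \(E\cap A(W\oplus\ell)=\{T(I,0,Z)\}\) has order only \(p^3\).

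\cref{wong} cannot push such a group into some \(A(L)\). It only singles out abelian subgroups of maximal order \(p^6\), not of order \(p^4\).

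\emph{Your fallback fails too.} The rank-two argument of the codimension-three case needs every \(S_U(a)\) to have the form \(u_{C(a)}\), and it uses commuting lifts of that whole group. If you restrict to \(a,b\in S_U^{-1}(E_U\cap A(L))\), you are on a \(5\)-dimensional subspace. There the symplectic form itself has rank \(4\), so an alternating form of rank at most \(4\) gives no contradiction.

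\emph{What is missing: the contradiction must come from \(q_U\), not from a fixed Lagrangian.} Write \(q_U=(\alpha,\gamma,\beta)\in W\oplus H\oplus W^*\).
\begin{enumerate}
\item For \(k\in K_U\), the mixed pairing gives \([q_U(k),\beta(a)]=[k,a]\). So \(\beta\) is linear and onto \(W^*\), that is, \(\operatorname{rank}\beta=2\).
\item Next, \(E_U\) must contain \(Z_0:=\{T(I,0,Z)\}\).
\begin{enumerate}
\item If \(\rho=1\) and the \(A\)-components have no common kernel line, then \(AJD^T=DJA^T\) forces \(\mathcal A=\operatorname{span}(A_0)\) with \(A_0\) invertible. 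Rank--nullity then shows the kernel of \(T(I,A,Z)\mapsto A\) is all of \(Z_0\).
\item If \(\rho\neq1\), the forced line must satisfy \(A_h\ell\neq0\); otherwise \(W\oplus\ell\) is a pointwise-fixed Lagrangian. Comparing \(Z\)-components then gives \(\ker\rho=Z_0\).
\end{enumerate}
\item Finally, read \(q_U(a+b)=q_U(b+a)\) in the \(H\)-component, which gives \(JA_0^T\beta(b)=0\). In the \(\rho\neq1\) case, read it in the \(W\)-component instead, which gives \(Z\Theta(D)=D\Theta(Z)\) and hence \(\Theta=0\). Either way \(\beta\) vanishes on the \(5\)-dimensional space \(S_U^{-1}(Z_0)\), contradicting \(\operatorname{rank}\beta=2\).
\end{enumerate}
Your bullet points name the right ingredients. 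They have to be aimed at \(\operatorname{rank}\beta\le1\), not at \(\varphi=0\).
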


\begin{proof}
Suppose, for contradiction, that \(\operatorname{codim}K_U=4\). Then $\dim K_U=2$ and  $\dim E_U=4.$
Since an alternating form is skew-symmetric and has even rank, the restriction of the symplectic form to \(K_U\) is either non-degenerate or identically zero. Suppose first that \(K_U\) is non-degenerate. By \cref{lem:kernel-cocycle-consequences}, \(W:=q_U(K_U)\) is a non-degenerate symplectic plane fixed pointwise by \(E_U\). Hence, \(\mathbb F_p^6=W\oplus W^\perp\), \(E_U\) acts on \(W^\perp\) and we obtain an injective homomorphism
\begin{equation}
E_U\lhook\joinrel\longrightarrow\operatorname{Sp}(W^\perp)\cong\operatorname{Sp}(\mathbb F_p^4).
\end{equation}
Thus, \(\operatorname{Sp}(\mathbb F_p^4)\) would contain an abelian \(p\)-subgroup of order \(p^4\).
This is impossible by \cref{codim4-two-qubit-rule}. Therefore, the symplectic form on \(K_U\) is identically zero. That is, \(K_U\) is isotropic. We rule out this case to derive a contradiction. We divide the proof into several steps to make the otherwise lengthy argument easier to follow.

\begin{enumerate} 
\item We first construct a decomposition of \(\mathbb F_p^6\) adapted to the isotropic plane \(W:=q_U(K_U)\)\footnote{This differs from the non-degenerate case where \(W\cap W^\perp=0\), and from the one-dimensional-radical case in the proof of \cref{lem:three-qudit-codim-three}, where a non-degenerate symplectic plane \(H\subseteq W\) could be split off. Here \(W\) is totally isotropic, so \(W\subseteq W^\perp\) and \(W\) contains no non-degenerate symplectic plane. We obtain the required symplectic plane from \(W^\perp/W\) instead.}. By \cref{lem:kernel-cocycle-consequences}, \(W\) is an isotropic plane fixed pointwise by \(E_U\). Since \(\dim W=2\), we have \(\dim W^\perp=4\). The radical of the restricted symplectic form on \(W^\perp\) is
\begin{equation}
\operatorname{rad}(W^\perp)=W^\perp\cap(W^\perp)^\perp=W^\perp\cap W=W,
\end{equation}
where we have used that \(W\subseteq W^\perp\). Thus, the quotient \(W^\perp/W\) is a two-dimensional non-degenerate symplectic space. We first choose a symplectic plane representing this quotient. Choose a basis \(w_1,w_2\) of \(W\) and extend it to a basis \(w_1,w_2,h_1,h_2\) of \(W^\perp\). Setting \(H:=\operatorname{span}\{h_1,h_2\}\), we have
\begin{equation}
W^\perp=W\oplus H,\qquad W^\perp/W\cong H.
\end{equation}
The restricted symplectic form on \(H\) is non-degenerate. Indeed, if \(h\in H\) and \([h,h']=0\) for all \(h'\in H\), then since \(h\in W^\perp\) we also have \([h,w]=0\) for all \(w\in W\). As \(W^\perp=W\oplus H\), it follows that \(h\in\operatorname{rad}(W^\perp)=W\). Since \(H\cap W=0\), we obtain \(h=0\). Since \(H\) is non-degenerate, we have
$\mathbb F_p^6=H\oplus H^\perp.$
Moreover, \(W\subseteq H^\perp\) because \(H\subseteq W^\perp\), and \(\dim H^\perp=4\). Since \(W\) is isotropic and has dimension \(2\), it is therefore Lagrangian in the four-dimensional symplectic space \(H^\perp\). Hence, there exists a Lagrangian subspace \(W^*\subseteq H^\perp\) such that
$H^\perp=W\oplus W^*,$
and the symplectic pairing identifies \(W^*\) with the dual space of \(W\). Consequently,
\begin{equation}\label{eqn:three-qudit-c4-witt}
\mathbb F_p^6=W\oplus H\oplus W^*.
\end{equation}

\item We next choose a basis adapted to the decomposition in \cref{eqn:three-qudit-c4-witt} and use it to determine explicitly the form of every symplectic transformation fixing \(W\) pointwise. Since the pairing between \(W\) and \(W^*\) is non-degenerate, we may choose bases \(w_1,w_2\) of \(W\) and \(w_1^*,w_2^*\) of \(W^*\) such that
\begin{equation}
[w_i,w_j^*]=\delta_{ij},\qquad [w_i,w_j]=[w_i^*,w_j^*]=0.
\end{equation}
Since \(H\) is a two-dimensional non-degenerate symplectic space, we may also choose a symplectic basis \(h_1,h_2\) satisfying
\begin{equation}
[h_1,h_2]=1,\qquad [h_2,h_1]=-1,\qquad [h_1,h_1]=[h_2,h_2]=0.
\end{equation}
Thus, relative to the ordered basis $w_1,w_2,h_1,h_2,w_1^*,w_2^*$
the symplectic form is represented by
\begin{equation}\label{eqn:three-qudit-c4-symplectic-matrix}
\Omega:=\begin{pmatrix}0&0&I\\0&J&0\\-I&0&0\end{pmatrix}, \qquad J=\begin{pmatrix}0&1\\-1&0\end{pmatrix}.
\end{equation}
We now determine the form of a symplectic transformation fixing \(W\) pointwise. First, such a transformation preserves \(W^\perp=W\oplus H\). Indeed, if \(x\in W^\perp\) and \(w\in W\), then
\begin{equation}
[Tx,w]=[Tx,Tw]=[x,w]=0,
\end{equation}
so \(Tx\in W^\perp\). Moreover, \(T\) induces the identity on \(\mathbb F_p^6/W^\perp\). Indeed, for every \(x\in\mathbb F_p^6\) and \(w\in W\),
\begin{equation}
[Tx-x,w]=[Tx,Tw]-[x,w]=0,
\end{equation}
and hence \(Tx-x\in W^\perp\). Therefore, relative to the decomposition in \cref{eqn:three-qudit-c4-witt}, fixing \(W\) pointwise gives the first block column \((I,0,0)^T\), preservation of \(W^\perp=W\oplus H\) forces the lower-middle block to vanish, and triviality on the quotient forces the lower-right block to be \(I\). Thus, \(T\) has the block form
\begin{equation}
T=\begin{pmatrix}I&A&B\\0&g&C\\0&0&I\end{pmatrix}
\end{equation}
for some \(2\times2\) matrices \(A,B,C,g\). Imposing the symplectic condition
$T^T\Omega T=\Omega,$
we compute
\begin{align}
T^T\Omega T
&=
\begin{pmatrix}
I&0&0\\
A^T&g^T&0\\
B^T&C^T&I
\end{pmatrix}
\begin{pmatrix}
0&0&I\\
0&J&0\\
-I&0&0
\end{pmatrix}
\begin{pmatrix}
I&A&B\\
0&g&C\\
0&0&I
\end{pmatrix}\\
&=
\begin{pmatrix}
I&0&0\\
A^T&g^T&0\\
B^T&C^T&I
\end{pmatrix}
\begin{pmatrix}
0&0&I\\
0&Jg&JC\\
-I&-A&-B
\end{pmatrix}=
\begin{pmatrix}
0&0&I\\
0&g^TJg&A^T+g^TJC\\
-I&C^TJg-A&B^T+C^TJC-B
\end{pmatrix}.
\end{align}
Comparing this with $\Omega$ gives
\begin{equation}
g^TJg=J,\qquad A^T+g^TJC=0,\qquad B-B^T=C^TJC.
\end{equation}
The first identity says that \(g\in\operatorname{Sp}(H)\cong\operatorname{Sp}(\mathbb F^2_p)\). Using \(g^TJg=J\), the second identity gives
$C=gJA^T.$
Substituting this into the third identity gives
$B-B^T=AJA^T.$
Since \(p\) is odd, this is equivalent to writing
$B=Z+\frac12AJA^T$
for a unique symmetric matrix \(Z=Z^T\). Consequently, every symplectic transformation fixing \(W\) pointwise has a unique expression
\begin{equation}\label{eqn:three-qudit-c4-stabilizer}
T(g,A,Z):=\begin{pmatrix}I&A&Z+\frac12AJA^T\\0&g&gJA^T\\0&0&I\end{pmatrix},
\end{equation}
where $g\in\operatorname{Sp}(H)\cong\operatorname{Sp}(\mathbb F^2_p)$, $A\in M_2(\mathbb F_p)$ and $ Z\in\operatorname{Sym}_2(\mathbb F_p).$ Finally, direct block multiplication gives
\begin{equation}\label{eqn:three-qudit-c4-product}
T(g_1,A_1,Z_1)T(g_2,A_2,Z_2)=T\left(g_1g_2,A_2+A_1g_2,Z_1+Z_2+\operatorname{sym}(A_1g_2JA_2^T)\right),
\end{equation}
where $\operatorname{sym}(X):=\frac12(X+X^T).$

\item We next isolate a linear component of \(q_U\) that will provide the contradiction in both remaining cases. Write
\begin{equation}\label{eqn:three-qudit-c4-q}
q_U(a)=(\alpha(a),\gamma(a),\beta(a))\in W\oplus H\oplus W^*.
\end{equation}
Since every matrix in \cref{eqn:three-qudit-c4-stabilizer} acts as the identity on the \(W^*\) component, taking \(W^*\) components in the cocycle identity gives $\beta(a+b)=\beta(a)+\beta(b)$. Thus, $\beta:\mathbb F_p^6\to W^*$ is linear. We claim that
$\operatorname{rank}\beta=2.$
Let \(k\in K_U\) and \(a\in\mathbb F_p^6\). By \cref{lem:kernel-cocycle-consequences}, we have
$[q_U(k),q_U(a)]=[k,a].$
Since \(q_U(k)\in W\), while \(W\) is orthogonal to both \(W\) and \(H\), only the \(W^*\) component of \(q_U(a)\) contributes to the left hand side. Hence, we have
\begin{equation}\label{eqn:three-qudit-c4-beta-pairing}
[q_U(k),\beta(a)]=[k,a].
\end{equation}
\(q_U|_{K_U}:K_U\to W\) is an isomorphism. Since \(\beta(a)\in W^*\), it defines the functional on \(W\) given by \(w\mapsto[w,\beta(a)]\), and pulling this functional back along \(q_U|_{K_U}\) gives \(k\mapsto[q_U(k),\beta(a)]\). Hence, through \cref{eqn:three-qudit-c4-beta-pairing}, each \(\beta(a)\in W^*\) represents the linear functional on \(K_U\)
$k\longmapsto[k,a].$
On the other hand, non-degeneracy of the ambient symplectic form implies that
\begin{align}
\mathbb{F}_p^6 &\longrightarrow K_U^*,\\
a &\longmapsto \left(k\longmapsto[k,a]\right).
\end{align}
is surjective. 
If \(f\in K_U^*\), then there exists \(a\in\mathbb F_p^6\) such that
$f(k)=[k,a]$
for every \(k\in K_U\). By \cref{eqn:three-qudit-c4-beta-pairing},
\begin{equation}
f(k)=[k,a]=[q_U(k),\beta(a)].
\end{equation}
Thus, under the identification \(W^*\cong K_U^*\), the vector \(\beta(a)\) corresponds exactly to \(f\). Hence, \(\beta:\mathbb F_p^6\to W^*\) is surjective. As \(\dim W^*=2\), this proves $\operatorname{rank}\beta=2.$

\item We now record the action of \(E_U\) on the symplectic plane \(H\), which reduces the remaining argument to two possible cases. Taking the middle diagonal block in \cref{eqn:three-qudit-c4-stabilizer} defines a homomorphism
\begin{align}
\rho:E_U&\longrightarrow\operatorname{Sp}(\mathbb F^2_p),\\
T(g,A,Z)&\longmapsto g.
\end{align}
That \(\rho\) is a homomorphism follows from \cref{eqn:three-qudit-c4-product}. Since
$|\operatorname{Sp}(\mathbb F^2_p)|=p(p^2-1),$
every \(p\)-subgroup of \(\operatorname{Sp}(\mathbb F^2_p)\) has order at most \(p\). Hence, 
$\dim \rho(E_U)\in\{0,1\}.$
We rule out the possibility that $\dim \rho(E_U)=0$. The $\dim \rho(E_U)=1$ case is proved in \cref{appendix:rank-one-codim4-three}.  Let \(\dim \rho(E_U)=0\). In this case, all $g$'s are equal to the identity. We determine the possible \(A\)'s in \cref{eqn:three-qudit-c4-stabilizer} and then use the cocycle to contradict $\operatorname{rank}(\beta)=2$.  We proceed in several steps.

\begin{enumerate}[label=(\alph*)]

\item  Define
\begin{equation}
\mathcal A:=\{A\in M_2(\mathbb F_p):T(I,A,Z)\in E_U\text{ for some }Z\}.
\end{equation}
If \(A_1,A_2\in\mathcal A\), then there exist \(Z_1,Z_2\) such that \(T(I,A_1,Z_1),T(I,A_2,Z_2)\in E_U\), and we have
\begin{equation}
T(I,A_1,Z_1)T(I,A_2,Z_2)=T\left(I,A_1+A_2,Z_1+Z_2+\operatorname{sym}(A_1JA_2^T)\right)\in E_U.
\end{equation}
by \cref{eqn:three-qudit-c4-product}. Thus, \(A_1+A_2\in\mathcal A\). Clearly, \(0\in\mathcal A\). If \(A\in\mathcal A\), then for some \(Z\) we have \(T(I,A,Z)\in E_U\). By \cref{eqn:three-qudit-c4-product}, we have
\begin{equation}
T(I,A,Z)T(I,-A,-Z)=T\left(I,0,\operatorname{sym}(-AJA^T)\right)=T(I,0,0),
\end{equation}
where the final equality follows because \(AJA^T\) is skew-symmetric. Hence
\begin{equation}
T(I,A,Z)^{-1}=T(I,-A,-Z)\in E_U,
\end{equation}
so \(-A\in\mathcal A\). Therefore, \(\mathcal A\) is a linear subspace of \(M_2(\mathbb F_p)\). Let \(T(I,A,Z),T(I,D,Y)\in E_U\). By \cref{eqn:three-qudit-c4-product},
\begin{align}
T(I,A,Z)T(I,D,Y)
&=
T\left(I,A+D,Z+Y+\operatorname{sym}(AJD^T)\right),\\
T(I,D,Y)T(I,A,Z)
&=
T\left(I,A+D,Z+Y+\operatorname{sym}(DJA^T)\right).
\end{align}
Since \(E_U\) is abelian, these products are equal and their last components agree. Therefore,
$\operatorname{sym}(AJD^T)=\operatorname{sym}(DJA^T).$
Since we have
\begin{align}
\operatorname{sym}(AJD^T) &= \frac12(AJD^T-DJA^T),\\
\operatorname{sym}(DJA^T) &= \frac12(DJA^T-AJD^T),
\end{align}
equating these two expressions gives $2(AJD^T-DJA^T)=0$ and hence 
\begin{equation}\label{eqn:three-qudit-c4-AJD}
AJD^T=DJA^T
\end{equation}
for all \(A,D\in\mathcal A\). 

\item 
We claim that either the matrices $A: H \to W$ in \(\mathcal A\) have a common non-zero kernel vector $v \in H$ or $\mathcal A=\operatorname{span}(A_0)$ for some invertible matrix \(A_0\).  
Suppose first that \(A_0\in\mathcal A\) is invertible. For \(A\in\mathcal A\), write \(A=A_0D\). Applying \cref{eqn:three-qudit-c4-AJD} to \(A_0\) and \(A\) gives $JD^T=DJ$.  Writing $D=\begin{pmatrix}a&b\\c&d\end{pmatrix}$ gives
\begin{equation}
\begin{pmatrix}0&1\\-1&0\end{pmatrix}\begin{pmatrix}a&c\\b&d\end{pmatrix}
=
\begin{pmatrix}b&d\\-a&-c\end{pmatrix}
=
\begin{pmatrix}-b&a\\-d&c\end{pmatrix}
=
\begin{pmatrix}a&b\\c&d\end{pmatrix}\begin{pmatrix}0&1\\-1&0\end{pmatrix}.
\end{equation}
Hence, \(b=c=0\) and \(d=a\), so \(D=aI\). Therefore, \(\mathcal A=\operatorname{span}(A_0)\). Suppose instead that every matrix in \(\mathcal A\) is singular and \(\mathcal A \neq 0\). After changing bases we may choose a non-zero element of \(\mathcal A\) of the form\footnote{
Every non-zero singular \(2\times2\) matrix has rank \(1\), so it may be written as \(A_0=uv^T\) with \(u,v\neq0\), where \(u\) is a vector in \(W\) and \(v\) is a vector in \(H\). Since \(A_0:H\to W\), we may change the bases of its codomain \(W\) and domain \(H\) independently. Choose a basis change \(P\in\operatorname{GL}(W)\) such that \(P^{-1}u=e_1\), changing the basis of \(W^*\) dually so that the symplectic pairing is preserved. Independently, since \(\operatorname{SL}_2(\mathbb F_p) = \operatorname{Sp}(H)\) acts transitively on the non-zero vectors of \(H\), choose a symplectic basis change \(Q\in\operatorname{SL}_2(\mathbb F_p)\) such that \(v^TQ=e_1^T\). Under these changes of basis, the matrix of \(A_0\) becomes
$P^{-1}A_0Q=(P^{-1}u)(v^TQ)=e_1e_1^T$. 
}
\begin{equation}
A_0=\begin{pmatrix}1&0\\0&0\end{pmatrix}.
\end{equation}
For \(A=(a_{ij})\in\mathcal A\), every matrix \(A_0+tA\) is singular. Hence, we have
\begin{equation}
0=\det(A_0+tA)=ta_{22}+t^2\det A=ta_{22}.
\end{equation}
for every $t \in \mathbb F_p$.  Hence,  \(a_{22}=0\) and 
\(a_{12}a_{21}=0\). If one element of \(\mathcal A\) has \(a_{12}\neq0\) and another has \(a_{21}\neq0\), their sum is invertible, which is a contradiction. Thus, either all matrices in \(\mathcal A\) have the common kernel line \(\mathbb F_p e_2\), or all of their images are contained in the common image line \(\mathbb F_p e_1\). Suppose we are in the latter case and that there is no common non-zero kernel vector. Then every matrix in \(\mathcal A\) has the form \(u\mu\), where \(u\) is a fixed non-zero column vector spanning the common image line and \(\mu\) is a row vector. 
We can choose independent row vectors \(\mu,\nu\) such that \(u\mu,u\nu\in\mathcal A\)\footnote{If all row vectors \(\mu\) occurring in factorizations \(u\mu\in\mathcal A\) were proportional to a fixed nonzero row vector \(\mu_0\), then every matrix \(u\mu\in\mathcal A\) would satisfy \(\ker(u\mu)=\ker\mu_0\), giving a common kernel line.}.
Applying \cref{eqn:three-qudit-c4-AJD} with $A=u \mu$ and $D=u \nu$ gives
\begin{equation}
(\mu J\nu^T)uu^T=(\nu J\mu^T)uu^T=-(\mu J\nu^T)uu^T,
\end{equation}
and hence
$2(\mu J\nu^T)uu^T=0.$
This is impossible because \(p\) is odd, \(u\neq0\), and \(\mu J\nu^T\neq0\) for independent \(\mu,\nu\). Thus the common-image case also has a common kernel, proving the claim.

\item  We now rule out the possibility that \(\mathcal A\) has a common non-zero kernel line. Let \(v\neq 0\) and suppose that \(\ell:=\operatorname{span}\{v\}\subseteq H\) satisfies \(Av=0\) for every \(A\in\mathcal A\). We have
\begin{equation}
T(I,A,Z)\begin{pmatrix}0\\v\\0\end{pmatrix}
=
\begin{pmatrix}Av\\v\\0\end{pmatrix}
=
\begin{pmatrix}0\\v\\0\end{pmatrix}.
\end{equation}
Thus, every element of \(E_U\) fixes \(\ell\) pointwise. Since \(E_U\) fixes \(W\) pointwise by construction, it follows that \(E_U\) fixes \(W\oplus\ell\) pointwise. Moreover, \(W\) is isotropic, \(\ell\subseteq H\subseteq W^\perp\), and \(\ell\) is one-dimensional, so \(W\oplus\ell\) is isotropic. Since
\begin{equation}
\dim(W\oplus\ell)=2+1=3=\frac12\dim\mathbb F_p^6,
\end{equation}
it is Lagrangian. This contradicts \cref{lem:three-qudit-large-image-obstruction}  because  \(\dim E_U=4\). 

\item By (c), we have  $\mathcal A= \operatorname{span}(A_0)$ for some invertible \(A_0\in M_2(\mathbb F_p)\). We now use this description to show that \(\beta\) has rank at most \(1\). Consider the $\mathbb F_p$-linear map
\begin{align}
\varphi:E_U&\longrightarrow\mathcal A,\\
T(I,A,Z)&\longmapsto A.
\end{align}
Its image is \(\mathcal A=\operatorname{span}(A_0)\), which has dimension \(1\). Since \(\dim E_U=4\), rank-nullity gives
$\dim_{\mathbb F_p}\ker\phi=4-1=3.$
The kernel consists of those elements of \(E_U\) whose \(A\) component is zero. Hence, 
\begin{equation}
\ker\varphi\subseteq Z_0:=\left\{T(I,0,Z):Z=Z^T\in\operatorname{Sym}_2(\mathbb F_p)\right\}.
\end{equation}
Since \(\dim \operatorname{Sym}_2(\mathbb F_p)=3\), we have $\dim Z_0 = 3$.
Thus, \(\ker\varphi\) and \(Z_0\) have the same dimension. Hence, 
$\ker\varphi=Z_0.$
Since \(A_0\in\mathcal A\), choose \(h\in\mathbb F_p^6\) such that
\begin{equation}
S_U(h)=T(I,A_0,Z_h)
\end{equation}
for some symmetric matrix \(Z_h\). Now let \(b\in\mathbb F_p^6\) satisfy \(S_U(b)\in Z_0\), so that
$S_U(b)=T(I,0,Z_b)$
for some symmetric matrix \(Z_b\).
Since \(\mathbb F_p^6\) is abelian, the affine cocycle equation applied to \(h+b=b+h\) gives
\begin{equation}
q_U(h)+S_U(h)q_U(b)=q_U(b)+S_U(b)q_U(h).
\end{equation}
Write \(q_U(a)=(\alpha(a),\gamma(a),\beta(a))\). The \(H\) component of \(T(I,A,Z)(\alpha,\gamma,\beta)\) is
$\gamma+JA^T\beta.$
Comparing the \(H\) components in the cocycle identity gives
\begin{equation}
\gamma(h)+\gamma(b)+JA_0^T\beta(b)=\gamma(b)+\gamma(h),
\end{equation}
and hence
$JA_0^T\beta(b)=0.$
Since \(J\) and \(A_0\) are invertible, we obtain \(\beta(b)=0\). It follows that \(\beta\) vanishes on \(S_U^{-1}(Z_0)\). Since \(\ker S_U=K_U\) and \(S_U\) maps \(S_U^{-1}(Z_0)\) onto \(Z_0\), we have
$S_U^{-1}(Z_0)/K_U\cong Z_0.$
Therefore,
\begin{equation}
\dim S_U^{-1}(Z_0)=\dim K_U+\dim Z_0=2+3=5.
\end{equation}
Thus \(\beta\) vanishes on a five-dimensional subspace of \(\mathbb F_p^6\), so
$\operatorname{rank}\beta\leq 6-5=1$, a contradiction.
\end{enumerate}

\item The $\dim \rho(E_U)=1$ case is proved in \cref{appendix:rank-one-codim4-three}.
\end{enumerate}
This completes the proof. 
\end{proof}

\cref{lem:three-qudit-codim-five-impossible} rules out the codimension \(5\) case. In this case, \(K_U\) is one-dimensional, so \(W:=q_U(K_U)\) is an isotropic line fixed pointwise by \(E_U\). We then analyze the induced action of \(E_U\) on a four-dimensional symplectic complement of \(W\) and show that every possible image forces \(E_U\) to fix a Lagrangian subspace pointwise, contradicting \cref{lem:three-qudit-large-image-obstruction}. Since several details parallel the proof of \cref{lem:three-qudit-codim-four-impossible}, we relegate the proof to \cref{appendix:codim5-proof}. 

\begin{prop}\label{lem:three-qudit-codim-five-impossible}
Let \(p\) be an odd prime and let \(U\in\mathcal C^{(3)}\) be a three-qudit gate. Set \(K_U:=\ker S_U\) and \(E_U:=\operatorname{im}S_U\). Then \(\operatorname{codim}K_U\neq5\).
\end{prop}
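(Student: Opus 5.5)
The plan is to mirror the isotropic branch of the proof of \cref{lem:three-qudit-codim-four-impossible}, with the plane $W$ replaced by a line. Suppose $\operatorname{codim}K_U=5$. Then $\dim K_U=1$ and $\dim E_U=5$. By \cref{lem:kernel-cocycle-consequences}, $W:=q_U(K_U)$ is an isotropic line fixed pointwise by $E_U$.

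\emph{Decomposition and stabilizer form.} Since $W\subseteq W^\perp$ with $\dim W^\perp=5$ and $\operatorname{rad}(W^\perp)=W$, I choose a complement $H$ of $W$ in $W^\perp$, which is a non-degenerate $4$-dimensional symplectic space. I then choose a line $W^*\subseteq H^\perp$ paired with $W$, so that $\mathbb F_p^6=W\oplus H\oplus W^*$. Exactly as in step (2) there, a symplectic map fixing $W$ pointwise preserves $W^\perp$ and acts trivially on $\mathbb F_p^6/W^\perp$. Imposing $T^T\Omega T=\Omega$ then gives a unique normal form
\begin{equation}
T(g,a,z)=\begin{pmatrix}1&a&z+\tfrac12 aJ_4a^T\\0&g&gJ_4a^T\\0&0&1\end{pmatrix},
\end{equation}
with $g\in\operatorname{Sp}(H)$, $a$ a row vector in $\mathbb F_p^4$, and $z\in\mathbb F_p$ (up to a sign convention for the middle-right block). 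The product law is analogous to \cref{eqn:three-qudit-c4-product}: the $a$-parts add twisted by $g$, and the $z$-parts add with a correction $a_1g_2J_4a_2^T$. Writing $q_U=(\alpha,\gamma,\beta)$ in this decomposition, the $W^*$-component $\beta$ is linear. The argument of step (3), using $[q_U(k),\beta(a)]=[k,a]$, shows that $\operatorname{rank}\beta=1$.

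\emph{Reduction on $\rho(E_U)$.} Let $\rho:E_U\to\operatorname{Sp}(H)\cong\operatorname{Sp}(\mathbb F_p^4)$ be $T(g,a,z)\mapsto g$. Its image is an abelian $p$-subgroup, so $\dim\rho(E_U)\le3$ by \cref{codim4-two-qubit-rule}. The kernel lies in $\{T(I,a,z)\}$, a Heisenberg group of order $p^5$ with commutator given by $a_1J_4a_2^T$. An abelian subgroup of this group projects to an isotropic subspace of $\mathbb F_p^4$ on the $a$-parts, so it has order at most $p^{2+1}=p^3$. Hence $\dim\rho(E_U)\in\{2,3\}$.

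If $\dim\rho(E_U)=3$, I conjugate inside $\operatorname{Sp}(H)$, extended by the identity on $W\oplus W^*$, so that $\rho(E_U)\subseteq P$. Then \cref{wong} gives $\rho(E_U)=A(L_0)$ for a Lagrangian plane $L_0\subseteq H$. The kernel has dimension $2$ and, as before, consists of elements $T(I,a,z)$. I then use commutativity of $E_U$ with the elements $T(u_C,a_C,z_C)$ lifting $A(L_0)$ to show that the $a$-components of all of $E_U$ vanish on $L_0$, or at least on a common line $\ell\subseteq H$ fixed by $\rho(E_U)$. Concretely, I would compare the $a$- and $z$-components of the two products, as in \cref{eqn:three-qudit-c4-AJD}. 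Then $E_U$ fixes a Lagrangian of the form $W\oplus L_0'$ pointwise, contradicting \cref{lem:three-qudit-large-image-obstruction} since $\dim E_U=5\ge4$. If this direct route fails, I would fall back on the $H$-component of the cocycle identity $q_U(h)+S_U(h)q_U(b)=q_U(b)+S_U(b)q_U(h)$. Its $H$-component reads $(g_h-I)\gamma(b)+g_hJ_4a_h^T\beta(b)=(g_b-I)\gamma(h)+g_bJ_4a_b^T\beta(h)$. I would use it to derive a contradiction with $\operatorname{rank}\beta=1$, as in step (d).

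\emph{The main obstacle} is the case $\dim\rho(E_U)=2$. Here the kernel is a $3$-dimensional maximal abelian subgroup of the Heisenberg part, so its $a$-parts form a Lagrangian plane $M\subseteq\mathbb F_p^4$. Meanwhile $\rho(E_U)$ is an elementary abelian subgroup of order $p^2$ in $\operatorname{Sp}(\mathbb F_p^4)$. I would first use commutativity of $E_U$ with the kernel to show that $\rho(E_U)$ preserves $M$ and acts on it in a restricted way. I would then run through the conjugacy types of such rank-$2$ subgroups inside the Sylow subgroup $P$: those contained in $A(L_0)$, and those containing an element with nontrivial $A_t$-block. In each type, the goal is either to exhibit an isotropic line $\ell\subseteq H$ fixed by $\rho(E_U)$ and killed by all $a$-components, so that $W\oplus\ell\oplus(\cdot)$ extends to a Lagrangian fixed pointwise and \cref{lem:three-qudit-large-image-obstruction} applies, or to use the $H$-component of the cocycle identity to force $\beta$ to vanish on a subspace of dimension $\ge6$, contradicting $\operatorname{rank}\beta=1$. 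I expect this case analysis, especially the case where $\rho(E_U)$ is not contained in a single $A(L_0)$, to require the most care.
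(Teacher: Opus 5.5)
Your setup matches the paper: the decomposition $W\oplus H\oplus W^*$, the normal form $T(g,A,z)$ with its product law, and the bound $\dim\rho(E_U)\in\{2,3\}$ from the isotropy of the $A$-parts of $E_0:=\ker\rho$. Your plan for $\dim\rho(E_U)=3$ also matches, and the direct route works there using only $A$-components. Kernel elements satisfy $Ag_C=A$, which kills their $L_0$-block. Lifts satisfy $a_CD=a_DC$, and the invertible commutator $CD-DC$ then forces $a_C=0$.

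The genuine gap is the case $\dim\rho(E_U)=2$. You leave it as an unexecuted classification of rank-two elementary abelian subgroups of a Sylow subgroup of $\operatorname{Sp}(\mathbb F_p^4)$, so as written it proves nothing. Your fallbacks would not close it either. A common fixed line $\ell\subseteq H$ only gives the two-dimensional isotropic space $W\oplus\ell$, which is not Lagrangian. And since $\operatorname{rank}\beta=1$, the cocycle route would have to force $\beta\equiv 0$ outright, not just lower its rank.

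The missing observation is that commutativity with $E_0$ makes $\rho(E_U)$ act trivially on $M=\mathcal A$, not merely preserve it. Take $T(g,B,z)\in E_U$ and $T(I,A,y)\in E_0$. The $A$-components of the two products are $A+B$ and $B+Ag$, so $Ag=A$, and hence $gJA^T=Jg^{-T}A^T=JA^T$. Thus every $g\in\rho(E_U)$ fixes $L_0:=\{JA^T:A\in\mathcal A\}$ pointwise. This $L_0$ is a Lagrangian plane of $H$, because $\dim\mathcal A=2$ and $(JA^T)^TJ(JD^T)=AJD^T=0$.

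The scalar components of the same two products give $BJA^T=AgJB^T=AJB^T=-BJA^T$. Since $p$ is odd, $BJA^T=0$, so the $A$-component of every element of $E_U$ vanishes on $L_0$. Therefore $E_U$ fixes the Lagrangian $W\oplus L_0\subseteq\mathbb F_p^6$ pointwise, contradicting \cref{lem:three-qudit-large-image-obstruction}. This is how the paper disposes of this case in a few lines. It is not the main obstacle and needs no case analysis at all.
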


\cref{last} rules out the $\operatorname{codim}K_U=6$ case. 

\begin{prop}\label{last}
Let \(p\) be an odd prime and let \(U\in\mathcal C^{(3)}\) be a three-qudit gate. Set \(K_U:=\ker S_U\) and \(E_U:=\operatorname{im}S_U\). Then \(\operatorname{codim}K_U\neq6\).
\end{prop}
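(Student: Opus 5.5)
The plan is to argue by contradiction, as in \cref{codim4-two-qubit-rule}, with the four-qudit group replaced by the six-dimensional one. Suppose \(\operatorname{codim}K_U=6\). Then \(K_U=0\), so \(S_U\) is injective and \(E_U\cong\mathbb F_p^6/K_U\cong\mathbb F_p^6\) is an elementary abelian \(p\)-subgroup of \(\operatorname{Sp}(\mathbb F_p^6)\) of order \(p^6\).

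The first step is to place \(E_U\) inside a Sylow \(p\)-subgroup. Since \(|\operatorname{Sp}(\mathbb F_p^6)|=p^9(p^2-1)(p^4-1)(p^6-1)\), any Sylow \(p\)-subgroup has order \(p^9\). Choose a Sylow \(p\)-subgroup \(P\) containing \(E_U\). Every Sylow \(p\)-subgroup is the stabilizer of some complete isotropic flag \(0\subset W_1\subset W_2\subset W_3\), acting trivially on the successive quotients. This is the setting of \cref{wong}, and I would conjugate exactly as in step (1) of \cref{codim3-two-qubit-rule} so that \(P\) is the standard one.

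The second step applies \cref{wong}. It says \(A(W_3)\) is the unique abelian subgroup of largest order in \(P\). Computing as in \cref{lem:pointwise-lagrangian-coordinate-setup}, \(A(W_3)=\{u_C:C\in\operatorname{Sym}_3(\mathbb F_p)\}\), which has order \(p^6\). Since \(E_U\subseteq P\) is abelian of order \(p^6\), uniqueness forces \(E_U=A(W_3)\). In particular, \(E_U\) fixes the Lagrangian subspace \(W_3\) pointwise.

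The conclusion then follows from \cref{lem:three-qudit-large-image-obstruction}: since \(\dim E_U=6\geq 4\), \(E_U\) cannot fix any Lagrangian subspace pointwise, which is a contradiction.

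The point needing the most care is the use of \cref{wong}. One must verify that the largest abelian subgroups of \(P\) have order exactly \(p^6\), which follows from uniqueness together with the explicit size \(|A(W_3)|=p^6\). One must also check that an abelian subgroup of that maximal order equals \(A(W_3)\), which is immediate from uniqueness. As a consistency check, the \(d=6\) case inside \cref{lem:three-qudit-large-image-obstruction} does the remaining work. There \(C\) is a linear isomorphism \(\mathbb F_p^6\to\operatorname{Sym}_3(\mathbb F_p)\), the inequality \cref{eqn:large-image-dimension-inequality} forces \(T=0\), hence \(\beta=0\), and the commuting-lift argument contradicts non-degeneracy of the symplectic form.
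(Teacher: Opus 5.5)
Your proposal is correct and follows the paper's proof essentially step for step. You use $K_U=0$ to get $|E_U|=p^6$, place $E_U$ in a Sylow $p$-subgroup of order $p^9$ attached to a complete isotropic flag, use \cref{wong} to force $E_U=A(W_3)$, which fixes the Lagrangian $W_3$ pointwise, and conclude with \cref{lem:three-qudit-large-image-obstruction} since $\dim E_U=6\geq 4$. The conjugation to the standard flag is optional, because both \cref{wong} and the obstruction lemma apply to any complete isotropic flag and any Lagrangian subspace.
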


\begin{proof}
In this case \(K_U=0\), so \(E_U\cong\mathbb F_p^6\). In particular,
$\dim_{\mathbb F_p}E_U=6$
and \(|E_U|=p^6\).  Choose a symplectic basis
$e_1,e_2,e_3,f_1,f_2,f_3$
of \(\mathbb F_p^6\), and define $W_i:=\operatorname{span}\{e_1,\ldots,e_i\}$ for 
$0\leq i\leq 3.$
Then
\begin{equation}
0=W_0\subset W_1\subset W_2\subset W_3
\end{equation}
is a complete isotropic flag, with \(W_3\) Lagrangian. Relative to the decomposition
\begin{equation}
\mathbb F_p^6=W_3\oplus\operatorname{span}\{f_1,f_2,f_3\},
\end{equation}
the associated Sylow \(p\)-subgroup is
\begin{equation}\label{hi}
P:=\left\{\begin{pmatrix}R&RS\\0&R^{-T}\end{pmatrix}:R\in U_3(\mathbb F_p),\ S\in\operatorname{Sym}_3(\mathbb F_p)\right\},
\end{equation}
where
\begin{equation}
U_3(\mathbb F_p):=\left\{\begin{pmatrix}1&a&b\\0&1&c\\0&0&1\end{pmatrix}:a,b,c\in\mathbb F_p\right\}.
\end{equation}
If \(R \in U_3(\mathbb F_p)\), then $R$ preserves each \(W_i\) and acts trivially on every successive quotient \(W_i/W_{i-1}\). Moreover, preservation of the symplectic form by a block upper-triangular matrix
\begin{equation}
\begin{pmatrix}R&B\\0&D\end{pmatrix}
\end{equation}
forces \(D=R^{-T}\) and \(R^{-1}B\) to be symmetric. Hence, \(B=RS\) for a unique \(S\in\operatorname{Sym}_3(\mathbb F_p)\), giving \cref{hi}.
There are three free parameters in \(R\in U_3(\mathbb F_p)\) and six free parameters in \(S\in\operatorname{Sym}_3(\mathbb F_p)\). Therefore,
$|P|=p^{3+6}=p^9.$
On the other hand,
\begin{equation}
|\operatorname{Sp}_6(\mathbb F_p)|=p^9(p^2-1)(p^4-1)(p^6-1).
\end{equation}
Hence,  \(P\) is a Sylow \(p\)-subgroup of \(\operatorname{Sp}_6(\mathbb F_p)\). In fact, \(P=P(W_0,W_1,W_2,W_3)\) in the notation of \cref{wong}.
Since \(E_U\) is a \(p\)-subgroup of \(\operatorname{Sp}(\mathbb F_p^6)\), it is contained in some Sylow \(p\)-subgroup. Since all Sylow \(p\)-subgroups are conjugate, after conjugating inside \(\operatorname{Sp}(\mathbb F_p^6)\), we may assume without loss of generality that
$E_U\subseteq P.$
By \cref{wong}, \(A(W_3)\) unique abelian subgroup of largest order in \(P\). In fact, $A(W_3)$ can be identified with 
\begin{equation}
A(W_3)=\left\{\begin{pmatrix}I&S\\0&I\end{pmatrix}:S\in\operatorname{Sym}_3(\mathbb F_p)\right\},
\end{equation}
Since $|A(W_3)|=|E_U|=p^6$, we have $E_U=A(W_3)$. Therefore, \(E_U\) fixes the Lagrangian \(W_3\) pointwise.
Since \(\dim E_U=6\geq4\), this contradicts \cref{lem:three-qudit-large-image-obstruction}. 
\end{proof}

We can now prove \cref{three-qudit-c3-semi-clifford}.

\begin{proof}
This follows from \cref{codim2-gen}, \cref{lem:three-qudit-codim-three}, \cref{lem:three-qudit-codim-four-impossible}, \cref{lem:three-qudit-codim-five-impossible} and \cref{last}. 
\end{proof}


\addtocontents{toc}{\protect\setcounter{tocdepth}{1}}
\appendix
\markboth{APPENDIX}{APPENDIX}
\section{Facts about Weyl Operators}\label[appendix]{appendix:support}
In the main text, we have repeatedly used the fact that if \(\partial_a U=U(1)I\) for every \(a\in V_n\), then \(U\) is proportional to a single Weyl operator. We prove a more general version of this statement in \cref{lem:single-Weyl-support}.

\begin{lem}\label{lem:single-Weyl-support}
Let \(U\in \mathcal U(\mathcal H_n)\) such that \(\partial_a U\in U(1)I\) for every \(a\in V_n\). Equivalently, suppose that there exists a phase function \(\lambda:V_n\to U(1)\) such that \(\tau_a(U)=\lambda(a)U\) for every \(a\in V_n\). Then there exist \(v\in V_n\) and \(\zeta\in U(1)\) such that
$U=\zeta w(v)$. 
\end{lem}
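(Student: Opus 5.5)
The plan is to expand \(U\) in the normalized Weyl basis and show that the eigenvalue condition forces the expansion to be supported on a single label. Since the normalized Weyl operators form an orthonormal basis of \(\operatorname{End}(\mathcal H_n)\) for the normalized Hilbert--Schmidt inner product, write
\[
U=\sum_{b\in V_n}\Xi_U(b)\,w(b),\qquad \Xi_U(b)=p^{-n}\operatorname{Tr}(w(b)^*U).
\]
The hypothesis \(\partial_aU\in U(1)I\) means \(\tau_a(U)U^*=\lambda(a)I\), which is exactly \(\tau_a(U)=\lambda(a)U\). This handles the stated equivalence of the two formulations.

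Apply \(\tau_a\) termwise and use \cref{tau-on-Weyl}, namely \(\tau_a(w(b))=\omega^{[a,b]}w(b)\) (and its analogue with \((-1)^{[a,b]}\) from \cref{comm-weyl-qubit} when \(p=2\)). This gives
\[
\sum_b \omega^{[a,b]}\Xi_U(b)w(b)=\lambda(a)\sum_b\Xi_U(b)w(b).
\]
By uniqueness of the expansion, \(\Xi_U(b)\bigl(\omega^{[a,b]}-\lambda(a)\bigr)=0\) for all \(a,b\in V_n\). Because \(U\neq0\), some \(v\) has \(\Xi_U(v)\neq0\), and then \(\lambda(a)=\omega^{[a,v]}\) for every \(a\). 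Now take any \(b\) with \(\Xi_U(b)\neq0\). The same relation gives \(\omega^{[a,b]}=\omega^{[a,v]}\), so \([a,b-v]=0\) for all \(a\) (working mod \(p\), or mod \(2\) when \(p=2\)). Non-degeneracy of the symplectic form then forces \(b=v\). Hence \(U=\Xi_U(v)w(v)\).

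Finally, set \(\zeta:=\Xi_U(v)\). Since \(U\) and \(w(v)\) are both unitary, \(|\zeta|^2I=UU^*=I\), so \(\zeta\in U(1)\) and \(U=\zeta w(v)\).

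The one point that needs care is the step "\(\omega^{[a,b]}=\omega^{[a,v]}\) for all \(a\) implies \(b=v\)". It relies on \(\omega\) being a primitive \(p\)-th root of unity, together with non-degeneracy of \([\cdot,\cdot]\); the latter is used in the same way in \cref{level-one-computation}. The rest is linear-algebraic bookkeeping.
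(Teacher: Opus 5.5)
Your proposal is correct and follows the paper's proof essentially step for step. You expand $U$ in the Weyl basis, apply $\tau_a$, and compare coefficients to get $c_v\bigl(\omega^{[a,v]}-\lambda(a)\bigr)=0$. Non-degeneracy of the symplectic form then forces the support onto a single label, and unitarity gives $|\zeta|=1$.
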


\begin{proof}
Expand \(U\) in the Weyl basis as \(U=\sum_{v\in V_n} c_v w(v)\). Applying \(\tau_a\) and using \cref{tau-on-Weyl}, we obtain
\begin{equation}
\sum_{v\in V_n} \lambda(a)c_v w(v)=\lambda(a)U=\tau_a(U)=\sum_{v\in V_n} c_v \tau_a(w(v))=\sum_{v\in V_n} c_v \omega^{[a,v]}w(v).
\end{equation}
Uniqueness of the Weyl expansion implies that \(c_v\omega^{[a,v]}=\lambda(a)c_v\) for all \(a,v\in V_n\). Hence, if \(c_v\neq 0\), then \(\omega^{[a,v]}=\lambda(a)\) for every \(a\in V_n\). We claim that at most one coefficient \(c_v\) can be non-zero. Indeed, if \(c_v, ,c_{v'}\neq 0\), then
\begin{equation}
\omega^{[a,v]}=\lambda(a)=\omega^{[a,v']}
\end{equation}
for every \(a\in V_n\). Therefore, \(\omega^{[a,v-v']}=1\) for every \(a\in V_n\). Since \(\omega\) is a primitive \(p\)-th root of unity, we must have \([a,v-v']=0\) for every \(a\in V_n\). The non-degeneracy of the symplectic form forces \(v-v'=0\), and hence \(v=v'\). Thus, \(U\) has support on a single Weyl basis element. Since \(U\neq 0\), there exist \(v\in V_n\) and \(\zeta\in\mathbb C\) such that \(U=\zeta w(v)\). Finally, since \(U\) and \(w(v)\) are unitary, \(|\zeta|=1\). Hence, \(\zeta\in U(1)\), as required.
\end{proof}

\cref{cor:tau-invariants-are-scalars} records a basic consequence of \cref{lem:single-Weyl-support}. Namely, if \(\partial_a U\in U(1)I\) for every \(a\in V_n\), then \(U\) is proportional to a single Weyl operator. As noted above, we have repeatedly used this fact in the main text.

\begin{cor}\label{cor:tau-invariants-are-scalars}
Let \(U\in \mathcal U(\mathcal H_n)\). If \(\partial_a U=I\) for every \(a\in V_n\), then \(U\in U(1)I\).
\end{cor}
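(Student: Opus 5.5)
This is an immediate consequence of \cref{lem:single-Weyl-support}, read as the special case of trivial phases. Suppose \(\partial_aU=I\) for every \(a\in V_n\). Then \(\partial_aU\in U(1)I\) for every \(a\), and \(\tau_a(U)=\lambda(a)U\) holds with the constant phase function \(\lambda\equiv 1\). So the hypotheses of \cref{lem:single-Weyl-support} are met, and we may write \(U=\zeta w(v)\) for some \(\zeta\in U(1)\) and \(v\in V_n\).

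It remains to show that \(v=0\). Apply the invariance \(\tau_a(U)=U\) to this form and use \cref{tau-on-Weyl}:
\begin{equation}
\zeta\,\omega^{[a,v]}w(v)=\tau_a(\zeta w(v))=\zeta w(v).
\end{equation}
Since \(w(v)\) is invertible, this gives \(\omega^{[a,v]}=1\) for all \(a\in V_n\). Because \(\omega\) is a primitive \(p\)-th root of unity, \([a,v]=0\) for every \(a\). Non-degeneracy of the symplectic form then forces \(v=0\). Hence \(U=\zeta w(0)=\zeta I\in U(1)I\).

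Alternatively, one can read this off the proof of \cref{lem:single-Weyl-support}: with \(\lambda\equiv1\), a coefficient \(c_v\neq0\) requires \(\omega^{[a,v]}=1\) for all \(a\), which leaves only \(v=0\).

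The argument needs only the normalization \(w(0)=I\), which holds by \cref{modify-ops} since \(Z^0X^0=I\) and the phase exponent vanishes. It also works verbatim for \(p=2\), where the sign \((-1)^{[a,v]}\) replaces \(\omega^{[a,v]}\). I see no real obstacle here.
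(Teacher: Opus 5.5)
Your proof is correct and follows the paper's argument essentially verbatim. You invoke \cref{lem:single-Weyl-support} with the constant phase function $\lambda\equiv 1$ to write $U=\zeta w(v)$, apply $\tau_a$ via \cref{tau-on-Weyl} to get $\omega^{[a,v]}=1$ for all $a\in V_n$, and conclude $v=0$ from non-degeneracy of the symplectic form.
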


\begin{proof}
Since \(\partial_a U=\tau_a(U)U^*\), the assumption \(\partial_a U=I\) implies that \(\tau_a(U)=U\) for every \(a\in V_n\). Thus, \(U\) satisfies the hypothesis of \cref{lem:single-Weyl-support} with the constant phase function \(\lambda(a)=1\). Hence, there exist \(v\in V_n\) and \(\zeta\in U(1)\) such that $U=\zeta w(v)$. Applying \(\tau_a\) and using \cref{tau-on-Weyl}, we get
\begin{equation}
\zeta w(v)=U=\tau_a(U)=\zeta \omega^{[a,v]}w(v) \quad \Longrightarrow \quad
\zeta (\omega^{[a,v]}-1)w(v) = 0.
\end{equation}
for every $a \in V_n$. It follows that $\omega^{[a,v]}=1$ for every \(a\in V_n\). As in \cref{lem:single-Weyl-support}, we conclude that \(v=0\). Therefore, \(U=\zeta w(0)=\zeta I\), and hence \(U\in U(1)I\).
\end{proof}

\cref{lem:Weyl-label-phase-uniqueness} and \cref{lem:weyl-scalar-label} record two further basic facts about Weyl operators that are invoked in the main text.

\begin{lem}\label{lem:Weyl-label-phase-uniqueness}
Let \(x,y\in V_n\) and let \(\lambda,\lambda'\in U(1)\). If $\lambda w(x)=\lambda' w(y)$, then \(x=y\) and \(\lambda=\lambda'\).
\end{lem}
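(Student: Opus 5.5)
The plan is to reduce the claim to the orthonormality of the normalized Weyl operators in \cref{weyl-inner}. Suppose \(\lambda w(x)=\lambda' w(y)\). Multiply on the left by \(w(x)^*\) and take the normalized trace \(p^{-n}\operatorname{Tr}(\cdot)\). The left side becomes \(\lambda\, p^{-n}\operatorname{Tr}(w(x)^*w(x))=\lambda\), because \(w(x)\) is unitary. The right side becomes \(\lambda'\, p^{-n}\operatorname{Tr}(w(x)^*w(y))\), which by \cref{weyl-inner} equals \(\lambda'\) if \(x=y\) and \(0\) otherwise.

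Since \(\lambda\in U(1)\) is non-zero, the second alternative is impossible. Hence \(x=y\). Substituting this back gives \(\lambda=\lambda'\), which completes the argument.

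The case \(p=2\) needs no separate treatment. \cref{weyl-inner} is stated for the normalized Weyl operators in general, and its proof only uses that the \(Z^tX^s\) are orthogonal up to phase. Concretely, \(\operatorname{Tr}(Z^tX^s)=0\) unless \(s=t=0\): for \(s\neq0\) the operator \(X^s\) moves every basis vector, and for \(s=0,\ t\neq0\) the character sum \(\sum_x\omega^{t\cdot x}\) vanishes.

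The only step requiring any care is ensuring that orthonormality is genuinely available rather than assumed circularly. If \cref{weyl-inner} were considered unproved, I would verify it directly from the two trace computations above. Nothing else in the argument is an obstacle.
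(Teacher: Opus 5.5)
Your argument is correct and is essentially the paper's proof: left-multiply by $w(x)^*$, take the trace, and use the orthonormality relation \cref{weyl-inner} to force $x=y$, after which $\lambda=\lambda'$ follows. The only cosmetic difference is that the paper first rewrites $w(x)^*w(y)$ as a phase times $w(y-x)$ and then uses $\operatorname{Tr}(w(y-x))=0$, whereas you apply \cref{weyl-inner} to the pair $(x,y)$ directly.
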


\begin{proof}
Multiplying the equality \(\lambda w(x)=\lambda' w(y)\) on the left by \(w(x)^*\), we obtain $\lambda I=\lambda' w(x)^*w(y)$.  Since \(w(x)^*w(y)\) is a scalar multiple of \(w(y-x)\), there exists \(\mu\in U(1)\) such that $\lambda I=\lambda'\mu w(y-x).$
We claim that \(x=y\). Indeed, if \(x\neq y\), then \(y-x\neq 0\), and so \(\operatorname{Tr}(w(y-x))=0\) by \cref{weyl-inner}. Taking traces in the preceding equality gives
\begin{equation}
0 \neq \lambda \operatorname{Tr}(I)=\lambda'\mu \operatorname{Tr}(w(y-x))=0.
\end{equation}
This is a contradiction. Therefore, \(x=y\). Substituting \(x=y\) into \(\lambda w(x)=\lambda' w(y)\), we obtain $\lambda w(x)=\lambda' w(x)$.  Since \(w(x)\) is unitary, it is invertible. Hence, \(\lambda=\lambda'\).
\end{proof}

\begin{cor}\label{lem:weyl-scalar-label}
Let \(a\in V_n\). If \(w(a)\in U(1)I\), then \(a=0\).
\end{cor}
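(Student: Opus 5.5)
The plan is to reduce the statement to \cref{lem:Weyl-label-phase-uniqueness}, which is stated just before it. Suppose \(w(a)=\lambda I\) for some \(\lambda\in U(1)\). Since \(w(0)=I\) by \cref{modify-ops}, this hypothesis can be rewritten as
\begin{equation}
1\cdot w(a)=\lambda\, w(0).
\end{equation}
This is exactly an equality of the form \(\lambda_1 w(x)=\lambda_2 w(y)\) with \(x=a\), \(y=0\), \(\lambda_1=1\) and \(\lambda_2=\lambda\). Applying \cref{lem:Weyl-label-phase-uniqueness} then gives \(a=0\), and also \(\lambda=1\).

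The only detail to check is that \(w(0)=I\). For odd \(p\), the prefactor \(\omega^{-2^{-1}\cdot 0}\) equals \(1\) and \(Z^0X^0=I\). For \(p=2\), the prefactor is \(i^0=1\).

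There is also a direct route that avoids the lemma and serves as a sanity check. If \(a\neq0\), then \cref{weyl-inner} gives \(\operatorname{Tr}(w(a))=\operatorname{Tr}(w(0)^*w(a))=0\). On the other hand, \(\operatorname{Tr}(\lambda I)=\lambda p^n\neq0\), which is a contradiction.

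I expect no real obstacle here. The only point requiring any attention is confirming the normalization \(w(0)=I\), so that \cref{weyl-inner} and \cref{lem:Weyl-label-phase-uniqueness} apply verbatim.
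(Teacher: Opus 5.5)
Your proposal is correct and matches the paper's proof: rewrite \(w(a)=\lambda I\) as \(w(a)=\lambda w(0)\) using \(w(0)=I\), then apply the label-and-phase uniqueness lemma. Your trace-based sanity check is essentially the argument inside that lemma's own proof.
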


\begin{proof}
Since \(w(a)\in U(1)I\), there exists \(\lambda\in U(1)\) such that $w(a)=\lambda I.$ Since \(w(0)=I\), this can be rewritten as $w(a)=\lambda w(0)$. By \cref{lem:Weyl-label-phase-uniqueness}, we get \(a=0\).
\end{proof}

\section{Cohomological Results} \label[appendix]{appendix:cohom-facts}
In \cref{appendix:cohom-facts}, we discuss and prove various cohomological results invoked in \cref{third-clifford}. We use standard results from group cohomology, singular cohomology, and algebraic topology, and refer the reader to~\cite{weibel1994introduction,hatcher-alg-top} for further background. The main purpose of \cref{appendix:cohom-facts} is to show that, for a finite abelian group, a \(U(1)\)-valued \(H^2\)-class is completely determined by the commutator bicharacter of its associated central extension. This allows us to convert the abstract \(H^2\)-obstructions appearing in \cref{third-clifford} into explicit equations that can be checked on a basis. We establish this result in several steps. \cref{commutator-bicharacter-well-defined} shows that a normalized \(2\)-cocycle \(R\) determines an alternating bicharacter which depends only on the cohomology class \([R]\). 

\begin{lem}\label{commutator-bicharacter-well-defined}
Let \(A\) be an abelian group, and let \(U(1)\) be equipped with the trivial \(A\)-action. If \(R\in Z^2(A,U(1))\) is a normalized \(2\)-cocycle, then the map
\begin{equation}
\operatorname{comm}_R(a,b):=\frac{R(a,b)}{R(b,a)}
\end{equation}
is an alternating bicharacter on \(A\), and hence defines an element of \(\operatorname{Hom}_{\mathbb Z}(\Lambda^2 A,U(1))\). Moreover, if \(R\) and \(R'\) are cohomologous normalized \(2\)-cocycles, then \(\operatorname{comm}_R=\operatorname{comm}_{R'}\).
\end{lem}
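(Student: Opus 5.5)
The plan is to prove the lemma in three steps: use the central extension attached to \(R\) to get a commutator formula, derive the bicharacter and alternating properties from that formula, and then check invariance under coboundaries. Throughout, \(A\) is written additively.

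\textbf{Step 1 (central extension).} Form the group \(\widetilde A:=U(1)\times A\) with multiplication \((\lambda,a)(\mu,b):=(\lambda\mu R(a,b),a+b)\). Associativity is exactly the \(2\)-cocycle identity, and normalization makes \((1,0)\) the identity. Put \(s(a):=(1,a)\). Then
\begin{equation}
s(a)s(b)=R(a,b)s(a+b)=R(a,b)s(b+a)=\frac{R(a,b)}{R(b,a)}s(b)s(a).
\end{equation}
This uses only that \(A\) is abelian and that \(U(1)\) is central. Hence \(\operatorname{comm}_R(a,b)=s(a)s(b)s(a)^{-1}s(b)^{-1}\), where \(U(1)\) is identified with its central copy in \(\widetilde A\).

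\textbf{Step 2 (bicharacter and alternating).} Since every commutator in \(\widetilde A\) is central, we can use the standard identity
\begin{equation}
[xy,z]=x[y,z]x^{-1}[x,z]=[y,z][x,z].
\end{equation}
Take \(x=s(a)\), \(y=s(a')\), \(z=s(b)\). Because \(s(a)s(a')\) differs from \(s(a+a')\) by the central scalar \(R(a,a')\), and central factors drop out of commutators, this gives \(\operatorname{comm}_R(a+a',b)=\operatorname{comm}_R(a,b)\operatorname{comm}_R(a',b)\). Multiplicativity in the second slot follows in the same way, or from the antisymmetry \(\operatorname{comm}_R(b,a)=\operatorname{comm}_R(a,b)^{-1}\), which is immediate from the definition. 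The alternating property \(\operatorname{comm}_R(a,a)=R(a,a)/R(a,a)=1\) is also immediate. A biadditive map that is alternating factors through \(\Lambda^2A\), so \(\operatorname{comm}_R\in\operatorname{Hom}_{\mathbb Z}(\Lambda^2A,U(1))\).

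The only step that needs care is the biadditivity in Step 2. If one prefers to avoid the extension group, it can be checked directly from the cocycle identity: write out the cocycle identity at the triples \((a,a',b)\), \((a,b,a')\) and \((b,a,a')\), and use commutativity to cancel terms. The extension argument is cleaner, so I would use that.

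\textbf{Step 3 (invariance).} Suppose \(R'=R\cdot\delta\phi\) with \((\delta\phi)(a,b)=\phi(a)\phi(b)/\phi(a+b)\). Because \(A\) is abelian, \(\delta\phi\) is symmetric in \(a\) and \(b\). It therefore cancels in the ratio, and \(\operatorname{comm}_{R'}=\operatorname{comm}_R\).
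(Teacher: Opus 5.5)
Your proof is correct and follows essentially the same route as the paper. Both arguments build the central extension \(U(1)\times A\) twisted by \(R\), identify \(R(a,b)/R(b,a)\) with the commutator of the standard lifts, get bimultiplicativity from the identity \([xy,z]=[x,z][y,z]\) for central commutators, and get coboundary invariance from the symmetry of \(\delta\phi\). Your commutator computation via \(s(a)s(b)=\frac{R(a,b)}{R(b,a)}s(b)s(a)\), and your use of antisymmetry for the second slot, are just shorter versions of the paper's explicit inverse-formula calculation.
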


\begin{proof}
Recall that a cohomology class \([R]\in H^2(A,U(1))\) determines an equivalence class of central extensions of \(A\) by \(U(1)\). If \(R\) is a normalized cocycle representative, then \(R(0,a)=R(a,0)=1\) for all \(a\in A\)\footnote{It can be shown that every \(2\)-cocycle is cohomologous to a normalized \(2\)-cocycle.}. In this case, an explicit central extension is given by
\begin{equation} 
1\longrightarrow U(1)\longrightarrow E_R\longrightarrow A\longrightarrow 1.
\end{equation}
Here \(E_R=U(1)\times A\) as a set, the map $U(1) \to E_R$ is the inclusion map and $E_R \to A$ is the projection map. The multiplication law on $E_R$ is
\begin{equation} 
(z,a)(w,b)=(zwR(a,b),a+b).
\end{equation}
The cocycle identity gives associativity, and the normalization of \(R\) implies that \((1,0)\) is the identity element and that \(U(1) \times \{0\}\) is contained in the center of $E_R$. Thus, this defines a central extension of \(A\) by \(U(1)\). Since \(A\) is abelian, every commutator in \(E_R\) maps trivially to \(A\). Therefore, every commutator in \(E_R\) is in the kernel of the map $E_R \to A$, which is precisely \(U(1)\times\{0\}\). Therefore, for \(a,b\in A\), choose lifts \(\widetilde a,\widetilde b\in E_R\) and define
\begin{equation}
[\widetilde a,\widetilde b]\in U(1)\times\{0\}\cong U(1).
\end{equation}
This is independent of the chosen lifts. Indeed, if \(\widetilde a'\) and \(\widetilde b'\) are other lifts, then \(\widetilde a'=u\widetilde a\) and \(\widetilde b'=v\widetilde b\) for some \(u,v\in U(1)\times\{0\}\). Since \(U(1)\times\{0\}\) is contained in the center of \(E_R\), the elements \(u\) and \(v\) commute with all elements of \(E_R\). Indeed, we have 
\begin{align}
[u\widetilde a,v\widetilde b]
&= u\widetilde a v\widetilde b \widetilde a^{-1}u^{-1}\widetilde b^{-1}v^{-1} \\
&= \widetilde a\widetilde b\widetilde a^{-1}\widetilde b^{-1} \\
&= [\widetilde a,\widetilde b].
\end{align}
Thus, the commutator is independent of the chosen lifts and the central extension \(E_R\) defines a well-defined commutator pairing. 
We now compute this commutator pairing explicitly using the standard lift \((1,a)\) of each element \(a\in A\). For $a,b \in A$, we have
\begin{align}
(1,a)(1,b) &= (R(a,b),a+b), \\
(1,b)(1,a) &= (R(b,a),a+b).
\end{align}
We compute the commutator directly. Since $(1,x)^{-1}=(R(x,-x)^{-1},-x)$\footnote{
This follows from the cocycle identity and normalization applied to the triple \((x,-x,x)\).} for all $x \in A$, we have
\begin{align}
[(1,a),(1,b)]
&=(1,a)(1,b)(1,a)^{-1}(1,b)^{-1}\\
&=(1,a)(1,b)(R(a,-a)^{-1},-a)(R(b,-b)^{-1},-b)\\
&=(R(a,b),a+b)(R(a,-a)^{-1},-a)(R(b,-b)^{-1},-b)\\
&=(R(a,b)R(a,-a)^{-1}R(a+b,-a),b)(R(b,-b)^{-1},-b)\\
&=(R(a,b)R(a,-a)^{-1}R(a+b,-a),0).
\end{align}
Noting that $R(a,-a)^{-1}R(a+b,-a)=R(b,a)^{-1}$\footnote{This follows from the cocycle identity and normalization applied to the triple \((b,a,-a)\).}, we have
\begin{equation}
[(1,a),(1,b)]=\left(\frac{R(a,b)}{R(b,a)},0\right).
\end{equation}
for all $a,b \in A$. Hence, the commutator pairing is given by the map
\begin{align}
A\times A &\longrightarrow U(1) \\
(a,b) &\longmapsto \frac{R(a,b)}{R(b,a)}.
\end{align}
This is exactly the map \(\operatorname{comm}_R\). It remains to check that the commutator pairing is an alternating map. For any \(x,x',y\in E_R\), we have 
\begin{align}
[xx',y]
&=xx'y(xx')^{-1}y^{-1}  \\
&=xx'yx'^{-1}x^{-1}y^{-1} \\
&=x[x',y]yx'x'^{-1}x^{-1}y^{-1} \\
&=x[x',y]yx^{-1}y^{-1} \\
&=[x',y]xyx^{-1}y^{-1}=[x',y][x,y]=[x,y][x',y].
\end{align}
Similarly, one has \( [x,yy']=[x,y][x,y'] \). Together with $\operatorname{comm}_R(a,a)=R(a,a)/R(a,a)=1$, these identities show that the commutator pairing is an alternating bicharacter. By the universal property of alternating maps, every alternating bicharacter $A\times A\longrightarrow U(1)$ determines a unique homomorphism $\Lambda^2 A\longrightarrow U(1).$
Therefore \(\operatorname{comm}_R\) defines an element of $\operatorname{Hom}_{\mathbb Z}(\Lambda^2 A,U(1)).$ Suppose that \([R]=[R']\). Then $R'=R\,\delta\alpha$ for some \(1\)-cochain \(\alpha:A\to U(1)\), where
\begin{equation}
(\delta\alpha)(a,b)
=\frac{\alpha(a)\alpha(b)}{\alpha(a+b)}
=\frac{\alpha(b)\alpha(a)}{\alpha(b+a)}=(\delta\alpha)(b,a).
\end{equation}
Therefore, we have
\begin{equation}
\operatorname{comm}_{R'}(a,b)=\frac{R'(a,b)}{R'(b,a)}=\frac{R(a,b)(\delta\alpha)(a,b)}{R(b,a)(\delta\alpha)(b,a)}=\frac{R(a,b)}{R(b,a)}=\operatorname{comm}_R(a,b).
\end{equation}
This completes the proof.
\end{proof}

\cref{commutator-bicharacter-well-defined} shows that the assignment \(R\mapsto \operatorname{comm}_R\) induces a homomorphism
\begin{align}\label{map-to-show-iso}
H^2(A,U(1)) &\longrightarrow \operatorname{Hom}_{\mathbb Z}(\Lambda^2 A,U(1)), \\
[R] &\longmapsto \left(a\wedge b\longmapsto \frac{R(a,b)}{R(b,a)}\right).
\end{align}
We want to show that this homomorphism is an isomorphism when \(A\) is a finite abelian group. In \cref{h2-finite-abelian-alt2}, we use the universal coefficient theorem to obtain an abstract isomorphism
\begin{equation}
H^2(A,U(1)) \cong \operatorname{Hom}_{\mathbb Z}(\Lambda^2A,U(1)).
\end{equation}
We then show in \cref{comm-map-iso} that the specific homomorphism in \cref{map-to-show-iso} realizes this isomorphism.

\begin{lem}\label{h2-finite-abelian-alt2}
Let \(A\) be a finite abelian group. Then \(H^2(A,U(1)) \cong \operatorname{Hom}_{\mathbb Z}(\Lambda^2 A,U(1))\), and both groups are finite.
\end{lem}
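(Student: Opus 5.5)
The plan is to obtain the isomorphism from the universal coefficient theorem applied to the classifying space \(BA\). Group cohomology with trivial coefficients agrees with singular cohomology of \(BA\), and integral group homology agrees with \(H_*(BA;\mathbb Z)\). The universal coefficient theorem therefore gives a short exact sequence
\begin{equation}
0\longrightarrow \operatorname{Ext}^1_{\mathbb Z}(H_1(A,\mathbb Z),U(1))\longrightarrow H^2(A,U(1))\longrightarrow \operatorname{Hom}_{\mathbb Z}(H_2(A,\mathbb Z),U(1))\longrightarrow 0.
\end{equation}
The key input is that \(U(1)\) is a divisible abelian group, hence an injective \(\mathbb Z\)-module. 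Consequently \(\operatorname{Ext}^1_{\mathbb Z}(-,U(1))=0\), and we get
\begin{equation}
H^2(A,U(1))\cong\operatorname{Hom}_{\mathbb Z}(H_2(A,\mathbb Z),U(1)).
\end{equation}

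The second step identifies \(H_2(A,\mathbb Z)\cong\Lambda^2A\) for a finite abelian group \(A\). This is the step I expect to need the most care, and I would handle it in one of two ways. The first option is to cite the standard result (Miller's theorem, or the Hopf formula specialized to abelian groups), which says \(H_2(A,\mathbb Z)\cong\Lambda^2_{\mathbb Z}A\) for any abelian \(A\). The second option is to prove it directly. Write \(A\cong\bigoplus_i\mathbb Z/m_i\), which is possible by the structure theorem. Use \(H_1(\mathbb Z/m)=\mathbb Z/m\) and \(H_2(\mathbb Z/m)=0\), read off from the periodic resolution of \(\mathbb Z/m\). Then apply the Künneth formula inductively. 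For two factors it gives
\begin{equation}
H_2(\mathbb Z/m\oplus\mathbb Z/m')\cong \mathbb Z/m\otimes\mathbb Z/m'\cong\mathbb Z/\gcd(m,m'),
\end{equation}
together with Tor terms that only affect degree \(3\). Inductively, \(H_2(A)\cong\bigoplus_{i<j}\mathbb Z/m_i\otimes\mathbb Z/m_j\). This matches \(\Lambda^2A\) computed from the same decomposition, since \(\Lambda^2\) of a cyclic group vanishes and \(\Lambda^2(X\oplus Y)\cong\Lambda^2X\oplus(X\otimes Y)\oplus\Lambda^2Y\). For this lemma only an abstract isomorphism is needed, so no naturality question arises.

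The final step is finiteness. \(\Lambda^2A\) is a quotient of \(A\otimes A\), which is finite since \(A\) is finite. A finite abelian group \(F\) has \(\operatorname{Hom}_{\mathbb Z}(F,U(1))\) equal to its Pontryagin dual, which is finite of order \(|F|\). Hence \(\operatorname{Hom}_{\mathbb Z}(\Lambda^2A,U(1))\) is finite, and so is \(H^2(A,U(1))\) by the isomorphism above. Identifying the specific commutator map \cref{map-to-show-iso} with this abstract isomorphism is deferred to \cref{comm-map-iso}, as the paper indicates.
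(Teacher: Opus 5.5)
Your proposal is correct and takes essentially the same route as the paper: the universal coefficient theorem on \(BA\), vanishing of the \(\operatorname{Ext}\) term because \(U(1)\) is divisible, and a K\"unneth computation of \(H_2\) over the cyclic decomposition, matched against the splitting of \(\Lambda^2 A\). The differences are immaterial (you get \(H_*(\mathbb Z/m)\) from the periodic resolution where the paper uses infinite lens spaces), and you also give the finiteness argument via Pontryagin duality, which the paper's proof leaves implicit.
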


\begin{proof}
We compute group cohomology using the classifying space \(BA\). Since \(A\) is discrete and the \(A\)-action on \(U(1)\) is trivial, group cohomology agrees with the singular cohomology of \(BA\) with $U(1)$ coefficients. That is, 
\begin{equation}
H^m(A,U(1))\cong H^m(BA;U(1)).
\end{equation}
By abuse of notation, we continue to denote the singular cohomology groups \(H^m(BA;U(1))\) by \(H^m(A,U(1))\). Recall that \(BA\) is an Eilenberg--Mac Lane space \(K(A,1)\). The universal coefficient theorem for singular cohomology gives a short exact sequence of abelian groups
\begin{equation}\label{uct}
0 \longrightarrow \operatorname{Ext}^1_{\mathbb Z}(H_1(BA;\mathbb Z),U(1)) \longrightarrow H^2(BA;U(1)) \longrightarrow \operatorname{Hom}_{\mathbb Z}(H_2(BA;\mathbb Z),U(1)) \longrightarrow 0.
\end{equation}
Since \(U(1)\) is a divisible group and divisible abelian groups are injective objects in the category of abelian groups, we have $\operatorname{Ext}^1_{\mathbb Z}(H_1(BA;\mathbb Z),U(1)) = 0$. By \cref{uct}, we have
\begin{equation}\label{uct-cons}
H^2(A,U(1)) \cong \operatorname{Hom}_{\mathbb Z}(H_2(BA;\mathbb Z),U(1)).
\end{equation}
It remains to identify \(H_2(BA;\mathbb Z)\). Since \(A\) is a finite abelian group, we have $A\cong \bigoplus_{i=1}^r \Z_{m_i}$. Since both \(BA\) and \(\prod_{i=1}^r B\mathbb Z_{m_i}\) are \(K(A,1)\)-spaces, uniqueness of Eilenberg--Mac Lane spaces up to homotopy equivalence implies that we have
\begin{equation}\label{eq:BA-product-decomposition}
BA\simeq \prod_{i=1}^r B\mathbb Z_{m_i}.
\end{equation}
We use the standard topological model $B\mathbb Z_m \simeq \mathbb S^\infty/\mathbb Z_m$,  where \(\mathbb Z_m\) acts freely on \(\mathbb S^\infty \subseteq \mathbb C^\infty\) by scalar multiplication by \(m\)-th roots of unity. The quotient \(\mathbb S^\infty/\mathbb Z_m\) is the infinite lens space \(L(\infty,m)\), and it is a \(K(\mathbb Z_m,1)\) space. It is well known that
\begin{equation}\label{eq:homology-infinite-lens-space}
H_k(L(\infty,m);\mathbb Z)\cong
\begin{cases}
\mathbb Z, & k=0,\\
\mathbb Z_m, & k>0 \text{ odd},\\
0, & k>0 \text{ even}.
\end{cases}
\end{equation}
We now apply the Künneth theorem. Note that the possible \(\operatorname{Tor}\)-terms in degree \(2\) vanish, since they involve \(H_0(B\mathbb Z_{m_i};\mathbb Z)\cong \mathbb Z\), and \(\operatorname{Tor}_1^{\mathbb Z}(\mathbb Z,M)=\operatorname{Tor}_1^{\mathbb Z}(M,\mathbb Z)=0\) for every abelian group \(M\). Thus the degree-two homology receives contributions only from tensor products of degree-one homology groups from distinct factors. Hence, we have
\begin{equation}\label{eq:H2-BA-kunneth}
H_2(BA;\mathbb Z) \cong \bigoplus_{1\leq i<j\leq r} H_1(B\mathbb Z_{m_i};\mathbb Z)\otimes_{\mathbb Z} H_1(B\mathbb Z_{m_j};\mathbb Z) \cong \bigoplus_{1\leq i<j\leq r} \mathbb Z_{m_i}\otimes_{\mathbb Z} \mathbb Z_{m_j}.
\end{equation}
On the other hand, $\Lambda^2 A$ decomposes as
\begin{equation}
\Lambda^2\left(\bigoplus_{i=1}^r \mathbb Z_{m_i}\right) \cong \left(\bigoplus_{i=1}^r \Lambda^2 \mathbb Z_{m_i}\right) \oplus \left(\bigoplus_{1\leq i<j\leq r} \mathbb Z_{m_i}\otimes_{\mathbb Z}\mathbb Z_{m_j}\right).
\end{equation}
Since each \(\mathbb Z_{m_i}\) is cyclic,  we have \(\Lambda^2 \mathbb Z_{m_i}=0\). Therefore, we have
\begin{equation}\label{h2-ba}
H_2(BA;\mathbb Z) \cong \bigoplus_{1\leq i<j\leq r} \mathbb Z_{m_i}\otimes_{\mathbb Z} \mathbb Z_{m_j} \cong \Lambda^2 A,
\end{equation}
\cref{uct-cons} and \cref{h2-ba} imply \(H^2(A,U(1))\cong \operatorname{Hom}_{\mathbb Z}(\Lambda^2 A,U(1))\).
\end{proof}

We now use \cref{commutator-bicharacter-well-defined} and \cref{h2-finite-abelian-alt2} to show that the map in \cref{map-to-show-iso} is an isomorphism.

\begin{cor}\label{comm-map-iso}
Let \(A\) be a finite abelian group. Then the commutator pairing
\begin{align}
H^2(A,U(1)) &\longrightarrow \operatorname{Hom}_{\Z}(\Lambda^2A,U(1)), \\
[R] &\longmapsto \operatorname{comm}_R
\end{align}
is an isomorphism of finite groups.
\end{cor}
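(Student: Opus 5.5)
The plan is to show that the map \([R]\mapsto\operatorname{comm}_R\) of \cref{map-to-show-iso} is injective and then compare cardinalities using \cref{h2-finite-abelian-alt2}. By \cref{commutator-bicharacter-well-defined}, the map is well defined on cohomology classes. It is a homomorphism because \(\operatorname{comm}_{RR'}=\operatorname{comm}_R\operatorname{comm}_{R'}\) pointwise, and every class has a normalized representative. By \cref{h2-finite-abelian-alt2}, the source and target are finite groups of the same order. An injective homomorphism between finite groups of equal order is bijective, so the proof reduces to injectivity.

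For injectivity, suppose \(\operatorname{comm}_R\equiv 1\), so that \(R(a,b)=R(b,a)\) for all \(a,b\). Then the central extension \(E_R\) constructed in the proof of \cref{commutator-bicharacter-well-defined} is abelian, since every commutator of standard lifts is trivial and the central factor commutes with everything. We therefore have an abelian extension \(1\to U(1)\to E_R\to A\to 1\). Since \(U(1)\) is divisible, it is an injective \(\mathbb Z\)-module. Hence the inclusion \(U(1)\to E_R\) admits a retraction \(r:E_R\to U(1)\) that is a group homomorphism restricting to the identity on \(U(1)\times\{0\}\).

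Define \(\phi(a):=r(1,a)^{-1}\). The relation \((1,a)(1,b)=(R(a,b),0)(1,a+b)\) in \(E_R\) gives
\[
r(1,a)\,r(1,b)=R(a,b)\,r(1,a+b).
\]
Consequently \(R(a,b)=\phi(a+b)/(\phi(a)\phi(b))=(\delta(\phi^{-1}))(a,b)\), which is a coboundary. So \([R]=0\), and the map is injective.

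The main point of care is the justification that symmetric cocycles are coboundaries. I intend to use divisibility of \(U(1)\) here, that is, \(\operatorname{Ext}^1_{\mathbb Z}(A,U(1))=0\), which is the same fact already used in \cref{h2-finite-abelian-alt2}. I will also check that the retraction is genuinely a homomorphism on \(E_R\), which holds because \(E_R\) is abelian. Alternatively, one could compare with the explicit UCT map, but the cardinality argument avoids checking that the abstract isomorphism coincides with \(\operatorname{comm}\).
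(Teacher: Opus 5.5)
Your proof is correct and follows the paper's argument. You reduce to injectivity using the equal orders from \cref{h2-finite-abelian-alt2}, note that a trivial $\operatorname{comm}_R$ makes $E_R$ abelian, and use injectivity of the divisible group $U(1)$ to split the extension and exhibit $R$ as a coboundary. The only difference is cosmetic: you split via a retraction $r$ rather than a section, and your $\phi(a)=r(1,a)^{-1}$ is exactly the paper's $\alpha(a)$ for the section $s(a)=(\alpha(a),a)$.
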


\begin{proof}
By \cref{h2-finite-abelian-alt2}, we have an abstract isomorphism \(H^2(A,U(1)) \cong \operatorname{Hom}_{\mathbb Z}(\Lambda^2 A,U(1))\) of finite groups. Hence, it suffices to show that the commutator pairing from \cref{commutator-bicharacter-well-defined} is injective. Let $[R] \in H^2(A,U(1)) $ such that \(\operatorname{comm}_R(a,b)=1\) for all \(a,b\in A\). Equivalently, in the corresponding central extension \(E_R\), the commutator of any two lifts of elements of \(A\) is trivial. Let \(x,y\in E_R\). If \(a,b\in A\) are their images, then \(x=u\widetilde a\) and \(y=v\widetilde b\) for some lifts \(\widetilde a,\widetilde b\) of \(a,b\) and \(u,v\in U(1)\). Since \(U(1)\) lies in the center of \(E_R\), we have \([x,y]=[\widetilde a,\widetilde b]\). By assumption this commutator is trivial, so \([x,y]=1\) for all $x,y \in E_R$. Thus, \(E_R\) is an abelian group. Hence, we obtain a short exact sequence of abelian groups
\begin{equation}\label{ses-ab}
0 \longrightarrow U(1) \longrightarrow E_R \longrightarrow A \longrightarrow 0.
\end{equation}
Since \(U(1)\) is a divisible group, it is injective in the category of abelian groups. Therefore, the short exact sequence in \cref{ses-ab} splits in the category of abelian groups. Choose a splitting \(s:A\to E_R\) such that \(s(a)=(\alpha(a),a)\), where $\alpha : A \to U(1)$ is a map. The condition that \(s\) is a homomorphism gives
\begin{equation}
(\alpha(a+b),a+b)=s(a+b)=s(a)s(b)=(\alpha(a)\alpha(b)R(a,b),a+b).
\end{equation}
Therefore, we have
\begin{equation}\label{preceding}
R(a,b)=\frac{\alpha(a+b)}{\alpha(a)\alpha(b)}.
\end{equation}
With the convention that, for a \(1\)-cochain \(\beta\),
$(\delta\beta)(a,b)=\frac{\beta(a)\beta(b)}{\beta(a+b)},$
\cref{preceding} gives \(R=\delta(\alpha^{-1})\). Hence, \(R\) is a $2$-coboundary, so \([R]=0\). Thus \(\ker \Phi=0\), and therefore \(\Phi\) is injective. This completes the proof.
\end{proof}

\cref{cor:coboundary-iff-trivial-commutator} gives an explicit criterion for the vanishing of a class in \(H^2(A,U(1))\) in terms of the commutator bicharacter of a normalized cocycle representative. When \(A\) is a finite-dimensional vector space over \(\mathbb F_p\), it suffices to check this criterion on pairs of basis vectors, thereby reducing the abstract cohomological obstruction to finitely many explicit conditions.

\begin{cor}\label{cor:coboundary-iff-trivial-commutator}
Let \(A\) be a finite abelian group, and let \(R\in Z^2(A,U(1))\) be a normalized \(2\)-cocycle. Then $[R]=0\in H^2(A,U(1))$ if and only if $\frac{R(a,b)}{R(b,a)}=1$ for all \(a,b\in A\). In particular, if \(V\) is a finite-dimensional vector space over \(\mathbb F_p\) with basis \(e_1,\ldots,e_m\), then for any normalized \(2\)-cocycle \(R\in Z^2(V,U(1))\), one has $[R]=0\in H^2(V,U(1))$ if and only if
\begin{equation}\label{basis-comm}
\frac{R(e_i,e_j)}{R(e_j,e_i)}=1
\end{equation}
for all \(1\leq i,j\leq m\).
\end{cor}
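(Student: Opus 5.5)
The first equivalence is essentially \cref{comm-map-iso}, so the plan is to reduce to it and then handle the basis reduction using bilinearity. For the direction ``\([R]=0\) implies trivial commutator'', we use \cref{commutator-bicharacter-well-defined}. Since \(\operatorname{comm}_R\) depends only on the class \([R]\), if \(R\) is cohomologous to the trivial cocycle \(1\), then \(\operatorname{comm}_R=\operatorname{comm}_1\equiv1\). For the converse, suppose \(R(a,b)/R(b,a)=1\) for all \(a,b\). Then \(\operatorname{comm}_R\) is the trivial element of \(\operatorname{Hom}_{\mathbb Z}(\Lambda^2A,U(1))\). By \cref{comm-map-iso}, the map \([R]\mapsto\operatorname{comm}_R\) is injective, so \([R]=0\). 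This direct appeal to \cref{comm-map-iso} is the cleanest route. As an alternative, one could repeat its splitting argument: triviality of all commutators makes \(E_R\) abelian, divisibility of \(U(1)\) splits the sequence, and the section exhibits \(R\) as a coboundary.

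For the basis statement, take \(A=V\) with basis \(e_1,\ldots,e_m\). The condition on all pairs \(a,b\) trivially implies \cref{basis-comm}, so only the converse needs work. Here we use the fact, established in \cref{commutator-bicharacter-well-defined}, that \(\operatorname{comm}_R\) is a bicharacter, meaning it is multiplicative in each variable separately. Write \(a=\sum_i a_ie_i\) and \(b=\sum_j b_je_j\), with integer representatives \(a_i,b_j\in\{0,\ldots,p-1\}\) so that \(a\) is a finite sum of basis vectors. Multiplicativity in each slot gives
\begin{equation}
\operatorname{comm}_R(a,b)=\prod_{i,j}\operatorname{comm}_R(e_i,e_j)^{a_ib_j}.
\end{equation}
The right-hand side equals \(1\) when every basis commutator is \(1\). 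Therefore \(\operatorname{comm}_R\equiv1\), and the first part gives \([R]=0\).

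The only point needing care is that multiplicativity is really available for \(\operatorname{comm}_R\) as a function on \(A\times A\), not just for commutators in \(E_R\). This follows from the identity \([xx',y]=[x,y][x',y]\) in \(E_R\) together with independence of lifts. Take \(x=(1,a)\) and \(x'=(1,a')\). Their product is a lift of \(a+a'\), so \(\operatorname{comm}_R(a+a',b)=\operatorname{comm}_R(a,b)\operatorname{comm}_R(a',b)\), and the second variable is handled in the same way. Beyond this, the argument is bookkeeping, and I expect no real obstacle.
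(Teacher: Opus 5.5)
Your proposal is correct and follows essentially the same route as the paper: you get the first equivalence from the injectivity of $[R]\mapsto\operatorname{comm}_R$ in \cref{comm-map-iso}, and you reduce to basis pairs by expanding $\operatorname{comm}_R(a,b)=\prod_{i,j}\operatorname{comm}_R(e_i,e_j)^{a_ib_j}$ using bimultiplicativity. You also spell out two details the paper leaves implicit: the use of integer representatives for the exponents, and the transfer of multiplicativity from commutators in $E_R$ to $\operatorname{comm}_R$ on $A\times A$ via independence of lifts.
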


\begin{proof}
By \cref{comm-map-iso}, the class \([R]\) vanishes if and only if its image under the commutator pairing in \(\operatorname{Hom}_{\mathbb Z}(\Lambda^2 A,U(1))\) vanishes. Since this image is the alternating bicharacter \(\operatorname{comm}_R\), it follows that \([R]=0\) if and only if $\frac{R(a,b)}{R(b,a)}=1$ for all \(a,b\in A\). Take \(A=V\). Since \(\operatorname{comm}_R\) is biadditive, it is determined by its values on pairs of basis vectors. Indeed, if \(v=\sum_i \lambda_i e_i\) and \(w=\sum_j \mu_j e_j\), then biadditivity gives
\begin{equation}
\operatorname{comm}_R(v,w)
=
\prod_{i,j}\operatorname{comm}_R(e_i,e_j)^{\lambda_i\mu_j}.
\end{equation}
Thus \(\operatorname{comm}_R\) is determined by its values on pairs of basis vectors. Therefore, \(\operatorname{comm}_R\) is trivial on \(V\times V\) if and only if \(\operatorname{comm}_R(e_i,e_j)=1\) for all \(1\leq i,j\leq m\). This is exactly \cref{basis-comm}.  
\end{proof}

\section{Cohomological Parametrization in Characteristic Two}
\label[appendix]{appendix:characteristic-two-third-level}
In \cref{second-level-symplectic-quotient} and the discussion following it in \cref{subsec:second-clifford-level}, as well as throughout \cref{third-clifford}, we restrict our discussion to odd prime \(p\). In \cref{appendix:characteristic-two-third-level}, we develop the corresponding \(p=2\) analogs of the results in \cref{subsec:second-clifford-level} and \cref{third-level-linear-algebraic}--\cref{generator-level-criteria-third}. In particular, we show that the cohomological description of \(\mathcal C^{(3)}\) persists when \(p=2\), but now involves an additional obstruction in \(H^2(V_n,V_{n,S})\), which must vanish before the phase obstruction in \(H^2(V_n,U(1))\) can be considered.

\subsection{The Second Clifford Level in Characteristic Two}
We first discuss the \(p=2\) analog of \cref{second-level-symplectic-quotient} and the discussion following it in \cref{subsec:second-clifford-level}. When \(p=2\), the short exact sequence in \cref{eqn:clifford-symplectic-exact-sequence} continues to hold, but the choice of representatives \(M_S\) requires a modification. Indeed, by \cref{symplectic-invariance-weyl-multiplier}, for \(S\in\operatorname{Sp}(V_n)\), the projective representations \(a\mapsto w(a)\) and \(a\mapsto w(Sa)\) need not have the same multiplier. Their multipliers are \(i^{\mu(a,b)}\) and \(i^{\mu(Sa,Sb)}\), respectively. Both multipliers are normalized \(U(1)\)-valued \(2\)-cocycles on \(V_n\). Therefore,
\begin{equation}
R_S(a,b):=i^{\mu(a,b)-\mu(Sa,Sb)}
\end{equation}
is a normalized \(U(1)\)-valued \(2\)-cocycle. Its commutator bicharacter is trivial. Indeed, using \cref{qubit-mult-comm} and the symplecticity of \(S\), we have
\begin{equation}
\frac{i^{\mu(a,b)}}{i^{\mu(b,a)}} = (-1)^{[a,b]} = (-1)^{[Sa,Sb]} = \frac{i^{\mu(Sa,Sb)}}{i^{\mu(Sb,Sa)}}.
\end{equation}
Hence, $\frac{R_S(a,b)}{R_S(b,a)}=1$ for all \(a,b\in V_n\). By \cref{cor:coboundary-iff-trivial-commutator}, \(R_S\) is a \(2\)-coboundary. Thus, there exists a $1$-cochain \(\gamma_S:V_n\to U(1)\) such that
\begin{equation}\label{eqn:qubit-symplectic-phase-correction}
\frac{\gamma_S(a)\gamma_S(b)}{\gamma_S(a+b)} = i^{\mu(a,b)-\mu(Sa,Sb)}
\end{equation}
for all \(a,b\in V_n\). Setting \(a=b=0\) shows that \(\gamma_S(0)=1\). Define $\widetilde w_S(a):=\gamma_S(a)w(Sa)$. We have,
\begin{align}
\widetilde w_S(a)\widetilde w_S(b)
&=
\gamma_S(a)\gamma_S(b)i^{\mu(Sa,Sb)}w(S(a+b))\\
&=
i^{\mu(a,b)}\gamma_S(a+b)w(S(a+b))=
i^{\mu(a,b)}\widetilde w_S(a+b).
\end{align}
Thus, \(a\mapsto\widetilde w_S(a)\) is a projective representation of \(V_n\) with the same multiplier as \(a\mapsto w(a)\). By the finite Stone--von Neumann theorem in characteristic two, there exists \(M_S\in\mathcal U(\mathcal H_n)\) such that
\begin{equation}\label{eqn:qubit-clifford-representative-action}
M_Sw(a)M_S^*=\gamma_S(a)w(Sa)
\end{equation}
for every \(a\in V_n\). Hence, \(M_S\in\mathcal C^{(2)}\) and \(\Psi(M_S)=S\), so \(\Psi\) remains surjective when \(p=2\). The proof that \(\Psi\) is a group homomorphism and that \(\ker\Psi=\mathcal C^{(1)}\) is unchanged from the odd prime case in \cref{second-level-symplectic-quotient}.
Fix one such representative \(M_S\) for every \(S\in\operatorname{Sp}(V_n)\), with \(M_{\operatorname{Id}_{V_n}}=I\). The normal form in \cref{clifford-normal-form} therefore remains valid. As in the odd prime case, we have \(M_SM_TM_{ST}^*\in\ker\Psi=\mathcal C^{(1)}\). Therefore, there exist \(\kappa(S,T)\in U(1)\) and \(c(S,T)\in V_n\) such that
\begin{equation}\label{eqn:qubit-clifford-section-product}
M_SM_T=\kappa(S,T)w(c(S,T))M_{ST}.
\end{equation}
In the odd prime case, we showed that the representatives may be chosen so that \(c(S,T)=0\). This need not hold a priori when \(p=2\), and the resulting correction term is responsible for the additional structure that appears in the characteristic-two cohomological parametrization of \(\mathcal C^{(3)}\).  

\subsection{Affine, Symplectic and Phase Data}
\label{appendix:qubit-affine-symplectic-phase}
For \(S,T\in\operatorname{Sp}(V_n)\), there are \(\kappa(S,T)\in U(1)\) and \(c(S,T)\in V_n\) such that \cref{eqn:qubit-clifford-section-product} is satisfied. Since the section is normalized by, we have
\begin{equation}
\label{eqn:qubit-c-normalization}
c(\operatorname{Id}_{V_n},S) = c(S,\operatorname{Id}_{V_n}) = 0
\end{equation}
for every \(S\in\operatorname{Sp}(V_n)\). \cref{lem:qubit-clifford-extension-cocycle} shows that \(c(\cdot,\cdot)\) satisfies a cocycle identity.

\begin{lem}
\label{lem:qubit-clifford-extension-cocycle}
The map
\begin{equation}
c:
\operatorname{Sp}(V_n)\times\operatorname{Sp}(V_n)
\longrightarrow
V_n
\end{equation}
is a normalized \(2\)-cocycle for the natural action of \(\operatorname{Sp}(V_n)\) on \(V_n\), given by
\(S\cdot v:=Sv\). Explicitly,
\begin{equation}
\label{eqn:qubit-c-cocycle}
c(S,T)+c(ST,R) = Sc(T,R)+c(S,TR)
\end{equation}
for all \(S,T,R\in\operatorname{Sp}(V_n)\).
\end{lem}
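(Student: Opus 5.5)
We will prove \cref{eqn:qubit-c-cocycle} by expanding the triple product \(M_SM_TM_R\) in the two possible orders using \cref{eqn:qubit-clifford-section-product}, and then comparing Weyl labels via the uniqueness of the normal form (\cref{clifford-normal-form}, equivalently \cref{lem:Weyl-label-phase-uniqueness}). Normalization is immediate. Since \(M_{\operatorname{Id}_{V_n}}=I\), we have \(M_{\operatorname{Id}}M_S=M_S=\kappa(\operatorname{Id},S)w(c(\operatorname{Id},S))M_S\), so \(w(c(\operatorname{Id},S))\) is a scalar. Then \cref{lem:weyl-scalar-label} gives \(c(\operatorname{Id},S)=0\), and symmetrically \(c(S,\operatorname{Id})=0\). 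This is \cref{eqn:qubit-c-normalization}.

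First grouping \((M_SM_T)M_R\): we get \(\kappa(S,T)w(c(S,T))M_{ST}M_R=\kappa(S,T)\kappa(ST,R)\,w(c(S,T))w(c(ST,R))M_{STR}\). In characteristic two, \cref{p=2} shows that the product of the two Weyl operators is a fourth root of unity times \(w(c(S,T)+c(ST,R))\).

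Second grouping \(M_S(M_TM_R)\): we get \(\kappa(T,R)M_Sw(c(T,R))M_{TR}\). Using \cref{eqn:qubit-clifford-representative-action}, we move \(M_S\) past the Weyl operator: \(M_Sw(v)=\gamma_S(v)w(Sv)M_S\). This yields \(\kappa(T,R)\gamma_S(c(T,R))\,w(Sc(T,R))M_SM_{TR}\). Expanding \(M_SM_{TR}=\kappa(S,TR)w(c(S,TR))M_{STR}\) and combining the Weyl operators via \cref{p=2} gives a phase times \(w(Sc(T,R)+c(S,TR))M_{STR}\).

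Associativity of operator multiplication makes the two expressions equal. Multiplying both on the right by \(M_{STR}^*\) leaves an equality of the form \(\lambda w(x)=\lambda' w(y)\), with \(\lambda,\lambda'\in U(1)\), \(x=c(S,T)+c(ST,R)\) and \(y=Sc(T,R)+c(S,TR)\). By \cref{lem:Weyl-label-phase-uniqueness}, \(x=y\), which is \cref{eqn:qubit-c-cocycle} in additive notation. This matches the standard inhomogeneous \(2\)-cocycle identity for the action \(S\cdot v=Sv\).

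The only real point of care is the second grouping. There one must use the defining property of the chosen representatives \(M_S\), namely conjugation of \(w(v)\) into a phase times \(w(Sv)\), and not merely that \(M_S\) is Clifford. Otherwise the term \(Sc(T,R)\), which supplies the \(\operatorname{Sp}(V_n)\)-action, would not appear. The phases \(\kappa\), \(\gamma_S\) and \(i^{\mu}\) never need to be computed, since they are discarded by the label comparison.
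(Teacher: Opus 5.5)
Your proof is correct and follows the paper's argument. You expand $(M_SM_T)M_R$ and $M_S(M_TM_R)$ using the section product and the conjugation action $M_Sw(v)M_S^*=\gamma_S(v)w(Sv)$, then compare Weyl labels via \cref{lem:Weyl-label-phase-uniqueness}, and your derivation of normalization from $M_{\operatorname{Id}_{V_n}}=I$ and \cref{lem:weyl-scalar-label} simply makes explicit what the paper takes for granted in \cref{eqn:qubit-c-normalization}.
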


\begin{proof}
By associativity, $(M_SM_T)M_R=M_S(M_TM_R).$ Using \cref{eqn:qubit-clifford-section-product} on the left-hand side, and collecting scalar factors into a phase \(\lambda\in U(1)\), we obtain
\begin{equation}
(M_SM_T)M_R = \lambda\, w(c(S,T)+c(ST,R))M_{STR}.
\end{equation}
On the other hand, using \cref{eqn:qubit-clifford-representative-action} and \cref{eqn:qubit-clifford-section-product} on the right-hand side gives, for some \(\lambda'\in U(1)\),
\begin{equation}
M_S(M_TM_R) = \lambda'\,
w(Sc(T,R)+c(S,TR))M_{STR}.
\end{equation}
\cref{lem:Weyl-label-phase-uniqueness} implies \cref{eqn:qubit-c-cocycle}. The normalization follows from \cref{eqn:qubit-c-normalization}.
\end{proof}

The appearance of the cocycle \(c\) is a new feature of the \(p=2\) case. Before examining how this cocycle enters the cohomological interpretation of \(\mathcal{C}^{(3)}\) in characteristic two, we first derive the \(p=2\) analog of \cref{third-level-classification}.

\begin{prop}\label{prop:qubit-third-level-classification}
For each \(a\in V_n\), write
\begin{equation}\label{eqn:qubit-third-level-B-normal-form} B(a)=\alpha_B(a)w(q_B(a))M_{S_B(a)}, \end{equation}
where \(\alpha_B(a)\in U(1)\), \(q_B(a)\in V_n\), and \(S_B(a)\in\operatorname{Sp}(V_n)\). Then \(B\in Y^3_2(V_n)\) if and only if
\begin{align}
S_B(a+b) &= S_B(a)S_B(b), \label{eqn:qubit-third-level-S-equation}\\
q_B(a+b) &= q_B(a)+S_B(a)q_B(b)+c(S_B(a),S_B(b)), \label{eqn:qubit-third-level-q-equation}\\
\alpha_B(a+b) &= \alpha_B(a)\alpha_B(b)\gamma_{S_B(a)}(q_B(b))\kappa(S_B(a),S_B(b)) \notag\\
&\qquad\times i^{\mu(q_B(a),S_B(a)q_B(b))+\mu(q_B(a)+S_B(a)q_B(b),c(S_B(a),S_B(b)))-\mu(a,b)}. \label{eqn:qubit-third-level-alpha-equation}
\end{align}
for all \(a,b\in V_n\).
\end{prop}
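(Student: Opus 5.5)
The plan follows the odd-prime proof of \cref{third-level-classification}. The only difference is that multiplying two normal forms now produces the extra Weyl label \(c(S,T)\) from \cref{eqn:qubit-clifford-section-product} and the correction phases \(\gamma_S\) from \cref{eqn:qubit-clifford-representative-action}.

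First, by the \(p=2\) version of \cref{clifford-normal-form}, every element of \(\mathcal C^{(2)}\) has a unique expression \(\alpha w(q)M_S\). Hence the functions \(\alpha_B,q_B,S_B\) are uniquely determined by \(B\), and \(B\in Y^3_2(V_n)\) is equivalent to
\[
B(a)B(b)=i^{\mu(a,b)}B(a+b)
\]
for all \(a,b\). Next I compute \(B(a)B(b)\) and bring it into normal form. Write \(S=S_B(a)\) and \(T=S_B(b)\). The computation has three moves.
\begin{enumerate}
\item Move \(M_S\) past \(w(q_B(b))\) using \(M_Sw(x)=\gamma_S(x)w(Sx)M_S\). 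This produces the factor \(\gamma_S(q_B(b))\).
\item Merge \(w(q_B(a))w(Sq_B(b))\) using \cref{p=2}. This produces \(i^{\mu(q_B(a),Sq_B(b))}\) and the label \(q_B(a)+Sq_B(b)\).
\item Replace \(M_SM_T\) by \(\kappa(S,T)w(c(S,T))M_{ST}\), and merge \(w(q_B(a)+Sq_B(b))\) with \(w(c(S,T))\) using \cref{p=2} again. This produces \(i^{\mu(q_B(a)+Sq_B(b),c(S,T))}\) and the label \(q_B(a)+Sq_B(b)+c(S,T)\).
\end{enumerate}

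The right-hand side is already in normal form:
\[
i^{\mu(a,b)}\alpha_B(a+b)w(q_B(a+b))M_{S_B(a+b)}.
\]
Uniqueness of the normal form (\cref{lem:Weyl-label-phase-uniqueness} together with injectivity through \(\Psi\)) then gives the three equations. The symplectic parts yield \(S_B(a+b)=S_B(a)S_B(b)\). The Weyl labels yield \cref{eqn:qubit-third-level-q-equation}. The phases yield \cref{eqn:qubit-third-level-alpha-equation}, after moving \(i^{\mu(a,b)}\) to the other side, which turns it into \(i^{-\mu(a,b)}\).

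The converse is the same computation read backwards: if the three equations hold, the displayed product equals \(i^{\mu(a,b)}B(a+b)\).

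The main obstacle is bookkeeping. I must keep the order of the Weyl factors consistent with the convention for \(\mu\), since \(\mu\) is not symmetric. I must also confirm that \(\gamma_S\) is evaluated at \(q_B(b)\) rather than at \(Sq_B(b)\), which follows from \cref{eqn:qubit-clifford-representative-action}. Finally, there is the minor point that uniqueness of the normal form requires \(\ker\Psi=\mathcal C^{(1)}\) for \(p=2\), which the appendix already established.
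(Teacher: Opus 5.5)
Your proposal is correct and follows the paper's proof essentially step for step: you commute $M_{S_B(a)}$ past $w(q_B(b))$ to produce $\gamma_{S_B(a)}(q_B(b))$, merge Weyl operators twice via \cref{p=2}, substitute $M_SM_T=\kappa(S,T)w(c(S,T))M_{ST}$, and then compare with $i^{\mu(a,b)}B(a+b)$ using uniqueness of the $p=2$ normal form. Your choices on the order of the arguments of $\mu$ and on evaluating $\gamma_S$ at $q_B(b)$ rather than at $S q_B(b)$ agree with the paper's formula.
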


\begin{proof}
By \cref{rem:equiv-abstract-qubit}, we have \(B\in Y^3_2(V_n)\) if and only if
$B(a)B(b)=i^{\mu(a,b)}B(a+b)$ for all \(a,b\in V_n\). Using \cref{eqn:qubit-clifford-representative-action}, we obtain
\begin{align}
B(a)B(b) &= \alpha_B(a)\alpha_B(b)w(q_B(a))M_{S_B(a)}w(q_B(b))M_{S_B(b)}\\
&= \alpha_B(a)\alpha_B(b)\gamma_{S_B(a)}(q_B(b))w(q_B(a))w(S_B(a)q_B(b))M_{S_B(a)}M_{S_B(b)}.
\end{align}
Applying \cref{qubit-mult-comm} and \cref{eqn:qubit-clifford-section-product}, we get
\begin{align}
B(a)B(b) &= \alpha_B(a)\alpha_B(b)\gamma_{S_B(a)}(q_B(b))\kappa(S_B(a),S_B(b)) \notag\\
& \times i^{\mu(q_B(a),S_B(a)q_B(b))}w(q_B(a)+S_B(a)q_B(b))w(c(S_B(a),S_B(b)))M_{S_B(a)S_B(b)}\\
&= \alpha_B(a)\alpha_B(b)\gamma_{S_B(a)}(q_B(b))\kappa(S_B(a),S_B(b)) \notag\\
& \times i^{\mu(q_B(a),S_B(a)q_B(b))+\mu(q_B(a)+S_B(a)q_B(b),c(S_B(a),S_B(b)))} \notag\\
& \times w(q_B(a)+S_B(a)q_B(b)+c(S_B(a),S_B(b)))M_{S_B(a)S_B(b)}.
\end{align}
On the other hand,
\begin{equation}
i^{\mu(a,b)}B(a+b)=i^{\mu(a,b)}\alpha_B(a+b)w(q_B(a+b))M_{S_B(a+b)}.
\end{equation}
Comparing the two expressions and using the uniqueness of the normal form as in \cref{clifford-normal-form} gives \cref{eqn:qubit-third-level-S-equation,eqn:qubit-third-level-q-equation,eqn:qubit-third-level-alpha-equation}.
\end{proof}

As in the odd prime case, we would like to describe the symplectic and affine data in terms of an affine symplectic group. However, the presence of the additional cocycle \(c\) requires us to replace the usual affine symplectic group by a twisted analog. Consider \(\operatorname{ASp}_c(V_n)\cong V_n\rtimes_c \operatorname{Sp}(V_n)\), where the underlying set \(V_n\times\operatorname{Sp}(V_n)\) is endowed with multiplication
\begin{equation}
\label{eqn:qubit-twisted-affine-product}
(q,S)(r,T) := (q+Sr+c(S,T),ST).
\end{equation}

\cref{lem:qubit-twisted-affine-extension} will allows us to describe the symplectic and affine data in terms of \(\operatorname{ASp}_c(V_n)\).

\begin{lem}\label{lem:qubit-twisted-affine-extension}
The multiplication in \cref{eqn:qubit-twisted-affine-product} defines a group structure on \(\operatorname{ASp}_c(V_n)\), and there is a short exact sequence
\begin{equation}\label{eqn:qubit-twisted-affine-extension}
1\longrightarrow V_n\longrightarrow \operatorname{ASp}_c(V_n)\longrightarrow \operatorname{Sp}(V_n)\longrightarrow 1.
\end{equation}
Moreover, the normal form in \cref{clifford-normal-form} induces an isomorphism
\begin{equation}\label{eqn:qubit-clifford-quotient-twisted-affine}
\overline{\mathcal C}^{(2)}\cong \operatorname{ASp}_c(V_n).
\end{equation}
\end{lem}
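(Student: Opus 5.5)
The plan has three parts: show the product is associative, identify the identity and inverses, and then transport the normal form of \cref{clifford-normal-form} to an isomorphism, mirroring the proof of \cref{cor:projective-clifford-affine-symplectic}.

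\textbf{Associativity.} Expanding both bracketings of $(q,S)(r,T)(s,R)$ with \cref{eqn:qubit-twisted-affine-product} gives
\begin{align}
((q,S)(r,T))(s,R)&=(q+Sr+c(S,T)+STs+c(ST,R),\,STR),\\
(q,S)((r,T)(s,R))&=(q+Sr+STs+Sc(T,R)+c(S,TR),\,STR),
\end{align}
using linearity of $S$. The two agree exactly by the cocycle identity \cref{eqn:qubit-c-cocycle} of \cref{lem:qubit-clifford-extension-cocycle}.

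\textbf{Identity and inverses.} The normalization \cref{eqn:qubit-c-normalization} shows that $(0,\operatorname{Id}_{V_n})$ is a two-sided identity. For $(q,S)$ I would take the left inverse $(-S^{-1}q-S^{-1}c(S,S^{-1}),S^{-1})$; a left inverse exists for every element in an associative monoid, which suffices for a group.

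\textbf{Exact sequence.} The projection $(q,S)\mapsto S$ is visibly a surjective homomorphism. The inclusion $v\mapsto(v,\operatorname{Id})$ is a homomorphism because $c(\operatorname{Id},\operatorname{Id})=0$, and its image is exactly the kernel of the projection, giving \cref{eqn:qubit-twisted-affine-extension}.

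\textbf{Isomorphism \cref{eqn:qubit-clifford-quotient-twisted-affine}.} I would define $\Xi(q,S):=[w(q)M_S]$.
\begin{enumerate}
\item By the $p=2$ version of \cref{clifford-normal-form}, $\Xi$ is a bijection onto $\overline{\mathcal C}^{(2)}$. Surjectivity and injectivity follow exactly as in \cref{cor:projective-clifford-affine-symplectic}, via $\Psi$ and \cref{lem:Weyl-label-phase-uniqueness}.
\item For the homomorphism property, compute $w(q)M_Sw(r)M_T$ using \cref{eqn:qubit-clifford-representative-action}, then \cref{p=2}, then \cref{eqn:qubit-clifford-section-product}. This yields a phase times $w(q+Sr)w(c(S,T))M_{ST}$, which is a phase times $w(q+Sr+c(S,T))M_{ST}$.
\item Passing to projective classes, every phase disappears, so $\Xi((q,S)(r,T))=\Xi(q,S)\Xi(r,T)$.
\end{enumerate}

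A shortcut is available for the first part: since $\Xi$ is a multiplicative bijection onto a group, the group axioms on $\operatorname{ASp}_c(V_n)$ could be obtained by transport instead of direct verification. I would still present the direct cocycle check, since it is short.

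The only real point of care is the bookkeeping of the extra term $c(S,T)$ in associativity and in the product computation. That is precisely what \cref{lem:qubit-clifford-extension-cocycle} supplies, so I expect no genuine obstacle.
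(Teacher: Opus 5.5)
Your proposal is correct and follows the paper's proof essentially step for step: associativity from the cocycle identity \cref{eqn:qubit-c-cocycle}, the identity element from the normalization \cref{eqn:qubit-c-normalization}, the same inverse $(-S^{-1}(q+c(S,S^{-1})),S^{-1})$, and the isomorphism $\Xi(q,S)=[w(q)M_S]$ via the $p=2$ normal form together with the product computation. One small labeling point: direct multiplication shows this element is a \emph{right} inverse, and it is a left inverse only after applying the cocycle identity to the triple $(S,S^{-1},S)$, as the paper notes; either one-sided version suffices for the group axioms.
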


\begin{proof}
Associativity of the multiplication in \cref{eqn:qubit-twisted-affine-product} follows from the cocycle identity in \cref{eqn:qubit-c-cocycle}, while \cref{eqn:qubit-c-normalization} implies that \((0,\operatorname{Id}_{V_n})\) is the identity element. For \((q,S)\in\operatorname{ASp}_c(V_n)\), an inverse is given by
\begin{equation}
(q,S)^{-1}= (-S^{-1}\bigl(q+c(S,S^{-1})\bigr),S^{-1} ).
\end{equation}
Indeed, the product of \((q,S)\) with this element on the right is \((0,\operatorname{Id}_{V_n})\), and the cocycle identity implies that the same element is also a left inverse. Hence, \cref{eqn:qubit-twisted-affine-product} defines a group structure on \(\operatorname{ASp}_c(V_n)\). The inclusion \(V_n\to\operatorname{ASp}_c(V_n)\) is given by \(q\mapsto(q,\operatorname{Id}_{V_n})\), while the projection \(\operatorname{ASp}_c(V_n)\to\operatorname{Sp}(V_n)\) is given by \((q,S)\mapsto S\). The projection is surjective and its kernel is precisely the image of \(V_n\), which gives \cref{eqn:qubit-twisted-affine-extension}. For the second statement, define
\begin{align}
\Xi:\operatorname{ASp}_c(V_n) &\longrightarrow \overline{\mathcal C}^{(2)},\\
(q,S) &\longmapsto [w(q)M_S].
\end{align}
Since the analog of the normal form in \cref{clifford-normal-form} continues to hold when \(p=2\), it follows that \(\Xi\) is a bijection. Using \cref{eqn:qubit-clifford-representative-action,eqn:qubit-clifford-section-product} and discarding scalar phases in \(\overline{\mathcal C}^{(2)}\), we obtain
\begin{align}
[w(q)M_S][w(r)M_T] &= [w(q)w(Sr)w(c(S,T))M_{ST}]= [w(q+Sr+c(S,T))M_{ST}].
\end{align}
Thus, \(\Xi\) is a group homomorphism and hence an isomorphism.
\end{proof}

By \cref{lem:qubit-twisted-affine-extension}, the first two equations in \cref{prop:qubit-third-level-classification} admit a structural formulation analogous to \cref{cor:third-level-affine-symplectic-homomorphism}. Namely, \cref{eqn:qubit-third-level-S-equation,eqn:qubit-third-level-q-equation} hold if and only if
\begin{align}
\Phi_{q,S}:V_n &\longrightarrow \operatorname{ASp}_c(V_n), \label{eqn:qubit-twisted-affine-homomorphism}\\
a &\longmapsto (q(a),S(a))
\end{align}
is a group homomorphism.

\subsection{Cohomological Interpretation}
\label{appendix:qubit-cohomological-interpretation}

Fix a homomorphism \(S:V_n\to\operatorname{Sp}(V_n)\) and let \(V_{n,S}\) defined as in \cref{cohom-inter}. Pull back the cocycle \(c\) along \(S\) by setting
\begin{equation}\label{eqn:qubit-pulled-back-cocycle}
c_S(a,b):=c(S(a),S(b)).
\end{equation}
Since \(S\) is a homomorphism and \(c\) satisfies \cref{eqn:qubit-c-cocycle}, we have
\begin{equation}\label{eqn:qubit-pulled-back-cocycle-identity}
S(a)c_S(b,d)+c_S(a,b+d)=c_S(a,b)+c_S(a+b,d)
\end{equation}
for all \(a,b,d\in V_n\). Hence, \(c_S\in Z^2(V_n,V_{n,S})\). For a \(1\)-cochain \(q:V_n\to V_n\), we use the additive coboundary convention
\begin{equation}\label{eqn:qubit-additive-coboundary}
(\delta q)(a,b):=q(a)+S(a)q(b)-q(a+b).
\end{equation}
With this convention, \cref{eqn:qubit-third-level-q-equation} is equivalent to $\delta q=-c_S$. Since \(V_n\) is a vector space over \(\mathbb F_2\), the minus sign may be replaced by a plus sign. We retain it to emphasize the standard coboundary equation.

\begin{cor}\label{cor:qubit-affine-cohomological-obstruction}
Let \(S:V_n\to\operatorname{Sp}(V_n)\) be a homomorphism. A map \(q:V_n\to V_n\) satisfying \cref{eqn:qubit-third-level-q-equation} exists if and only if
\begin{equation}\label{eqn:qubit-affine-obstruction-class}
[c_S]=0\in H^2(V_n,V_{n,S}).
\end{equation}
If this obstruction vanishes and \(q_0\) is one solution, then every solution is of the form \(q=q_0+z\) for some \(z\in Z^1(V_n,V_{n,S})\). Replacing \(q\) by \(q+q_v\), where
\begin{equation}
q_v(a):=S(a)v-v,
\end{equation}
corresponds to conjugating the twisted affine homomorphism \(\Phi_{q,S}:V_n\to\operatorname{ASp}_c(V_n)\) by the translation \((-v,\operatorname{Id}_{V_n})\in\operatorname{ASp}_c(V_n)\). Consequently, when \([c_S]=0\), twisted affine homomorphisms with fixed symplectic part \(S\), modulo conjugation by translations, form a torsor over \(H^1(V_n,V_{n,S})\).
\end{cor}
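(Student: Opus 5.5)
The existence statement reduces to standard group cohomology once the equation is read as a coboundary equation. With the additive convention in \cref{eqn:qubit-additive-coboundary}, condition \cref{eqn:qubit-third-level-q-equation} says exactly that $\delta q=-c_S$, and \cref{eqn:qubit-pulled-back-cocycle-identity} already gives $c_S\in Z^2(V_n,V_{n,S})$.

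\emph{Existence.} If some $q$ satisfies \cref{eqn:qubit-third-level-q-equation}, then $c_S=\delta(-q)\in B^2(V_n,V_{n,S})$, so $[c_S]=0$. Conversely, if $[c_S]=0$, choose a $1$-cochain $f$ with $\delta f=c_S$ and set $q:=-f$; then $\delta q=-c_S$ by linearity of $\delta$. One point needs checking: the coboundary convention here must coincide with the one used to define $H^2(V_n,V_{n,S})$. This is the additive version of the formula in \cref{gp-cohom}, so they agree.

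\emph{Torsor description.} If $q_0$ and $q$ both solve $\delta q=-c_S$, then $\delta(q-q_0)=0$, so $z:=q-q_0\in Z^1(V_n,V_{n,S})$. Conversely, $q_0+z$ is a solution for every $z\in Z^1$. Hence $Z^1(V_n,V_{n,S})$ acts freely and transitively on the solution set.

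\emph{Conjugation by translations.} This is the only computation not purely formal, because $c$ enters the product \cref{eqn:qubit-twisted-affine-product}. By normalization \cref{eqn:qubit-c-normalization}, $(-v,\operatorname{Id})(r,T)=(r-v,T)$. It follows that
\begin{equation*}
(-v,\operatorname{Id})(q(a),S(a))(v,\operatorname{Id})=(q(a)+S(a)v+c(S(a),\operatorname{Id})-v,S(a)).
\end{equation*}
By normalization again, $c(S(a),\operatorname{Id})=0$, so the right-hand side is $(q(a)+q_v(a),S(a))$. We should also confirm that $(-v,\operatorname{Id})$ is inverse to $(v,\operatorname{Id})$, which holds for the same reason.

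\emph{Passing to $H^1$.} Since $q_v\in B^1(V_n,V_{n,S})$, conjugation by translations identifies solutions differing by coboundaries. The quotient of the solution set by this action is therefore a torsor over $Z^1/B^1=H^1(V_n,V_{n,S})$, with the action induced from the $Z^1$-action; this action is well defined because $B^1\subseteq Z^1$ acts compatibly. Every translation $(u,\operatorname{Id})$ has the form $(-v,\operatorname{Id})$ with $v=-u$, so all conjugations by translations are accounted for.

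The main obstacle is bookkeeping rather than any deep step. The sign and ordering conventions for $\delta$ must match both \cref{eqn:qubit-pulled-back-cocycle-identity} and the definition of $H^2$. The normalization of $c$ must be invoked each time the identity element appears in a twisted product.
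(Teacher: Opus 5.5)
Your proposal is correct and follows essentially the same route as the paper: you read \cref{eqn:qubit-third-level-q-equation} as the coboundary equation $\delta q=-c_S$, and you get the $Z^1(V_n,V_{n,S})$-torsor structure from linearity of $\delta$. You then compute the translation conjugation using the normalization \cref{eqn:qubit-c-normalization} and descend to $H^1(V_n,V_{n,S})$ by noting that the maps $q_v$ are exactly the $1$-coboundaries. Your extra checks are harmless refinements that the paper leaves implicit: that the coboundary convention matches the one defining $H^2$, and that $(-v,\operatorname{Id}_{V_n})$ is inverse to $(v,\operatorname{Id}_{V_n})$ in $\operatorname{ASp}_c(V_n)$.
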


\begin{proof}
A solution \(q\) exists if and only if \(c_S\) is a \(2\)-coboundary, which is equivalent to \([c_S]=0\in H^2(V_n,V_{n,S})\). Suppose that \(q_0\) is one solution. If \(q\) is any other solution, then \(\delta(q-q_0)=0\), so \(q-q_0\in Z^1(V_n,V_{n,S})\). Conversely, if \(z\in Z^1(V_n,V_{n,S})\), then
\begin{equation}
\delta(q_0+z)=\delta q_0+\delta z=-c_S,
\end{equation}
so \(q_0+z\) is again a solution. Thus, the set of solutions is a torsor over \(Z^1(V_n,V_{n,S})\).

For the second statement, let \(v\in V_n\). Using the multiplication law in \(\operatorname{ASp}_c(V_n)\) and the normalization \(c(\operatorname{Id}_{V_n},S)=c(S,\operatorname{Id}_{V_n})=0\), we have
\begin{align}
(-v,\operatorname{Id}_{V_n})(q(a),S(a))(v,\operatorname{Id}_{V_n}) &= (-v+q(a),S(a))(v,\operatorname{Id}_{V_n})\\
&= (q(a)+S(a)v-v,S(a))\\
&= (q(a)+q_v(a),S(a)).
\end{align}
Thus, conjugation by the translation \((-v,\operatorname{Id}_{V_n})\) replaces \(q\) by \(q+q_v\). The maps \(q_v\) are precisely the \(1\)-coboundaries in \(B^1(V_n,V_{n,S})\). Hence, after quotienting by translation conjugacy, the torsor over \(Z^1(V_n,V_{n,S})\) descends to a torsor over \(H^1(V_n,V_{n,S})\).
\end{proof}

We now describe the phase obstruction. Assume that \(S\) and \(q\) satisfy \cref{eqn:qubit-third-level-S-equation,eqn:qubit-third-level-q-equation}, and define
\begin{equation}
T_{q,S}(a):=w(q(a))M_{S(a)}.
\end{equation}

\begin{lem}\label{lem:qubit-phase-obstruction-cocycle}
Define \(F_{q,S}:V_n\times V_n\to U(1)\) by
\begin{align}
F_{q,S}(a,b) &:= i^{\mu(q(a),S(a)q(b))+\mu(q(a)+S(a)q(b),c_S(a,b))} \gamma_{S(a)}(q(b))\kappa(S(a),S(b)). 
\label{eqn:qubit-Fqs}
\end{align}
Then
\begin{equation}
T_{q,S}(a)T_{q,S}(b)=F_{q,S}(a,b)T_{q,S}(a+b),
\end{equation}
and hence \(F_{q,S}\) is a normalized \(U(1)\)-valued \(2\)-cocycle on \(V_n\). Consequently,
\begin{equation}\label{eqn:qubit-Rqs}
R_{q,S}(a,b):=F_{q,S}(a,b)i^{-\mu(a,b)}
\end{equation}
is also a normalized \(U(1)\)-valued \(2\)-cocycle.
\end{lem}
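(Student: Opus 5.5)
The plan is to compute the product \(T_{q,S}(a)T_{q,S}(b)\) directly, in the same way as \cref{second-char} but with the characteristic-two correction terms. Write
\begin{equation}
T_{q,S}(a)T_{q,S}(b)=w(q(a))M_{S(a)}w(q(b))M_{S(b)}.
\end{equation}
First, move \(w(q(b))\) past \(M_{S(a)}\) using \cref{eqn:qubit-clifford-representative-action}. This replaces \(M_{S(a)}w(q(b))\) by \(\gamma_{S(a)}(q(b))\,w(S(a)q(b))M_{S(a)}\).

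Second, merge \(w(q(a))w(S(a)q(b))\) into \(i^{\mu(q(a),S(a)q(b))}w(q(a)+S(a)q(b))\) using \cref{p=2}. Then replace \(M_{S(a)}M_{S(b)}\) by \(\kappa(S(a),S(b))\,w(c_S(a,b))M_{S(a)S(b)}\) using \cref{eqn:qubit-clifford-section-product}.

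Third, merge \(w(q(a)+S(a)q(b))w(c_S(a,b))\) by \cref{p=2} once more. This produces the factor \(i^{\mu(q(a)+S(a)q(b),c_S(a,b))}\) and the label \(q(a)+S(a)q(b)+c_S(a,b)\).

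By \cref{eqn:qubit-third-level-q-equation}, this label equals \(q(a+b)\). Since \(S\) is a homomorphism, \(M_{S(a)S(b)}=M_{S(a+b)}\). The product is therefore exactly \(F_{q,S}(a,b)T_{q,S}(a+b)\), and the accumulated scalars match \cref{eqn:qubit-Fqs}. This is the same calculation already carried out in the proof of \cref{prop:qubit-third-level-classification}, so I would cite it rather than repeat it.

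Next comes the cocycle property. Each \(T_{q,S}(a)\) is unitary, hence invertible, so \(F_{q,S}\) is the multiplier of a projective representation of \(V_n\). Associativity, \((T(a)T(b))T(d)=T(a)(T(b)T(d))\), gives
\begin{equation}
F(a,b)F(a+b,d)\,T(a+b+d)=F(b,d)F(a,b+d)\,T(a+b+d),
\end{equation}
and cancelling the invertible operator \(T(a+b+d)\) yields the \(U(1)\)-valued \(2\)-cocycle identity.

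For normalization, note that \(q(0)=0\), which follows from the cocycle equation at \(a=b=0\) together with \(c(\operatorname{Id},\operatorname{Id})=0\). Also \(S(0)=\operatorname{Id}\) and \(M_{\operatorname{Id}}=I\), so \(T_{q,S}(0)=w(0)=I\). Then \(T(0)T(b)=F(0,b)T(b)\) forces \(F(0,b)=1\), and similarly \(F(a,0)=1\).

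Finally, \(i^{\mu(a,b)}\) is a normalized \(2\)-cocycle, being the multiplier of \(a\mapsto w(a)\) with \(w(0)=I\). Since \(Z^2(V_n,U(1))\) is a group under pointwise multiplication, the quotient \(R_{q,S}=F_{q,S}\,i^{-\mu}\) is a normalized \(2\)-cocycle.

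The only delicate step is the bookkeeping of the fourth-root phases \(\mu\) in the correct order, because the multiplier is not symmetric. Following the exact factor ordering used in \cref{prop:qubit-third-level-classification} handles this.
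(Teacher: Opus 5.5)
Your proposal is correct and follows essentially the same route as the paper: the multiplier identity comes from the computation in \cref{prop:qubit-third-level-classification} together with \cref{eqn:qubit-third-level-q-equation}, the cocycle identity from associativity, normalization from \(T_{q,S}(0)=I\), and the claim for \(R_{q,S}\) from dividing by the normalized Weyl multiplier \(i^{\mu}\). Your explicit derivation of \(q(0)=0\) from the \(q\)-equation at \(a=b=0\) and \(c(\operatorname{Id}_{V_n},\operatorname{Id}_{V_n})=0\) fills in a small step that the paper only asserts.
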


\begin{proof}
The calculation in \cref{prop:qubit-third-level-classification}, together with \cref{eqn:qubit-third-level-q-equation}, gives
\begin{equation}
T_{q,S}(a)T_{q,S}(b)=F_{q,S}(a,b)T_{q,S}(a+b).
\end{equation}
Associativity of multiplication implies that \(F_{q,S}\) satisfies the \(2\)-cocycle identity. Moreover, \(q(0)=0\), \(S(0)=\operatorname{Id}_{V_n}\), and \(M_{\operatorname{Id}_{V_n}}=I\), so $T_{q,S}(0)=I$. Setting either argument equal to \(0\) in the preceding multiplication law therefore gives $F_{q,S}(a,0)=F_{q,S}(0,a)=1.$
Hence, \(F_{q,S}\) is a normalized \(U(1)\)-valued \(2\)-cocycle. The multiplication law
\begin{equation}
w(a)w(b)=i^{\mu(a,b)}w(a+b)
\end{equation}
implies, by associativity of the Weyl operators, that \(i^{\mu(a,b)}\) satisfies the \(2\)-cocycle identity. Since \(w(0)=I\), setting either argument equal to \(0\) gives $i^{\mu(a,0)}=i^{\mu(0,a)}=1.$ Thus, \((a,b)\mapsto i^{\mu(a,b)}\) is also a normalized \(U(1)\)-valued \(2\)-cocycle. Therefore, the ratio in \cref{eqn:qubit-Rqs} is a normalized \(U(1)\)-valued \(2\)-cocycle.
\end{proof}

\begin{cor}\label{cor:qubit-phase-cohomological-obstruction}
Let \(S:V_n\to\operatorname{Sp}(V_n)\) be a homomorphism and let \(q:V_n\to V_n\) satisfy \cref{eqn:qubit-third-level-q-equation}. There exists a phase function \(\alpha:V_n\to U(1)\) such that
\begin{equation}
B(a):=\alpha(a)w(q(a))M_{S(a)}
\end{equation}
belongs to \(Y^3_2(V_n)\) if and only if
\begin{equation}\label{eqn:qubit-phase-obstruction-class}
[R_{q,S}]=0\in H^2(V_n,U(1)).
\end{equation}
\end{cor}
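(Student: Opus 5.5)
The plan mirrors the proof of \cref{phase-cohomology} in the odd prime case, replacing the odd-characteristic multiplier by the characteristic-two data.

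First, reduce membership in \(Y^3_2(V_n)\) to a single phase equation. By \cref{prop:qubit-third-level-classification}, \(B(a)=\alpha(a)w(q(a))M_{S(a)}\) lies in \(Y^3_2(V_n)\) if and only if \cref{eqn:qubit-third-level-S-equation,eqn:qubit-third-level-q-equation,eqn:qubit-third-level-alpha-equation} hold. The first two equations are part of the hypotheses. The third equation is, by the definitions \cref{eqn:qubit-Fqs} and \cref{eqn:qubit-Rqs}, exactly
\begin{equation}
\alpha(a+b)=\alpha(a)\alpha(b)F_{q,S}(a,b)i^{-\mu(a,b)}=\alpha(a)\alpha(b)R_{q,S}(a,b).
\end{equation}
The identification \(c_S(a,b)=c(S(a),S(b))\) is used here to match the exponent of \(i\) term by term.

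A more conceptual route to the same condition goes through \(T_{q,S}\). Since \(B(a)=\alpha(a)T_{q,S}(a)\), \cref{lem:qubit-phase-obstruction-cocycle} gives
\begin{equation}
B(a)B(b)=\alpha(a)\alpha(b)F_{q,S}(a,b)T_{q,S}(a+b).
\end{equation}
Comparing this with \(i^{\mu(a,b)}B(a+b)=i^{\mu(a,b)}\alpha(a+b)T_{q,S}(a+b)\), and using that \(T_{q,S}(a+b)\) is invertible, yields the same scalar condition. This route bypasses any normal-form uniqueness argument.

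Next, rewrite the condition as \(R_{q,S}=\delta(\alpha^{-1})\), using the convention \((\delta\phi)(a,b)=\phi(a)\phi(b)/\phi(a+b)\). Then \(\alpha\) exists if and only if \(R_{q,S}\) is a \(2\)-coboundary, that is, \([R_{q,S}]=0\) in \(H^2(V_n,U(1))\). The converse direction uses that \(R_{q,S}\) is a genuine \(2\)-cocycle, which \cref{lem:qubit-phase-obstruction-cocycle} supplies. If \(R_{q,S}=\delta\phi\), set \(\alpha:=\phi^{-1}\); reversing the computation above shows \(B(a)B(b)=i^{\mu(a,b)}B(a+b)\).

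The only point needing care is checking that \(B\) takes values in \(\mathcal C^{(2)}\), as the definition of \(Y^3_2(V_n)\) requires. This holds because \(w(q(a))M_{S(a)}\) is Clifford by \cref{eqn:qubit-clifford-representative-action}.
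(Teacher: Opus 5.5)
Your proposal is correct and follows the paper's own argument: you use \cref{prop:qubit-third-level-classification} to reduce membership in \(Y^3_2(V_n)\) to \(\alpha(a+b)=\alpha(a)\alpha(b)R_{q,S}(a,b)\), that is, \(R_{q,S}=\delta(\alpha^{-1})\), so \(\alpha\) exists exactly when \([R_{q,S}]=0\). Your alternative derivation through \(T_{q,S}\) and \cref{lem:qubit-phase-obstruction-cocycle} is a valid minor variant that avoids appealing to uniqueness of the normal form.
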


\begin{proof}
By \cref{prop:qubit-third-level-classification}, the phase equation is equivalent to
\begin{equation}
\frac{\alpha(a+b)}{\alpha(a)\alpha(b)}=R_{q,S}(a,b)
\end{equation}
for all \(a,b\in V_n\). Recall that for a \(1\)-cochain \(\phi:V_n\to U(1)\), its \(2\)-coboundary is given by \((\delta\phi)(a,b)=\frac{\phi(a)\phi(b)}{\phi(a+b)}\). Hence, the preceding equation is equivalent to $R_{q,S}=\delta(\alpha^{-1})$. Therefore, such an \(\alpha\) exists if and only if \(R_{q,S}\) is a \(2\)-coboundary, which is equivalent to \cref{eqn:qubit-phase-obstruction-class}.
\end{proof}

\cref{prop:qubit-section-independence} shows that the obstruction classes above do not depend on the choice of representatives \(S\mapsto M_S\), provided the corresponding coordinates are transformed accordingly. 

\begin{prop}\label{prop:qubit-section-independence}
Let \(S\mapsto M_S\) and \(S\mapsto M'_S\) be two normalized choices of representatives in \(\mathcal C^{(2)}\). Then there exist functions
\begin{align}
\eta:\operatorname{Sp}(V_n)  \longrightarrow U(1), \qquad
d:\operatorname{Sp}(V_n)  \longrightarrow V_n
\end{align}
such that
\begin{equation}\label{eqn:qubit-change-section}
M'_S=\eta(S)w(d(S))M_S.
\end{equation}
If \(c\) and \(c'\) are the corresponding \(V_n\)-valued cocycles, then
\begin{equation}\label{eqn:qubit-c-section-change}
c'(S,T)=c(S,T)+d(S)+Sd(T)-d(ST).
\end{equation}
Consequently, $[c']=[c]\in H^2(\operatorname{Sp}(V_n),V_n)$.  For a fixed homomorphism \(S:V_n\to\operatorname{Sp}(V_n)\), the corresponding pulled-back classes satisfy $[c'_S]=[c_S]\in H^2(V_n,V_{n,S})$. Moreover, if \(q\) is compatible with \(c_S\), define
\begin{equation}\label{eqn:qubit-q-section-change}
q'(a):=q(a)-d(S(a)).
\end{equation}
Then \(q'\) is compatible with \(c'_S\), and the corresponding phase-obstruction classes satisfy $[R^{\prime}_{q',S}]=[R_{q,S}]\in H^2(V_n,U(1))$. 
\end{prop}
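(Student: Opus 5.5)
The plan is to mimic the odd-prime proofs of \cref{lem:kappa-cohomology-class}, \cref{no-choice-kappa} and \cref{prop:third-level-obstruction-H1-invariant}, carrying along the extra Weyl correction $d$.

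\textbf{Comparing the two sections.} Since $\Psi(M'_S)=S=\Psi(M_S)$, we have $M'_SM_S^*\in\ker\Psi=\mathcal C^{(1)}$. Hence $M'_SM_S^*=\eta(S)w(d(S))$ for unique $\eta(S)\in U(1)$ and $d(S)\in V_n$, by \cref{lem:Weyl-label-phase-uniqueness}. Normalization of both sections gives $\eta(\operatorname{Id})=1$ and $d(\operatorname{Id})=0$. This proves \cref{eqn:qubit-change-section}.

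\textbf{Transformation of $c$.} Next I would expand $M'_SM'_T$ in two ways. On one side,
\[
M'_SM'_T=\eta(S)\eta(T)\,w(d(S))M_Sw(d(T))M_T .
\]
Move $M_S$ past $w(d(T))$ using \cref{eqn:qubit-clifford-representative-action}, which produces $w(Sd(T))$ up to phase. Then apply \cref{eqn:qubit-clifford-section-product} to $M_SM_T$ and collect all scalars. This gives
\[
M'_SM'_T=\lambda\,w\bigl(d(S)+Sd(T)+c(S,T)\bigr)M_{ST}
\]
for some $\lambda\in U(1)$. On the other side,
\[
M'_SM'_T=\kappa'(S,T)\,w(c'(S,T))M'_{ST}=\lambda'\,w\bigl(c'(S,T)+d(ST)\bigr)M_{ST}.
\]
Uniqueness of the normal form (\cref{clifford-normal-form} in its $p=2$ version) forces the Weyl labels to agree. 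This yields \cref{eqn:qubit-c-section-change}, i.e.\ $c'=c+\delta d$ in additive notation, so $[c']=[c]$. Pulling back along the homomorphism $S$ commutes with $\delta$, so $c'_S=c_S+\delta(d\circ S)$ and $[c'_S]=[c_S]$.

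\textbf{Compatibility of $q'$.} Set $q'=q-d\circ S$. Then
\[
\delta q'=\delta q-\delta(d\circ S)=-c_S-(c'_S-c_S)=-c'_S,
\]
so $q'$ satisfies the primed version of \cref{eqn:qubit-third-level-q-equation}.

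\textbf{Phase class.} This is the main obstacle, because the explicit formula \cref{eqn:qubit-Fqs} is unwieldy. I would avoid it and argue through operators, as in \cref{prop:third-level-obstruction-H1-invariant}. Define $T'_{q',S}(a)=w(q'(a))M'_{S(a)}$. Substituting $M'_{S(a)}=\eta(S(a))w(d(S(a)))M_{S(a)}$ and using $q'(a)+d(S(a))=q(a)$ together with \cref{p=2}, we get
\[
T'_{q',S}(a)=\theta(a)\,T_{q,S}(a)
\]
for some phase function $\theta:V_n\to U(1)$. Both families are projective representations, with multipliers $F'_{q',S}$ and $F_{q,S}$ respectively, by \cref{lem:qubit-phase-obstruction-cocycle}. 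Rescaling a projective representation pointwise by $\theta$ changes its multiplier by the coboundary $\delta\theta$, so $F'_{q',S}=F_{q,S}\cdot\delta\theta$.

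The Weyl multiplier $i^{\mu(a,b)}$ is unchanged under the change of section. Therefore $R'_{q',S}=R_{q,S}\cdot\delta\theta$, and $[R'_{q',S}]=[R_{q,S}]$. The only care needed is bookkeeping of phases and checking that $\theta$ is well defined; no explicit evaluation of $\mu$ is required.
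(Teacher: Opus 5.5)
Your proposal is correct and follows the paper's own proof essentially step for step. The paper likewise obtains $M'_S=\eta(S)w(d(S))M_S$ from $\ker\Psi=\mathcal C^{(1)}$, compares Weyl labels in $M'_SM'_T$ versus $M'_{ST}$ to get $c'=c+\delta d$, checks $\delta q'=-c'_S$ directly, and handles the phase class at the operator level via $T'(a)=\lambda(a)T(a)$ with $\lambda(a)=\eta(S(a))\,i^{\mu(q'(a),d(S(a)))}$, so that $R'_{q',S}=R_{q,S}\cdot\delta\lambda$; your version simply spells out the label comparison that the paper compresses into one line.
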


\begin{proof}
Since \(M'_S\) and \(M_S\) have the same image under \(\Psi\), the operator \(M'_SM_S^*\) lies in \(\ker\Psi=\mathcal C^{(1)}\). Thus, \cref{eqn:qubit-change-section} follows from the analog of \cref{clifford-normal-form}. Comparing the labels in the products \(M'_SM'_T\) and \(M'_{ST}\) and using \cref{eqn:qubit-clifford-representative-action} gives
\begin{equation}
c'(S,T)+d(ST)=d(S)+Sd(T)+c(S,T).
\end{equation}
This is precisely \cref{eqn:qubit-c-section-change}. Hence, \(c'\) and \(c\) differ by the \(2\)-coboundary of the \(1\)-cochain \(d\), and therefore determine the same class in \(H^2(\operatorname{Sp}(V_n),V_n)\). Pulling back along \(S\) gives \([c'_S]=[c_S]\). Define $d_S(a):=d(S(a)).$ By \cref{eqn:qubit-c-section-change}, we have
\begin{equation}
c'_S=c_S+\delta d_S.
\end{equation}
If \(\delta q=-c_S\), then \cref{eqn:qubit-q-section-change} gives
\begin{equation}
\delta q'=\delta q-\delta d_S=-c_S-\delta d_S=-c'_S.
\end{equation}
Thus, \(q'\) is compatible with the new section. Finally, define
\begin{align}
T(a) :=w(q(a))M_{S(a)}, \qquad
T'(a):=w(q'(a))M'_{S(a)}.
\end{align}
Using \cref{eqn:qubit-change-section}, \cref{eqn:qubit-q-section-change} and the characteristic-two multiplication law, we obtain $T'(a)=\lambda(a)T(a),$ where $\lambda(a):=\eta(S(a))i^{\mu(q'(a),d(S(a)))}.$ Therefore, if \(F_{q,S}\) and \(F'_{q',S}\) denote the corresponding projective multipliers, then $F'_{q',S}=F_{q,S}\cdot\delta\lambda.$ Since \(i^{\mu(a,b)}\) is unchanged, it follows that $R'_{q',S}=R_{q,S}\cdot\delta\lambda.$
Hence, \(R'_{q',S}\) and \(R_{q,S}\) determine the same class in \(H^2(V_n,U(1))\). 
\end{proof}

The discussion above shows that, for a fixed symplectic homomorphism \(S\), the characteristic-two parametrization involves two successive obstruction classes. The first class $[c_S]\in H^2(V_n,V_{n,S})$ determines whether compatible affine data exist. When this class vanishes, the translation classes of compatible affine data form a torsor over \(H^1(V_n,V_{n,S})\). For a compatible pair \((q,S)\), the second class $[R_{q,S}]\in H^2(V_n,U(1))$ determines whether compatible phase data exist. This gives the characteristic-two analogue of the \(H^1\) and \(H^2\) description developed in \cref{cohom-inter}.

\subsection{Generator Level Criteria}
\label{appendix:qubit-generator-level-criteria}

We conclude \cref{appendix:qubit-affine-symplectic-phase} by giving the characteristic-two analogues of \cref{cor:third-level-generator-equations} and \cref{cor-cohom}. Let \(e_1,\ldots,e_{2n}\) be an \(\mathbb F_2\)-basis of \(V_n\), and write \(S_i:=S(e_i)\) and \(q_i:=q(e_i)\).

\begin{cor}\label{cor:qubit-third-level-generator-equations}
A pair of maps \(S:V_n\to\operatorname{Sp}(V_n)\) and \(q:V_n\to V_n\) satisfies \cref{eqn:qubit-third-level-S-equation,eqn:qubit-third-level-q-equation} if and only if
\begin{align}
S_iS_j &= S_jS_i, \label{eqn:qubit-generator-S-commute}\\
S_i^2 &= \operatorname{Id}_{V_n}, \label{eqn:qubit-generator-S-order}\\
q_i+S_iq_j+c(S_i,S_j) &= q_j+S_jq_i+c(S_j,S_i), \label{eqn:qubit-generator-q-commute}\\
q_i+S_iq_i+c(S_i,S_i) &= 0. \label{eqn:qubit-generator-q-order}
\end{align}
for all \(1\leq i,j\leq 2n\).
\end{cor}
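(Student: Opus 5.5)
This is the characteristic-two analogue of \cref{cor:third-level-generator-equations}, and we would prove it the same way: reduce everything to a statement about a homomorphism from $V_n$ into the twisted group $\operatorname{ASp}_c(V_n)$, then use a presentation of $V_n$.

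\textbf{Step 1 (reduction).} By \cref{lem:qubit-twisted-affine-extension} and the remark following it, \cref{eqn:qubit-third-level-S-equation,eqn:qubit-third-level-q-equation} hold if and only if
\[
\Phi_{q,S}\colon a\mapsto (q(a),S(a))
\]
is a group homomorphism $V_n\to\operatorname{ASp}_c(V_n)$. To verify this equivalence directly, expand the product $(q(a),S(a))(q(b),S(b))$ using \cref{eqn:qubit-twisted-affine-product} and compare its two coordinates with $(q(a+b),S(a+b))$.

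\textbf{Step 2 (presentation).} Over $\mathbb F_2$, the additive group $V_n$ is the free abelian group on $e_1,\dots,e_{2n}$ modulo the relations $e_i+e_j=e_j+e_i$ and $2e_i=0$. Therefore an assignment $e_i\mapsto g_i:=(q_i,S_i)$ extends to a homomorphism if and only if
\begin{itemize}
\item[(a)] the $g_i$ commute pairwise, and
\item[(b)] $g_i^2=1$ for each $i$.
\end{itemize}

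\textbf{Step 3 (compute in $\operatorname{ASp}_c$).} Using the twisted product law,
\[
g_ig_j=\bigl(q_i+S_iq_j+c(S_i,S_j),\,S_iS_j\bigr).
\]
So condition (a) is exactly \cref{eqn:qubit-generator-S-commute} together with \cref{eqn:qubit-generator-q-commute}. Similarly,
\[
g_i^2=\bigl(q_i+S_iq_i+c(S_i,S_i),\,S_i^2\bigr),
\]
and the identity of $\operatorname{ASp}_c(V_n)$ is $(0,\operatorname{Id})$ by the normalization \cref{eqn:qubit-c-normalization}. So condition (b) is exactly \cref{eqn:qubit-generator-S-order} together with \cref{eqn:qubit-generator-q-order}.

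\textbf{Step 4 (matching the data).} The homomorphism obtained in Step 2 sends $e_i$ to $(q_i,S_i)$. Any map $\Phi_{q,S}$ whose values on the basis are these elements therefore agrees with it on generators. Conversely, a homomorphism determines $(q,S)$ uniquely. This gives both directions of the corollary.

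The main point needing care is Step 2 in a non-abelian target: a homomorphism from a group presentation exists exactly when the images satisfy the defining relations. This holds for any target group, so no abelianness of $\operatorname{ASp}_c(V_n)$ is needed. The only other subtlety is that associativity of the twisted product, which comes from the cocycle identity \cref{eqn:qubit-c-cocycle}, is already established in \cref{lem:qubit-twisted-affine-extension}.
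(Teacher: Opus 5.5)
Your proposal is correct and is essentially the paper's proof. You reduce to $\Phi_{q,S}$ being a homomorphism $V_n\to\operatorname{ASp}_c(V_n)$, use the presentation of $V_n$ by generators $e_i$ with commutation and order-two relations (valid for a non-abelian target), and read off the four equations from $g_ig_j$, $g_jg_i$ and $g_i^2$ via the twisted product and the normalization of $c$.

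As in the paper, the ``if'' direction should be read as saying that basis data $(q_i,S_i)$ satisfying these relations extend uniquely to a pair $(q,S)$ satisfying the two equations. Your Step 4 claim that an arbitrary $\Phi_{q,S}$ agreeing with the homomorphism on generators suffices is not literally enough, since nothing constrains an arbitrary pair $(q,S)$ off the basis.
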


\begin{proof}
By \cref{eqn:qubit-twisted-affine-homomorphism}, specifying \((q,S)\) is equivalent to specifying a group homomorphism
\begin{align}
\Phi_{q,S}:V_n &\longrightarrow \operatorname{ASp}_c(V_n),\\
a &\longmapsto (q(a),S(a)).
\end{align}
The additive group \(V_n\) is generated by \(e_1,\ldots,e_{2n}\), subject to the relations
\begin{equation}
e_i+e_j=e_j+e_i,\qquad 2e_i=0.
\end{equation}
Hence, \(\Phi_{q,S}\) is a homomorphism if and only if the elements \((q_i,S_i)\) commute pairwise and have order dividing \(2\). Using \cref{eqn:qubit-twisted-affine-product}, pairwise commutativity is equivalent to \cref{eqn:qubit-generator-S-commute,eqn:qubit-generator-q-commute}. Moreover,
\begin{equation}
(q_i,S_i)^2=(q_i+S_iq_i+c(S_i,S_i),S_i^2).
\end{equation}
Thus, \((q_i,S_i)^2=(0,\operatorname{Id}_{V_n})\) is equivalent to \cref{eqn:qubit-generator-S-order,eqn:qubit-generator-q-order}.
\end{proof}

Once the twisted affine data have been fixed, the phase obstruction can again be tested on a basis using the commutator bicharacter from \cref{appendix:cohom-facts}.

\begin{cor}\label{cor:qubit-phase-generator-obstruction}
Let \(q\) and \(S\) satisfy \cref{eqn:qubit-third-level-S-equation,eqn:qubit-third-level-q-equation}. Then $[R_{q,S}]=0\in H^2(V_n,U(1))$ if and only if
\begin{equation}\label{eqn:qubit-phase-generator-test}
\frac{R_{q,S}(e_i,e_j)}{R_{q,S}(e_j,e_i)}=1
\end{equation}
for all \(1\leq i,j\leq 2n\). Equivalently,
\begin{equation}\label{eqn:qubit-phase-generator-test-expanded}
\frac{\gamma_{S_i}(q_j)\kappa(S_i,S_j)}{\gamma_{S_j}(q_i)\kappa(S_j,S_i)}i^{\Delta_{ij}}=1,
\end{equation}
where
\begin{align}
\Delta_{ij} &:= \mu(q_i,S_iq_j)-\mu(q_j,S_jq_i) \notag\\
&\quad+\mu(q_i+S_iq_j,c(S_i,S_j))-\mu(q_j+S_jq_i,c(S_j,S_i)) \notag\\
&\quad-\mu(e_i,e_j)+\mu(e_j,e_i)
\end{align}
in $\mathbb Z_4$
\end{cor}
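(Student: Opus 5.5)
The plan is to reduce the statement to \cref{cor:coboundary-iff-trivial-commutator}, applied with $A=V_n$ over $\mathbb F_2$. The only extra work is to check that $R_{q,S}$ is a \emph{normalized} $2$-cocycle, and then to expand the commutator ratio explicitly.

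\textbf{Step 1: normalization.} By \cref{lem:qubit-phase-obstruction-cocycle}, $R_{q,S}(a,b)=F_{q,S}(a,b)\,i^{-\mu(a,b)}$ is a normalized $U(1)$-valued $2$-cocycle on $V_n$. This is the hypothesis needed in \cref{cor:coboundary-iff-trivial-commutator}.

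\textbf{Step 2: reduction to the basis.} That corollary says $[R_{q,S}]=0$ if and only if $\operatorname{comm}_{R}(a,b)=R_{q,S}(a,b)/R_{q,S}(b,a)$ is trivial. By \cref{commutator-bicharacter-well-defined}, $\operatorname{comm}_R$ is biadditive. Over $\mathbb F_2$, every vector is a sum of basis vectors with coefficients $0$ or $1$, so the product formula in the proof of \cref{cor:coboundary-iff-trivial-commutator} still applies. Hence triviality is equivalent to $\operatorname{comm}_R(e_i,e_j)=1$ for all $i,j$, which is \cref{eqn:qubit-phase-generator-test}.

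\textbf{Step 3: expanded form.} Substitute \cref{eqn:qubit-Fqs} with $a=e_i$, $b=e_j$. Write $q(e_i)=q_i$, $S(e_i)=S_i$, and $c_S(e_i,e_j)=c(S_i,S_j)$. The quotient $R_{q,S}(e_i,e_j)/R_{q,S}(e_j,e_i)$ then splits into two parts. The first is the ratio of $\gamma$ and $\kappa$ factors, $\gamma_{S_i}(q_j)\kappa(S_i,S_j)/\bigl(\gamma_{S_j}(q_i)\kappa(S_j,S_i)\bigr)$. The second is a single power $i^{\Delta_{ij}}$, where the exponent collects the two $\mu(q,Sq)$ terms, the two $\mu(\cdot,c)$ terms, and $-\mu(e_i,e_j)+\mu(e_j,e_i)$ coming from the factor $i^{-\mu}$. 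Since all these $\mu$ values are only defined mod $4$, $\Delta_{ij}$ should be read in $\mathbb Z_4$, and $i^{\Delta_{ij}}$ is then well defined.

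The main obstacle is bookkeeping rather than anything conceptual. One has to keep the $\mathbb Z_4$ exponents consistent. One also has to check that the analogue of the odd-prime simplification is \emph{not} available here: for $p=2$, $S_iq_j$ and $S_jq_i$ need not agree, because the twisted commutation relation \cref{eqn:qubit-generator-q-commute} carries the $c$ terms. So the two sides of $\Delta_{ij}$ cannot be merged, and the statement keeps them separate, which is the form written above.
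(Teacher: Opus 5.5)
Your proposal is correct and follows the paper's own proof. Both arguments obtain that $R_{q,S}$ is a normalized cocycle from \cref{lem:qubit-phase-obstruction-cocycle}, reduce to basis pairs via \cref{cor:coboundary-iff-trivial-commutator} (which already covers $\mathbb F_2$), and substitute \cref{eqn:qubit-Fqs,eqn:qubit-Rqs} to produce the factor $i^{\Delta_{ij}}$. Your closing remark about an ``odd-prime simplification'' is unfounded, since the odd-prime criterion in \cref{cor-cohom} also keeps both $q$-terms separate, but this does not affect the argument.
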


\begin{proof}
By \cref{lem:qubit-phase-obstruction-cocycle}, \(R_{q,S}\) is a normalized \(U(1)\)-valued \(2\)-cocycle on the finite abelian group \(V_n\). Therefore, \cref{cor:coboundary-iff-trivial-commutator} implies that its cohomology class vanishes if and only if its commutator bicharacter is trivial on pairs of basis vectors. This gives \cref{eqn:qubit-phase-generator-test}. Substituting \cref{eqn:qubit-Fqs,eqn:qubit-Rqs} gives \cref{eqn:qubit-phase-generator-test-expanded}.
\end{proof}

\section{Commuting Clifford Lifts for Unipotent Symplectic Matrices}
\label[appendix]{appendix:upper-unipotent-clifford-lifts}

In both \cref{two-qudit-c3-semi-clifford} and \cref{three-qudit-c3-semi-clifford}, we used the fact that the relevant unipotent symplectic subgroup admits pairwise commuting Clifford lifts. We now prove this fact in a form that applies to an arbitrary number of qudits.

\begin{lem}\label{lem:upper-unipotent-commuting-lifts}
Let \(p\) be an odd prime and let \(m\geq 1\). Write $V_m=\mathbb F_p^m\oplus \mathbb F_p^m$
with its standard symplectic form. For each symmetric matrix \(C\in\operatorname{Sym}_m(\mathbb F_p)\), define
\begin{equation}
u_C:=\begin{pmatrix}I&C\\0&I\end{pmatrix}\in\operatorname{Sp}(V_m).
\end{equation}
Then there exist Clifford unitaries \(N_C\) such that
\begin{equation}
N_Cw(s,t)N_C^*=w(s+Ct,t)
\end{equation}
for all \((s,t)\in V_m\), and $N_CN_D=N_{C+D}$
for all \(C,D\in\operatorname{Sym}_m(\mathbb F_p)\). Moreover, if \(M_{u_C}\) denotes any fixed choice of Clifford representative implementing \(u_C\), then the operators \(M_{u_C}\) commute pairwise.
\end{lem}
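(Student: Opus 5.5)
We will construct the lifts \(N_C\) explicitly as diagonal phase gates. Work on \(m\) qudits, and for \(C\in\operatorname{Sym}_m(\mathbb F_p)\) define
\begin{equation}
N_C\ket{x}:=\omega^{-2^{-1}x^TCx}\ket{x},\qquad x\in\mathbb F_p^m,
\end{equation}
possibly with the sign of the exponent adjusted once we fix conventions. Here \(2^{-1}\) makes sense because \(p\) is odd. Since the exponent is additive in \(C\), we get \(N_CN_D=N_{C+D}\) immediately. These operators are diagonal, so they all commute. This settles the group-law part of the statement.

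The next step is to verify the conjugation formula. Since \(N_C\) is diagonal, it commutes with \(Z^t\). For the shift we compute
\begin{equation}
N_CX^sN_C^*\ket{x}=\omega^{-2^{-1}(x+s)^TC(x+s)+2^{-1}x^TCx}\ket{x+s}.
\end{equation}
Expanding the quadratic form and using the symmetry of \(C\), this becomes \(\omega^{-2^{-1}s^TCs}\,Z^{-Cs}X^s\) up to the appropriate sign convention. Combining with the normalization \(\omega^{-2^{-1}s\cdot t}\) in \cref{modify-ops} and the relation \(X^sZ^t=\omega^{-s\cdot t}Z^tX^s\), we check that \(N_Cw(a)N_C^*\) equals exactly \(w(u_Ca)\) with no residual phase. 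The convention for \(u_C\) matters here. The statement writes \(w(s+Ct,t)\), so the label that acquires the \(C\)-shift must be the one paired with the \(L\)-block in the paper's coordinates. If necessary, we will instead conjugate the Fourier transform of the diagonal gate, \(F N_C F^*\), which is a diagonal gate in the \(X\)-basis. This swaps the roles of \(s\) and \(t\), and it does not affect the group law or commutativity. The phase bookkeeping, namely matching the \(2^{-1}\) factors so that the image is \(w(u_Ca)\) on the nose, is the main computational obstacle. It is, however, a routine quadratic-form identity.

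For the final claim, let \(M_{u_C}\) be any fixed Clifford representative of \(u_C\). Then \(M_{u_C}N_C^*\) commutes with every \(w(a)\) up to phase, since both operators implement \(u_C\) exactly modulo phases. Therefore \(\Psi(M_{u_C}N_C^*)=\operatorname{Id}\), so \(M_{u_C}N_C^*\in\mathcal C^{(1)}\) by \cref{second-level-symplectic-quotient}. In fact the label of this Weyl factor vanishes, by the argument used to show \(c(S,T)=0\). More directly, the representative \(M_{u_C}\) satisfies \(M_{u_C}w(x)M_{u_C}^*=w(u_Cx)\) exactly by the paper's convention. Hence \(M_{u_C}N_C^*\) commutes with every \(w(x)\), and \cref{cor:tau-invariants-are-scalars} gives \(M_{u_C}=\lambda_C N_C\) with \(\lambda_C\in U(1)\). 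Scalars commute, and the \(N_C\) commute pairwise, so the \(M_{u_C}\) commute pairwise.
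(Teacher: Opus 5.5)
Your proposal follows the paper's proof. The paper also conjugates the diagonal quadratic phase gate by the Fourier transform: it sets $N_C=\mathcal F^*D_{-C}\mathcal F$, which equals your $FN_CF^*$ because the even gate $D_{-C}$ commutes with $\mathcal F^2$. It likewise obtains $N_CN_D=N_{C+D}$ from additivity of the exponent, and concludes $M_{u_C}=\eta(C)N_C$ via \cref{cor:tau-invariants-are-scalars}.

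The Fourier conjugation is required, not optional. The bare diagonal gate satisfies $N_Cw(s,t)N_C^*=w(s,t-Cs)$, so it implements the lower rather than the upper unipotent; also, the scalar in your intermediate formula should be $\omega^{+2^{-1}s^TCs}$. By contrast, $FN_CF^*$ gives exactly $w(s+Ct,t)$.
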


\begin{proof}
The assignment \(C\mapsto u_C\) is a homomorphism from the additive group \(\operatorname{Sym}_m(\mathbb F_p)\) to \(\operatorname{Sp}(V_m)\). We now choose explicit Clifford lifts of these symplectic matrices using diagonal quadratic phase gates and the \(m\)-qudit Fourier transform. For \(C\in\operatorname{Sym}_m(\mathbb F_p)\), define the diagonal quadratic phase gate
\begin{equation}
D_C\ket{x}:=\omega^{\frac12 x^TCx}\ket{x}.
\end{equation}
We first compute the conjugation action of \(D_C\) on Weyl operators. For \(s,t,x\in\mathbb F_p^m\), we have
\begin{align}
D_Cw(s,t)D_C^*\ket{x}&=D_C (\omega^{-\frac12 s\cdot t}Z^tX^s\omega^{-\frac12 x^TCx}\ket{x} )\\
&=D_C (\omega^{-\frac12 s\cdot t-\frac12 x^TCx+t\cdot(x+s)}\ket{x+s} )\\
&=\omega^{-\frac12 s\cdot t-\frac12 x^TCx+t\cdot(x+s)+\frac12(x+s)^TC(x+s)}\ket{x+s}.
\end{align}
Since \(C=C^T\), we have
\begin{equation}
(x+s)^TC(x+s)-x^TCx=2x^TCs+s^TCs.
\end{equation}
Therefore, we have
\begin{align}
D_Cw(s,t)D_C^*\ket{x}&=\omega^{-\frac12 s\cdot t+t\cdot(x+s)+x^TCs+\frac12 s^TCs}\ket{x+s}\\
&=\omega^{-\frac12 s\cdot(t+Cs)+(t+Cs)\cdot(x+s)}\ket{x+s}\\
&=\omega^{-\frac12 s\cdot(t+Cs)}Z^{t+Cs}X^s\ket{x}=w(s,t+Cs)\ket{x}.
\end{align}
Since this holds for every computational basis vector \(\ket{x}\), we obtain
\begin{equation}\label{eqn:DC-conjugation-general-m}
D_Cw(s,t)D_C^*=w(s,t+Cs).
\end{equation}
Let \(\mathcal F\) denote the \(m\)-qudit Fourier transform
\begin{equation}
\mathcal F\ket{x}:=\frac{1}{p^{m/2}}\sum_{y\in\mathbb F_p^m}\omega^{x\cdot y}\ket{y}.
\end{equation}
We have, 
\begin{align}
\mathcal F X^s\mathcal F^*\ket{x}
&=\frac{1}{p^{m/2}}\sum_{y\in\mathbb F_p^m}\omega^{-x\cdot y}\mathcal F\ket{y+s}\\
&=\frac{1}{p^m}\sum_{y,z\in\mathbb F_p^m}\omega^{-x\cdot y+(y+s)\cdot z}\ket{z}\\
&=\frac{1}{p^m}\sum_{z\in\mathbb F_p^m}\omega^{s\cdot z}\left(\sum_{y\in\mathbb F_p^m}\omega^{y\cdot(z-x)}\right)\ket{z}=\omega^{s\cdot x}\ket{x}=Z^s\ket{x}.
\end{align}
Thus, $\mathcal F X^s\mathcal F^*=Z^s$. Similarly, we have
\begin{align}
\mathcal F Z^t\mathcal F^*\ket{x}
&=\frac{1}{p^{m/2}}\sum_{y\in\mathbb F_p^m}\omega^{-x\cdot y}\mathcal F Z^t\ket{y}\\
&=\frac{1}{p^{m/2}}\sum_{y\in\mathbb F_p^m}\omega^{-x\cdot y+t\cdot y}\mathcal F\ket{y}\\
&=\frac{1}{p^m}\sum_{y,z\in\mathbb F_p^m}\omega^{(t-x)\cdot y+y\cdot z}\ket{z}=\ket{x-t}=X^{-t}\ket{x}.
\end{align}
Thus, $\mathcal F Z^t\mathcal F^*=X^{-t}$. 
Therefore, we have
\begin{align}
\mathcal Fw(s,t)\mathcal F^*
&=\omega^{-\frac12 s\cdot t}\mathcal F Z^tX^s\mathcal F^*\\
&=\omega^{-\frac12 s\cdot t}X^{-t}Z^s\\
&=\omega^{-\frac12 s\cdot t}\omega^{s\cdot t}Z^sX^{-t}=\omega^{\frac12 s\cdot t}Z^sX^{-t}=w(-t,s).
\end{align}
Define $N_C:=\mathcal F^*D_{-C}\mathcal F$. Then \(N_C\) implements the upper unipotent symplectic transformation \(u_C\). Indeed, we have
\begin{align}
N_Cw(s,t)N_C^*
&=\mathcal F^*D_{-C}\mathcal Fw(s,t)\mathcal F^*D_{-C}^*\mathcal F\\
&=\mathcal F^*D_{-C}w(-t,s)D_{-C}^*\mathcal F\\
&=\mathcal F^*w(-t,s+Ct)\mathcal F=w(s+Ct,t)=w(u_C(s,t)).
\end{align}
Thus, \(N_C\) is a Clifford lift of \(u_C\). Moreover, these lifts commute because
\begin{align}
N_CN_D
&=\mathcal F^*D_{-C}\mathcal F\mathcal F^*D_{-D}\mathcal F\\
&=\mathcal F^*D_{-C}D_{-D}\mathcal F\\
&=\mathcal F^*D_{-(C+D)}\mathcal F=N_{C+D}=N_{D+C}=N_DN_C.
\end{align}
Finally, let \(M_{u_C}\) denote the fixed Clifford representative implementing \(u_C\). Both \(M_{u_C}\) and \(N_C\) implement the same conjugation action on Weyl operator and \(N_C^*M_{u_C}\) commutes with every Weyl operator. Hence, there exists \(\eta(C)\in U(1)\) such that $M_{u_C}=\eta(C)N_C.$
Consequently, the fixed lifts \(M_{u_C}\) also commute pairwise. Indeed, we have
\begin{align}
M_{u_C}M_{u_D}
&=\eta(C)\eta(D)N_CN_D\\
&=\eta(C)\eta(D)N_DN_C\\
&=M_{u_D}M_{u_C}.
\end{align}
This completes the proof.
\end{proof}

\section{Omitted Proofs}\label[appendix]{appendix:omitted}
In \cref{appendix:omitted}, we provide the details of proofs that were omitted from the main text.

\subsection{Proof of Lemma 2.10}\label[appendix]{appendix:semi-clifford-proof}
\begin{proof}
Let \(A_Z\subseteq\mathcal C^{(1)}\) denote the maximal abelian subgroup generated by the \(Z\)-type Weyl operators. Explicitly,
\begin{equation}
A_Z:=\{\lambda Z^t:\lambda\in U(1),\ t\in\mathbb F_p^n\}.
\end{equation}
Suppose that \(U\) is semi-Clifford. By definition, there exist \(C_1,C_2\in\mathcal C^{(2)}\) and a diagonal unitary operator \(D\) such that $U=C_1DC_2$. Since \(D\) is diagonal, it commutes with \(A_Z\). Hence, $DA_ZD^*=A_Z$. Set
\begin{equation}
A:=C_2^*A_ZC_2,\qquad A':=C_1A_ZC_1^*.
\end{equation}
Since \(C_1\) and \(C_2\) act on \(\mathcal C^{(1)}\) by automorphisms, both \(A\) and \(A'\) are maximal abelian subgroups of \(\mathcal C^{(1)}\). Moreover, we have
\begin{equation}
UAU^*=C_1DC_2(C_2^*A_ZC_2)C_2^*D^*C_1^*=C_1DA_ZD^*C_1^*=C_1A_ZC_1^*=A'.
\end{equation}
Thus, \(U\) maps a maximal abelian subgroup of \(\mathcal C^{(1)}\) to another maximal abelian subgroup of \(\mathcal C^{(1)}\). Conversely, suppose that there exist maximal abelian subgroups \(A,A'\subseteq\mathcal C^{(1)}\) such that $UAU^*=A'$.  It is known that \(\mathcal C^{(2)}\) acts transitively on maximal abelian subgroups of \(\mathcal C^{(1)}\). Hence, there exist \(C_R,C_L\in\mathcal C^{(2)}\) such that $C_RA_ZC_R^*=A$ and $ C_LA'C_L^*=A_Z$. 
Define $V:=C_LUC_R$. We have
\begin{align}
V A_Z V^*
&= C_L U C_R A_Z C_R^* U^* C_L^* = C_L U A U^* C_L^*= C_L A' C_L^*= A_Z.
\end{align}
Hence, \(V\) determines an automorphism of \(A_Z\) via conjugation that fixes scalar phases. This automorphism is implemented by a \(K\in\mathcal C^{(2)}\). This is because any automorphism of \(A_Z\) fixing scalar phases has the form $Z^t\mapsto \chi(t)Z^{Mt}$, where \(M\in \operatorname{GL}(\mathbb F^n_p)\) and \(\chi:\mathbb F_p^n\to U(1)\) is a character.\footnote{Since \(A_Z/U(1) \cong \mathbb F_p^n\), any automorphism of \(A_Z\) fixing scalar phases descends to an automorphism of \(\mathbb F_p^n\). Hence it is given by some \(M\in \operatorname{GL}(\mathbb F^n_p)\). Therefore, the image of \(Z^t\) must lie in the coset corresponding to \(Mt\), so it has the form $Z^t\longmapsto \chi(t)Z^{Mt}$
for some phase \(\chi(t)\in U(1)\). Since the automorphism is a group homomorphism and \(Z^{t+s}=Z^tZ^s\), we get
\[
\chi(t+s)Z^{(Mt+Ms)}
=
\chi(t+s)Z^{M(t+s)}
=
\chi(t)\chi(s)Z^{Mt+Ms}.
\]
Thus \(\chi(t+s)=\chi(t)\chi(s)\), so \(\chi:\mathbb F_p^n\to U(1)\) is a character. Therefore, the automorphism of \(A_Z\) is implemented by some Clifford operator \(K\in\mathcal C^{(2)}\) normalizing \(A_Z\).
}
The linear part \(t\mapsto Mt\) is implemented by a permutation of the computational basis, hence by an element of \(\mathcal C^{(2)}\). The character factor \(\chi(t)\) is implemented by conjugation with an \(X\)-type Weyl operator\footnote{
For \(L\in\operatorname{GL}(\mathbb F^n_p)\), define the permutation \(P_L\) by $P_L\ket{x}=\ket{Lx}$. 
Then
\[
P_L Z^t P_L^*\ket{x}=\omega^{t\cdot L^{-1}x}\ket{x}=\omega^{((L^{-1})^{T}t)\cdot x}\ket{x}=Z^{(L^{-1})^{T}t}\ket{x}.
\]
Thus, choosing \(L\) such that \((L^{-1})^{T}=M\), the unitary \(P_L\) implements \(Z^t\mapsto Z^{Mt}\). \(P_L \in \mathcal{C}^{(2)}\) since it maps Weyl operators to Weyl operators. Moreover, every character \(\chi:\mathbb F_p^n\to U(1)\) has the form \(\chi(t)=\omega^{s\cdot t}\) for some \(s\in\mathbb F_p^n\). Conjugation by \(X^s\) gives
$X^s Z^t (X^s)^*=\omega^{-s\cdot t}Z^t.$
Choosing \(s\) with the appropriate sign therefore implements the factor \(Z^t\mapsto \chi(t)Z^t\).}. Hence, we have $VK^*aKV^*=a$ for every \(a\in A_Z\). Hence, \(VK^*\) commutes with \(A_Z\). Since the algebra generated by \(A_Z\) is the diagonal maximal abelian algebra, its commutant is itself. Therefore \(VK^*\) is diagonal. Write $D:=VK^*$. Then \(D\) is diagonal and $V=DK$. Consequently, we have
$U=C_L^*DKC_R^*.$
Since \(C_L^*, KC_R^* \in \mathcal{C}^{(2)}\), \(U\) is semi-Clifford in the sense of \cref{def:semi-clifford}.
\end{proof}

\subsection{One-Dimensional Image Case in Proposition 5.16}
\label[appendix]{appendix:rank-one-codim4-three}
We complete the proof of Proposition 5.16 by ruling out the case \(\dim_{\mathbb F_p}\rho(E_U)=1\). Set \(E_0:=\ker(\rho)\). Since $\dim E_U=4$ and 
$\dim \rho(E_U)=1$, 
we have
$\dim E_0=3$. 
Since \(\rho(E_U)\) has order \(p\), it is a Sylow \(p\)-subgroup of \(\operatorname{Sp}(\mathbb F_p^2)\). All Sylow \(p\)-subgroups are conjugate, so after changing the symplectic basis of \(H\) we may assume
\begin{equation}\label{eqn:three-qudit-c4-standard-p-subgroup}
\rho(E_U)=\left\{\begin{pmatrix}1&t\\0&1\end{pmatrix}:t\in\mathbb F_p\right\}.
\end{equation}
Choose \(h\in\mathbb F_p^6\) such that
\begin{equation}
(\rho\circ S_U)(h)=g:=\begin{pmatrix}1&1\\0&1\end{pmatrix},
\end{equation}
and set
\begin{equation}
\ell:=\ker(g-I)=\operatorname{im}(g-I)=\mathbb F_p\begin{pmatrix}1\\0\end{pmatrix}.
\end{equation}
Write \(S_U(h)=T(g,A_h,Z_h)\). If \(e\in\mathbb F_p^6\) satisfies \(S_U(e)\in E_0\), write \(S_U(e)=T(I,A_e,Z_e)\). Since \(E_U\) is abelian, \cref{eqn:three-qudit-c4-product} gives
$A_e+A_h=A_h+A_eg,$
and hence
\begin{equation}\label{eqn:three-qudit-c4-Ae-g}
A_e(g-I)=0.
\end{equation}
Since \(\ell=\operatorname{im}(g-I)\), it follows that \(A_e\ell=0\). We claim that \(A_h\ell\neq0\). Indeed, if \(A_h\ell=0\), then \(S_U(h)\) fixes \(\ell\) pointwise because \(g\) acts as the identity on \(\ell\), while every element of \(E_0\) fixes \(\ell\) pointwise by \cref{eqn:three-qudit-c4-Ae-g}. Since \(E_U/E_0\cong\rho(E_U)\) is generated by the image of \(S_U(h)\), the whole group \(E_U\) fixes \(\ell\) pointwise. Hence \(E_U\) fixes \(W\oplus\ell\) pointwise. Since \(W\) is isotropic, \(\ell\subseteq H\subseteq W^\perp\), and
\begin{equation}
\dim(W\oplus\ell)=2+1=3,
\end{equation}
the subspace \(W\oplus\ell\) is Lagrangian. This contradicts \cref{lem:three-qudit-large-image-obstruction}. Therefore,
$A_h\ell\neq0.$
Let
\begin{equation}
v:=\begin{pmatrix}1\\0\end{pmatrix}\in\ell,\qquad u:=A_hv\neq0.
\end{equation}
Since $g-I=E_{12}$, \cref{eqn:three-qudit-c4-Ae-g} implies that every \(A_e\) has the form
$A_e=(0\;\;w)$
for some column vector \(w\). Comparing the \(Z\) -components of \(S_U(h)S_U(e)=S_U(e)S_U(h)\) using \cref{eqn:three-qudit-c4-product} gives
$uw^T+wu^T=0.$
Since \(p\) is odd and \(u\neq0\), this forces \(w=0\). Thus \(A_e=0\) for every element of \(E_0\). Therefore,
$E_0\subseteq Z_0.$
Both \(E_0\) and \(Z_0\) have \(\mathbb F_p\)-dimension \(3\). Since \(E_0\subseteq Z_0\), it follows that \(E_0=Z_0\).
We now show that \(\beta\) has rank at most \(1\). Since \(q_U(K_U)=W\), we have
$\beta(k)=0$
for every \(k\in K_U\). For \(Z\in\operatorname{Sym}_2(\mathbb F_p)\), choose \(b_Z\in\mathbb F_p^6\) such that
\begin{equation}
S_U(b_Z)=T(I,0,Z).
\end{equation}
If \(b_Z'\) is another such choice, then \(b_Z-b_Z'\in K_U\), and hence \(\beta(b_Z)=\beta(b_Z')\). Thus \(\beta(b_Z)\) depends only on \(Z\), and we obtain a linear map
\begin{align}
\Theta:\operatorname{Sym}_2(\mathbb F_p) &\longrightarrow W^*,\\
Z &\longmapsto \beta(b_Z).
\end{align}
Applying the affine cocycle equation to \(b_Z+b_D=b_D+b_Z\) and comparing the \(W\) components gives
\begin{equation}\label{eqn:three-qudit-c4-ZThetaD}
Z\Theta(D)=D\Theta(Z)
\end{equation}
for all \(Z,D\in\operatorname{Sym}_2(\mathbb F_p)\). Taking \(D=I\) gives
$\Theta(Z)=Z\Theta(I).$
Set \(t:=\Theta(I)\). Substituting \(\Theta(Z)=Zt\) into \cref{eqn:three-qudit-c4-ZThetaD} gives
$(ZD-DZ)t=0$. Taking
\begin{equation}
Z=\begin{pmatrix}1&0\\0&0\end{pmatrix},\qquad D=\begin{pmatrix}0&1\\1&0\end{pmatrix}
\end{equation}
gives
\begin{equation}
ZD-DZ=\begin{pmatrix}0&1\\-1&0\end{pmatrix},
\end{equation}
which is invertible. Hence \(t=0\), and therefore
$\Theta=0.$
Thus,  \(\beta\) vanishes on \(S_U^{-1}(E_0)\).
Since \(\ker S_U=K_U\) and \(S_U\) maps \(S_U^{-1}(E_0)\) onto \(E_0\), we have
$S_U^{-1}(E_0)/K_U\cong E_0.$
Therefore,
\begin{equation}
\dim S_U^{-1}(E_0)=\dim K_U+\dim_{\mathbb F_p}E_0=2+3=5.
\end{equation}
Thus, \(\beta\) vanishes on a five-dimensional subspace of \(\mathbb F_p^6\), so
$\operatorname{rank}\beta\leq6-5=1,$
a contradiction.

\subsection{Proof of Proposition 5.17}\label[appendix]{appendix:codim5-proof}

\begin{proof}
Suppose, for contradiction, that \(\operatorname{codim}K_U=5\).  Then \(\dim K_U=1\) and
$\dim E_U=5.$
The symplectic form on $K_U$ is identically zero.  Set \(W:=q_U(K_U)\). 
By \cref{lem:kernel-cocycle-consequences}, \(W\) fixed pointwise by \(E_U\) and symplectic form on $W$ vanishes. As in \cref{lem:three-qudit-codim-four-impossible}, we choose a symplectic decomposition adapted to \(W\) by choosing a four-dimensional non-degenerate symplectic subspace \(H\subseteq W^\perp\), and choosing an isotropic line \(W^*\subseteq H^\perp\) dual to \(W\). Thus,
\begin{equation}\label{eqn:three-qudit-c5-witt}
\mathbb F_p^6=W\oplus H\oplus W^*.
\end{equation}
Choose dual bases of \(W\) and \(W^*\), choose a symplectic basis of \(H\), and let \(J\) denote the matrix of the symplectic form on \(H\). With the same ordering of the basis as in \cref{lem:three-qudit-codim-four-impossible}, the symplectic form on $\mathbb F_p^6$ is represented by the matrix in \cref{eqn:three-qudit-c4-symplectic-matrix} with block sizes \(1,4,1\) in place of \(2,2,2\). As in \cref{lem:three-qudit-codim-four-impossible}, 
every symplectic transformation fixing \(W\) pointwise therefore has a unique expression
\begin{equation}\label{eqn:three-qudit-c5-stabilizer}
T(g,A,z):=\begin{pmatrix}1&A&z\\0&g&gJA^T\\0&0&1\end{pmatrix},
\end{equation}
where \(g\in\operatorname{Sp}(H)\), \(A\in M_{1\times4}(\mathbb F_p)\), and \(z\in\mathbb F_p\). 
A direct block multiplication gives
\begin{equation}\label{eqn:three-qudit-c5-product}
T(g_1,A_1,z_1)T(g_2,A_2,z_2)=T\left(g_1g_2,A_2+A_1g_2,z_1+z_2+A_1g_2JA_2^T\right).
\end{equation}
As in \cref{lem:three-qudit-codim-four-impossible}, we record the induced action on the non-degenerate symplectic component \(H \cong \mathbb F_p^4\). Taking the middle diagonal block defines a homomorphism
\begin{align}
\rho:E_U&\longrightarrow \operatorname{Sp}(\mathbb F_p^4),\\
T(g,A,z)&\longmapsto g.
\end{align}
Set \(r:=\dim \rho(E_U) \) and \(E_0:=\ker\rho\). Hence, 
$\dim E_0=5-r$. Moreover, \(\rho(E_U)\) is an abelian \(p\)-subgroup of \(\operatorname{Sp}(\mathbb F_p^4)\). Since such a  subgroup has order at most \(p^3\), we have \(r\leq3\).
We next analyze the \(A\) components of the elements of \(E_0\), analogous  to the analysis in \cref{lem:three-qudit-codim-four-impossible}. Define
\begin{equation}
\mathcal A:=\{A\in M_{1\times4}(\mathbb F_p):T(I,A,z)\in E_0\text{ for some }z\}.
\end{equation}
As in \cref{lem:three-qudit-codim-four-impossible},  \(\mathcal A\) is a linear subspace. As in \cref{lem:three-qudit-codim-four-impossible}, we 
get $AJD^T=DJA^T$
for all \(A,D\in\mathcal A\). On the other hand, \(AJD^T\) is a scalar and since \(J^T=-J\), we have
\begin{equation}
AJD^T=(AJD^T)^T=DJ^TA^T=-DJA^T.
\end{equation}
Hence, \(2AJD^T=0\). Since \(p\) is odd, we have
$AJD^T=0$
for all \(A,D\in\mathcal A\).
Thus, \(M_{1\times4}(\mathbb F_p)\) becomes a four-dimensional symplectic vector space with alternating form
\begin{equation}
\omega(A,D):=AJD^T.
\end{equation}
Indeed, \(\omega(A,A)=AJA^T=0\) because \(J\) is skew-symmetric, and the form is non-degenerate since \(AJD^T=0\) for every \(D\) implies \(AJ=0\)\footnote{
If we choose \(D=e_i\), where \(e_i\) is the \(i\)-th standard basis row vector of \(\mathbb F_p^4\), then \(AJD^T\) is the \(i\)-th entry of \(AJ\). Since \(AJD^T=0\) for every \(D\), each entry of \(AJ\) vanishes. Hence, \(AJ=0\).
}, hence \(A=0\) because \(J\) is invertible. By the condition $AJD^T=0$, the restriction of \(\omega\) to \(\mathcal A\) vanishes identically, so \(\mathcal A\) is isotropic. Since an isotropic subspace of a four-dimensional symplectic vector space has dimension at most \(2\), we conclude that
$\dim\mathcal A\leq2.$ 
The homomorphism
\begin{align}
\phi : E_0 &\longrightarrow \mathcal A,\\
T(I,A,z) &\longmapsto A.
\end{align}
has kernel of dimension at most \(1\), since $\ker \phi$ is contained in \(\{T(I,0,z):z\in\mathbb F_p\}\). Since the homomorphism \(E_0\to\mathcal A\), \(T(I,A,z)\mapsto A\), has image of dimension at most \(2\) and kernel of dimension at most \(1\), the rank-nullity theorem gives
$\dim E_0\leq2+1=3.$
Since \(\dim E_0=5-r\), we obtain \(r\geq2\). Hence,
$r=2,3.$
As in \cref{lem:three-qudit-codim-four-impossible}, we rule out the possible ranks of the induced action separately.

\begin{enumerate}

\item Suppose first that \(r=2\). Since \(\dim E_0=3\), 
rank-nullity gives
$\dim\operatorname{im}\phi\geq2.$
Since \(\operatorname{im}\phi=\mathcal A\) and we already know that \(\dim\mathcal A\leq2\), it follows that \(\dim\mathcal A=2\). After choosing a symplectic basis of \(H\), we identify \(H\) with \(\mathbb F_p^4\), so that each \(A\in M_{1\times 4}(\mathbb F_p)\) defines a linear functional on \(H\), while \(JA^T\) defines a vector in \(H\). Set
\begin{equation}
L_0:=\{h\in H:Ah=0\text{ for every }A\in\mathcal A\}.
\end{equation}
For \(A,D\in\mathcal A\), we have
\begin{equation}
D(JA^T)=DJA^T=-AJD^T=0.
\end{equation}
Hence \(JA^T\in L_0\) for every \(A\in\mathcal A\), and therefore
\begin{equation}
\{JA^T:A\in\mathcal A\}\subseteq L_0.
\end{equation}
Since \(J\) is invertible, the map \(A\mapsto JA^T\) is injective, and hence its image has dimension \(2\). On the other hand, since \(\dim\mathcal A=2\), the space \(L_0\) is the common kernel of two linearly independent linear functionals on the four-dimensional space \(H\), so \(\dim L_0=2\). It follows that
\begin{equation}
L_0=\{JA^T:A\in\mathcal A\}.
\end{equation}
Finally, for \(A,D\in\mathcal A\), the symplectic pairing of the corresponding vectors in \(L_0\) is
\begin{equation}
(JA^T)^T J(JD^T)=AJD^T=0.
\end{equation}
Thus, \(L_0\) is isotropic. Since \(\dim L_0=2\) and \(H\) is four-dimensional and symplectic, \(L_0\) is Lagrangian.
Let \(T(g,B,z)\in E_U\) and \(T(I,A,y)\in E_0\). Since \(E_U\) is abelian, comparison of the \(A\) components in \cref{eqn:three-qudit-c5-product} gives
$Ag=A.$
Hence, for every \(A\in\mathcal A\),
\begin{equation}
gJA^T=Jg^{-T}A^T=JA^T,
\end{equation}
so \(g\) fixes \(L_0\) pointwise. Comparing the last components in the same two products gives
\begin{equation}
BJA^T=AgJB^T=AJB^T=-BJA^T.
\end{equation}
Since \(p\) is odd, \(BJA^T=0\). Thus \(B\) vanishes on \(L_0\). By \cref{eqn:three-qudit-c5-stabilizer}, every element of \(E_U\) therefore fixes \(L_0\) pointwise. Since \(E_U\) also fixes \(W\) pointwise, it fixes \(W\oplus L_0\) pointwise. The subspace \(W\oplus L_0\) is three-dimensional and isotropic, and hence is Lagrangian. This contradicts \cref{lem:three-qudit-large-image-obstruction}.

\item Suppose now that \(r=3\). Then \(\rho(E_U)\) is an abelian \(p\)-subgroup of \(\operatorname{Sp}(H)\cong\operatorname{Sp}(\mathbb F_p^4)\) of order \(p^3\). By \cref{wong}, exactly as in the proof of \cref{lem:three-qudit-codim-three}, after changing the symplectic basis of \(H\) we may write $H=L_0\oplus L_0^*$ and 
\begin{equation}\label{eqn:three-qudit-c5-rank-three-image}
\rho(E_U)=\left\{g_C:=\begin{pmatrix}I&C\\0&I\end{pmatrix}:C=C^T\in\operatorname{Sym}_2(\mathbb F_p)\right\},
\end{equation}
where \(L_0\) is Lagrangian in \(H\) and is fixed pointwise by \(\rho(E_U)\).
We show that the \(A\) component of every element of \(E_U\) vanishes on \(L_0\). First let \(T(I,A,z)\in E_0\), and write \(A=(a\;\;b)\) relative to \(H=L_0\oplus L_0^*\), with \(a,b\in M_{1\times2}(\mathbb F_p)\). As in the one-dimensional image case of \cref{lem:three-qudit-codim-four-impossible}, commuting this element with an element of \(E_U\) whose image under \(\rho\) is \(g_C\) gives \(Ag_C=A\). Hence
\begin{equation}
(a\;\;aC+b)=(a\;\;b)
\end{equation}
for every \(C\in\operatorname{Sym}_2(\mathbb F_p)\). Taking \(C=I\) gives \(a=0\). Thus, every element of \(E_0\) has an \(A\) component that vanishes on \(L_0\).
For each \(C\in\operatorname{Sym}_2(\mathbb F_p)\), choose an element \(T(g_C,A_C,z_C)\in E_U\) and write \(A_C=(a_C\;\;b_C)\). Since changing the choice changes \(A_C\) by the \(A\) component of an element of \(E_0\), the row vector \(a_C\) is independent of the choice. Since \(E_U\) is abelian, comparison of the \(A\) components of the products corresponding to \(C\) and \(D\) gives
\begin{equation}\label{eqn:three-qudit-c5-aC-relation}
a_CD=a_DC
\end{equation}
for all  \(C,D \in\operatorname{Sym}_2(\mathbb F_p)\). This is the same relation that appears in the codimension \(4\) argument in \cref{lem:three-qudit-codim-four-impossible}. Taking \(D=I\) and setting \(t:=a_I^T\), we obtain
$a_C^T=Ct.$
Substituting this into \cref{eqn:three-qudit-c5-aC-relation} gives
\begin{equation}
(DC-CD)t=0
\end{equation}
for all \(C,D \in\operatorname{Sym}_2(\mathbb F_p)\). As in \cref{lem:three-qudit-codim-four-impossible}, taking
\begin{equation}
C=\begin{pmatrix}1&0\\0&0\end{pmatrix},\qquad D=\begin{pmatrix}0&1\\1&0\end{pmatrix}
\end{equation}
gives an invertible commutator, so \(t=0\). Hence \(a_C=0\) for every \(C\). Therefore, the \(A\) component of every element of \(E_U\) vanishes on \(L_0\).
By \cref{eqn:three-qudit-c5-rank-three-image}, the \(g\) component of every element of \(E_U\) fixes \(L_0\) pointwise. Together with \cref{eqn:three-qudit-c5-stabilizer}, this shows that \(E_U\) fixes \(L_0\) pointwise. As in \cref{lem:three-qudit-codim-four-impossible}, adjoining the already fixed isotropic line \(W\) gives the Lagrangian subspace \(W\oplus L_0\), which is fixed pointwise by \(E_U\). This again contradicts \cref{lem:three-qudit-large-image-obstruction}.
\end{enumerate}
This completes the proof.
\end{proof}

\printbibliography

\end{document}